\newcommand{\pointsize}{11pt}
   \renewcommand{\headrulewidth}{0pt}
   \renewcommand{\headrulewidth}{0.4pt}
\numberwithin{figure}{chapter}
\numberwithin{table}{chapter}
\numberwithin{equation}{chapter}
\numberwithin{section}{chapter}
\newcommand{\AutoAbs}[1]{\left\vert #1 \right\vert}
\newcommand{\AutoNorm}[1]{\left\Vert #1 \right\Vert}
\newtheorem{thm}{Theorem}[section]
\newtheorem{lem}{Lemma}[section]
\newtheorem{cor}{Corollary}[section]
\newtheorem{prop}{Proposition}[section]
\begin{document}
   \frontmatter

   \pagestyle{prelim}

   % Redefine plain page style so that the first pages of chapters
   % have desired page style.
   %
   \fancypagestyle{plain}{%
      \fancyhf{}
      \cfoot{-\thepage-}
   }%
\begin{center}
   \null\vfill
   \textbf{%
      The Numerical Simulation of General Relativistic Shock Waves\\
      by a\\
      Locally Inertial Godunov Method\\
      Featuring\\
      Dynamical Time Dilation
   }%
   \\
   \bigskip
   By \\
   \bigskip
   Zeke K. Vogler \\
   \bigskip
   B.S. (University of California, Davis) 2001 \\
   \bigskip
   DISSERTATION \\
   \bigskip
   Submitted in partial satisfaction of the requirements for the
   degree of \\
   \bigskip
   DOCTOR OF PHILOSOPHY \\
   \bigskip
   in \\
   \bigskip
   Applied Mathematics \\
   \bigskip
   in the \\
   \bigskip
   OFFICE OF GRADUATE STUDIES \\
   \bigskip
   of the \\
   \bigskip
   UNIVERSITY OF CALIFORNIA \\
   \bigskip
   DAVIS \\
   \bigskip
   Approved: \\
   \bigskip
   \bigskip
   \makebox[3in]{\hrulefill} \\
   Blake Temple \\
   \bigskip
   \bigskip
   \makebox[3in]{\hrulefill} \\
   John Hunter \\
   \bigskip
   \bigskip
   \makebox[3in]{\hrulefill} \\
   Angela Cheer \\
   \bigskip
   Committee in Charge \\
   \bigskip
   2010 \\
   \vfill
\end{center}

   \newpage

   % Begin Double Spacing
   %
   \doublespacing

   \tableofcontents
   \newpage

{\singlespacing
   \begin{flushright}
      Zeke K. Vogler \\
      March 2010 \\
      Applied Mathematics \\
   \end{flushright}
}

\bigskip

\begin{center}
    The Numerical Simulation of General Relativistic Shock Waves by a\\
    Locally Inertial Godunov Method Featuring Dynamical Time Dilation
\end{center}

\section*{Abstract}
We introduce what we call a locally inertial Godunov method with dynamical time dilation, and use it to simulate a new one parameter family of general relativistic shock wave solutions of the Einstein equations for a perfect fluid.  The forward time solutions resolve the secondary reflected wave (an incoming shock wave) in the Smoller-Temple shock wave model for an explosion into a static singular isothermal sphere.  The backward time solutions indicate black hole formation from a smooth underlying solution via collapse associated with an incoming rarefaction wave.  As far as we know, this is the first numerical simulation of a fluid dynamical shock wave in general relativity.
   \newpage

   \section*{Acknowledgments}
I would like to thank all the great teachers I have learned from throughout my academic career.  They made many sacrifices for their students and rarely get the credit and respect they deserve.  Without them, I would have fallen off this path of enlightenment and never written these words.  A special thanks goes to my advisor Blake Temple.  His insight, guidance, and enthusiasm was a real inspiration, and I feel very privileged to have worked with such a great man.  I am also grateful to all my committee members for taking the time to look over my work.  I want to send a very special thanks to the late Evelyn Silvia, whose passion for teaching was a big influence on my teaching endeavors.

I want to send a warm thanks to the math department staff, especially Celia Davis and Perry Gee, whose efforts on making the department seem like a family did not go unnoticed.

I wish to give huge thank you to my family, whose impact on my life over the years I am just beginning to understand and appreciate.  I very special thanks goes to my mom, Kady Vogler.  Her death was the most significant moment of my life, and it is a shame she missed one of my greatest accomplishments.  May her soul rest in peace.

Finally,  I would like to thank to all my friends that have shown me love and support throughout the years.  Their presence keep me sane throughout the challenging world of mathematics.  Though there are many graduate students to thank, a special one goes out to Spiros Michalakis and Ernest Woei; many great ideas and conversations came out of those weekly lunches.  My greatest appreciation goes to the Riess household and the circle of friends that evolved from there.  They are a second family to me, and they keep me grounded with simple pleasures.  I want to pay homage to the game of basketball for giving me an avenue to exercise my muscles besides the brain.  I want to thank all the ballers out there that influenced my game; this list includes, but not limited to, Duane Kouba, the math basketball team, and the ARC legends.
   \mainmatter

   \pagestyle{maintext}

   % Redefine plain page style so that the first pages of
   % chapters have desired page style.
   %
   \fancypagestyle{plain}{%
      \renewcommand{\headrulewidth}{0pt}
      \fancyhf{}
      \rhead{\thepage}
   }%
   \chapter{Introduction}
   \label{ch:intro}
This thesis presents a numerical study of a canonical family of spherically symmetric solutions to the Einstein equations of General Relativity, such that shock wave formation is simulated in forward time and black hole formation is indicated in backward time.  To start the simulations, we introduce a new one parameter family of initial data satisfying the constraints of the Einstein equations, and to perform the simulations, we develop a locally inertial Godunov method incorporating a new dynamical time dilation feature to account for the relativistic effects of curvature.

Our point of departure is the work of Groah and Temple \cite{groate}, who gave the first existence theory for shock wave solutions of the Einstein equations, starting from general initial data of bounded variation in the density and velocity.  The setting is restricted to spherically symmetric spacetimes in standard Schwarzschild coordinates.  For the analysis, Groah and Temple introduced the {\it locally inertial Glimm scheme}, a numerical method based on approximating spacetime by flat Minkowski space in each grid cell, such that the method was amenable to detailed mathematical estimates sufficient to prove convergence to weak (shock wave) solutions of the Einstein equations.   The curvature of spacetime in the locally inertial Glimm scheme is accounted for by coordinate transformations between the locally inertial approximations in each grid cell.  This technique is pedagogically interesting because it parallels the so called {\it correspondence principle}, the physical principle that general relativity should reduce to special relativity in sufficiently small neighborhoods of spacetime.  Groah and Temple concludes in \cite{groate} that the curvature of spacetime necessarily becomes discontinuous at shock waves, resulting in solutions solving the Einstein equations in the weak sense of the theory of distributions.  In this thesis, we develop these locally inertial formulations of the Einstein equations into a viable numerical method, what we refer to as a {\it locally inertial Godunov method with dynamical time dilation}, and we introduce a new one parameter family of interactive solutions on which we demonstrate the convergence of the method.

Our numerical demonstrations are backed up by a general theorem which we prove.  Namely, we prove if a sequence of approximate solutions, generated by our locally inertial Godunov method converges pointwise almost everywhere to a function along with the total variation of the fluid variables remain uniformly bounded under the limit, (exactly what we can demonstrate numerically), then the limit solution is an exact weak solution to the Einstein equations.  By this general theorem, we need only demonstrate numerical convergence to a limit, with bounded oscillations, in order to conclude our simulated solutions accurately represent exact (weak) solutions of the Einstein equations.

The one parameter family of spacetimes on which we test the convergence of the method are interesting in their own right.  This one parameter family is constructed by matching initial data from a critical ($k=0$) Friedmann-Robertson-Walker (FRW) spacetime to initial data for a Tolmann-Oppengeimer-Volkoff (TOV) spacetime at a point where the spacetime is only Lipschitz continuous, the smoothness required for our convergence theorem.  Since our numerical method is based on standard Schwarzschild coordinates, this requires a coordinate transformation from FRW coordinates to standard Schwarzschild coordinates. Such a transformation requires an integrating factor, and we use the integrating factor derived by Smoller and Temple in \cite{smolte}, (we call the resulting standard Schwarzschild coordinates version of FRW the FRW-1 spacetime).  We also derive a new integrating factor that leads to a different standard Schwarzschild coordinates version of FRW, (which we call FRW-2).  The initial discontinuity that separates FRW from TOV in the standard Schwarzschild coordinates initial data then generates a region of interaction between the two spacetimes, a region for which there is no closed form solution, and can only be constructed by numerical simulation.  The ultimate goal of this thesis is then to simulate this region of interaction, and demonstrate convergence of the locally inertial Godunov method.

The numerical simulation of the region of interaction between FRW and TOV is a problem motivated by the original work of Smoller and Temple in \cite{smolte}, further developed in \cite{smolte3,smolte4}.  In \cite{smolte}, the first exact shock wave solution of the Einstein equations is constructed by matching an FRW spacetime to a TOV spacetime across a shock interface, such that the resulting matched solution, given by exact formulas, is a true weak solution of the Einstein equations.  The resulting matched spacetime (we denote as the FRW/TOV metric) with a shock wave was a simple model for an explosion into a {\it static singular isothermal sphere}, relevant to models of star formation \cite{smolte}.  Based on Smoller and Temple's work, we built a computer visualization \cite{smoltevo} to demonstrate the qualitative behavior of their model.  The Smoller and Temple (Sm/Te) construction requires taking a different sound speed on the FRW and TOV sides of the shock wave finely tuned to reduce the region of interaction between FRW and TOV to a single pure shock wave.  As a result, they only required the existence of, not an explicit formula for, the integrating factor.  In our simulation, we take the {\em same} equation of state on both sides, a more realistic assumption, and as a consequence, a region of interaction, for which there are no exact formulas, between the lightlike curves emanating from the initial point of connection between FRW and TOV metrics is generated.  Based on this background, we interpret our numerical FRW/TOV model simulation in forward time as resolving the secondary reflected wave in the Sm/Te model of an explosion into a static singular isothermal sphere.  In particular, our simulation confirms for the first time the reflected wave is a shock wave, not a rarefaction wave.  We also simulated the time reversal of the FRW/TOV solution.  In this case our simulation indicates the formation of a black hole in the resulting solution, which consists entirely of rarefaction waves.  Based on our simulation, one can only expect the observed strong rarefaction wave evolving toward the coordinate center will form a black hole.  Note that a black hole cannot be computed exactly in standard Schwarzschild coordinates because black holes generate coordinate singularities in this coordinate system.  A proposal for future work would be to simulate the black hole all the way by transforming the solution to Eddington-Finkelstien or Kruskal coordinates \cite{wein}.

We start our discussion by briefly introducing the Einstein equations in Chapter \ref{ch:prelim}.  In this introduction, we formulate these equations in the simplest setting for shock waves in General Relativity; this setting is the spherically symmetric Einstein equations with the stress energy tensor for a perfect fluid and an equation of state $p=\sigma\rho$, $\rho=const$.  In this setting, the Einstein equations reduce to a conservation law with source terms coupled with a system of ODEs.

Chapter \ref{ch:family_of_shock_waves} sets out the background information for the FRW, the TOV, and the FRW/TOV matched models.  In particular, we construct a one parameter family of initial data that agrees with the FRW and TOV metrics in standard Schwarzschild coordinates on either side of a single point where they match continuously.  The continuity of the metric at the matching point implies the initial data meets the constraints of the Einstein equations introduced by Groah and Temple in \cite{groasmte}.  More specifically, the initial data is built by connecting a FRW metric with a TOV metric across an initial discontinuity in the fluid variables, such that the metric components are matched Lipschitz continuously.  This one parameter family of initial data enables us to construct the one parameter family of interacting spacetimes that solve the Einstein equations.  To be simulated by our locally inertial Godunov scheme, the metric must be in standard Schwarzschild coordinates, and while the TOV is already in this coordinate system, the FRW metric needs to be mapped over.  We demonstrate two mappings of the FRW metric over to standard Schwarzschild coordinates, denoted as FRW-1 and FRW-2.  With these two coordinate transformations, the one parameter family of shock wave solutions, the FRW/TOV metric, has two different forms based on the coordinate system used.  These two forms are used as a further test of the correctness of our numerical simulations, by verifying the simulations produce the same spacetime in two different coordinate systems.

In Chapter \ref{ch:frac_god_method}, we introduce the locally inertial Godunov method.  This chapter begins with a discussion of the solution to the Riemann problem \cite{smol} for relativistic compressible Euler equations in Minkowski spacetime.  The Riemann problem is the fundamental building block of the Godunov method, \cite{leve}.  We outline the solution to this Riemann problem as demonstrated by Smoller and Temple in \cite{smolte6}.  In order to incorporate the Riemann problem in the Godunov method for simulating non-flat spacetimes, we augment the solution to the Riemann problem with the time dilation feature.  To this end, we determine the effects of time dilation on the Godunov step and incorporate it into our locally inertial Godunov method.  The chapter ends with a layout of the locally inertial Godunov method and all the steps necessary to implement it.  This method is a fractional Godunov scheme that defines the solution inductively and contains four major steps: a Riemann problem step, a Godunov step (with time dilation), an ODE step, and an update step.

In Chapter \ref{ch:converge_of_method}, we prove the main theorem that establishes the consistency of our locally inertial Godunov method with time dilation.  It states, assuming the convergence of our approximate solution constructed using our method with a total variation bound at each time step, the convergent solution is indeed a weak solution to the Einstein equations.  The two main difficulties of this proof are handling the jump in the fluid variables across each time step and showing the jump in the metric across each space step is accounted for by the term added to the ODE step.  The assumptions to this theorem are obtained numerically by the simulation results.  Thus, our theorem is perfectly tailored to our numerical simulation: we numerically establish convergence and a total variation bound, and this main theorem proves that assuming these conditions, we can conclude convergence to a weak solution of the Einstein equations.

In Chapter \ref{ch:cont_models}, we begin the numerical study of the locally inertial Godunov method with dynamic time dilation by simulating the pure non-interacting FRW and TOV spacetimes individually.  Each of these continuous models provide us with a different scenario to test and fine tune our method.  These models are also important for correctly handling the non-interaction regions and the ghost cells for the subsequent shock wave model.  Within this chapter are graphs of the solutions and tables of the convergence results, along with the explanations behind them.

In Chapter \ref{ch:shock_wave_models}, the first glimpse of shock wave solutions to the Einstein equations is shown by simulating the FRW-1/TOV model.  We demonstrate this one parameter family converges and produces quantitatively different solutions as we vary the free parameter.  Regardless of the parameter chosen,  the solution always contains two shock waves, an incoming and outgoing wave, enclosing a region of higher density, denoted as the interaction region.  Over and above the convergence and graphical results, we show the interaction region converges to the cone of sound (the extent of the sound like information emanating from the initial discontinuity), and we also show outside this interaction region the FRW-1 and TOV metrics are preserved under the simulation.  Changing coordinate systems, the FRW-2/TOV model simulation is similar to the FRW-1/TOV model.  We discover and numerically confirm the FRW-2/TOV model is the same solution as the FRW-1/TOV model that differs by a non-linear coordinate transformation, and this second model provides a pedagogically interesting numerical confirmation of the covariance of the Einstein equations in standard Schwarzschild coordinates.

In Chapter \ref{ch:time_rev_model}, we examine the time reversed FRW/TOV model.  The numerical simulation shows this solution consists of two rarefaction waves, one outgoing and one incoming, encompassing a region of lower density.  In particular, the simulation indicates there are no shock waves present.  Therefore, this simulation shows the underlying solution is smooth and thus a strong solution to the Einstein equations, unlike the forward model.  We record the convergence of this solution and the convergence of the interaction region to the cone of sound along with the preservation of the non-interaction regions.  The simulation convinces us the incoming rarefaction wave will form a black hole in finite proper time.  That is, we extend the time of the simulation up until extreme relativistic effects of time dilation, characteristic of a black hole, appear.  We cannot simulate all the way up to black formation in standard Schwarzschild coordinates because black holes are coordinate singularities in standard Schwarzschild coordinates \cite{smolte2}, but the metric component $A$ is monotonically decreasing to a small value before time dilation makes our numerical calculations impractical.  Indeed, we numerically demonstrate the mass function near the black hole formation gets within $1.084\leq1.125 = 9/8$ of the Schwarzschild radius of that mass; thus, we have numerically confirmed the solution gets within the Buchdal statiblity limit of 9/8ths the Schwarzschild radius.  Buchdal's theorem states that whenever the mass of the star gets within 9/8ths of the Schwarzschild radius for that mass, there is no static configuration with a finite pressure at the center capable of holding the mass up \cite{smolte2}.  The implication then is a star that gets within 9/8ths the Schwarzschild radius cannot be prevented from collapsing into a black hole; thus, we view our results here as a numerical physical proof that black hole formation must occur in the time reversed FRW/TOV model.  Thus, the simulation indicates a black hole forming out of a smooth solution to the Einstein equations.  With the success of running the forward and reverse FRW/TOV models, we have simulated general relativistic analogs of both the 1 and 2 family of shock and rarefaction waves, that is, the analogs of all the elementary waves of conservation laws.  This fact gives us confidence in the ability and accuracy of our method to simulate any solution to the Einstein equations for a perfect fluid in standard Schwarzschild coordinates.

In Chapter \ref{ch:units}, we put dimensions back into the problem, giving our simulations of the earlier chapters a physical context.  For convenance and concreteness, we set the Einstein's speed of light constant and Newton's gravitational constant both to one.  To place units on our numerical values requires us to understand the effect this has on our units.  The effect of this is the unification of the units of time, length, and mass into one set of units, which we choose as the units of mass.  Understanding this unification allows us to recover the other units, time and length.  To end this chapter, we consider our simulation on the solar and galactic scale by transforming our numerical values to familiar units, giving us a firm grasp of the scale and magnitude involved within our simulations.

In Appendix \ref{ch:sim_code}, we outline the code used to perform the simulations in this paper.  A CD is attached that contains all this code, which is approximately 8,000 lines.  The appendix is designed to provide the reader an overview of the organization of the code and to help the reader navigate to specific areas of interest.

   \chapter{Preliminaries}
   \label{ch:prelim}
This chapter introduces the background to the topic of this thesis, the {\it locally inertial Godunov method with dynamical time dilation}.  This is a fractional step numerical method that approximates the convective part of the fluid dynamics in each grid cell by the solution of a Riemann problem in a (flat) Minkowski background spacetime, such that each cell is endowed with a time dilation factor.  Averages of the fluid variables are taken, and a second fractional step then evolves these averages according to evolution by a dynamical system that accounts for the undifferentiated source terms.  The main issue is at shock waves the gravitational metric has a jump discontinuity in the derivative, so the curvature of the gravitational metric becomes discontinuous at shocks, and the purpose of this thesis is to confirm our method is effective at numerically simulating general relativistic shock waves.

The  {\it locally inertial Godunov method with dynamical time dilation} is based on the locally inertial formulation of the Einstein equations for spherically symmetric spacetime metrics in standard Schwarzschild coordinates and was first derived by Groah and Temple in \cite{groate2}.  Groah and Temple provided an exposition of their work along with the work of Smoller and Temple in \cite{groasmte}.  All the equations in this chapter are taken from the joint work of Groah, Smoller, and Temple in \cite{groasmte}.   In this study, we consider the stress energy tensor a perfect fluid with an equation of state $p=\sigma\rho$\footnote{C.f. \cite{groasmte} where $\sigma^2$ is used in place of $\sigma$}, where $\sigma\equiv Const$, the simplest possible setting for shock wave propagation in general relativity.  In this case, the particle number density and entropy decouple from the energy and momentum equations, and the fluid equations close at the level of the Einstein equations alone.   This equation of state models flow at constant temperature, and in the case $\sigma=c^2/3$, it models the important cases of free particles in the extreme relativistic limit, as well as the case of pure radiation, c.f. \cite{wein}.  In \cite{groasmte}, Groah and Temple show that the four standard Schwarzschild coordinates Einstein equations are weakly equivalent to the system of PDE's obtained by taking the two equations that express the vanishing divergence of the stresses in place of the two equations that involve time derivatives of the metric.  This results in a system of conservation laws with source terms in which the special relativistic Minkowski variables are taken as the conserved quantities.  Fortunately, the equations close because all time derivatives of metric cancel out for this particular case.  The beauty of these equations is that they reflect the locally inertial character of spacetime.   The {\it locally inertial Godunov method with dynamical time dilation} is then a numerical method tailored to the simulation of solutions of these equations. We prove the consistency of this method, and demonstrate its value as an efficient method for computing shock wave solutions of the Einstein equations.

In Section \ref{sec:intro_einstein}, we introduce the Einstein equations, and we proceed to show how these equations with spherical symmetry and a simple equation of state reduces to a system of PDEs which express the locally inertial character in Einstein's theory.  In Section \ref{sec:gen_problem}, we formulate the conservation law with source terms.  Finally, we pose the general problem that is conducive to shock wave solutions to the Einstein equations, the starting point of our studies.

\section{Introduction to Einstein Equations}
\label{sec:intro_einstein}
The fundamental equations of general relativity, the Einstein field equations, is given by
\begin{equation}\label{einstein_eqns}
    G=\kappa T,
\end{equation}
where $G$ is the Einstein curvature tensor, $T$ is stress energy tensor (the source of gravity), and $\kappa$ is Einstein's coupling constant
\begin{equation}\label{ch1_kappa}
\kappa=\frac{8\pi\mathcal{G}}{c^4},
\end{equation}
where $\mathcal{G}$ is Newton's gravitational constant, and $c$ the speed of light.  For concreteness and convenience, we take the speed of light as $c=1$ and Newton's gravitational constant as $\mathcal{G}=1$ throughout this paper.  The consequences of setting these constants to one is discussed in detail in Chapter \ref{ch:units}.  These equations describe how the mass/energy of an object curves space and how the curvature, in turn, stretches or squeezes matter in the three spatial directions.

The component form of the Einstein equations is
\begin{equation}\label{einstein_eqns_component}
    G_{ij}(x)=\kappa T_{ij}(x),
\end{equation}
where
\begin{equation}\label{einstein_curvature}
    G_{ij}=R^\sigma_{i\sigma j} - \frac{1}{2}R^{\sigma\tau}_{\sigma\tau}g_{ij}
\end{equation}
denotes the individual components of the Einstein curvature tensor.  We let $i,j \in\{0,1,2,3\}$ refer to the individual components in a given coordinate system.  Throughout, the Einstein summation convention is used where repeated up-down indices are summed from 0 to 3, unless otherwise noted.

Modeling the sources of the stress energy tensor by a perfect fluid, $T$ is given by
\begin{equation}\label{stress_energy_tensor}
T^{ij}=(\rho+p)w^{i}w^{j}+pg^{ij},
\end{equation}
where $\mathbf{w}$ denotes the unit 4-velocity vector of the fluid (tangent vector to the world line of the fluid particle), $\rho$ denotes the energy density (measured in the inertial frame moving along with the fluid), and $p$ denotes the fluid pressure.

For simplification, the only metrics under consideration are spacetime metrics $g$ which are spherically symmetric.  A general spherically symmetric metric takes the form
\begin{equation}\label{spherically_symmetric_metric}
ds^2=g_{ij}dx^idx^j=-A(t,r)dt^2+B(t,r)dr^2+2D(t,r)dtdr+C(t,r)d\Omega^2,
\end{equation}
where the metric components $A,B,C,D$ are functions of the time and space variables $(t,r)$ only, $d\Omega^2 = d\theta^{2}+\sin^{2}\theta d\phi^{2}$ represents the standard line element of the unit 2-sphere, and $x\equiv(x^0,\ldots,x^3)\equiv(t,r,\theta,\phi)$ is the spacetime coordinate system.  A spherically symmetric metric is in standard Schwarzschild coordinates\footnote{Note that the choice of notation in \cite{groasmte} for standard Schwazschild coordinates was $ds^{2} =-A(t,r)dt^2+B(t,r)dr^2+r^{2}d\Omega^2$} if it is written in the more simple form of
\begin{equation}\label{ssc}
ds^{2} =-B(t,r)dt^2+\frac{1}{A(t,r)}dr^2+r^{2}d\Omega^2.
\end{equation}
A classic argument transforming a spherically symmetric metric (\ref{spherically_symmetric_metric}) into standard Schwarzschild coordinates (\ref{ssc}), which is thoroughly explained in Chapter \ref{ch:family_of_shock_waves}.

In this paper, the simplest setting for shock wave propagation in General Relativity is assumed: a spherically symmetric metric in standard Schwarzschild coordinates (\ref{ssc}) where the sources are modeled by a perfect fluid (\ref{stress_energy_tensor}) with an equation of state
\begin{equation}\label{eqn_of_state}
p = \sigma\rho, \phantom{4444} 0 < \sigma < c,
\end{equation}
where $\sigma$ is assumed to be constant and $\sqrt{\sigma}$ represents the sound speed.  Using MAPLE to put the standard metric (\ref{ssc}) into the Einstein equations (\ref{einstein_eqns}) gives us the following system of four coupled partial differential equations,
\begin{equation}\label{pde_system1}
\frac{AB}{r^2}\left\{-r\frac{A'}{A} + \frac{1}{A} - 1 \right\} = \kappa B^2T^{00},
\end{equation}
\begin{equation}\label{pde_system2}
\frac{\dot{A}}{rA} = \kappa \frac{B}{A}T^{01},
\end{equation}
\begin{equation}\label{pde_system3}
\frac{1}{r^2}\left\{r\frac{B'}{B} - (\frac{1}{A} - 1) \right\} = \frac{\kappa}{A^2}T^{11},
\end{equation}
\begin{equation}\label{pde_system4}
\frac{A^2}{rB}\left\{r\frac{\ddot{A}}{A^2} + B'' - \Phi \right\} = 2\kappa rAT^{22},
\end{equation}
where the quantity $\Phi$ is
\begin{equation}
\Phi = \frac{2\dot{A}}{A^3} + \frac{\dot{A}\dot{B}}{2A^2B} - \frac{1}{2A}\left(\frac{\dot{A}}{A}\right)^2 - \frac{B'}{r} - \frac{BA'}{rA} + \frac{B}{2}\left(\frac{B'}{B}\right)^2 - \frac{B}{2}\frac{B'}{B}\frac{A'}{A},
\end{equation}
where "prime" represents $\frac{\partial}{\partial r}$ and "dot" represents $\frac{\partial}{\partial t}$.

The mass function is $M\equiv M(t,r)$ is defined through the identity
\begin{equation}\label{mass_identity}
A=\left(1-\frac{2M}{r}\right),
\end{equation}
which is interpreted as the total mass inside radius $r$ at time $t$.  Using this total mass, equation (\ref{mass_identity}) leads to an equivalent form for equation (\ref{pde_system1}) given by
\begin{equation}\label{mass_space_ode}
M'=\frac{1}{2}\kappa r^2BT^{00},
\end{equation}
and equation (\ref{pde_system2}) becomes
\begin{equation}
\dot{M}=-\frac{1}{2}\kappa r^2BT^{01}.
\end{equation}

Assuming a perfect fluid (\ref{stress_energy_tensor}), the components of the stress energy tensor $T^{ij}$ satisfy the following relations
\begin{equation}\label{minkowski_relation00}
T^{00} = \frac{1}{B}T^{00}_M
\end{equation}
\begin{equation}\label{minkowski_relation01}
T^{01} = \sqrt{\frac{A}{B}}T^{01}_M
\end{equation}
\begin{equation}\label{minkowski_relation11}
T^{11} = AT^{11}_M
\end{equation}
\begin{equation}\label{minkowski_relation22}
T^{22} = \frac{\sigma\rho}{x^2},
\end{equation}
where $T^{ij}_M$ represent the components of the stress energy tensor in flat Minkowski spacetime.  When an equation of state takes the form (\ref{eqn_of_state}), the components of $T_M$ are given as
\begin{equation}\label{minkowski_defn00}
T^{00}_M = \frac{c^4+\sigma^2v^2}{c^2-v^2}\rho,
\end{equation}
\begin{equation}\label{minkowski_defn01}
T^{01}_M = \frac{c^2+\sigma^2}{c^2-v^2}cv\rho,
\end{equation}
\begin{equation}\label{minkowski_defn11}
T^{11}_M = \frac{v^2+\sigma^2}{c^2-v^2}c^2\rho.
\end{equation}

We denote the fluid velocity $v$, in place of the 4-velocity vector $\mathbf{w}$, as measured by an observer fixed with respect to the radial coordinate $r$.  Combining (\ref{mass_space_ode}) together with (\ref{minkowski_relation00}),  a formula for the mass is
\begin{equation}\label{mass_defn}
M(t,r) = M(t,r_0)+\frac{\kappa}{2}\int^r_{r_0}{T^{00}_M(t,r)r^2dr}.
\end{equation}
(Equations (\ref{pde_system1})-(\ref{mass_defn}) are taken from equations (1.3.2)-(1.3.15) in the work of Groah, Smoller, and Temple \cite{groasmte}.)  We note this equation (\ref{mass_defn}) combined with (\ref{mass_identity}) is an easier way to compute the metric component $A$ than using equation (\ref{pde_system1}), and these equations are used to compute $A$ in our locally inertial Godunov scheme presented in Chapter \ref{ch:frac_god_method}.

\section{Statement of the General Problem}
\label{sec:gen_problem}
The presence of shock waves in solutions of (\ref{pde_system1})-(\ref{pde_system4}) makes the stress-energy tensor $T$ discontinuous, and consequently, the metric components $A$ and $B$ are Lipschitz continuous at best.  Since (\ref{pde_system4}) contains second derivatives of $A$ and $B$, this equation can only be satisfied in the weak sense.  In \cite{groasmte}, Groah and Temple show when the metric is Lipschitz and the stress-energy tensor is bounded in sup-norm, system (\ref{pde_system1})-(\ref{pde_system4}) is weakly equivalent to a system in which equations (\ref{pde_system2}) and (\ref{pde_system4}) are replaced by
\begin{equation}\label{conservation_law00}
\{T^{00}_M\}_{,0}+\left\{\sqrt{AB}T^{01}_M\right\}_{,1}=-\frac{2}{x}\sqrt{AB}T^{01}_M,
\end{equation}
\begin{equation}\label{conservation_law01}
\begin{split}
\{T^{01}_M\}_{,0}+&\left\{\sqrt{AB}T^{11}_M\right\}_{,1}= \\
-\frac{1}{2}&\sqrt{AB}\left\{ \frac{4}{x}T^{11}_M+\frac{(\frac{1}{A}-1)}{x}(T^{00}_M-T^{11}_M)
+\frac{2\kappa x}{A}(T^{00}_MT^{11}_M-(T^{01}_M)^2)-4xT^{22}\right\}.
\end{split}
\end{equation}
Here, $\{\}_{,0}$ and $\{\}_{,1}$ represent the derivative with respect to the time and space variable, respectively.  The other two equations (\ref{pde_system1}) and (\ref{pde_system3}) are rearranged as
\begin{equation}\label{metric_ode0}
\frac{A'}{A}=\frac{(\frac{1}{A}-1)}{x}-\frac{\kappa x}{A}T^{00}_M,
\end{equation}
\begin{equation}\label{metric_ode1}
\frac{B'}{B}=\frac{(\frac{1}{A}-1)}{x}+\frac{\kappa x}{A}T^{11}_M.
\end{equation}
The variable $x$ is written instead of $r$ when the equations are expressed as a system of conservation laws consistent with the literature on the subject.  This conservation form is an advantageous formulation because equations (\ref{conservation_law00}) and (\ref{conservation_law01}) form a system of conservation laws with source terms, where the conserved quantities $u=(T^{00}_M,T^{01}_M)$ are independent of the metric (c.f. (\ref{minkowski_defn00}) and (\ref{minkowski_defn01})).  Moreover, all the time derivatives of the metric (\ref{pde_system2}) and (\ref{pde_system4}) cancels out, and the source terms to the conservation laws (\ref{conservation_law00}) and (\ref{conservation_law01}) are independent of any derivatives, resulting in a solvable system.  The resulting system (\ref{conservation_law00})-(\ref{metric_ode1}) can be written compactly as
\begin{equation}\label{conservation_law_source_cmpt}
\begin{split}
    u_{t}+&f(\mathbf{A},u)_{x}=g(\mathbf{A},u,x),\\
    &\mathbf{A}'=h(\mathbf{h},u,x).
\end{split}
\end{equation}
Here, the conserved quantities are the Minkowski energy and momentum densities
\begin{equation}
u=(T^{00}_M,T^{01}_M)\equiv(u^0,u^1),
\end{equation}
and the metric components are written as
\begin{equation}
\mathbf{A}=(A,B).
\end{equation}
The flux function becomes
\begin{equation}\label{flux_cmpt}
    f(\mathbf{A},u)=\sqrt{AB}(T^{01}_M,T^{11}_M).
\end{equation}
The source terms of the conservation laws, (\ref{conservation_law00}) and (\ref{conservation_law01}), are
\begin{equation}\label{conservation_law_source_cmpt}
    g(\mathbf{A},u,x)=(g^0(\mathbf{A},u,x),g^1(\mathbf{A},u,x)),
\end{equation}
where
\begin{equation}\label{conservation_law00_source}
g^0(\mathbf{A},u,x)=-\frac{2}{x}\sqrt{AB}T^{01}_M,
\end{equation}
and
\begin{equation}\label{conservation_law01_source}
\begin{split}
g^1(\mathbf{A},&u,x)= \\-&\frac{1}{2}\sqrt{AB}\left\{ \frac{4}{x}T^{11}_M+\frac{(\frac{1}{A}-1)}{x}(T^{00}_M-T^{11}_M)
+\frac{2\kappa x}{A}(T^{00}_MT^{11}_M-(T^{01}_M)^2)-4xT^{22}\right\}.
\end{split}
\end{equation}
The LHS of the metric ODEs, (\ref{metric_ode0}) and (\ref{metric_ode1}), are
\begin{equation}\label{metric_ode_source_cmpt}
    h(\mathbf{A},u,x)=(h^0(\mathbf{A},u,x),h^1(\mathbf{A},u,x)),
\end{equation}
with
\begin{equation}\label{metric_ode0_source}
h^0(\mathbf{A},u,x)=\frac{(1-A)}{x}-\kappa xT^{00}_M,
\end{equation}
and
\begin{equation}\label{metric_ode1_source}
h^1(\mathbf{A},u,x)=\frac{B}{A}\left\{\frac{(1-A)}{x}+\kappa xT^{11}_M\right\}.
\end{equation}
Thus, for weak solutions to the Einstein equations, the coupled PDEs (\ref{pde_system1})-(\ref{pde_system4}) are equivalent to a conservation law with source terms paired with a system of ODEs (\ref{conservation_law_source_cmpt}).  In this thesis the convergence of the locally inertial Godunov method (defined in Chapter \ref{ch:frac_god_method}) is shown to be a consistent and effective first order method for computing solutions of system (\ref{conservation_law_source_cmpt}), equivalent to the Einstein equations when shock waves are present and the gravitational metric is $C^{1,1}$.  Such solutions only solve the Einstein equations in the weak sense of the theory of distributions.  Our general system (\ref{conservation_law_source_cmpt}) applies to the case of spherically symmetric metrics in standard Schwarzschild coordinates, and in the next chapter, a list of candidates meeting this criteria is assembled.

   \chapter[%
      Short Title of 3rd Ch.
   ]{%
      One Parameter Family of Shock Wave Solutions
   }%
   \label{ch:family_of_shock_waves}
This chapter presents the theory behind the one parameter family of shock wave solutions of the Einstein equations whose numerical simulation is the subject of this thesis.  Our point of departure of this chapter is the paper \cite{smolte} in which Smoller and Temple (Sm/Te) construct exact spherically symmetric shock wave solutions to the Einstein equations for a perfect fluid.  Our purpose here is to include a more realistic equation of state, and thereby simulate the secondary reflected wave, which requires numerical simulation because it cannot be expressed in closed form.  The Sm/Te solutions were realized by matching two distinct metrics to form a Lipschitz continuous hybrid metric \cite{smolte5,groasmte2}.  In particular, the critical (k=0) Friedmann-Robertson-Walker (FRW) metric for the expanding universe was matched to the Tolman-Oppenheimer-Volkoff (TOV) metric, a model of the interior of a star, across a shock wave interface.  To perform this matching, one is required to provide a coordinate map of a spherically symmetric metric, like the FRW metric, over to standard Schwarzschild coordinates.   Recall, a general spherically symmetric metric takes the form
\begin{equation}\label{ch2_spherically_symmetric_metric}
ds^2=-A(t,r)dt^2+B(t,r)dr^2+2D(t,r)dtdr+C(t,r)d\Omega^2,
\end{equation}
where the metric components $A,B,C,D$ are functions of the time and space variables $(t,r)$ only.  The expression $d\Omega^2 = d\theta^{2}+\sin^{2}\theta d\phi^{2}$ represents the standard line element of the unit 2-sphere.  A metric in standard Schwarzschild coordinates takes the form
\begin{equation}\label{ch2_ssc}
ds^{2} =-B(\bar{t},\bar{r})d\bar{t}^2+\frac{1}{A(\bar{t},\bar{r})}d\bar{r}^2+\bar{r}^{2}d\Omega^2,
\end{equation}
where again the metric components are functions of the time and space alone.  Throughout this paper, the barred coordinates $(\bar{t},\bar{r})$ are reserved for metrics in a standard Schwarzschild coordinates system (\ref{ch2_ssc}) while the unbarred coordinates $(t,r)$ are for metrics that are only spherically symmetric (\ref{ch2_spherically_symmetric_metric}).  This FRW metric transformation in not only necessary for the matching, but it is also a prerequisite for the locally inertial Godunov scheme since it is only capable of simulating metrics in standard Schwarzschild coordinates.  When constructing these shock wave solutions, Sm/Te proved the existence of the necessary coordinate transformation, but they did not require any detailed information about it.  Here, we do not have this luxury since this information is essential to run the simulation.  Our main goal of this chapter is to uncover these details on transforming the spherically symmetric FRW metric into standard Schwarzschild coordinates.  We discover there exists at least two of these transformations of the FRW metric, denoted as {\it FRW-1} and {\it FRW-2}.  In the subsequent chapters, these FRW metrics expressed in different standard Schwarzschild coordinates are used to test our locally inertial Godnuov method, by simulating these continuous metrics.

In Section \ref{sec:frw1_metric}, we briefly discuss the FRW metric.  The FRW-1 metric is obtained by providing a candidate for the transformation of the FRW metric into standard Schwarzschild coordinates and proving the validity of this transformation.  The fluid variables also get mapped over as functions of the new coordinates.  In Section \ref{sec:frw2_metric}, we review the standard procedure for mapping a spherically symmetric metric to standard Schwarzschild coordinates, and the FRW-2 metric is constructed using this procedure.  The other metric in the matching, the TOV metric, is discussed in Section \ref{sec:tov_metric}.  This metric is already in standard Schwarzschild coordinates, but the main aspects of it needed in this paper are explained in this section.  We conclude with the details of matching the FRW and TOV metrics continuously across a shock wave interface in Section \ref{sec:matching_metrics}.  It is in this section, we make precise the one parameter family of shock wave solutions by defining the shock wave surface.

\section{The FRW-1 Metric}
\label{sec:frw1_metric}
The conformally flat ($k=0$) FRW metric takes the form
\begin{equation}\label{ch2_frw_metric}
ds^2=-dt^2+R^2(t)\{dr^2+r^2d\Omega^2\}.
\end{equation}
The fluid density, as a function of time $t$, associated with this metric takes the form
\begin{equation}\label{frw_density}
\rho(t)=\frac{3}{4\kappa t^2}
\end{equation}
where again
\begin{equation}\label{ch2_kappa}
\kappa=\frac{8\pi\mathcal{G}}{c^4}
\end{equation}
is Einstein's coupling constant (\ref{ch1_kappa}).  To determine the fluid velocity, the fluid in this metric is comoving relative to the background.  More precisely, its velocity vector, $\mathbf{w}=(w^0,w^1,w^2,w^3)$, has the form $w^i=0$, for $i=1,2,3$, or, in words, the movement of the fluid is limited to the time coordinate and none of the space coordinates.  The FRW metric $g$ being diagonal and the velocity vector having unit length implies
\begin{equation}
w^0=\sqrt{-g_{00}} = 1.
\end{equation}
This velocity vector is expressed in terms of the classical coordinate radial velocity $v\equiv dr/dt$ of the particle paths of the fluid as $v=0$.

Recently, Smoller and Temple derived a new set of equations with solutions describing a two parameter family of expanding wave solutions of the Einstein equations which includes the critical Friedmann universe in the standard model of cosmology \cite{smoltePNAS09}.  In deriving this set of equations, they discover a coordinate map for the FRW metric (\ref{ch2_frw_metric}) over to standard Schwarzschild coordinates (\ref{ch2_ssc}) for $\sigma=\frac{1}{3}$, and where $R(t)=\sqrt{t}$ is the scale factor.  We state their result and derivation in the following theorem
\begin{thm}\label{thm:frw1_xform_ssc}
Assume we have an equation of state $p=\frac{1}{3}\rho$ and $k=0$.  Then the FRW metric (\ref{ch2_frw_metric}) under the coordinate transformation
\begin{equation}\label{frw1_xform_ssc}
\begin{split}
\bar{r}=&\sqrt{t}r,\\
\bar{t}=&\left\{1+\frac{\bar{r}^2}{4t^2}\right\}t=t+\frac{r^2}{4},
\end{split}
\end{equation}
goes over to the following metric in standard Schwarzschild coordinates
\begin{equation}\label{ch2_frw1_in_ssc}
ds^{2} =-\frac{1}{1-v^2}d\bar{t}^2+\frac{1}{1-v^2}d\bar{r}^2+\bar{r}^{2}d\Omega^2,
\end{equation}
where the fluid velocity $v$ is related to $\bar{r}/\bar{t}$ by
\begin{equation}\label{xi_v_relation}
\xi\equiv\frac{\bar{r}}{\bar{t}}=\frac{2v}{1+v^2}.
\end{equation}
\end{thm}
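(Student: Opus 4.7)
The plan is to execute the coordinate change directly, pulling back the line element on the right of (\ref{ch2_frw_metric}) into the barred coordinates and checking agreement with (\ref{ch2_frw1_in_ssc}). The equation of state $p=\rho/3$ together with $k=0$ fixes the scale factor as $R(t)=\sqrt{t}$ (the radiation-dominated Friedmann solution), so the spatial FRW block becomes $t\,dr^{2}+tr^{2}\,d\Omega^{2}$. The angular piece is handled at once, since $R^{2}(t)r^{2}=tr^{2}=\bar{r}^{2}$ by the very definition of $\bar{r}$ in (\ref{frw1_xform_ssc}), and the task reduces to the two-dimensional $(t,r)$ block.

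First I would differentiate (\ref{frw1_xform_ssc}) to obtain $d\bar{t}=dt+(r/2)\,dr$ and $d\bar{r}=(r/(2\sqrt{t}))\,dt+\sqrt{t}\,dr$, compute the Jacobian $J=(4t-r^{2})/(4\sqrt{t})$ --- nonzero precisely on the subluminal region $r<2\sqrt{t}$ --- and invert the $2\times 2$ system to write $dt$ and $dr$ as linear combinations of $d\bar{t}$ and $d\bar{r}$. Substituting these into $-dt^{2}+t\,dr^{2}$ and expanding yields a quadratic form in $d\bar{t}$ and $d\bar{r}$ with three a priori independent coefficients, each a rational function of $t$ and $r$.

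The crux, and the step I expect to require the most care, is verifying that the cross-term coefficient of $d\bar{t}\,d\bar{r}$ vanishes identically. This is exactly the cancellation that forces the transformed metric into standard Schwarzschild form (\ref{ch2_ssc}) rather than leaving it in the general spherically symmetric form (\ref{ch2_spherically_symmetric_metric}); it is a special feature of the transformation (\ref{frw1_xform_ssc}) and not shared by generic smooth changes of variables. Once the cross term is killed, the two surviving diagonal coefficients collapse to the common value $4t/(4t-r^{2})$, since every surviving term carries the same prefactor $(t-r^{2}/4)/J^{2}=4t/(4t-r^{2})$.

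Finally I would set $v:=r/(2\sqrt{t})$ so that $1-v^{2}=(4t-r^{2})/(4t)$; this converts the common coefficient to $1/(1-v^{2})$, matching (\ref{ch2_frw1_in_ssc}) exactly. The identity (\ref{xi_v_relation}) then follows by substituting $v=r/(2\sqrt{t})$ into $2v/(1+v^{2})$ and comparing to $\bar{r}/\bar{t}=\sqrt{t}\,r/(t+r^{2}/4)$, a one-line algebraic check. As a sanity verification that this $v$ really is the fluid velocity in the sense used in (\ref{minkowski_defn00})--(\ref{minkowski_defn11}), I would push the comoving FRW four-velocity $(1,0,0,0)$ through the Jacobian of (\ref{frw1_xform_ssc}) to obtain $(\bar{w}^{0},\bar{w}^{1})=(1,\,r/(2\sqrt{t}))$ and confirm that the velocity recorded by a stationary observer at fixed $\bar{r}$ reduces to the same $v$.
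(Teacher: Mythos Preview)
Your proposal is correct and follows essentially the same route as the paper's proof: both compute the forward Jacobian of (\ref{frw1_xform_ssc}), invert it, and push the $2\times 2$ block $-dt^{2}+t\,dr^{2}$ through to obtain the diagonal form with common magnitude $4t/(4t-r^{2})=1/(1-v^{2})$, then transform the comoving four-velocity to identify $v=r/(2\sqrt{t})=\bar{r}/(2t)$ and read off (\ref{xi_v_relation}). The only cosmetic difference is that the paper organizes the metric calculation as the matrix product $\bar{g}=J^{T}gJ$ rather than substituting $dt,\,dr$ into the line element, so the vanishing of the cross term appears as a zero off-diagonal entry rather than as a separate cancellation to be checked.
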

\begin{proof}
To map the FRW metric tensor (\ref{ch2_frw_metric}) from the unbarred coordinates $(t,r)$ over to the barred coordinates $(\bar{t},\bar{r})$, we use the tensor transformation law
\begin{equation}
\bar{g}_{\alpha\beta}=g_{ij}\frac{\partial x^i}{\partial \bar{x}^\alpha}\frac{\partial x^j}{\partial \bar{x}^\beta}.
\end{equation}
More specifically, at each point, g transforms by the matrix transformation law
\begin{equation}
\bar{g}=J^TgJ
\end{equation}
for a bilinear form since the Jacobian matrix $J=\frac{\partial x^i}{\partial \bar{x}^\alpha}$ transforms the vector components of the $\bar{x}$-basis $\left\{\frac{\partial}{\partial \bar{x}^\alpha}\right\}$ into their corresponding coordinates of the $x$-basis $\left\{\frac{\partial}{\partial x^i}\right\}$.
The inverse of the Jacobian matrix is easily computed by taking partial derivatives of the coordinate transformation equations (\ref{frw1_xform_ssc}), resulting in
\begin{equation}
J^{-1}=\left(
\begin{array}{cc}
1 & \frac{r}{2}\\
\frac{r}{2\sqrt{t}} & \sqrt{t}
\end{array}
\right).
\end{equation}
So the Jacobian matrix is
\begin{equation}
J = \frac{1}{|J^{-1}|}\left(
\begin{array}{cc}
\sqrt{t} & -\frac{r}{2}\\
-\frac{r}{2\sqrt{t}} & 1
\end{array}
\right),
\end{equation}
where
\begin{equation}\label{jacobian_inverse_determinat}
|J^{-1}| = \frac{4t-r^2}{4\sqrt{t}}.
\end{equation}
The metric in barred coordinates $\bar{g}$ is computed as
\begin{equation}
\begin{split}
\bar{g} = J^TgJ =&
\frac{1}{|J^{-1}|^2}\left(
\begin{array}{cc}
\sqrt{t} & -\frac{r}{2\sqrt{t}}\\
-\frac{r}{2} & 1
\end{array}
\right)
\left(
\begin{array}{cc}
-1 & 0\\
0 & t
\end{array}
\right)
\left(
\begin{array}{cc}
\sqrt{t} & -\frac{r}{2}\\
-\frac{r}{2\sqrt{t}} & 1
\end{array}
\right)\\
=&
\frac{1}{|J^{-1}|^2}\left(
\begin{array}{cc}
-(t-\frac{r^2}{4}) & 0\\
0 & t-\frac{r^2}{4}
\end{array}
\right).
\end{split}
\end{equation}
Using (\ref{jacobian_inverse_determinat}) to solve for
\begin{equation}
|J^{-1}|^2 = \frac{(t-\frac{r^2}{4})^2}{t},
\end{equation}
gives us
\begin{equation}
\bar{g}=
\left(
\begin{array}{cc}
-\frac{1}{1-\frac{r^2}{4t}} & 0\\
0 & \frac{1}{1-\frac{r^2}{4t}}
\end{array}
\right).
\end{equation}
Note, the new metric is a function of $\bar{r}/t$, and it is useful to define a new variable $\eta$ as
\begin{equation}\label{eta_defn}
\eta\equiv\frac{\bar{r}}{t}=\frac{\sqrt{t}r}{t}=\frac{r}{\sqrt{t}}.
\end{equation}
As a function of $\eta$, the metric becomes
\begin{equation}
\bar{g}=
\left(
\begin{array}{cc}
-\frac{1}{1-\frac{\eta^2}{4}} & 0\\
0 & \frac{1}{1-\frac{\eta^2}{4}}
\end{array}
\right),
\end{equation}
or equivalently it is written in standard Schwarzschild coordinates form (\ref{ch2_ssc}),
\begin{equation}\label{frw1_in_ssc_with_eta}
ds^{2} =-\frac{1}{1-\frac{\eta^2}{4}}d\bar{t}^2 +\frac{1}{1-\frac{\eta^2}{4}}d\bar{r}^2+\bar{r}^{2}d\Omega^2,
\end{equation}
with
\begin{equation}
A(\bar{t},\bar{r}) = 1-\frac{\eta^2}{4},\phantom{4444} B(\bar{t},\bar{r}) = \frac{1}{1-\frac{\eta^2}{4}}.
\end{equation}

The metric is also a function of the fluid velocity $v$, shown by finding the relationship between $v$ and our variable $\eta$.  To this end, the old velocity vector $\mathbf{w}$ is related to the new velocity vector $\mathbf{\bar{w}}$ using the tensor transformation law
\begin{equation}\label{velocity_vector_xform_law}
\bar{w}^\alpha=\frac{\partial\bar{x}^\alpha}{\partial x^i}w^i,
\end{equation}
so for our comoving velocity vector $\mathbf{w}=(1,0,0,0)$ this law simplifies to
\begin{equation}\label{velocity_vector_xform_law_comoving}
\bar{w}^\alpha=\frac{\partial\bar{x}^\alpha}{\partial x^0}.
\end{equation}
Using (\ref{velocity_vector_xform_law_comoving}) with our coordinate transformation (\ref{frw1_xform_ssc}), the new velocity vector coordinates are
\begin{equation}
\bar{w}^0 =\frac{\partial\bar{x}^0}{\partial x^0}= \frac{\partial\bar{t}}{\partial t} = 1,\phantom{4444} \bar{w}^1 =\frac{\partial\bar{x}^1}{\partial x^0}= \frac{\partial\bar{r}}{\partial t} = \frac{1}{2}\frac{r}{\sqrt{t}},\phantom{4444} \bar{w}^2=\bar{w}^3=0.
\end{equation}
For a spherically symmetric metric in standard Schwarzschild coordinates (\ref{ch2_ssc}) the radial velocity $v\equiv dr/dt$ is computed as
\begin{equation}
v=\frac{\bar{w}^1}{\bar{w}^2}\frac{1}{\sqrt{AB}}=\frac{1}{2}\frac{r}{\sqrt{t}}=\frac{1}{2}\frac{\bar{r}}{t}.
\end{equation}
For our metric (\ref{frw1_in_ssc_with_eta}), $A=B^{-1}$, implying $\sqrt{AB}=1$.
By (\ref{eta_defn}), the relationship between $v$ and $\eta$ is
\begin{equation}\label{v_eta_relationship}
v=\frac{\eta}{2}.
\end{equation}

We substitute this relation (\ref{v_eta_relationship}) into (\ref{frw1_in_ssc_with_eta}) to prove (\ref{ch2_frw1_in_ssc}).
By defining the variable $\xi=\bar{r}/\bar{t}$, the coordinate transformation (\ref{frw1_xform_ssc}) implies
\begin{equation}
\xi\equiv\frac{\bar{r}}{\bar{t}}=\frac{\sqrt{t}r}{(1+\frac{r^2}{4t})t}=\frac{2v}{1+v^2},
\end{equation}
proving (\ref{xi_v_relation}).
\end{proof}

The FRW metric in standard Schwarzschild coordinates (\ref{ch2_ssc})
\begin{equation}
ds^{2} =-\frac{1}{1-v^2}d\bar{t}^2+\frac{1}{1-v^2}d\bar{r}^2+\bar{r}^{2}d\Omega^2,
\end{equation}
with
\begin{equation}
A(\xi) = 1-v(\xi)^2,\phantom{4444} B(\xi) = \frac{1}{1-v(\xi)^2},
\end{equation}
is a function of $\xi$ alone.  We consider now the fluid variables $(\rho,v)$.  The fluid velocity $v$ is implicitly a function of $\xi$ by equation (\ref{xi_v_relation}), but through manipulation the fluid velocity $v$ is a function of $\xi$ explicitly
\begin{equation}
v(\xi)=\frac{1-\sqrt{1-\xi^2}}{\xi},
\end{equation}
where the minus sign is taken to ensure $v<1$.  The density $\rho$ is a function of the FRW time $t$ (\ref{frw_density}) and is transformed into a function of $\xi$ by multiplying the density function (\ref{frw_density}) by $\bar{r}^2$ to obtain
\begin{equation}\label{rho_fn_v}
\bar{r}^2\rho = \frac{3}{4}\frac{\bar{r}^2}{\kappa t^2} =\frac{3}{4}\frac{\eta^2}{\kappa}=\frac{3v(\xi)^2}{\kappa},
\end{equation}
using (\ref{v_eta_relationship}).  As a side note, this equation along with (\ref{mass_space_ode}) implies that the function $\frac{dM}{d\bar{r}}$ at a point, where $M$ is the total mass, can also be written as a function of $\xi$.  Rearranging equation (\ref{rho_fn_v}), the density function becomes
\begin{equation}
\rho(\xi,\bar{r}) =\frac{3v(\xi)^2}{\kappa\bar{r}^2}.
\end{equation}
These results are recorded as a Corollary to Theorem \ref{thm:frw1_xform_ssc}
\begin{cor}
The fluid variables $(\rho,v)$ corresponding to the FRW metric in standard Schwarzschild coordinates (\ref{ch2_frw1_in_ssc}) are given as
\begin{equation}\label{ch2_frw1_fluid_vars_in_ssc}
\rho(\xi,\bar{r}) =\frac{3v(\xi)^2}{\kappa\bar{r}^2},\phantom{4444} v(\xi)=\frac{1-\sqrt{1-\xi^2}}{\xi},
\end{equation}
where $\xi=\bar{r}/\bar{t}$.
\end{cor}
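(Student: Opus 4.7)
The plan is to establish the two formulas separately. The velocity formula will follow by solving the algebraic relation $\xi = 2v/(1+v^{2})$ from Theorem \ref{thm:frw1_xform_ssc} explicitly for $v$, and the density formula will follow by pushing the FRW density (\ref{frw_density}) through the coordinate change (\ref{frw1_xform_ssc}) and re-expressing the result through $v$.

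For the velocity, I would rewrite $\xi = 2v/(1+v^{2})$ as the quadratic $\xi v^{2} - 2v + \xi = 0$ in the unknown $v$, whose roots are $v = (1 \pm \sqrt{1-\xi^{2}})/\xi$. I expect the minus branch to be the physical one: the plus branch blows up as $\xi \to 0^{+}$, whereas on the physical range $0 < \xi < 1$ a short algebraic check shows $1 - \sqrt{1-\xi^{2}} < \xi$ (equivalent to $\sqrt{1-\xi} < \sqrt{1+\xi}$), so the minus branch satisfies $v < 1$ and furthermore tends to $0$ as $\xi \to 0$, consistent with the nonrelativistic behavior of the comoving FRW fluid near the origin. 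This yields the claimed formula $v(\xi) = (1-\sqrt{1-\xi^{2}})/\xi$.

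For the density, I would chain together the substitutions already assembled in the proof of Theorem \ref{thm:frw1_xform_ssc}. Multiplying (\ref{frw_density}) by $\bar{r}^{2}$ and using $\bar{r} = \sqrt{t}\,r$ from (\ref{frw1_xform_ssc}) gives
\[
\bar{r}^{2}\rho \;=\; \frac{3\,\bar{r}^{2}}{4\kappa t^{2}} \;=\; \frac{3\,\eta^{2}}{4\kappa},
\]
where $\eta = r/\sqrt{t}$ is the variable defined in (\ref{eta_defn}). Invoking the identity $v = \eta/2$ from (\ref{v_eta_relationship}) converts the right-hand side into $3v(\xi)^{2}/\kappa$, and dividing through by $\bar{r}^{2}$ produces the claimed density $\rho(\xi,\bar{r}) = 3v(\xi)^{2}/(\kappa\bar{r}^{2})$.

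Every step is either a direct substitution or a quadratic inversion, so there is no genuine obstacle. The only delicate point — and the one flagged explicitly in the discussion preceding the statement — is the branch selection in the velocity formula; I expect the entire proof to collapse into a short sequence of algebraic manipulations once Theorem \ref{thm:frw1_xform_ssc} is in hand.
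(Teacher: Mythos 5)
Your argument is essentially the same as the paper's: solve the quadratic $\xi v^{2}-2v+\xi=0$ and take the minus branch so that $v<1$, then multiply the FRW density by $\bar{r}^{2}$, pass to the self-similar variable $\eta=\bar{r}/t$, and use $v=\eta/2$ to obtain $\bar{r}^{2}\rho=3v^{2}/\kappa$. The only addition you make is a slightly more explicit justification of the branch selection (checking $1-\sqrt{1-\xi^{2}}<\xi$ and the $\xi\to 0$ limit), which the paper compresses into the remark that the minus sign is taken to ensure $v<1$.
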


\section{The FRW-2 Metric}
\label{sec:frw2_metric}
In this section, we show the existence of another mapping of the FRW metric (\ref{ch2_frw_metric}) into standard Schwarzschild coordinates.  As opposed to the a priori knowledge of the FRW-1 transformation in Theorem \ref{thm:frw1_xform_ssc}, the FRW-2 transformation is developed through construction.  There is a classic argument in \cite{wein} to build a coordinate transformation $(t,r)\rightarrow(\bar{t},\bar{r})$ taking a general spherically symmetric metric (\ref{ch2_spherically_symmetric_metric}) over to standard Schwarzschild coordinates (\ref{ch2_ssc}), and this argument is repeated to build the FRW-2 transformation.  In particular, we take the spherically symmetric FRW metric
\begin{equation}\label{ch2_frw_metric2}
ds^2=-dt^2+R^2(t)\{dr^2+r^2d\Omega^2\},
\end{equation}
over to the standard Schwarzschild coordinate form
\begin{equation}
ds^{2} =-B(\bar{t},\bar{r})d\bar{t}^2+\frac{1}{A(\bar{t},\bar{r})}d\bar{r}^2+\bar{r}^{2}d\Omega^2.
\end{equation}
We first match the spheres of symmetry
\begin{equation}
\bar{r}^2d\Omega^2=R^2r^2d\Omega^2,
\end{equation}
so the first component of the coordinate mapping $(t,r)\rightarrow(\bar{t},\bar{r})$ is defined by the following
\begin{equation}\label{r_map}
\bar{r}\equiv\bar{r}(t,r)=R(t)r.
\end{equation}
Rewriting (\ref{r_map}) as
\begin{equation}
r=\frac{\bar{r}}{R},
\end{equation}
we differentiate and square to obtain the following differential form relationships,
\begin{equation}
dr=\frac{1}{R}d\bar{r}-\frac{\dot{R}\bar{r}}{R^2}dt
\end{equation}
and
\begin{equation}\label{dr_squared}
dr^2=\frac{1}{R^2}d\bar{r}^2-\frac{2\dot{R}\bar{r}}{R^3}d\bar{r}dt +\frac{\dot{R}^2\bar{r}^2}{R^4}dt^2
\end{equation}
Substituting (\ref{dr_squared}) into (\ref{ch2_frw_metric2}), results in the FRW metric in $(t,\bar{r})$ coordinates
\begin{equation}\label{t_r_bar_coords}
ds^2=-\left(1-\frac{\dot{R}^2\bar{r}^2}{R^2}\right)dt^2+d\bar{r}^2-\frac{2\dot{R}\bar{r}}{R}dtd\bar{r} +\bar{r}^2d\Omega^2.
\end{equation}
In order to complete the coordinate transformation $(t,r)\rightarrow(\bar{t},\bar{r})$, $\bar{t}=\bar{t}(t,r)$ must be defined to eliminate the cross term $dtd\bar{r}$ in (\ref{t_r_bar_coords}).  Since it is already matched, the $\bar{r}^2d\Omega$ term is ignored in our subsequent calculations to simplify notation.  We show the procedure for eliminating the cross term for a generalized metric of the form
\begin{equation}\label{cross_term_general_metric}
ds^2=-C(t,\bar{r})dt^2+D(t,\bar{r})d\bar{r}^2+2E(t,\bar{r})dtd\bar{r},
\end{equation}
and handle our specific case afterwards.  To eliminate the cross term in (\ref{cross_term_general_metric}), an integrating factor $\Psi=\Psi(t,\bar{r})$ is chosen to satisfy the equation
\begin{equation}\label{general_integrating_factor_eqn}
\frac{\partial}{\partial\bar{r}}(\Psi C)=-\frac{\partial}{\partial t}(\Psi E),
\end{equation}
implying
\begin{equation}\label{exact_differential_eqn}
d\bar{t}=\Psi\{Cdt-Ed\bar{r}\}
\end{equation}
is an exact differential, proven at the end of this section in Lemma \ref{lem:exact_differential}.  Exactness gives us the existence of $\bar{t}(t,\bar{r})$, completing the definition of our coordinate transformation $(t,r)\rightarrow(\bar{t},\bar{r})$.  We consider options for the integrating factor $\Psi(t,\bar{r})$ and the corresponding coordinate transform $\bar{t}=\bar{t}(t,r)$ later; for now, we continue mapping the metric over, by assuming $\Psi$ satisfies (\ref{general_integrating_factor_eqn}).  Squaring both sides of (\ref{exact_differential_eqn}) results in
\begin{equation}
d\bar{t}^2=\Psi^2\{C^2dt^2-2CEdtd\bar{r}+E^2d\bar{r}^2\},
\end{equation}
and rearranging leads us to
\begin{equation}
-Cdt^2+2Edtd\bar{r}=-\frac{1}{C\Psi^2}d\bar{t}^2+\frac{E^2}{C}d\bar{r}^2.
\end{equation}
Substituting this equation into (\ref{cross_term_general_metric}) gives us the diagonal metric in the barred coordinate frame
\begin{equation}\label{diagonal_general_metric}
d\tilde{s}^2=-\frac{1}{C\Psi^2}d\bar{t}^2+\left(D+\frac{E^2}{C}\right)d\bar{r}^2.
\end{equation}
For our specific case, we have
\begin{equation}\label{cde_metric_terms}
C=1-\frac{\dot{R}^2\bar{r}^2}{R^2},\phantom{4444} D=1,\phantom{4444} E=-\frac{\dot{R}\bar{r}}{R},
\end{equation}
and substituting these into the general form (\ref{diagonal_general_metric}) gives us
\begin{equation}
ds^2=-\frac{1}{\Psi^2\left(1-\frac{\dot{R}^2\bar{r}^2}{R^2}\right)}d\bar{t}^2 +\frac{1}{\left(1-\frac{\dot{R}^2\bar{r}^2}{R^2}\right)}d\bar{r}^2 +\bar{r}^2d\Omega^2.
\end{equation}
For the pure radiation phase, where $\sigma=1/3$, the cosmological scale function becomes $R(t)=\sqrt{t}$, and the resulting metric is
\begin{equation}
ds^2=-\frac{1}{\Psi^2\left(1-\frac{\bar{r}^2}{4t^2}\right)}d\bar{t}^2 +\frac{1}{\left(1-\frac{\bar{r}^2}{4t^2}\right)}d\bar{r}^2 +\bar{r}^2d\Omega^2.
\end{equation}
The definition of $\eta$ (\ref{eta_defn}) allows us to rewrite the metric in the barred coordinates as
\begin{equation}\label{frw2_in_ssc_with_eta}
ds^2=-\frac{1}{\Psi^2\left(1-\frac{\eta^2}{4}\right)}d\bar{t}^2 +\frac{1}{\left(1-\frac{\eta^2}{4}\right)}d\bar{r}^2 +\bar{r}^2d\Omega^2.
\end{equation}

Notice this metric (\ref{frw2_in_ssc_with_eta}) closely resembles the FRW-1 case (\ref{frw1_in_ssc_with_eta}) except for inclusion of $\Psi$, indicating our ability to recover the FRW-1 metric by choosing the correct integrating factor.  With this in mind, we look for solutions of the integrating factor equation (\ref{general_integrating_factor_eqn}).  In light of the functions $C(t,\bar{r})$ and $E(t,\bar{r})$ defined in (\ref{cde_metric_terms}), equation (\ref{general_integrating_factor_eqn}) becomes
\begin{equation}
\frac{\partial}{\partial\bar{r}}\left(\Psi\left(1-\frac{\bar{r}^2}{4t^2}\right)\right)=\frac{\partial}{\partial\bar{r}}(\Psi C)=-\frac{\partial}{\partial t}(\Psi E)=\frac{\partial}{\partial t}\left(\Psi\left(\frac{\bar{r}}{2t}\right)\right).
\end{equation}
The following lemma expresses the existence of two solutions for this integrating factor equation
\begin{lem}\label{lem:integrating_factor_eqn}
The following PDE for the function $\Psi(t,\bar{r})$
\begin{equation}\label{integrating_factor_eqn}
\frac{\partial}{\partial\bar{r}}\left(\Psi\left(1-\frac{\bar{r}^2}{4t^2}\right)\right)=\frac{\partial}{\partial t}\left(\Psi\left(\frac{\bar{r}}{2t}\right)\right)
\end{equation}
has at least two solutions, a constant solution $\Psi_1(t,\bar{r})=\Psi_0$ and a dynamical solution
\begin{equation}\label{frw2_integrating factor}
\Psi_2(t,\bar{r})=\Psi_0\sqrt{\frac{t}{4t^2+\bar{r}^2}},
\end{equation}
where $\Psi_0$ is some constant.
\end{lem}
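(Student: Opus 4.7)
The plan is to verify both claimed solutions directly by substitution, after first simplifying the PDE. Expanding both sides of equation (\ref{integrating_factor_eqn}) via the product rule yields
\[
\Psi_{\bar r}\left(1-\frac{\bar r^2}{4t^2}\right) - \frac{\bar r}{2t^2}\Psi \;=\; \frac{\bar r}{2t}\Psi_t - \frac{\bar r}{2t^2}\Psi.
\]
The key observation is that the undifferentiated $\Psi$ term appears identically on both sides and cancels. After multiplying through by $4t^2$, the equation reduces to the first-order linear homogeneous PDE
\[
(4t^2-\bar r^2)\,\Psi_{\bar r} \;=\; 2t\bar r\,\Psi_t.
\]

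The constant solution $\Psi_1=\Psi_0$ is then immediate, since both $\Psi_{\bar r}$ and $\Psi_t$ vanish. For the dynamical solution, I would write $\Psi_2 = \Psi_0\,t^{1/2}(4t^2+\bar r^2)^{-1/2}$ and compute $\Psi_{2,\bar r}$ and $\Psi_{2,t}$ directly. Both partial derivatives share the common factor $\Psi_0(4t^2+\bar r^2)^{-3/2}$, and after tracking signs and the powers of $t$, the two sides of the reduced PDE collapse to the common expression $-\Psi_0\,t^{1/2}\bar r\,(4t^2-\bar r^2)(4t^2+\bar r^2)^{-3/2}$. This is the only step with real computational content, and the only real risk is bookkeeping in the partial derivatives; there is no conceptual obstacle.

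As a sanity check on why these are the natural solutions, one can apply the method of characteristics to the reduced PDE: along the characteristic curves of the vector field $(4t^2-\bar r^2)\partial_{\bar r} - 2t\bar r\,\partial_t$, a direct differentiation shows that the quantity $(4t^2+\bar r^2)/t$ is conserved. Consequently, the general solution has the form $\Psi = F\bigl((4t^2+\bar r^2)/t\bigr)$ for an arbitrary smooth function $F$. The specializations $F(u)=\Psi_0$ and $F(u)=\Psi_0\, u^{-1/2}$ recover $\Psi_1$ and $\Psi_2$ respectively, and this calculation also exhibits an entire family of integrating factors beyond the two singled out in the statement.
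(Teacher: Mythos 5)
Your proposal is correct, and it arrives at the same reduced first-order PDE as the paper, but by a slightly different route: you expand both sides by the product rule and observe that the undifferentiated $\Psi$ terms cancel, whereas the paper notes directly that $\partial_{\bar r}\bigl(1-\frac{\bar r^2}{4t^2}\bigr)=\partial_t\bigl(\frac{\bar r}{2t}\bigr)$ so that only the $\Psi$-derivative terms survive. These are the same observation packaged differently. Where you genuinely diverge is in handling the dynamical solution. The paper \emph{constructs} $\Psi_2$ by positing the separated ansatz $\Psi_2=f(\eta)/\sqrt{\bar r}$ with $\eta=\bar r/t$, reducing the PDE to an ODE in $\eta$, and solving via partial fractions; you instead \emph{verify} the given formula by direct substitution and then back it up with the method of characteristics, showing $(4t^2+\bar r^2)/t$ is constant along characteristics so that the general solution is $\Psi=F\bigl((4t^2+\bar r^2)/t\bigr)$. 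Your verification is shorter and less prone to sign slips than the ODE derivation, and the characteristics remark is a real addition: it exhibits the full one-function family of integrating factors and makes the ``at least two'' of the lemma transparent (the two solutions are just $F(u)=\Psi_0$ and $F(u)=\Psi_0 u^{-1/2}$). The paper's ansatz approach, on the other hand, better documents \emph{how} one would discover $\Psi_2$ without already knowing it, which matters in a thesis whose point is to construct the FRW-2 transformation rather than merely confirm it. Both arguments are sound; I checked your claimed common value of both sides of the reduced PDE and the conservation of $(4t^2+\bar r^2)/t$ along the characteristic field, and they work out.
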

\begin{proof}
The constant function, $\Psi_1(t,\bar{r})=\Psi_0$, solves the PDE (\ref{integrating_factor_eqn}) since
\begin{equation}
\Psi_0\frac{\partial}{\partial\bar{r}}\left(1-\frac{\bar{r}^2}{4t^2}\right)= -\Psi_0\frac{\bar{r}}{2t^2}= \Psi_0\frac{\partial}{\partial t}\left(\frac{\bar{r}}{2t}\right),
\end{equation}

To find the other solution for the integrating factor $\Psi_2(t,\bar{r})$, a simplification of (\ref{integrating_factor_eqn}) is obtained by noticing
\begin{equation}
\frac{\partial}{\partial\bar{r}}\left(1-\frac{\bar{r}^2}{4t^2}\right)= \frac{\partial}{\partial t}\left(\frac{\bar{r}}{2t}\right),
\end{equation}
so $\Psi_2$ satisfying (\ref{integrating_factor_eqn}) is equivalent being a solution to the following PDE
\begin{equation}\label{psi_pde}
\frac{\partial\Psi_2}{\partial\bar{r}}\left(1-\frac{\bar{r}^2}{4t^2}\right)= \frac{\partial\Psi_2}{\partial t}\left(\frac{\bar{r}}{2t}\right).
\end{equation}

We proceed by constructing a function $\Psi_2$ satisfying (\ref{psi_pde}).  Suppose $\Psi_2$ has the form
\begin{equation}\label{psi2_defn}
\Psi_2(t,\bar{r})=\frac{f(\eta)}{\sqrt{\bar{r}}},
\end{equation}
where $f(\eta)$ is an unknown function of the predefined variable $\eta=\bar{r}/t$.  Taking the partial derivatives of \ref{psi2_defn} leads to
\begin{equation}\label{psi_partials}
\frac{\partial\Psi_2}{\partial\bar{r}}=\frac{f'(\eta)}{t\sqrt{\bar{r}}}-\frac{f(\eta)}{2\bar{r}^{\frac{3}{2}}}, \phantom{4444}\frac{\partial\Psi_2}{\partial t}=\frac{f'(\eta)}{\sqrt{\bar{r}}}\left(-\frac{\bar{r}}{t^2}\right),
\end{equation}
where the prime represents differentiation with respect to the variable $\eta$.  Substituting $\eta$ into equation (\ref{psi_pde}) gives us
\begin{equation}\label{psi_pde_eta}
\frac{\partial\Psi}{\partial\bar{r}}\left(1-\frac{\eta^2}{4}\right)= \frac{\partial\Psi}{\partial t}\left(\frac{\eta}{2}\right).
\end{equation}
Plugging in the partials (\ref{psi_partials}) into the PDE (\ref{psi_pde_eta}) and simplifying results in the following ODE
\begin{equation}
\frac{f'(\eta)}{f(\eta)}=\frac{4-\eta^2}{2\eta(4+\eta^2)}.
\end{equation}
Using partial fractions, the solution of this ODE is
\begin{equation}
f(\eta)=f_0\sqrt{\frac{4+\eta^2_0}{\eta_0}\frac{\eta}{4+\eta^2}}.
\end{equation}
Combining the constants together by defining
\begin{equation}
\Psi_0\equiv f_0\sqrt{\frac{4+\eta^2_0}{\eta_0}}
\end{equation}
produces the unknown function $f(\eta)$ as
\begin{equation}
f(\eta)=\Psi_0\sqrt{\frac{\eta}{4+\eta^2}}.
\end{equation}
Therefore, the second integrating factor becomes
\begin{equation}
\Psi_2(t,\bar{r})=\frac{f(\eta)}{\sqrt{\bar{r}}}=\Psi_0\sqrt{\frac{\eta}{\bar{r}(4+\eta^2)}}= \Psi_0\sqrt{\frac{t}{4t^2+\bar{r}^2}},
\end{equation}
proving (\ref{frw2_integrating factor}).
\end{proof}

The first integrating factor $\Psi_1$, the constant solution, gives us a differential form (\ref{exact_differential_eqn}) of
\begin{equation}
d\bar{t}=\Psi\{Cdt-Ed\bar{r}\}= \Psi_0\left\{\left(1-\frac{\bar{r}^2}{4t^2}\right)dt-\frac{\bar{r}}{2t}d\bar{r}\right\}
\end{equation}
Since this is an exact 1-form, the coordinate transformation $\bar{t}(t,\bar{r})$ is obtained through integration as
\begin{equation}
\bar{t}(t,\bar{r})=\Psi_0\left(1+\frac{\bar{r}^2}{4t^2}\right)t,
\end{equation}
or
\begin{equation}
\bar{t}(t,r)=\Psi_0\left(1+\frac{r^2}{4t}\right)t,
\end{equation}
where $\bar{r}=R(t)r=\sqrt{t}r$ is used to connect the two.  By setting $\Psi_0=1$, we recover the FRW-1 coordinate transformation (\ref{frw1_xform_ssc}) into standard Schwarzschild coordinates, providing another proof for its existence.

The second integrating factor $\Psi_2$, the dynamical solution, has the corresponding 1-form (\ref{exact_differential_eqn}) of
\begin{equation}
d\bar{t}=\Psi\{Cdt-Ed\bar{r}\}= \Psi_0\left\{\sqrt{\frac{t}{4t^2+\bar{r}^2}}\left(1-\frac{\bar{r}^2}{4t^2}\right)dt -\sqrt{\frac{t}{4t^2+\bar{r}^2}}\left(\frac{\bar{r}}{2t}\right)d\bar{r}\right\}.
\end{equation}
Again, since this 1-form is exact, there exists a function $\bar{t}(t,\bar{r})$ such that
\begin{equation}
\frac{\partial\bar{t}}{\partial t} = \Psi_0\sqrt{\frac{t}{4t^2+\bar{r}^2}}\left(1-\frac{\bar{r}^2}{4t^2}\right)\phantom{4444}
\frac{\partial\bar{t}}{\partial\bar{r}} =-\Psi_0\sqrt{\frac{t}{4t^2+\bar{r}^2}}\left(\frac{\bar{r}}{2t}\right)
\end{equation}
To solve for $\bar{t}$ we integrate the $\frac{\partial\bar{t}}{\partial\bar{r}}$ equation to obtain
\begin{equation}\label{prelim_t_bar}
\bar{t} = \frac{\Psi_0}{2}\sqrt{\frac{4t^2+\bar{r}^2}{t}} + g(t),
\end{equation}
for some function $g(t)$.  Taking $\frac{\partial}{\partial t}$ of (\ref{prelim_t_bar}) leads to $g(t)=C$, the constant function.  By setting $g(t)=0$, the coordinate transform corresponding to the second integrating factor (\ref{frw2_integrating factor}) is
\begin{equation}\label{ch2_t_bar_xform}
\bar{t}(t,\bar{r}) = \frac{\Psi_0}{2}\sqrt{\frac{4t^2+\bar{r}^2}{t}}.
\end{equation}
Hence, our new coordinate transformation of the FRW metric (\ref{ch2_frw_metric}) into standard Schwarzschild coordinates is
\begin{equation}\label{frw2_xform_ssc}
\begin{split}
\bar{r}=&\sqrt{t}r,\\
\bar{t}=&\frac{\Psi_0}{2}\sqrt{\frac{4t^2+\bar{r}^2}{t}}.
\end{split}
\end{equation}

Similar to the FRW-1 case, there exists a relationship between $v$ and $\eta$.  The velocity vector transformation laws, (\ref{velocity_vector_xform_law}) and (\ref{velocity_vector_xform_law_comoving}), along with our new coordinate transformation (\ref{frw2_xform_ssc}) provides the new velocity vector $\mathbf{\bar{w}}$ coordinates
\begin{equation}
\bar{w}^0 =\frac{\partial\bar{x}^0}{\partial x^0}= \frac{\partial\bar{t}}{\partial t} = \frac{\Psi_0}{(4t+r^2)^{\frac{1}{2}}},\phantom{4444} \bar{w}^1 =\frac{\partial\bar{x}^1}{\partial x^0}= \frac{\partial\bar{r}}{\partial t} = \frac{1}{2}\frac{r}{\sqrt{t}},\phantom{4444} \bar{w}^2=\bar{w}^3=0.
\end{equation}
The radial velocity $v\equiv dr/dt$ is computed as
\begin{equation}
v=\frac{\bar{w}^1}{\bar{w}^2}\frac{1}{\sqrt{AB}} =\frac{r\sqrt{4t+r^2}}{2\Psi_0\sqrt{t}}\frac{1}{\sqrt{AB}} =\frac{\eta}{2\Psi_0}\sqrt{\frac{4t^2+\bar{r}^2}{t}}\frac{1}{\sqrt{AB}} =\frac{\eta}{2\Psi}\frac{1}{\sqrt{AB}}=\frac{\eta}{2},
\end{equation}
where we used (\ref{frw2_in_ssc_with_eta}) that $\sqrt{AB}=1/\Psi$.  Remarkably, this relationship matches the FRW-1 case (\ref{v_eta_relationship}).  Using this relationship, the FRW-2 metric (\ref{frw2_in_ssc_with_eta}) is rewritten in terms of the $v$ as
\begin{equation}
ds^{2} =-\frac{1}{\Psi^2(1-v^2)}d\bar{t}^2+\frac{1}{1-v^2}d\bar{r}^2+\bar{r}^{2}d\Omega^2.
\end{equation}
Unlike the FRW-1 metric, the fluid variables are not a function of the ratio $\xi=\bar{r}/\bar{t}$, so the FRW-2 metric relies on the unbarred coordinate time $t$.  The coordinate transformation (\ref{ch2_t_bar_xform}) is used to find $t$ as a function of the new coordinates $(\bar{t},\bar{r})$,
\begin{equation}
t(\bar{t},\bar{r})=\frac{\bar{t}^2+\sqrt{\bar{t}^4-\bar{r}^2\Psi^4_0}}{2\Psi^2_0}.
\end{equation}

All these results for the FRW-2 metric are recorded in the following thereom
\begin{thm}\label{thm:frw2_xform_ssc}
Assume we have an equation of state $p=\frac{1}{3}\rho$ and $k=0$.  Then the FRW metric (\ref{ch2_frw_metric}) under the coordinate transformation
\begin{equation}\label{ch2_frw2_xform_ssc}
\begin{split}
\bar{r}=&\sqrt{t}r,\\
\bar{t}=&\frac{\Psi_0}{2}\sqrt{\frac{4t^2+\bar{r}^2}{t}}
\end{split}
\end{equation}
associated with the integrating factor
\begin{equation}\label{ch2_frw2_integrating factor}
\Psi(\bar{t},\bar{r})=\Psi_0\sqrt{\frac{t}{4t^2+\bar{r}^2}},
\end{equation}
goes over to the following metric in standard Schwarzschild coordinates
\begin{equation}\label{ch2_frw2_in_ssc}
ds^{2} =-\frac{1}{\Psi^2(1-v^2)}d\bar{t}^2+\frac{1}{1-v^2}d\bar{r}^2+\bar{r}^{2}d\Omega^2.
\end{equation}
The fluid variables $(\rho,v)$ corresponding to this metric are
\begin{equation}\label{ch2_frw2_fluid_vars_in_ssc}
\rho(t) =\frac{3}{4\kappa t^2},\phantom{4444} v(t,\bar{r})=\frac{\eta}{2}=\frac{\bar{r}}{2t},
\end{equation}
where the unbarred time coordinate is the following function of $(\bar{t},\bar{r})$
\begin{equation}\label{ch2_frw2_t}
t(\bar{t},\bar{r})=\frac{\bar{t}^2+\sqrt{\bar{t}^4-\bar{r}^2\Psi^4_0}}{2\Psi^2_0}.
\end{equation}
\end{thm}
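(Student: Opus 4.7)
The plan is to assemble the theorem from the general machinery already developed in this section, treating it as a specialization of the procedure for mapping a spherically symmetric metric to standard Schwarzschild coordinates, with the dynamical integrating factor $\Psi_2$ from Lemma \ref{lem:integrating_factor_eqn} playing the role previously played by the trivial constant factor $\Psi_1$. The proof therefore has two halves: establishing that (\ref{ch2_frw2_xform_ssc}) really does carry the FRW metric into the claimed standard Schwarzschild form (\ref{ch2_frw2_in_ssc}), and then computing the induced expressions for the fluid variables $(\rho,v)$.

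For the first half I would follow the same chain of substitutions used just before the theorem statement. First match the spheres of symmetry by setting $\bar r = R(t)r = \sqrt{t}\,r$, differentiate to get $dr^2$ in terms of $d\bar r$ and $dt$, and plug into (\ref{ch2_frw_metric2}) to obtain the $(t,\bar r)$-form of the metric with coefficients $C,D,E$ as in (\ref{cde_metric_terms}). Then invoke Lemma \ref{lem:integrating_factor_eqn} applied to the dynamical solution $\Psi_2(t,\bar r)=\Psi_0\sqrt{t/(4t^2+\bar r^2)}$ to guarantee that $d\bar t = \Psi_2\{C\,dt - E\,d\bar r\}$ is exact; integrating the $\partial\bar t/\partial\bar r$ component as in the derivation of (\ref{prelim_t_bar}) and choosing the constant of integration to vanish yields (\ref{ch2_t_bar_xform}), which is exactly the claimed $\bar t(t,\bar r)$. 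Substituting into the general diagonalization formula (\ref{diagonal_general_metric}) specializes to (\ref{frw2_in_ssc_with_eta}), and then rewriting using the relation $v=\eta/2$ (derived by plugging the coordinate transformation into the velocity-vector transformation law (\ref{velocity_vector_xform_law_comoving}) and using $\sqrt{AB}=1/\Psi$) gives (\ref{ch2_frw2_in_ssc}).

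For the fluid variables, $\rho(t)=3/(4\kappa t^2)$ is immediate from (\ref{frw_density}) since it is a scalar field expressed in the original FRW time coordinate $t$, and $v=\bar r/(2t)$ comes from the $v=\eta/2$ computation above together with the definition $\eta=\bar r/t$. The only genuine computation left is to express the background time $t$ in terms of the new coordinates $(\bar t,\bar r)$, which is what formula (\ref{ch2_frw2_t}) does. This I would obtain by squaring the defining relation $\bar t = (\Psi_0/2)\sqrt{(4t^2+\bar r^2)/t}$ to get $4\bar t^{\,2}t = \Psi_0^2(4t^2+\bar r^2)$, viewing this as a quadratic in $t$, and solving with the correct branch chosen to ensure $t\to\bar t/\Psi_0$ as $\bar r\to 0$ (which selects the $+$ sign on the discriminant).

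The only point I anticipate needing care is the choice of sign in that quadratic inversion and, relatedly, verifying that the discriminant $\bar t^{\,4}-\bar r^2\Psi_0^4$ is nonnegative on the physical domain; this should follow from the inequality $v<1$, equivalently $\eta<2$, which forces $\bar r^2<4t^2$ and hence $\bar t^{\,2}\geq \Psi_0^2\bar r$ in the relevant range. Everything else in the theorem is bookkeeping that reduces, via Lemma \ref{lem:integrating_factor_eqn}, to the same template used to recover FRW-1 from the constant integrating factor $\Psi_1$.
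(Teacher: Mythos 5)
Your proposal follows essentially the same route as the paper's own derivation: match the spheres of symmetry, write the metric in $(t,\bar r)$-coordinates, invoke Lemma \ref{lem:integrating_factor_eqn} for the dynamical integrating factor $\Psi_2$ to produce the exact differential $d\bar t$, diagonalize via (\ref{diagonal_general_metric}), derive $v=\eta/2$ from the velocity transformation law, and invert to obtain $t(\bar t,\bar r)$. (One trivial slip in your sanity check: as $\bar r\to 0$ the transformation gives $\bar t = \Psi_0\sqrt{t}$, so the correct limit is $t\to\bar t^{\,2}/\Psi_0^2$ rather than $\bar t/\Psi_0$; this is still consistent with the $+$ branch in (\ref{ch2_frw2_t}), and the discriminant $\bar t^{\,4}-\bar r^2\Psi_0^4\geq 0$ actually holds unconditionally by AM--GM applied to $t+\bar r^2/(4t)\geq\bar r$.)
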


This section ends with the lemma stating any $\Psi$ satisfying (\ref{general_integrating_factor_eqn}) makes $d\bar{t}$ (\ref{exact_differential_eqn}) into an exact 1-form.
\begin{lem}\label{lem:exact_differential}
If the integrating factor $\Psi(t,\bar{r})$ satisfies
\begin{equation}\label{lem_exact_differential_eqn}
\frac{\partial}{\partial\bar{r}}(\Psi C)+\frac{\partial}{\partial t}(\Psi E) = 0,
\end{equation}
then the differential from $d\bar{t}=(\Psi C)dt-(\Psi E)d\bar{r}$ is exact.
\end{lem}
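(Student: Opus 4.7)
The plan is to invoke the standard closed-implies-exact criterion for 1-forms in two variables: on a simply connected domain, a smooth 1-form $P(t,\bar{r})\,dt + Q(t,\bar{r})\,d\bar{r}$ is locally exact if and only if $\partial P/\partial\bar{r} = \partial Q/\partial t$. This is the Poincar\'e lemma in dimension two, and it reduces the lemma to a direct rearrangement of the hypothesis. Because the 1-form here already involves only the two coordinates $t$ and $\bar{r}$, no bundle-theoretic machinery is needed; the one-variable fundamental theorem of calculus (applied twice along a rectangular path) suffices.

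The sequence of steps is straightforward. First I would read off the coefficients by comparison with the general template: writing the given 1-form as $P\,dt + Q\,d\bar{r}$, we have $P = \Psi C$ and $Q = -\Psi E$. Next, the exactness criterion requires $\partial_{\bar{r}} P = \partial_t Q$, i.e.
\[
\frac{\partial}{\partial\bar{r}}(\Psi C) \;=\; -\,\frac{\partial}{\partial t}(\Psi E),
\]
which is precisely the assumed equation (\ref{lem_exact_differential_eqn}) after transposing the second term. Finally, once the 1-form is known to be closed, a primitive $\bar{t}(t,\bar{r})$ is produced explicitly by line-integrating $(\Psi C)\,dt - (\Psi E)\,d\bar{r}$ from any chosen basepoint; independence of path follows from Green's theorem applied to the vanishing of the curl expression above. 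By construction, $d\bar{t}$ then equals the given 1-form.

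There is essentially no obstacle in this argument; the statement is really a repackaging of the standard exactness criterion in the notation of the preceding derivation. The only subtlety worth flagging is the implicit requirement that the $(t,\bar{r})$-patch on which the construction is carried out be simply connected, which is automatic for the physical FRW domain where $t > 0$ and $\bar{r} > 0$ (a convex, hence simply connected, open subset of the plane). Consequently the local primitive is in fact a single-valued global function on the relevant coordinate region, which is all that is required for the coordinate transformation in the FRW-2 construction.
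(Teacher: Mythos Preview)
Your proposal is correct and follows essentially the same approach as the paper: identify the coefficients $P=\Psi C$, $Q=-\Psi E$, observe that the hypothesis is precisely the closedness condition $\partial_{\bar r}P=\partial_t Q$, and invoke the Poincar\'e lemma on the convex domain to obtain exactness. The paper spells out the intermediate step of writing $\mathrm{curl}(\mathbf v)=0$ for $\mathbf v=(\Psi C,-\Psi E,0)$ and computes the line integral explicitly to verify path independence, but the logical content is the same.
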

\begin{proof}
Suppose $\Psi$ satisfies (\ref{lem_exact_differential_eqn}).  By the following computation
\begin{equation}
curl(\mathbf{v}) =
\AutoAbs{\begin{array}{ccc}
\hat{\mathbf{i}}&\hat{\mathbf{j}}&\hat{\mathbf{k}}\\
\partial_t & \partial_{\bar{r}} & \partial_z \\
\Psi C & -\Psi E & 0
\end{array}}
=\left(\frac{\partial}{\partial\bar{r}}(\Psi C)-\frac{\partial}{\partial t}(-\Psi E)\right)\hat{\mathbf{k}}=\mathbf{0}
\end{equation}
this assumption is equivalent to $\Psi$ satisfying $curl(\mathbf{v})=\mathbf{0}$, where $\mathbf{v}=(\Psi C,-\Psi E, 0)$ is a vector in 3-space.  To simplify notation, define functions $g(t,\bar{r})\equiv\Psi E$ and $h(t,\bar{r})\equiv-\Psi E$.  Since $curl(\mathbf{v})=\mathbf{0}$, there exists a function $f(t,\bar{r})$ such that
\begin{equation}\label{f_partials}
\frac{\partial f}{\partial t}=\Psi C\equiv g\phantom{4} \text{ and }\phantom{4} \frac{\partial f}{\partial \bar{r}}=-\Psi E\equiv h.
\end{equation}

Let $\mathcal{C}$ be a curve in $(t,\bar{r})$ space parameterized by $\alpha$, denoted by $\mathcal{C}=(t(\alpha),\bar{r}(\alpha))$, going from the point $P_1=(t(\alpha_1),\bar{r}(\alpha_1))$ to the point $P_2=(t(\alpha_2),\bar{r}(\alpha_2))$, shown in Figure \ref{fig:t_r_bar_curve}.

\begin{figure}[!h]
\begin{pspicture}(5,4)(0,-0.5)
%\psgrid
%t and r_bar axises
\psline[linewidth=2pt]{->}(0,0)(5,0)
\psline[linewidth=2pt]{->}(0,0)(0,4)
%label the axises
\rput(-0.4,4){$t$}
\rput(5,-0.4){$\bar{r}$}
%draw the curve
\pscurve(1,1)(2.0,2.5)(3.2,2.2)(4,3)
\psdots(1,1)(4,3)
%label the curve and endpoints
\rput(3.2,1.8){$\mathcal{C}=(t(\xi),\bar{r}(\xi))$}
\rput(1.3,0.7){$P_1$}
\rput(4.2,3.3){$P_2$}
\end{pspicture}\caption{Arbitrary curve in $(t,\bar{r})$-space}
\label{fig:t_r_bar_curve}
\end{figure}
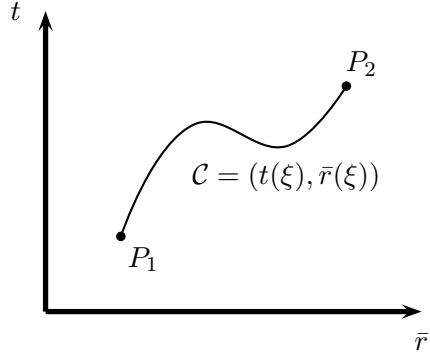

Now take the line integral of the 1-form $d\bar{t}$ along the curve $\mathcal{C}$ to obtain
\begin{equation}
\int_\mathcal{C}d\bar{t}=\int^{P_2}_{P_1}g dt + hd\bar{r} =\int^{\alpha_2}_{\alpha_1}\left[g(t(\alpha),\bar{r}(\alpha))\frac{dt}{d\alpha} + h(t(\alpha),\bar{r}(\alpha))\frac{d\bar{r}}{d\alpha}\right]d\alpha
\end{equation}
Using the chain rule along with (\ref{f_partials}), we have
\begin{equation}
\int_\mathcal{C}d\bar{t}=\int^{\alpha_2}_{\alpha_1}\frac{df}{d\alpha}d\alpha = f(t(\alpha_1),\bar{r}(\alpha_1)))-f(t(\alpha_2),\bar{r}(\alpha_2))).
\end{equation}
Hence, $\int_\mathcal{C}d\bar{t}$ is only dependent on the endpoints of the curve $\mathcal{C}$ which means $\int_\mathcal{C}d\bar{t}$ is independent of path; therefore, the 1-form $d\bar{t}$ is closed.  Since in subsequent chapters we only consider a convex domain $D=\{r_{min}\leq x\leq r_{max},t\geq0\}$, Poincare's Lemma \cite{rudi} states a closed 1-form on a convex domain is exact, proving the claim.
\end{proof}

\section{The TOV Metric}
\label{sec:tov_metric}
For the outer TOV solution in our shock wave simulation, we use the general relativistic static isothermal sphere with the equation of state $p=\frac{1}{3}\rho$ for the pure radiation phase as derived by Smoller and Temple in \cite{smolte}.  For completeness we summerize the results required here.  This TOV metric has the form
\begin{equation}\label{ch2_tov_in_ssc}
ds^2=-B(\bar{r})d\bar{t}^2+\left(\frac{1}{1-\frac{2\mathcal{G}M(\bar{r})}{\bar{r}}}\right)d\bar{r}^2 +\bar{r}^2d\Omega^2.
\end{equation}
The time metric component $B$ takes the form
\begin{equation}
B(\bar{r})=B_0(\bar{r})^{\frac{4\sigma}{1+\sigma}},
\end{equation}
where $\sqrt{\sigma}$ is the speed of sound, and the mass $M$ function is given as
\begin{equation}\label{ch2_tov_mass}
M(\bar{r})=4\pi\gamma\bar{r},
\end{equation}
where the parameter $\gamma$ is a constant dependent on the equation of state constant,
\begin{equation}\label{ch2_gamma}
\gamma=\frac{1}{2\pi\mathcal{G}}\left(\frac{\sigma}{1+6\sigma+\sigma^2}\right),
\end{equation}
which agrees with equation (3.4) in \cite{smolte}.
The fluid variables in the TOV metric are
\begin{equation}\label{ch2_tov_fluid_vars}
\rho(\bar{r})=\frac{\gamma}{\bar{r}^2},\phantom{4444}
v=0,
\end{equation}
where the velocity is zero since the TOV metric is static.

Unlike the FRW metric (\ref{ch2_frw_metric}), the TOV metric is already in standard Schwarzschild coordinates (\ref{ch2_ssc}) with
\begin{equation}
A(\bar{r}) = 1-\frac{2\mathcal{G}M(\bar{r})}{\bar{r}},\phantom{4444} B(\bar{r}) = B_0(\bar{r})^{\frac{4\sigma}{1+\sigma}}.
\end{equation}
both independent of the time coordinate $\bar{t}$.  Notice since
\begin{equation}
\frac{2\mathcal{G}M(\bar{r})}{\bar{r}} = 8\pi\mathcal{G}\gamma,
\end{equation}
this equality simplifies the $A$ metric component to be
\begin{equation}
A(\bar{r}) = 1-8\pi\mathcal{G}\gamma,
\end{equation}
a constant, independent of $\bar{r}$, for constant $\sigma$.

\section{One Parameter Family of Shock Wave Solutions}
\label{sec:matching_metrics}
With both the FRW and TOV metrics in standard Schwarzschild coordinates, we discuss the matching that stitches the two together.  For notation purposes, we use $A_{FRW}$ and $B_{FRW}$ to denote metric components of the FRW metric in standard Schwarzschild coordinates for the general form (\ref{frw2_in_ssc_with_eta}), regardless of which integrating factor is chosen, and $A_{TOV}$ and $B_{TOV}$ to denote the corresponding components for the TOV metric (\ref{ch2_tov_in_ssc}).  Assume the $(\bar{t},\bar{r})$ coordinates, describing the TOV and FRW metrics in standard Schwarzschild coordinates, represent a single coordinate system where both metrics are matched continuously across a shock surface (i.e. $A_{FRW}=A_{TOV}$ and $B_{FRW}=B_{TOV}$).  Matching the $A$ metric component gives us
\begin{equation}\label{matching_a}
A_{FRW}=1-v^2=1-\frac{2\mathcal{G}M}{\bar{r}}=A_{TOV}.
\end{equation}
Recall for both FRW metrics, FRW-1 and FRW-2, the relationship $v=\bar{r}/2t$ holds.  Using this relationship along with the density equation (\ref{frw_density}), the square of the velocity becomes
\begin{equation}\label{v_squared}
v^2=\frac{\kappa}{3}\rho\bar{r}^2.
\end{equation}
Substituting (\ref{v_squared}) into (\ref{matching_a}) and rearranging produces
\begin{equation}
M=\frac{\kappa}{6\mathcal{G}}\rho\bar{r}^3.
\end{equation}
By taking $c=1$ and using equation (\ref{ch2_kappa}) to replace $\kappa$, this equation simplifies to
\begin{equation}\label{shock_surface}
M(\bar{r})=\frac{4\pi}{3}\rho(t)\bar{r}^3.
\end{equation}
The equation (\ref{shock_surface}) defines the shock surface $\bar{r}=\bar{r}(t)$ implicitly in the $(t,\bar{r})$ coordinates.  To obtain the corresponding curve in the original FRW coordinates $(t,r)$, the substitution of $\bar{r}=R(t)r$ into (\ref{shock_surface}) is made to define the shock surface $r=r(t)$.  Interestingly, equation (\ref{shock_surface}) is independent of the integrating factor $\Psi$, meaning the shock surface is the same curve in $(t,\bar{r})$ space regardless of the integrating factor chosen.

To obtain $\Psi$, the other metric component must be matched at the shock surface $\bar{r}(t)$,
\begin{equation}\label{matching_b}
B_{FRW}=\frac{1}{\Psi^2(1-v^2)}=B_0(\bar{r})^{\frac{4\sigma}{1+\sigma}}=B_{TOV}.
\end{equation}
This matching allows us to finish the definition of $\Psi$ by defining the integrating factor constant $\Psi_0$ once the TOV time scale factor $B_0$ is chosen.  In the simulation setup, we reverse the roles of these constants; we choose $\Psi_0$ such that the coordinate speed of light (i.e. $\sqrt{AB}$) is one on the FRW side, for both FRW-1 and FRW-2, and equation (\ref{matching_b}) defines $B_0$.  Notice by choosing $\Psi_0$, which defines the rate at which time progresses uniformly across the matched metric, there remains one free parameter $\bar{r}$ or $t$ related by the shock surface equation (\ref{shock_surface}).  It is in this sense there exists a one parameter family of shock wave solutions.  In this paper, we consider the initial position of the discontinuity $\bar{r}$ as our parameter.  With the equations laid out for the FRW-1, FRW-2, TOV, and the matched metrics in standard Schwarzschild coordinates, we posses the required information for our locally inertial Godunov method, discussed in the next chapter.
   \chapter[%
      Short Title of 4th Ch.
   ]{%
      Locally Inertial Godunov Method
   }%
   \label{ch:frac_god_method}
This chapter is dedicated to the algorithm for the locally inertial Godunov method featuring dynamical time dilation.  Although in this thesis, the method is used to simulate the family of shock waves identified in Chapter \ref{ch:family_of_shock_waves}, the method can be applied to simulate general spherically symmetric flows.  This method is a modification of the locally inertial Glimm method by Groah and Temple \cite{groasmte,groate}, and many of the equations within this chapter are taken from their work.  In their paper, Groah and Temple devise the fractional Glimm method to prove the existence of shock wave solutions to the spherically symmetric Einstein equations for a perfect fluid.  This method is a technique to evolve solutions from initial profiles for the conserved quantities $u(t,x)$ and the metric $\mathbf{A}(t,x)$ in standard Schwarzschild coordinates satisfying the Einstein equations.  Recall, standard Schwarzschild coordinates take the form

\begin{equation}\label{ch3_ssc}
ds^{2} =-B(\bar{t},\bar{r})d\bar{t}^2+\frac{1}{A(\bar{t},\bar{r})}d\bar{r}^2+\bar{r}^{2}d\Omega^2.
\end{equation}
Here, we modify the locally inertial Glimm method by replacing the random choice Glimm step in Groah and Temple's work with an averaging Godunov step to obtain more consistent and less jagged solutions, with the ultimate goal of simulating these shock wave solutions in General Relativity.  Also, the purpose of the Groah and Temple construction is to prove an existence theorem; therefore, the development lacks some details needed to numerically construct the solution.  A goal of this chapter is thus to fill in these details, like an algorithm to find the middle state to the Riemann problem.  Our method can be interpreted as a locally inertial scheme, in the sense it exploits the locally flat character of spacetime.  Each grid cell is considered a locally flat frame, and we handle the time dilation between frames, by choosing a reference frame relative to which time can be synchronized.  The frame chosen is the one in which the factor $\sqrt{AB}$ is one.  We denote this reference frame as a {\it unitary frame}.  We also denote the factor $\sqrt{AB}$ as the {\it coordinate speed of light} to represent the rate at which light travels relative to the unitary frame.  Throughout this paper, we will refer to it as the speed of light, not to be confused with the speed of light constant $c=1$, and it will be clear from context which one is being used.  Note that for any single frame, the metric component $B$ (\ref{ch3_ssc}) can be rescaled by a change in the time coordinate to make the (coordinate) speed of light one in that frame, so the reference frame is quite arbitrary and not so important in tracking time.  However, it is important what this reference frame's speed of light is relative to the other frames around it, causing time dilation between the frames.  These ideas will be explored and expanded throughout the chapter.

In Section \ref{sec:ivp_special_relativity} we discuss the initial value problem in Special Relativity.  In this section, we study solving the Riemann problem in Minkowski spacetime, which is a unitary frame.  Solving the Riemann problem is at the heart of the Godunov step of our method, and its importance cannot be overstated.  Section \ref{sec:time_dilation} explains time dilation and how it affects the averaging of our Godunov step, extending our ability to perform this step on non-unitary frames.  With the background material set, we conclude in Section \ref{sec:frac_god_method} by stating the algorithm of the locally inertial Godunov method.  This method formulates the solution inductively and has four major steps: a Riemann problem step, a Godunov step (with time dilation), an ODE step, and an update step.

\section{The Initial Value Problem in Special Relativity}
\label{sec:ivp_special_relativity}
In this section we develop solutions to the relativistic compressible Euler equations in flat Minkowski spacetime for the case $p=\sigma\rho$ with a constant $\sigma$
\begin{equation}\label{minkowski_conservation_law}
\begin{split}
&\frac{\partial}{\partial t}\left\{\rho[\left(\frac{\sigma+c^2}{c^2}\right)\frac{v^2}{c^2-v^2}+1]\right\} + \frac{\partial}{\partial x}\left\{\rho(\sigma+c^2)\frac{v}{c^2-v^2}\right\} = 0,\\
&\frac{\partial}{\partial t}\left\{\rho(\sigma+c^2)\frac{v}{c^2-v^2}\right\} +\frac{\partial}{\partial x}\left\{\rho[(\sigma+c^2)\frac{v^2}{c^2-v^2}+\sigma]\right\}= 0,
\end{split}
\end{equation}
together with the initial conditions
\begin{equation}\label{minkowski_conservation_law_ics}
\rho(0,x) = \rho_0(x), \phantom{44}v(0,x)=v_0(x).
\end{equation}
Note, this equation (\ref{minkowski_conservation_law}) is our conservation law ((\ref{conservation_law00}) and (\ref{conservation_law01})) without the source term where $\sqrt{AB}=1$.  This problem is a specific case of the initial value problem for a general system of nonlinear hyperbolic conservation laws in the sense of Lax \cite{smol},
\begin{equation}\label{general_system}
\begin{split}
u_t+(F&(u))_x=0,\\
u(0,x&)=u_0(x).
\end{split}
\end{equation}
In our case, we have
\begin{equation}\label{conserved_quantities_defn}
u\equiv (u^0,u^1)=\left(\rho[\left(\frac{\sigma+c^2}{c^2}\right)\frac{v^2}{c^2-v^2}+1],
\phantom{4}\rho(\sigma+c^2)\frac{v}{c^2-v^2}\right),
\end{equation}
and
\begin{equation}\label{flux_defn}
F(u)\equiv (F^0,F^1)=\left(\rho(\sigma+c^2)\frac{v}{c^2-v^2},
\phantom{4}\rho[(\sigma+c^2)\frac{v^2}{c^2-v^2}+\sigma]\right).
\end{equation}
To distinguish between the two sets of variables, we refer to the pair $(\rho,v)$ as the {\it fluid variables} and to the pair $(u^0,u^1)$ in (\ref{conserved_quantities_defn}) as the {\it conserved quantities}.  Both of these variables play an important role in the locally inertial Godunov method.  The conserved quantities are needed in implementing the Godunov step, and the fluid variables are needed to solve the Riemann problem along with giving us more physical meaning behind our simulation.  Fortunately, Groah and Temple showed \cite{groasmte} there is a 1 - 1 correspondence between the fluid variables and the conserved quantities, giving us the following result
\begin{prop}\label{1_1_fluid_vars_conserved}
The mapping $(\rho,v)\rightarrow(u^0,u^1)$ is 1 - 1, and the Jacobian determinant of this mapping is both continuous and non-zero in the region $\rho>0$, $\AutoAbs{v}<c$.
\end{prop}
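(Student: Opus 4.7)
The plan is to prove both claims by direct computation of the Jacobian, followed by an explicit inversion argument that avoids invoking the inverse function theorem globally.

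First I would compute the four partial derivatives of the map $(\rho,v)\mapsto (u^0,u^1)$ defined by (\ref{conserved_quantities_defn}) one at a time. Since $u^0$ and $u^1$ are rational in $v$ with factors $(c^2-v^2)$ in the denominator, and linear in $\rho$, the four entries $\partial u^0/\partial\rho,\ \partial u^0/\partial v,\ \partial u^1/\partial\rho,\ \partial u^1/\partial v$ are all elementary; the only place care is needed is in simplifying $\partial u^0/\partial v$, where combining the $(\sigma+c^2)v^2+c^2(c^2-v^2)=\sigma v^2+c^4$ in the numerator first makes the quotient rule painless.

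Next I would assemble the determinant and simplify. After pulling the common factor $\rho(\sigma+c^2)/[c^2(c^2-v^2)^3]$ outside, the remaining bracket should collapse to $(c^2-v^2)(c^4-\sigma v^2)$, giving
\begin{equation}
\det J \;=\; \frac{\rho(\sigma+c^2)(c^4-\sigma v^2)}{c^2(c^2-v^2)^2}.
\end{equation}
Continuity is then immediate since this is a rational function of $(\rho,v)$ whose denominator does not vanish for $|v|<c$. Non-vanishing follows from $\rho>0$, $\sigma+c^2>0$, and the inequality $c^4-\sigma v^2>0$, which holds because the equation of state (\ref{eqn_of_state}) requires the sound speed $\sqrt{\sigma}$ to be less than $c$ and we are in the region $|v|<c$. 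So both halves of the regularity claim fall out of this computation.

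For the 1--1 claim I would use an explicit inversion. Taking the ratio of the two components kills the $\rho$ dependence and leaves
\begin{equation}
\frac{u^1}{u^0}\;=\;\frac{c^2(\sigma+c^2)v}{c^4+\sigma v^2}\;\equiv\;h(v).
\end{equation}
A short computation gives $h'(v)=c^2(\sigma+c^2)(c^4-\sigma v^2)/(c^4+\sigma v^2)^2>0$ on $(-c,c)$, and $h(\pm c)=\pm c$, so $h$ is a strictly increasing bijection $(-c,c)\to(-c,c)$. Given any image point $(u^0,u^1)$ with $u^0>0$, this recovers $v$ uniquely, and then either defining equation recovers $\rho$ uniquely, so the map is globally injective. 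The main subtlety to address, and the one place I would be most careful, is the sign/positivity check: I need to note that $\rho>0$ and $|v|<c$ force $u^0>0$ (both terms in the bracket are nonnegative and one is positive), and that $|u^1/u^0|<c$ so that $h^{-1}$ is actually applied on its domain. Once those are in place, injectivity and the positivity of $\det J$ together give the 1--1 statement cleanly.
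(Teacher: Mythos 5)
Your proof is correct. The paper itself does not actually prove this Proposition---it is stated as a citation to Groah and Temple---and the only related argument the paper supplies is the proof of the Corollary immediately following, which derives the explicit inverse map by equating the two expressions for $\rho$, obtaining a quadratic $\frac{\sigma}{c^2}u^1 v^2-(\sigma+c^2)u^0 v+c^2 u^1=0$, and selecting the root with $|v|<c$. Your route is genuinely different on both halves. For the Jacobian you compute $\det J = \rho(\sigma+c^2)(c^4-\sigma v^2)/[c^2(c^2-v^2)^2]$ directly (I checked the algebra; the bracket does collapse to $(c^2-v^2)(c^4-\sigma v^2)$), which gives continuity and non-vanishing at once. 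For injectivity you bypass the quadratic entirely: the ratio $h(v)=u^1/u^0=c^2(\sigma+c^2)v/(c^4+\sigma v^2)$ has $h'>0$ precisely by the same factor $c^4-\sigma v^2>0$, so $h$ is a strictly increasing bijection $(-c,c)\to(-c,c)$, and $v$ is recovered uniquely with $\rho$ following from either component. This avoids the paper's sign-selection step and makes the global injectivity transparent rather than an appeal to having written down a formula. What the paper's Corollary buys, and what your argument does not directly yield, is the closed-form expression $v(u^0,u^1)$ required for the numerics; your monotonicity argument proves existence and uniqueness of the inverse but would still require solving the quadratic (or inverting $h$) to implement it. The one small point worth making explicit in your write-up is the $v=0$ case: there $u^1=0$, so $\rho$ must be read off from the $u^0$ equation (which reduces to $\rho=u^0$) rather than the $u^1$ equation; your phrase ``either defining equation'' should be understood that way.
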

In the locally locally inertial Godunov method, we constantly transfer back and forth between the conserved quantities and the fluid variables, so the inversion of the mapping $(\rho,v)\rightarrow(u^0,u^1)$ defined in (\ref{conserved_quantities_defn}) is necessary to find the fluid variables as a function of the conserved quantities, stated in the following corollary to Proposition \ref{1_1_fluid_vars_conserved}
\begin{cor}
The mapping $(u^0,u^1)\rightarrow(\rho,v)$ takes the form
\begin{equation}\label{v_fn_of_conserved_quantities}
v(u^0,u^1)=\frac{c^2}{2\sigma u^1}\{(\sigma+c^2)u^0-\sqrt{(\sigma+c^2)^2(u^0)^2-4\sigma(u^1)^2}\}.
\end{equation}
\begin{equation}\label{rho_fn_of_conserved_quantities}
\rho(u^0,u^1)=\frac{(c^2-v^2)u^1}{(\sigma+c^2)v}.
\end{equation}
\end{cor}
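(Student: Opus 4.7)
The plan is to invert the defining system (\ref{conserved_quantities_defn}) directly by elimination, relying on Proposition \ref{1_1_fluid_vars_conserved} to know at the outset that the inversion exists and is single-valued on $\{\rho>0,\,|v|<c\}$. First I would solve the $u^1$-equation algebraically for $\rho$, giving
\[
\rho = \frac{(c^2-v^2)\,u^1}{(\sigma+c^2)\,v},
\]
which is already the formula (\ref{rho_fn_of_conserved_quantities}) claimed for $\rho$; the only remaining task is to express $v$ in terms of $(u^0,u^1)$.

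Next I would substitute this expression for $\rho$ into the formula for $u^0$ in (\ref{conserved_quantities_defn}) and clear denominators. After combining the two terms in the bracket over the common denominator $c^2$, the cross-factor $c^2-v^2$ cancels against the one produced by $\rho$, leaving an identity in which the surviving polynomial in $v$ is quadratic. Collecting powers of $v$, the relation reduces to
\[
\sigma\,u^1 v^2 \;-\; c^2(\sigma+c^2)\,u^0 v \;+\; c^4\,u^1 \;=\; 0.
\]
Applying the quadratic formula produces exactly the two candidate roots appearing in (\ref{v_fn_of_conserved_quantities}), up to the choice of sign in front of the discriminant.

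The main obstacle, and the only real content beyond routine manipulation, is justifying that the minus sign must be selected. I would argue this by examining the non-relativistic limit $v\to 0$: in that limit $u^1\to 0$ like $v$ while $u^0\to\rho$, so the plus branch blows up as $1/u^1$ and violates $|v|<c$, whereas the minus branch, after a Taylor expansion of the square root to leading order in $u^1$, yields $v\approx c^2 u^1/[(\sigma+c^2)u^0]$, which is consistent with the defining formula for $u^1$. Because Proposition \ref{1_1_fluid_vars_conserved} guarantees a unique inverse on the physical domain and the two branches are continuous and never coincide there (the discriminant is positive strictly), the correct branch is determined globally by this single limiting check. Substituting the resulting $v$ back into the already-derived expression for $\rho$ then completes the verification of both formulas.
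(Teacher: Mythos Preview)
Your proposal is correct and follows essentially the same route as the paper: solve the $u^1$ relation for $\rho$, substitute into the $u^0$ relation to obtain the same quadratic in $v$, apply the quadratic formula, and then select the minus sign. The only difference is in the sign selection: the paper sets $c=1$ and uses the elementary inequalities $u^0>u^1$ and $\sigma+1>\sigma$ to see directly that the plus branch exceeds $1$, whereas you argue via the $v\to 0$ limit together with continuity and the uniqueness guaranteed by Proposition~\ref{1_1_fluid_vars_conserved}. Both arguments are valid; yours is slightly more conceptual, the paper's slightly more direct.
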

\begin{proof}
Using the definition of the conserved quantities (\ref{conserved_quantities_defn}), both are solved as equations for $\rho$
\begin{equation}\label{rho1}
\rho=\frac{u^0}{\left(\frac{\sigma+c^2}{c^2}\right)\frac{v^2}{c^2-v^2}+1}=\frac{u^0(c^2-v^2)}{\frac{\sigma}{c^2}v^2+c^2},
\end{equation}
and
\begin{equation}\label{rho2}
\rho=\frac{(c^2-v^2)u^1}{(\sigma+c^2)v}.
\end{equation}
Setting these equations, (\ref{rho1}) and (\ref{rho2}), equal to each other and simplifying gives us the following quadratic equation in $v$
\begin{equation}
\frac{\sigma}{c^2}u^1v^2-(\sigma+c^2)u^0v+c^2u^1=0,
\end{equation}
so there exists two candidates for $v$
\begin{equation}\label{2_solns_for_v}
v=\frac{c^2}{2\sigma u^1}\{(\sigma+c^2)u^0\pm\sqrt{(\sigma+c^2)^2(u^0)^2-4\sigma(u^1)^2}\}.
\end{equation}
In order to determine which solution to use, we normalize the speed of light (i.e. $c=1$) and rewrite (\ref{2_solns_for_v}) as
\begin{equation}
v=\frac{u^0(\sigma +1)}{2u^1\sigma}\left(1\pm\sqrt{1-\frac{4(u^1)^2\sigma^2}{(u^0)^2(\sigma+1)^2}}\right).
\end{equation}
Since $u^0>u^1$ and $\sigma+1>\sigma$, the following inequalities hold $\frac{u^0(\sigma +1)}{2u^1\sigma}>1$ and $\frac{4(u^1)^2\sigma^2}{(u^0)^2(\sigma+1)^2}<1$.  In order to keep $v$ less than the speed of light (i.e. $v<1$), the minus sign is taken in (\ref{2_solns_for_v}) to obtain (\ref{v_fn_of_conserved_quantities}). Substituting this $v$ into (\ref{rho2}) gives us $\rho$ as (\ref{rho_fn_of_conserved_quantities}).
\end{proof}

The main focus of this section is to solve the Riemann problem for system (\ref{minkowski_conservation_law}).  The Riemann problem is the initial value problem with initial data $u_0(x)$ of a pair of constant states separated by a jump discontinuity at $x=0$,
\begin{equation}\label{rp_ics}
u_0(x) = \left\{
\begin{array}{ll}
u_L & x<0\\
u_R & x>0.
\end{array}
\right.
\end{equation}
Note, in light of Proposition \ref{1_1_fluid_vars_conserved}, the conserved quantities $u_L$ and $u_R$ are uniquely determined by the fluid variables $(\rho_L,v_L)$ and $(\rho_R,v_R)$, and the symbol $u$ can be interpreted as either set of variables throughout this paper.

A general theorem of Lax \cite{smol} states for any system of conservation laws (\ref{general_system}) which is strictly hyperbolic and genuinely nonlinear in each characteristic field, the Riemann problem for this system has a unique solution in the class of elementary waves if $u_R$ and $u_L$ are sufficiently close.  The proof for this result is a constructive one, relying on the structure of the state space.  More specifically, given a point $u_L$ in u-space, there exists a family of i-wave curves connecting the point $u_L$ to any other sufficiently close point $u_R$ by traversing a 1-wave curve followed by a 2-wave curve, where the middle state $u_M$ is the intersection of these two curves.  Depending on the direction along the 1-wave curve, a shock or rarefaction wave is between the states $u_L$ and $u_M$, and similarly for the 2-wave curve, the direction taken determines whether a shock or rarefaction wave is between the states $u_M$ and $u_R$.  As a side remark, each i-wave curve has second order contact at the point $u_L$ (i.e. their first two derivatives are equal at this point) \cite{smol}.  Figure \ref{fig:p_system_curves} gives us a sample of the web-like structure for these family of curves for a 2 system of conservation laws referred to as the (non-relativistic) $p$-system \cite{smol}, for $p=1$.  In this figure, the red dot represents the left state $u_L$.  The 1-shock curve is represented by the blue/red graph, and the 1-rarefaction is represented by the red/cyan graph.  The 2-shock and 2-rarefaction curves, originating from the 1-wave curve, are represented by green and yellow graphs, respectively, with the red stripped curves representing the ones emanating from the left state.  One can interpret the coordinate system of wave curves defined for each $u_L$ as giving us a road map providing directions from our starting point $u_L$ to our destination $u_R$: the 1-wave curve is the first street traveled and 2-wave curve next street traveled, and the middle state $u_M$ is where the streets traveled intersect.  The solution becomes the left state connected by a 1-wave to the middle state which is connected by a 2-wave to the right state.  This result is only a local one because for certain conservation laws (including $\gamma$-law gases), not all the points in u-space can be connected by this web structure, due to the formation of the so-called "vacuum" states \cite{smol}.

\begin{figure}
\begin{center}
\includegraphics[width=\textwidth]{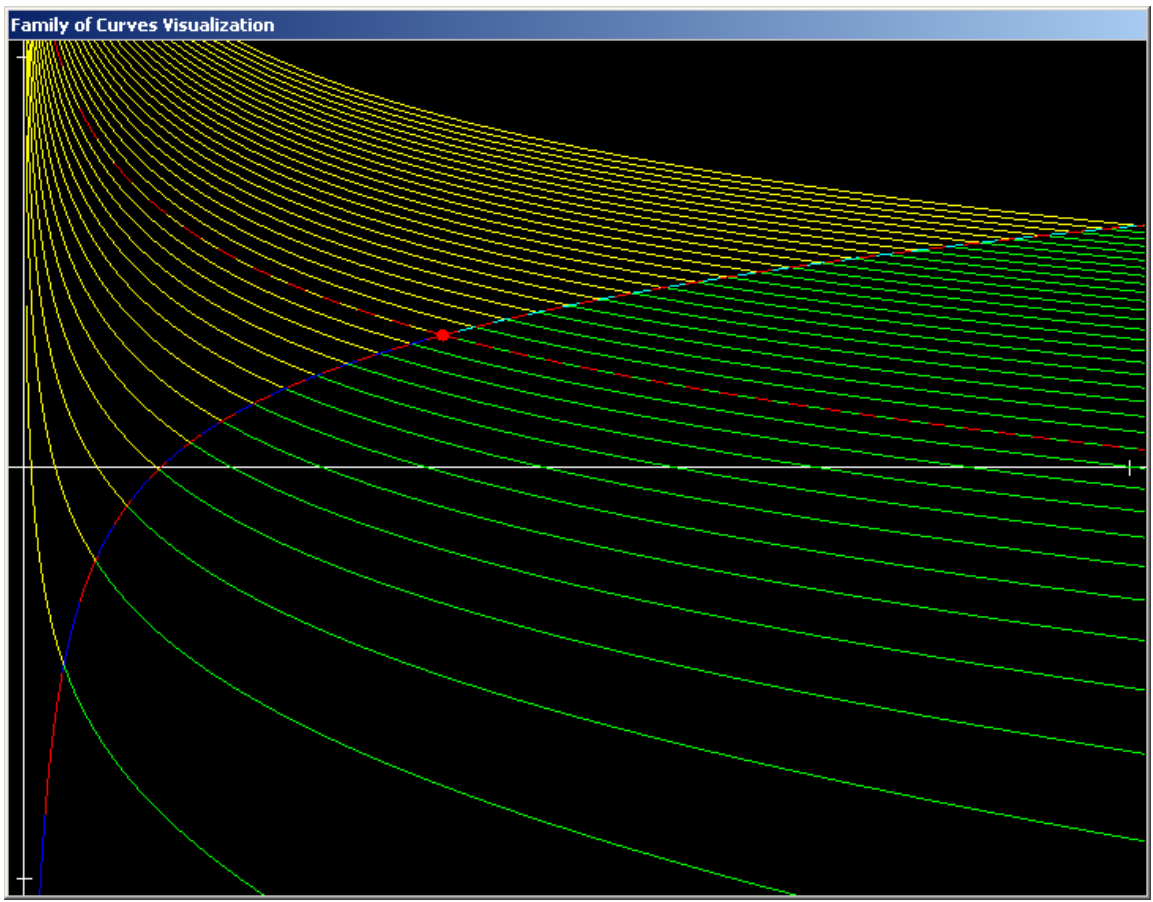}
\end{center}
\caption{Family of curves for the (non-relativistic) p-system}
\label{fig:p_system_curves}
\end{figure}

For other systems, this local connectivity of points in u-space can be extended to a global connectivity of points, where the "vacuum" states do not appear. Fortunately, for our system of interest (\ref{minkowski_conservation_law}), when $p=\sigma\rho$, the "vacuum" states do not appear and our i-wave curves cover the entire u-space.  Following Smoller and Temple \cite{smolte6}, Groah and Temple extend in \cite{groasmte} this general theorem of Lax our system
\begin{thm}
There exists a solution of the Riemann problem for system (\ref{minkowski_conservation_law}) with an equation of state $p=\sigma\rho$, $0<\sqrt{\sigma}<c$, as long as $u_L$ and $u_R$ satisfy
\begin{equation}
\rho_L>0,\phantom{44}\rho_R>0,
\end{equation}
and
\begin{equation}
-c<v_L<c,\phantom{44}-c<v_R<c.
\end{equation}
Moreover, the solution is given by a 1-wave followed by a 2-wave, satisfies $\rho>0$, and all speeds are bounded by $c$.  This solution is unique in the class of rarefaction and admissible shock waves.
\end{thm}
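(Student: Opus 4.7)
The plan is to follow the standard Lax construction of Riemann problem solutions, specialized to system (\ref{minkowski_conservation_law}), and then to upgrade the result from local to global by exploiting the specific structure of the equation of state $p=\sigma\rho$. First I would compute the eigenvalues $\lambda_1<\lambda_2$ and right eigenvectors of the Jacobian $dF/du$. For a perfect fluid with $p=\sigma\rho$ the characteristic speeds are the relativistic composition of $v$ with $\pm\sqrt{\sigma}$, namely $\lambda_i=(v\mp\sqrt{\sigma})/(1\mp v\sqrt{\sigma}/c^2)$, which are strictly ordered and bounded by $c$ on the whole admissible set $\{\rho>0,\,|v|<c\}$. A short computation confirms strict hyperbolicity and genuine nonlinearity of both fields, so Lax's local existence and uniqueness theorem applies for $u_R$ sufficiently close to $u_L$.

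Second, for each left state $u_L$ I would construct the forward 1-wave curve as the union of the 1-rarefaction curve, obtained as the integral curve of the eigenvector $r_1$, and the 1-shock curve, obtained from the Rankine-Hugoniot jump conditions together with the Lax admissibility inequality selecting the physically correct branch. A natural parameter is $\rho$ (or $\ln\rho$), which is monotone along each half-curve. Analogously, I would construct the backward 2-wave curve through $u_R$. The middle state $u_M$ is defined as their intersection, and the Riemann solution is then $u_L$ connected to $u_M$ by the 1-wave followed by $u_M$ connected to $u_R$ by the 2-wave.

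Third, and this is the key global step, I would show that each half-wave-curve extends so that $\rho$ ranges over all of $(0,\infty)$ while $|v|$ stays strictly below $c$, so that no vacuum states obstruct the construction. Unlike the $\gamma$-law gas, the isothermal sound speed $\sqrt{\sigma}$ does not degenerate at low density: the shock and rarefaction relations for $p=\sigma\rho$ yield a smooth, strictly monotone velocity function $v(\rho)$ on each branch whose image fills $(-c,c)$ as $\rho$ traverses $(0,\infty)$. Combined with the strict ordering $\lambda_1<\lambda_2$, this monotonicity forces the forward 1-curve through $u_L$ and the backward 2-curve through $u_R$ to intersect in exactly one point, which supplies $u_M$ with $\rho_M>0$ and $|v_M|<c$; uniqueness in the admissible class follows from the monotone parameterization.

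The main obstacle I anticipate is step three: establishing the global diffeomorphism property in relativistic variables. One must show the combined shock-rarefaction parameterization is $C^1$ with nowhere vanishing Jacobian, and that monotonicity in $\rho$ is preserved along both the rarefaction ODE and the Rankine-Hugoniot locus, in particular across the "sonic" joining point where rarefaction meets shock. The requisite calculations are carried out by Smoller and Temple in \cite{smolte6} and are summarized in \cite{groasmte}; my proof would import their explicit wave-curve formulas and verify the monotonicity and range-exhaustion claims directly from them, then stitch together existence, uniqueness, and the bounds $\rho>0$, $|v|<c$ along the entire solution.
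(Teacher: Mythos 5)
The paper does not actually prove this theorem; it states it as a citation to Smoller and Temple \cite{smolte6} and Groah and Temple \cite{groasmte}, and then in the surrounding subsections imports and uses their explicit wave-curve formulas. Your proposal reproduces the proof strategy of those cited references: confirm strict hyperbolicity and genuine nonlinearity, apply Lax's local theorem, build the forward 1-wave and backward 2-wave curves from the rarefaction integral curves and the admissible Hugoniot branches, and globalize by showing that the isothermal equation of state $p=\sigma\rho$ keeps the sound speed nondegenerate as $\rho\to 0$, so the wave curves exhaust state space without hitting a vacuum. The one place your route diverges from what the paper exposes is the parameterization in step three. You parameterize each half-curve by $\rho$ and argue $v(\rho)$ is strictly monotone with image $(-c,c)$; the paper, following Smoller--Temple, works in the Riemann invariant chart $(r,s)$ with the shock parameter $\beta$, where the rarefaction curves are axis-parallel rays and the shock curves $S_i(\beta)$ are translation-invariant monotone graphs (equations (\ref{1_shock_curve})--(\ref{2_rarefaction_curve})). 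In that chart, existence and uniqueness of the intersection $U_M$ is essentially immediate from the geometric invariance and the strict monotonicity of $(\Delta r,\Delta s)$ in $\beta$, because the 1-wave curve from $U_L$ and the backward 2-wave curve from $U_R$ are monotone and traverse all four quadrants. Your $\rho$-parameterization can be pushed through, but the uniqueness-of-intersection step needs more care than monotonicity of $v(\rho)$ alone, since the Hugoniot and rarefaction branches give \emph{different} $v(\rho)$ graphs and you must control how the two curves intersect in the full $(\rho,v)$ plane rather than on each half-curve separately. Since you explicitly plan to import the Smoller--Temple computations anyway, the cleaner move is to adopt their $(r,s)$ coordinates outright; otherwise the proposal is sound and matches the literature the paper defers to.
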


The main goal of this section is to use the details within the proof of this result to construct the explicit solution to the Riemann problem for any left and right states.  This section is organized into three subsections.  Subsection \ref{subsec:prelim} covers preliminary material needed to solve the Riemann problem.  Finding the middle state along with the type of elementary waves connecting it to the left and right states is discussed in Subsection \ref{subsec:middle_state}.  Bringing all these details together, Subsection \ref{subsec:solving_rp} explains how to solve the Riemann problem at any point of interest.
\subsection{Preliminaries}
\label{subsec:prelim}
In this subsection, equations needed to solve our Riemann problem (\ref{minkowski_conservation_law}) and (\ref{rp_ics}) are presented.  Along with the fluid variables, there is another set of variables, called the Riemann invariants, important in solving the Riemann problem efficiently.  The Riemann invariants are functions which are constant along the integral curves or rarefaction curves in the state space \cite{smol}.  The Riemann invariants $r$ and $s$ for our system (\ref{minkowski_conservation_law}) are
\begin{equation}\label{riemann_invariant_r}
r(\rho,v)=\frac{1}{2}\ln\left(\frac{c+v}{c-v}\right)-\sqrt{\frac{K}{2}}\ln(\rho),
\end{equation}
\begin{equation}\label{riemann_invariant_s}
s(\rho,v)=\frac{1}{2}\ln\left(\frac{c+v}{c-v}\right)+\sqrt{\frac{K}{2}}\ln(\rho),
\end{equation}
where
\begin{equation}
K=\frac{2\sigma c^2}{(\sigma+c^2)^2}.
\end{equation}
There is a typo in the statement of the Riemann invariants in the Groah and Temple paper \cite{groasmte} in equations (2.5.73), (2.5.74), (4.2.12), and (4.2.13).  As well as moving freely between the conserved quantities and the fluid variables, the ability to go back an forth between the fluid variables and the Riemann invariants is needed within the locally inertial Godunov method.

The fluid variables $(\rho,v)$ are recovered from the Riemann invariants $(r,s)$ by algebraic manipulation of (\ref{riemann_invariant_r}) and (\ref{riemann_invariant_s}).  More specifically, subtracting (\ref{riemann_invariant_r}) from (\ref{riemann_invariant_s}) to solve for the density
\begin{equation}\label{rho_fn_of_r_s}
\rho(r,s)=\exp\left\{\frac{s-r}{\sqrt{2K}}\right\},
\end{equation}
and adding (\ref{riemann_invariant_r}) to (\ref{riemann_invariant_s}) to solve for the velocity
\begin{equation}\label{v_fn_of_r_s}
v(r,s)=-\frac{c(1-e^{s+r})}{1+e^{s+r}}.
\end{equation}

To solve the Riemann problem, it is necessary to know the speed of the shock waves in the solution to determine which side of the discontinuity the point of interest is located.  The speed for the 1-shock is the following function of beta $\beta$
\begin{equation}\label{1_shock_speed}
s_1=c\sqrt{\frac{f_+(\beta)+\frac{\sigma}{c^2}}{f_+(\beta)+\frac{c^2}{\sigma}}},
\end{equation}
while the speed for the 2-shock is
\begin{equation}\label{2_shock_speed}
s_2=c\sqrt{\frac{f_-(\beta)+\frac{\sigma}{c^2}}{f_-(\beta)+\frac{c^2}{\sigma}}}.
\end{equation}
These shock speeds, (\ref{1_shock_speed}) and (\ref{2_shock_speed}), are calculated in a frame where the particle velocity $v$ is zero.  To obtain these quantities in an arbitrary frame, the Lorentz transformation law for velocities must be applied.  The transformation law is given as follows:  if in a Lorentz transformation, the barred frame moves with velocity $v$ measured in the unbarred frame with $s_i$ as the speed of the i-shock wave measured in the barred frame, and if $s$ denotes the speed of the shock in the unbarred frame, then
\begin{equation}\label{lorentz_velocity_xform_law}
s=\frac{v+s_i}{1+\frac{vs_i}{c^2}}.
\end{equation}

The eigenvalues $\lambda_i(\rho,v)$ of the system (\ref{minkowski_conservation_law}) are used to determine the speed of the rarefaction waves.  These eigenvalues are $\lambda_1 = -\sqrt{\sigma}$ and $\lambda_2 = \sqrt{\sigma}$ when the particle velocity is zero.  Using the Lorentz transformation law (\ref{lorentz_velocity_xform_law}) for velocity $v$, the eigenvalues become
\begin{equation}\label{eigenvalue1}
\lambda_1=\frac{v-\sqrt{\sigma}}{1-\frac{\sqrt{\sigma} v}{c^2}},
\end{equation}
and
\begin{equation}\label{eigenvalue2}
\lambda_2=\frac{v+\sqrt{\sigma}}{1+\frac{\sqrt{\sigma} v}{c^2}}.
\end{equation}

To find the solution of the Riemann problem within a rarefaction wave, we need to solve for the fluid variables based on the eigenvalues and the Riemann invariants.  More precisely, formulas for $v$ as a function of the eigenvalues, $\rho$ as a function of $(s,v)$, and $\rho$ as a function of $(r,v)$ are necessary to solve for a state in a rarefaction wave.  A quick calculation on (\ref{eigenvalue1}) and (\ref{eigenvalue2}) shows $v$ is a function of the first eigenvalue
\begin{equation}\label{inverse_eigenvalue1}
v(\lambda_1)=\frac{\lambda_1+\sqrt{\sigma}}{1+\frac{\sqrt{\sigma}\lambda_1}{c^2}},
\end{equation}
or is a function of the second eigenvalue
\begin{equation}\label{inverse_eigenvalue2}
v(\lambda_2)=\frac{\lambda_2-\sqrt{\sigma}}{1-\frac{\sqrt{\sigma}\lambda_2}{c^2}}.
\end{equation}
Another quick calculation on (\ref{riemann_invariant_r}) and (\ref{riemann_invariant_s}) provides us with $\rho$ as a function of $(r,v)$
\begin{equation}\label{rho_fn_of_r_v}
\rho(r,v)=\exp\left\{-\sqrt{\frac{2}{K}}\left(r-\frac{1}{2}\ln\left\{\frac{c+v}{c-v}\right\}\right)\right\},
\end{equation}
and $\rho$ as a function of $(s,v)$
\begin{equation}\label{rho_fn_of_s_v}
\rho(s,v)=\exp\left\{\sqrt{\frac{2}{K}}\left(s-\frac{1}{2}\ln\left\{\frac{c+v}{c-v}\right\}\right)\right\}.
\end{equation}
\subsection{Finding the Middle State}
\label{subsec:middle_state}
This subsection describes a numerical algorithm to find the middle state and the associated waves to the Riemann problem for our system (\ref{minkowski_conservation_law}).  To accomplish this task, not only does one need to determine in which region the right state is located, governing the set of curve equations to use, but one also needs to solve these curve equations which cannot be solved explicitly.  With this in mind, finding this middle state poses quite a challenge with the fluid variables $(\rho,v)$.  Fortunately, this problem is easier in the Riemann invariant coordinate system or the $rs$-plane.  This coordinate system simplifies the i-wave curves in the state space because the rarefaction curves become straight lines, resulting in a clearer segregation of the state space.  Another advantage is the i-wave curves are geometrically invariant across the $rs$-plane, so the shape of the wave curves are independent of the base point $(r_L,s_L)$, implying an i-wave curve at a point in this plane can be mapped by rigid translation onto any other point.  Within this subsection, we deal with the Riemann invariants as our variables, keeping in mind the goal is to obtain the middle state in the fluid variables.

To distinguish between the different coordinate systems, we denote the fluid variables $(\rho,v)$ by lower case $u$ and the Riemann invariants $(r,s)$ by upper case $U$.  More specifically, define
\begin{equation}
\begin{split}
U_R&\equiv(r_R(u_R),s_R(u_R))= (r_R(\rho_R,v_R),s_R(\rho_R,v_R)) \\ U_L&\equiv(r_L(u_L),s_L(u_L))= (r_L(\rho_L,v_L),s_L(\rho_L,v_L)),
\end{split}
\end{equation}
where the transformations (\ref{riemann_invariant_r}) and (\ref{riemann_invariant_s}) are used.  In the $rs$-plane, the 1-shock curve $S_1$ for the system is given by the following parametrization with respect to the $\beta$, $0\leq\beta < \infty$:
\begin{equation}\label{1_shock_curve}
\begin{split}
\Delta r&=r-r_L=-\frac{1}{2}\ln\{f_+(2K\beta)\}-\sqrt{\frac{K}{2}}\ln\{f_+(\beta)\} \equiv S^r_1(\beta), \\
\Delta s&=s-s_L=-\frac{1}{2}\ln\{f_+(2K\beta)\}+\sqrt{\frac{K}{2}}\ln\{f_+(\beta)\} \equiv S^s_1(\beta),
\end{split}
\end{equation}
and the 2-shock curve $S_2$ is given by:
\begin{equation}\label{2_shock_curve}
\begin{split}
\Delta r&=r-r_L=-\frac{1}{2}\ln\{f_+(2K\beta)\}-\sqrt{\frac{K}{2}}\ln\{f_-(\beta)\} \equiv S^r_2(\beta), \\
\Delta s&=s-s_L=-\frac{1}{2}\ln\{f_+(2K\beta)\}+\sqrt{\frac{K}{2}}\ln\{f_-(\beta)\} \equiv S^s_2(\beta),
\end{split}
\end{equation}
where
\begin{equation}
f_\mp(\beta)\equiv 1+\beta\left\{1\mp\sqrt{1+\frac{2}{\beta}}\right\},
\end{equation}
and
\begin{equation}
\beta\equiv\beta(v,v_L)=\frac{(\sigma+c^2)^2}{2\sigma^2}\frac{(v-v_L)^2}{(c^2-v^2)(c^2-v_L^2)}.
\end{equation}

In this coordinate system, both rarefaction curves are straight lines parallel to the coordinate axises along the positive directions in the $rs$-plane.  More precisely, the 1-rarefaction curve $R_1$ is:
\begin{equation}\label{1_rarefaction_curve}
\begin{split}
\Delta r&=r-r_L=\beta, \\
\Delta s&=s-s_L=0,
\end{split}
\end{equation}
and the 2-rarefaction curve $R_2$ is given by:
\begin{equation}\label{2_rarefaction_curve}
\begin{split}
\Delta r&=r-r_L=0 \\
\Delta s&=s-s_L=\beta,
\end{split}
\end{equation}
for the parameter $\beta$, $0\leq\beta < \infty$.  Notice within the equations for the shock and rarefaction curves (\ref{1_shock_curve})-(\ref{2_rarefaction_curve}) the differences $\Delta r$ and $\Delta s$ along the curves only depend on the parameter $\beta$, proving the geometric invariance mentioned earlier.  For notation purposes, we define the following sets based at a point $U_L$ for the shock curves
\begin{equation}\label{shock_set}
S_i(U_L)=\{(r,s):\exists \beta >0\text{ such that } s=s_L+S^s_i(\beta), r=r_L+S^r_i(\beta)\},\phantom{4444}i=1,2,
\end{equation}
and the rarefaction curves
\begin{equation}\label{rarefaction_set}
\begin{split}
R_1(U_L) =& \{(r,s): s = s_L\text{ and } r\geq r_L\},\\
R_2(U_L) =&\{(r,s): r = r_L\text{ and } s\geq s_L\}.
\end{split}
\end{equation}
Using this knowledge, a simpler family of curves is built to solve for the middle state $U_M$, as seen in Figure \ref{fig:rs_plane_curves}.

\begin{figure}
\begin{center}
\includegraphics[width=\textwidth]{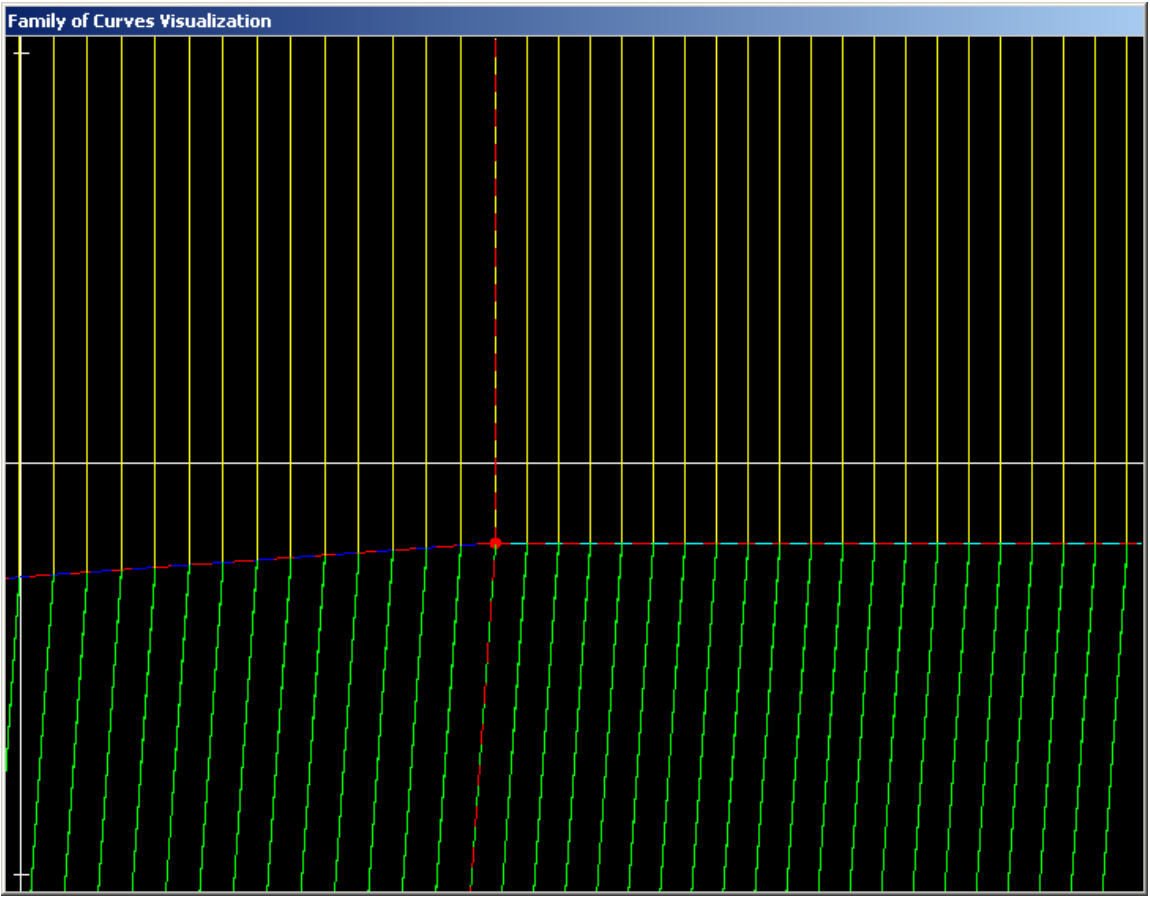}
\end{center}
\caption{Family of curves for our system in the $rs$-plane}
\label{fig:rs_plane_curves}
\end{figure}

Using these curves, the $rs$-plane is segregated into four disjoint open regions I, II, III, and IV as depicted in Figure \ref{fig:rs_plane}.  In each region, a possible right state is placed and denoted by $U^i$ for $i=1,2,3,4$.  This figure also shows the 1-wave and 2-wave curves emanating from $U_L$ as the thick lines with the 2-wave curves associated with the points $U^i$ for $i=1,2,3,4$ emanating from the 1-wave curve as dashed lines. Region I corresponds to connecting the point $U_L$ to the point $U^1$ by 1-rarefaction wave followed by a 2-shock wave.  Region II corresponds to the 1-shock/2-shock wave case, while region III and region IV corresponds to 1-shock/2-rarefaction and 1-rarefaction/2-rarefaction wave cases, respectively.  Since the wave curves are geometrically invariant, we only consider the change from the right to left states, or the quantities $\Delta r\equiv r_R-r_L$ and $\Delta s\equiv s_R-s_L$ where the right state is $U^i$ for $i=1,2,3,4$, depending on the region under consideration.  The middle state $U_M$ for all these cases is found by considering each quadrant in the $rs$-plane separately.  After the quadrant breakdown, we discuss the technique used to find the correct $\beta$(s) needed to solve the shock wave equations (\ref{1_shock_curve}) and (\ref{2_shock_curve}).  Note that the shock curves cannot be solved exactly, and a threshold of $\epsilon$ must be predefined, with the goal of solving the shock curve equations with a maximum error of $\epsilon$.

We start with the point $U^4$ in +/+ quadrant, which is determined by $\Delta r>0$ and $\Delta s>0$.  This quadrant is the same set as region IV, the 1-rarefaction/2-rarefaction case.  Since the rarefaction curves are straight lines, the middle state becomes $r_M=r_R$ and $s_M=s_L$.

Next we turn our attention to the point $U^3$ in the -/+ quadrant, where $\Delta r<0$ and $\Delta s>0$.  This quadrant is a subset of the region III, the 1-shock/2-rarefaction case.  Our goal is to find $\beta$ such that $(r_R,s^*)\in S_1(U_L)$. After finding $\beta$, the middle state will be $(r_M,s_M)=(r_R,s^*)$.

The point $U^1$ in the +/- quadrant, where $\Delta r>0$ and $\Delta s<0$, is similar to $U^3$ in the -/+ quadrant.  This quadrant is a subset of the region I, the 1-rarefaction/2-shock case.  Due to the geometric invariance of the 2-wave curve, we find a $\beta$ such that the point $(r^*,s_R)\in S_2(U_L)$, and the middle state will be $(r_M,s_M)=(r_L + (r_R-r^*),s_L)$.

Finally, consider the point $U^2$ in the -/- quadrant, where $\Delta r<0$ and $\Delta s<0$.  This quadrant is a superset of region II, the 1-shock/2-shock case, so a mechanism is needed to determine when $U^2$ is not in region II but in region I or III as discussed below in the two shock algorithm.  If $U^2$ is in region III, we find $\beta^1$ such that $U_M=(s_M,r_M)\in S_1(U_L)$ and $\beta^2$ such that $(s_R,r_R)\in S_2(U_M)$.  The middle state $U_M$ is, the intersection of the 1-shock curve $S_1(U_L)$ and the 2-shock curve $S_2(U_M)$ crossing the right state $U_R$.

\begin{figure}[!t]
\begin{pspicture}(8,8)(0,-0.5)
%\psgrid
%r and s axises
\psline{<->}(0,4)(8,4)
\psline{<->}(4,0)(4,8)
\rput(8,3.7){$r$}
\rput(3.7,8){$s$}
%draw the i-wave curves
\psline[linewidth=2pt]{->}(4,4)(4,6)
\rput(4.4,5.8){$R_2$}
\psline[linewidth=2pt]{->}(4,4)(6,4)
\rput(5.8,4.3){$R_1$}
\pscurve[linewidth=2pt]{->}(4,4)(3.9,3.2)(3,1.2)(2,0.2)
\rput(2.6,0.2){$S_2$}
\pscurve[linewidth=2pt]{->}(4,4)(3.2,3.9)(1.2,3)(0.2,2)
\rput(0.2,2.6){$S_1$}
%label the quadrants
\rput(6,6){IV}
\rput(2,6){III}
\rput(6,2){I}
\rput(2,2){II}
%draw and label states
\psdots(4,4)(4.5,1.2)(5,5.3)(1.2,5)(1.2,1.2)
\rput(4.4,4.3){$U_M$}
\rput(5,1.2){$U^1$}
\rput(5.5,5.3){$U^4$}
\rput(1.7,5){$U^3$}
\rput(1.7,1.2){$U^2$}
\pscurve[linestyle=dashed](5.5,4)(5.4,3.2)(4.5,1.2)
\pscurve[linestyle=dashed](1.8,3.3)(1.7,2.5)(1.2,1.2)
\psline[linestyle=dashed](5,4)(5,5.3)
\psline[linestyle=dashed](1.2,3)(1.2,5)
\end{pspicture}\caption{Labeling the $rs$-plane}
\label{fig:rs_plane}
\end{figure}
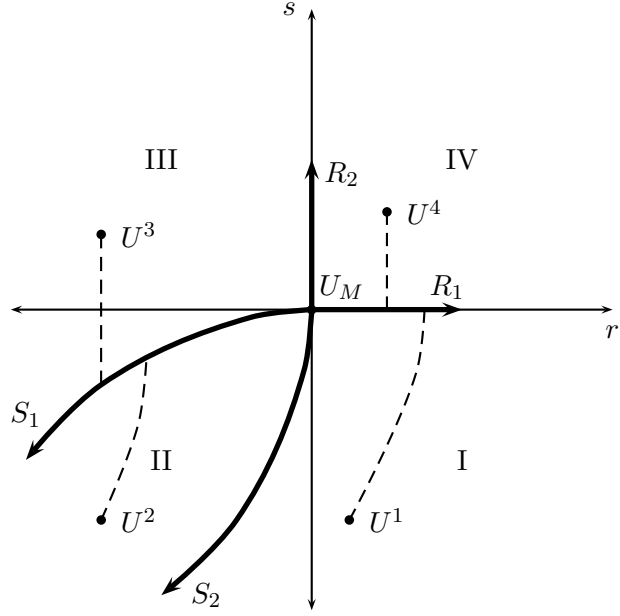

When solving one shock curve equation, as in the +/- or -/+ quadrant, a standard bisection method \cite{burdfa} is used to find $\beta$.  We show how we implement the bisection method for the point $U^3$, where a 1-shock is followed by a 2-rarefaction.   In this case, the bisection method is used to find the $\beta$ that satisfies the equation $\Delta r(\beta)\equiv r_R - (r_L + S^r_1(\beta)) = 0$, and our goal is to find a $\beta$ such that $\Delta r(\beta) < \epsilon$.  The bisection method requires a starting interval $[\beta_{min},\beta_{max}]$ where there exists a $\beta$ within this interval such that $\Delta r(\beta)=0$. The shock curves, $S^r_i(\beta)$ and $S^s_i(\beta)$ for $i=1,2$, are monotone decreasing functions of the variable $\beta$.  To find our interval, an initial guess, like $\beta = 10^5$, is chosen and $\Delta r(\beta)$ is computed.  If $\Delta r(\beta)>0$, our guess is too small (i.e. $r_R>r_L+S^r_1(\beta)$), and $\beta$ is decreased.  The power of our initial guess $\beta=10^k$ is decreased (i.e. $k=4,3,2,1,0,-1,\ldots$) until a $k$ is found such that $\Delta r(10^k)<0$, which means the guess is too big now.  With this $k$, we set $\beta_{max}=10^{k+1}$ and $\beta_{min}=0$.  On the other hand, if for the initial guess $\Delta r(\beta)<0$, it is too big (i.e. $r_R<r_L+S^r_1(\beta)$), and $\beta$ needs to be increased.  The power of our initial guess $\beta=10^k$ is increased (i.e. $k=6,7,8,9,\ldots$) until $k$ is found such that $\Delta r(10^k)>0$, which means our guess is too small now.  With this $k$, we set $\beta_{max}=10^k$ and $\beta_{min}=10^{k-1}$.  Either way, with the interval $[\beta_{max},\beta_{min}]$ established, the bisection method is implemented until a $\beta$ is found where $\Delta r(\beta) < \epsilon$.  For the -/+ quadrant, the same algorithm is used, but for the equation $\Delta s(\beta)\equiv s_R-(s_L+S^s_2(\beta))=0$.

For the point $U^2$ in the -/- quadrant, the pair $(\beta^1,\beta^2)$ solving the equations
\begin{equation}\label{delta_r}
\Delta r(\beta^1,\beta^2)\equiv r_R-(r_L+S^r_1(\beta^1)+S^r_2(\beta^2))=0
\end{equation}
and
\begin{equation}\label{delta_s}
\Delta s(\beta^1,\beta^2)\equiv s_R-(s_L+S^s_1(\beta^1)+S^s_2(\beta^2))=0
\end{equation}
is sought out, and our goal is to find $(\beta^1,\beta^2)$ such that $\Delta r(\beta^1,\beta^2) < \epsilon$ and $\Delta s(\beta^1,\beta^2) < \epsilon$.  To perform the bisection algorithm for both $\beta^1$ and $\beta^2$, intervals $[\beta^1_{min},\beta^1_{max}]$ and $[\beta^2_{min},\beta^2_{max}]$ are needed, just like solving for one shock equation.  Since the equation (\ref{delta_r}) is dominated by the parameter $\beta_1$, in order to get our interval $[\beta^1_{min},\beta^1_{max}]$, we assume $\beta^2=0$ simplifying (\ref{delta_r}) into $\Delta r(\beta^1)\equiv r_R - (r_L + S^r_1(\beta^1)) = 0$ and repeat the same procedure as above, increasing or decreasing the power $k$ in our guess $10^k$, for the parameter $\beta^1$.  Similarly, we repeat this process to find $[\beta^2_{min},\beta^2_{max}]$, by setting $\beta^1=0$ in (\ref{delta_s}) and working with the equation $\Delta s(\beta^2)\equiv s_R-(s_L+S^s_2(\beta^2))=0$.  If either initial guess $10^k$ gets too small, like $k<-20$, then we assume the point $U^2$ is in the wrong region.  Depending on which parameter it is, $\beta_1$ or $\beta_2$, the $U^2$ must be in region I or region III, respectively.  In either case, our problem boils down to only solving one shock curve equation, and it is handled according to the above procedure. Equipped with our initial guess $(\beta^1,\beta^2)$, we compare the errors $\Delta r(\beta^1,\beta^2)$ and $\Delta s(\beta^1,\beta^2)$ against each other.  If $\Delta r(\beta^1,\beta^2)<\Delta s(\beta^1,\beta^2)$, we perform a bisection method step on $\beta^1$; otherwise, we perform the step on $\beta^2$.  We repeat this process until $\Delta r(\beta^1,\beta^2) < \epsilon$ and $\Delta s(\beta^1,\beta^2) < \epsilon$.

\subsection{Solving the Riemann Problem}
\label{subsec:solving_rp}
Being able to find the middle state with the connecting waves enables us to build the solution to our Riemann problem in its entirety.  In particular, for an arbitrary point $(t,x)$, the solution $u(t,x)$ can be determined for any Riemann problem.  This subsection covers the details of finding this solution at an arbitrary point.  We also use this process to build a Riemann problem simulator.

Suppose we have a Riemann problem (\ref{rp_ics}) to our conservation law (\ref{minkowski_conservation_law}) and a point $(t,x)$, and we want to solve for the state $u_*\equiv u(t,x)=(\rho_*,v_*)$ as a solution to this problem.  Converting $u_L$ and $u_R$ over to Riemann invariants, using (\ref{riemann_invariant_r}) and (\ref{riemann_invariant_s}), and implementing the above procedure, the middle state $U_M$ is determined along with the 1-wave connecting to the left state and the 2-wave connecting to the right state.  Converting $U_M$ from Riemann invariant variables to fluid variables using (\ref{rho_fn_of_r_s}) and (\ref{v_fn_of_r_s}) gives us $u_M$.

Without loss of generality, we consider the 1-shock/2-rarefaction case, where the other cases are handled similarly.  Since the solution to the Riemann problem is self-similar, the ratio of our point of interest $s\equiv x/t$ is compared to the wave speeds, usually  done left to right.  We first compare this ratio against the 1-shock speed $s_1$, computed by the transformation law (\ref{lorentz_velocity_xform_law}) with $v=v_L$ and the speed ($s_i$) computed by (\ref{1_shock_speed}) using the $\beta$ found in the middle state algorithm.  If $s<s_1$, then the state proceeds the 1-shock, and $u_*=u_L$ is the solution.  Otherwise, we compute the speeds for the left and right sides of the 2-rarefaction fan.  The left speed $s^L_2$ is computed by (\ref{eigenvalue2}) with $u_M$ as the input, and the right speed $s^R_2$ is computed similarly using $u_R$ instead.  If $s_1<s<s^L_2$, then the state is between the shock and rarefaction waves, and $u_*=u_M$.  If $s^R_2<s$, then the state is beyond the rarefaction wave, and $u_*=u_R$.  If $s^L_2<s<s^R_2$, then the state is within the rarefaction wave, and more work needs to be done to find the solution.  We use the fact that within the rarefaction wave the solution varies smoothly, and every state $u$ between $u_M$ and $u_R$ moves at the speed $\lambda_2(u)$ \cite{smol}.  Since the speed for our state is $\lambda_2(u_*)=x/t$, the inversion of the second eigenvalue (\ref{inverse_eigenvalue2}) gives us the fluid velocity $v_*$.  With the fluid velocity, the density $\rho_*$ is found by (\ref{rho_fn_of_r_v}) where $r_R$ is used since this Riemann invariant is unchanged across the 2-rarefaction wave.  For the rarefaction wave case, our solution becomes $u_*=(\rho_*,v_*)$.  This completes the procedure to find the solution to the Riemann problem $u_*$ for any arbitrary point $(t,x)$ for the 1-shock/2-rarefaction case.

\begin{figure}
\begin{center}
\includegraphics[width=\textwidth]{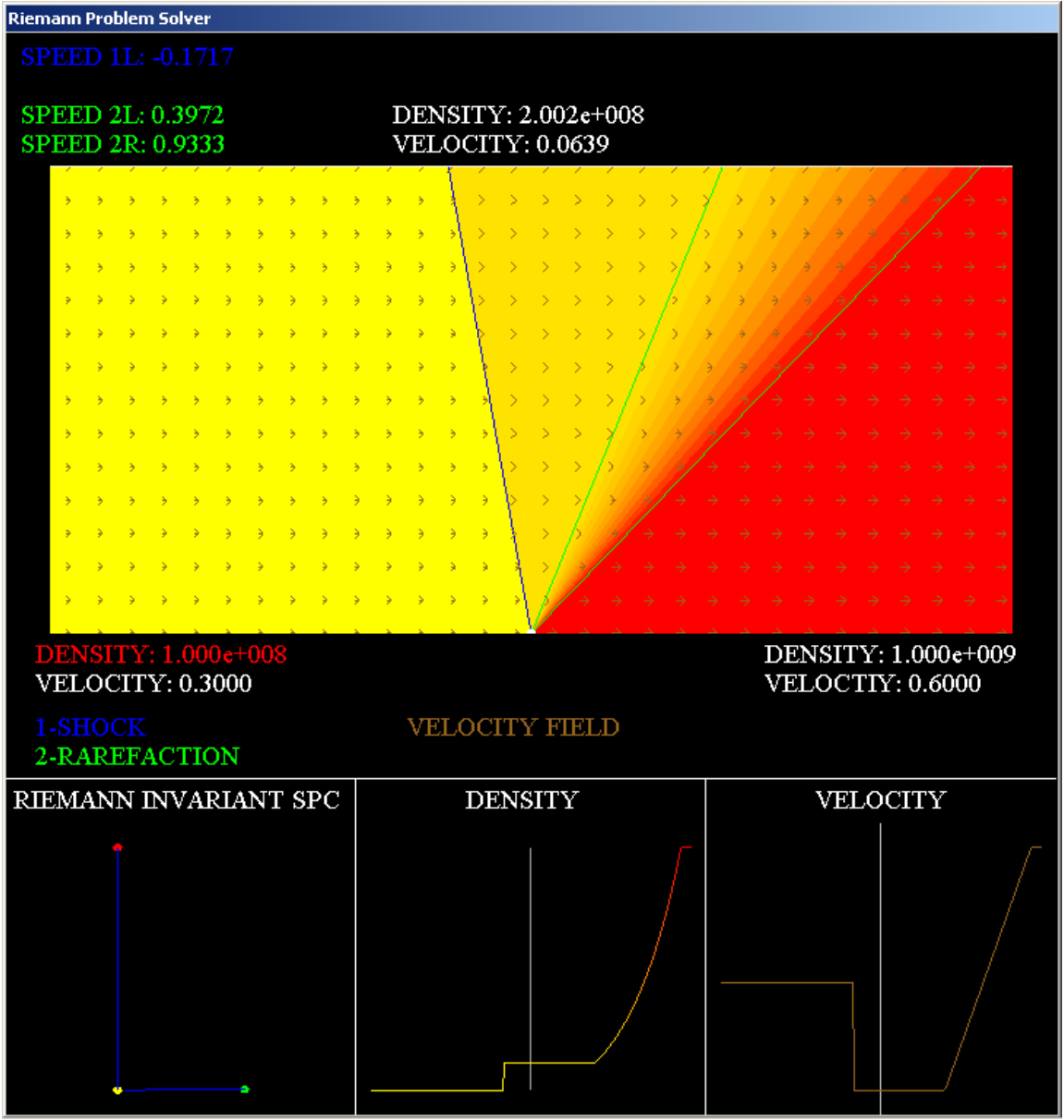}
\end{center}
\caption{The Riemann Problem Solver}
\label{fig:rp_solver}
\end{figure}

With all this knowledge of the relativistic compressible Euler equations, we construct a Riemann problem simulator to test our solver and show us solutions to the Riemann problem for this system of equations in Minkowski spacetime.  A glimpse of this simulator is shown in Figure \ref{fig:rp_solver}; for concreteness, the 1-shock and 2-rarefaction case is displayed, where the left and right states are chosen to be $u_L=(\rho_L,v_L)=(10^8,0.3)$ and $u_R=(\rho_R,v_R)=(10^9,0.6)$, respectively.  All the information needed to solve this Riemann problem is embedded into this figure.  The colored rectangle is the spacetime cell containing the Riemann problem.  The x-axis is horizontal while the t-axis is vertical, and the white half circle at the bottom center represents the origin $(t,x)=(0,0)$.  The speed of light is chosen to be one (i.e. $c=1$) so twice as much space relative to time is needed to enable light to travel in both directions.  The left/right state is displayed under the space time cell on the left/right side, and the middle state $u_M=(2.002\times10^8,0.0639)$ is recorded above this cell.  The density is displayed using a relative color map, where the highest value for the density ($10^9$) is red, and the lowest value ($10^8$) is yellow, with other values linearly interpolated between these two colors.  The velocity is represented by the brown arrows where the length of the stem represents magnitude, and the direction left/right represents negative/positive velocity.  The 1-shock wave is represented by the blue line with the corresponding speed ($s_1=-0.1717$) recorded on the top left.  The 2-rarefaction wave is represented by the multiple colored set of triangles bounded by green lines.  The green lines are the edges to the rarefaction wave, also referred to as a rarefaction fan, and the colored triangles represent the different states within this fan.  Shown on the top, the rarefaction wave has a the left speed of $s^L_2=0.3972$ and a right speed of $s^R_2=0.9333$.

On the bottom, there are three separate panels.  The left panel displays the Riemann invariant space with the three states associated with the Riemann problem under consideration.  The left state $u_L$, shown as a green dot, is connected to the middle state $u_M$, shown as a yellow dot, by the 1-shock curve.  Also, the middle state is connected to the right state $u_R$, shown as a red dot, by the 2-rarefaction curve.  The other two panels show the density and velocity profiles after one unit of time.  Here one can see, going left to right, the left state, the 1-shock wave, the middle state, the 2-rarefaction wave, and the right state in these graphs.

\section{Time Dilation Between Space Time Cells}
\label{sec:time_dilation}
\begin{figure}[t]
\begin{pspicture}(12,5)(0,-0.3)
%\psgrid
%draw twice the unitary speed of light grid cell
\psline(0,0)(2,0)
\psline(2,0)(2,1)
\psline(2,1)(0,1)
\psline(0,1)(0,0)
\psline(1,0)(0.5,1)
\psline(1,0)(1.5,1)
\psline{|-|}(-0.3,0)(-0.3,1)
\rput(-.8,0.5){$\Delta t$}
\psline{|-|}(2.3,0)(2.3,1)
\rput(2.8,0.5){$\Delta t_1$}
\psline{|-}(0,-0.3)(0.5,-0.3)
\psline{-|}(1.5,-0.3)(2,-0.3)
\rput(1,-0.3){$\Delta x$}
\rput(1,1.5){$\sqrt{A_1B_1}=2$}

%draw the unitary grid cell
\psline(5,0)(7,0)
\psline(7,0)(7,2)
\psline(7,2)(5,2)
\psline(5,2)(5,0)
\psline(6,0)(5.5,2)
\psline(6,0)(6.5,2)
\psline{|-|}(4.7,0)(4.7,1)
\rput(4.2,0.5){$\Delta t$}
\psline{|-}(7.4,0)(7.4,0.7)
\psline{-|}(7.4,1.3)(7.4,2)
\rput(7.4,1){$\Delta t_2$}
\psline{|-}(5,-0.3)(5.5,-0.3)
\psline{-|}(6.5,-0.3)(7,-0.3)
\rput(6,-0.3){$\Delta x$}
\rput(6,2.5){$\sqrt{A_2B_2}=1$}
\psline[linestyle=dashed](5,1)(7,1)

%draw half the unitary speed of light grid cell
\psline(10,0)(12,0)
\psline(12,0)(12,4)
\psline(12,4)(10,4)
\psline(10,4)(10,0)
\psline(11,0)(10.5,4)
\psline(11,0)(11.5,4)
\psline{|-|}(9.7,0)(9.7,1)
\rput(9.2,0.5){$\Delta t$}
\psline{|-}(12.4,0)(12.4,1.7)
\psline{-|}(12.4,2.3)(12.4,4)
\rput(12.4,2){$\Delta t_3$}
\psline{|-}(10,-0.3)(10.5,-0.3)
\psline{-|}(11.5,-0.3)(12,-0.3)
\rput(11,-0.3){$\Delta x$}
\rput(11,4.5){$\sqrt{A_3B_3}=\frac{1}{2}$}
\psline[linestyle=dashed](10,1)(12,1)

\end{pspicture}\caption{Effects of time dilation}
\end{figure}
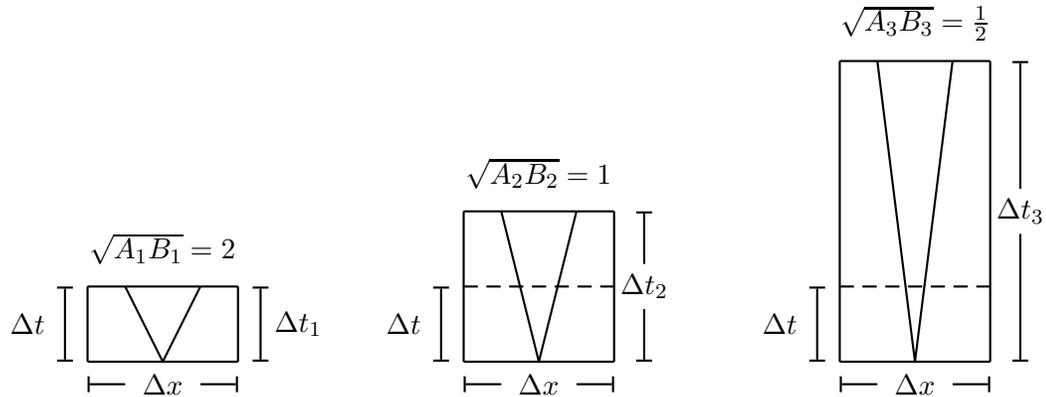
\label{fig:time_dilation}

With the capability of solving the Riemann problem for the relativistic compressible Euler equations in a unitary frame, like Minkowski spacetime, this section is dedicated to solving it in non-unitary frames.  A non-unitary frame has the (coordinate) speed of light different from one (i.e. $\sqrt{AB}\neq 1$).  This factor $\sqrt{AB}$, determined by the metric components, only changes the speeds of the waves, and when considering solutions to the Riemann problem, it has no effect on the states themselves \cite{groasmte}.  Since the locally inertial Godunov method has a fixed spatial distance for all the frames, this factor determines how time is sped up or slowed down relative to the unitary frame.  One way to view this phenomenon is this factor stretches or shrinks the height of our spacetime cell relative to other frames, as illustrated in Figure \ref{fig:time_dilation}.  This figure shows, from left to right, a frame of speed two, a unitary frame, and a frame of speed one-half.  One problem we face is one unit of time has different meanings in different frames.  For example, one unit of time in the left frame in Figure \ref{fig:time_dilation} corresponds to a half a unit of time in the middle frame and a fourth of a unit of time for the right frame.  In other words, we need a frame to give meaning to the quantity of time $t$, and we choose the unitary frame to be this time keeper for all of our simulations.  Another problem this factor causes is the need to unify all the times across the board, making sure to not break any one frame's Courant-Friedrichs-Levy (CFL) condition \cite{leve}.  To accomplish this unification, the frame where time moves the fastest sets the change in time $\Delta t$.  This reduction of time in the other frames is represented in Figure \ref{fig:time_dilation} as the dashed line.  The goal of this section is to determine the effect of shortening the time on all the other frames.

To proceed, we need to distinguish between two different spacetime cells.  In the last section, we dealt exclusively with a cell containing the whole Riemann problem posed, which we will refer to this cell as a {\it Riemann cell}.  Now, we consider a cell containing an entire Godunov step, and we refer to it as a {\it Godunov cell}.  In order to implement the Godunov step, there has to be at least three states $u_L$, $u_C$, and $u_R$, which poses two Riemann problems, and the step is performed on the center state $u_C$; therefore, the Godunov cell and the Riemann cell are staggered against one another.  Note that if we do not violate the CFL condition then the two Riemann problems cannot interact, and the solution $u(t,x)$ is completely determined.  Define $u^L_*$ to be the zero speed state for the left Riemann problem and $u^R_*$ the corresponding one for the right Riemann problem.  All of these details are displayed in Figure \ref{fig:god_cell}.

\begin{figure}[!t]
\begin{pspicture}(12,3.5)(-1,-0.5)
%\psgrid
%draw both space-time cells
\psline(0,0)(12,0)
\psline(12,0)(12,3)
\psline(12,3)(0,3)
\psline(0,3)(0,0)
\psline(6,0)(6,0.7)
\rput(6,1){$u_C$}
\psline(6,1.3)(6,3)
\psline(3,0)(2.4,3)
\psline(3,0)(5.4,3)
\psline(9,0)(7.6,3)
\psline(9,0)(10.5,3)
\rput(1.5,1){$u_L$}
\rput(10.5,1){$u_R$}
%dashed lines as the borders to the Godunov Cell
\psline[linestyle=dashed](3,0)(3,3)
\psline[linestyle=dashed](9,0)(9,3)
%label the Godunov Cell along with the zero speed states
\rput(6,-0.1){$\underbrace{\phantom{444444444444444444444444444444l}}$}
\rput(6,-0.6){Godunov Cell}
\rput(3,3.4){$u^L_*$}
\rput(9,3.4){$u^R_*$}
%label the time step
\psline{|-}(-0.5,0)(-0.5,1.2)
\rput(-0.5,1.5){$\Delta \tilde{t}$}
\psline{-|}(-0.5,1.8)(-0.5,3)
\end{pspicture}
\caption{The Godunov Cell}
\label{fig:god_cell}
\end{figure}

The following theorem expresses if the time in a Godunov cell is shortened, the resulting average is an affine combination of the original average and the center state, based on the ratio of the original and new time change.

\begin{thm}\label{shorten_time}
Let
\begin{equation}\label{max_time_in_cell}
\Delta\tilde{t}=\frac{\Delta x}{2\sqrt{AB}}
\end{equation}
represent the maximum time in a Godunov cell before the CFL condition is violated.  If  the change in time is shortened from $\Delta\tilde{t}$ to $\Delta t<\Delta\tilde{t}$ in a Godunov cell containing the solution to the Riemann problems $u(t,x)$, the average across that grid cell at time $\Delta\tilde{t}$, $\bar{u}(\Delta\tilde{t})$, and at time $\Delta t$, $\bar{u}(\Delta t)$, are related by
\begin{equation}\label{shorten_time_avg_relation}
\bar{u}(\Delta t) = \lambda\bar{u}(\Delta\tilde{t})+(1-\lambda)u_C
\end{equation}
where $\lambda=\frac{\Delta t}{\Delta\tilde{t}}<1$ is the ratio between the two times.
\end{thm}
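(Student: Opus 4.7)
The plan is to exploit the self-similarity of the Riemann problem solutions, which says $u(t,x)$ in a neighborhood of each interface depends only on the ratio $(x-x_{\text{interface}})/t$. Since the CFL bound $\Delta\tilde{t}=\Delta x/(2\sqrt{AB})$ is precisely the time at which the two Riemann fans (propagating with speeds bounded by the coordinate light speed $\sqrt{AB}$) first meet at the center of the Godunov cell, for every $\Delta t\le\Delta\tilde{t}$ the solution inside the cell has the following clean structure: a left-interface Riemann fan occupying $[-\Delta x/2,\,-\Delta x/2+\sqrt{AB}\,\Delta t]$, an unperturbed center state $u_C$ occupying $[-\Delta x/2+\sqrt{AB}\,\Delta t,\;\Delta x/2-\sqrt{AB}\,\Delta t]$, and a right-interface Riemann fan occupying $[\Delta x/2-\sqrt{AB}\,\Delta t,\,\Delta x/2]$.

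First I would write the cell average as the sum of three integrals corresponding to the three pieces above. In the $u_C$ piece the integrand is constant, contributing $(\Delta x-2\sqrt{AB}\,\Delta t)\,u_C$. In each of the two fan pieces I would change variables from $x$ to $\xi=(x\pm\Delta x/2)/\Delta t$; self-similarity turns each fan integral into $\Delta t$ times a quantity $I_L$ or $I_R$ that depends only on the Riemann data $(u_L,u_C)$ or $(u_C,u_R)$ and on the material speed $\sqrt{AB}$, but not on $\Delta t$. Dividing by $\Delta x$ then yields
\begin{equation}
\bar{u}(\Delta t)=\frac{\Delta t}{\Delta x}(I_L+I_R)+\left(1-\frac{2\sqrt{AB}\,\Delta t}{\Delta x}\right)u_C.
\end{equation}

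Next I would use $2\sqrt{AB}/\Delta x=1/\Delta\tilde{t}$ to rewrite the coefficient of $u_C$ as $1-\Delta t/\Delta\tilde{t}=1-\lambda$, making the expression an explicit affine function of $\Delta t$. Evaluating at $\Delta t=\Delta\tilde{t}$ eliminates the $u_C$ term and gives $\bar{u}(\Delta\tilde{t})=(I_L+I_R)/(2\sqrt{AB})$, which identifies the combination $(I_L+I_R)/\Delta x$ as $\bar{u}(\Delta\tilde{t})/\Delta\tilde{t}$. Substituting this back into the general formula collapses it to $\bar{u}(\Delta t)=\lambda\,\bar{u}(\Delta\tilde{t})+(1-\lambda)u_C$, which is the claim.

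The only real subtlety, and the place I would be most careful, is justifying that the Riemann fans remain disjoint and do not interact for every $\Delta t\le\Delta\tilde{t}$, so that the three-piece decomposition actually holds and the self-similar change of variables is valid on the entire fan region. This follows from the fact, established earlier in the section, that all wave speeds in the Riemann solution for system \eqref{minkowski_conservation_law} are strictly bounded by the coordinate light speed $\sqrt{AB}$, combined with the definition \eqref{max_time_in_cell} of $\Delta\tilde{t}$; once this is invoked, the rest of the argument is just a direct computation with the self-similar solution, and the affine dependence on $\Delta t$ drops out automatically.
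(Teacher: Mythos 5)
Your argument is correct and rests on exactly the same two ingredients as the paper's proof: self-similarity of the Riemann fans and the CFL bound guaranteeing the disjoint fan/constant/fan decomposition up to time $\Delta\tilde{t}$. The only (cosmetic) difference is organizational: the paper rescales the fan integral directly via $u(\Delta t,x)=u(\Delta\tilde{t},\lambda^{-1}x)$ to make $\bar{u}(\Delta\tilde{t})$ appear explicitly, whereas you rescale to a fixed self-similar variable $\xi$, observe the affine dependence on $\Delta t$, and determine the coefficient by evaluating at $\Delta t=\Delta\tilde{t}$ --- same computation, slightly different bookkeeping.
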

\begin{proof}
Suppose we have a solution to both Riemann problems $u(t,x)$ in the Godunov cell, along with a maximum time $\Delta\tilde{t}$ (\ref{max_time_in_cell}) and a shorter time $\Delta t<\Delta\tilde{t}$.  Since each half of the Godunov cell contains a disjoint self-similar Riemann problem, we prove the claim on one half of the grid cell.  Without loss of generality, we study the left half (i.e. $0<x<\Delta x/2$) with the origin located at the bottom left of this cell.  We start by constructing the relationship between the function $u(t,x)$ at the two times.  The fastest speed in the Godunov cell (i.e. the speed of light) is referred to as $\alpha=\sqrt{AB}$.  By the self-similarity of the Riemann Problem, the following relationship holds
\begin{equation}
u(\Delta t,x)=\left\{
\begin{array}{ll}
u(\lambda^{-1}\Delta t,\lambda^{-1}x)=u(\Delta\tilde{t},\lambda^{-1}x) & \text{for } 0\leq x\leq\alpha \Delta t\\
u_C & \text{for } \alpha \Delta t\leq x\leq\frac{\Delta x}{2}.
\end{array}
\right.
\end{equation}
Figure \ref{fig:shorten_time} displays these details, where the center state $u_C$ is sectioned off by the speed of light of the left and right Riemann problems.
\begin{figure}[!h]
\begin{pspicture}(8,3.5)(-2,-0.5)
%\psgrid
%space-time cell
\psline(0,0)(6,0)
\psline(6,0)(6,3)
\psline(6,3)(0,3)
\psline(0,3)(0,0)
%dashed line
\psline[linestyle=dashed](0,2)(6,2)
\psline(0,0)(2.4,3)
\psdots(1.6,2)(2.4,3)
\rput(2.4,3.3){$u(\Delta \tilde{t},\lambda^{-1}x)$}
\rput(1,2.4){$u(\Delta t,x)$}
\psline{|-}(0,-0.4)(2.5,-0.4)
\rput(3,-0.4){$\Delta x$}
\psline{-|}(3.5,-0.4)(6,-0.4)
\psline{|-}(6.5,0)(6.5,0.7)
\rput(6.5,1){$\Delta t$}
\psline{-|}(6.5,1.3)(6.5,2)
\psline{|-}(-0.5,0)(-0.5,1.2)
\rput(-0.5,1.5){$\Delta \tilde{t}$}
\psline{-|}(-0.5,1.8)(-0.5,3)
%shade the region for the middle state u_m and label
\pspolygon[fillstyle=vlines](0,0)(3,3)(6,0)
\rput(3,0.5){\psframebox*{$u_C$}}
\end{pspicture}
\caption{The effect of shorten the time step within a Godunov cell}
\label{fig:shorten_time}
\end{figure}

The average at the time $\Delta t$ is directly calculated as
\begin{equation}\label{new_time_avg}
\begin{split}
\bar{u}(\Delta t)=&\frac{2}{\Delta x}\int^{\frac{\Delta x}{2}}_0u(\Delta t,x)dx =\frac{2}{\Delta x}\left(\int^{\alpha\Delta t}_0u(\Delta\tilde{t},\lambda^{-1}x)dx+\int^{\frac{\Delta x}{2}}_{\alpha\Delta t}u_Cdx\right)\\
=&\frac{2}{\Delta x}\left[\lambda\int^{\frac{\Delta x}{2}}_0u(\Delta\tilde{t},y)dy+u_C\left(\frac{\Delta x}{2}-\alpha\Delta t\right)\right].
\end{split}
\end{equation}
Since $\frac{\Delta x}{2}=\alpha\Delta\tilde{t}$, this implies
\begin{equation}
\left(\frac{\Delta x}{2}-\alpha\Delta t\right)=\frac{\Delta x}{2}\left(1-\frac{\alpha\Delta t}{\alpha\Delta\tilde{t}}\right) =\frac{\Delta x}{2}(1-\lambda),
\end{equation}
and (\ref{new_time_avg}) becomes
\begin{equation}
\bar{u}(\Delta t) = \lambda\bar{u}(\Delta\tilde{t})+(1-\lambda)u_C,
\end{equation}
proving the claim.
\end{proof}

Since the Godunov step only considers the zero speed states $u^L_*$ and $u^R_*$ and the factor $\sqrt{AB}$ has no affect on scaling a speed of zero, we expect the affect on the Godunov step of this time dilation to be minor.  A quick calculation for the left half of the Godunov cell shows
\begin{equation}
\begin{split}
\bar{u}(\Delta t) &= \lambda\bar{u}(\Delta\tilde{t})+(1-\lambda)u_C\\
&= (1-\lambda)u_C+\lambda u_C-\lambda\frac{2\Delta\tilde{t}}{\Delta x}[f(u_C)-f(u^L_*)]\\
&= u_C-\lambda\frac{2\Delta\tilde{t}}{\Delta x}[f(u_C)-f(u^L_*)]\\
&= u_C-\frac{2\Delta t}{\Delta x}[f(u_C)-f(u^L_*)].
\end{split}
\end{equation}
Hence, the only effect of a reduced time is to shorten the input time in the Godunov step.
\section{Locally Inertial Godunov Method}
\label{sec:frac_god_method}
Equipped with all the necessary tools, we state the algorithm of the locally inertial Godunov method.  This method is a fractional step scheme started by choosing some parameters.  In particular, we choose a minimum radius $r_{min}$, a maximum radius $r_{max}$, the number of spatial gridpoints $n$, and a start time $t_0$.  In our simulations, the number of spatial grid points $n$ is chosen to be a power of two (i.e. $n=2^k$ for some $k$).  From these parameters, the mesh width $\Delta x$ is determined to be
\begin{equation}\label{mesh_size}
\Delta x=\frac{r_{max}-r_{min}}{n-1},
\end{equation}
and is fixed throughout the scheme.   Let $(x_i,t_j)$ represent a mesh point in an unstaggered grid defined on the domain
\begin{equation}\label{solution_domain}
D=\{r_{min}\leq x_i\leq r_{max},t_j\geq t_0\}.
\end{equation}
The spatial points are defined as
\begin{equation}
x_i\equiv r_{min}+(i-1)\Delta x \text{ for } i=1,\ldots,n.
\end{equation}
Unlike the mesh width, the time step or the mesh height, $\Delta t$, changes from one time step to the next because there is no way to determine beforehand the smallest $\Delta t$ satisfying the CFL condition for every time step.  So for every time $t_j$, a new time step is computed by
\begin{equation}\label{time_step_defn}
\Delta t_j=\min\left\{\frac{\Delta x}{2\sqrt{A_{ij}B_{ij}}}\right\},
\end{equation}
where the minimum is taken over all the spatial gridpoints at time $t_j$ of the metric $\mathbf{A}_{ij}=(A_{ij},B_{ij})$, where these entries are defined shortly.  Starting at $t_0$, our temporal mesh points are defined as
\begin{equation}
t_j\equiv t_0+\sum^j_{k=1}\Delta t_k \text{ for } j=1,\ldots,\infty.
\end{equation}

We assume at our current time $t_j$ for $j\geq0$ there exists a solution $u(t_j,x)$ and $\mathbf{A}(t_j,x)$ for $(t_j,x)\in D$.  This solution is either provided as the starting solution at $t_0$ or from the last iteration of the locally inertial Godunov scheme constructed inductively.  To implement the method, this solution is discretized into piecewise constant states.  Discretizing the conserved quantities $u(t_j,x)$, let $u_{\Delta x}$ be given by piecewise constant states $u_{ij}$ at time $t=t^+_j$ as follows:
\begin{equation}\label{conserved_quantities_approx}
u_{\Delta x}(t,x) = u_{ij}\equiv u(t_j,x_i)\text{ for }x_{i-}\leq x < x_{i+}, t=t^+_j.
\end{equation}
For notational convenience, we denote $x_{i+}\equiv x_{i+\frac{1}{2}}$ and $x_{i-}\equiv x_{i-\frac{1}{2}}$ throughout this paper.

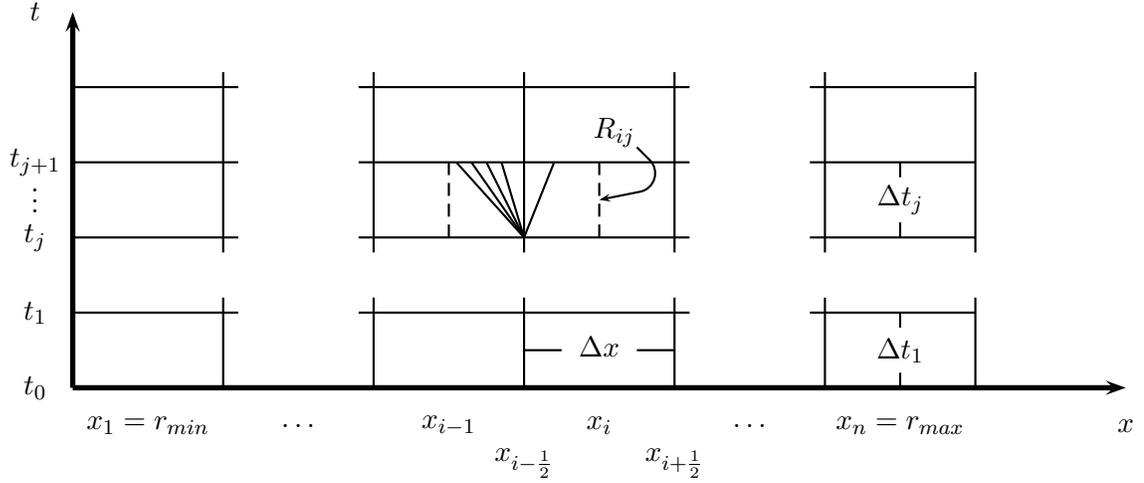
\begin{figure}[!t]
\begin{pspicture}(15,6)(-1,-1)
%horizontal lines
\psline[linewidth=2pt]{->}(0,0)(14,0)
\psline(0,1)(2.2,1)
\psline(0,2)(2.2,2)
\psline(0,3)(2.2,3)
\psline(0,4)(2.2,4)
\psline(3.8,1)(8.2,1)
\psline(3.8,2)(8.2,2)
\psline(3.8,3)(8.2,3)
\psline(3.8,4)(8.2,4)
\psline(9.8,1)(12,1)
\psline(9.8,2)(12,2)
\psline(9.8,3)(12,3)
\psline(9.8,4)(12,4)
%vertical lines
\psline[linewidth=2pt]{->}(0,0)(0,5)
\psline(2,0)(2,1.2)
\psline(2,1.8)(2,4.2)
\psline(4,0)(4,1.2)
\psline(4,1.8)(4,4.2)
\psline(6,0)(6,1.2)
\psline(6,1.8)(6,4.2)
\psline(8,0)(8,1.2)
\psline(8,1.8)(8,4.2)
\psline(10,0)(10,1.2)
\psline(10,1.8)(10,4.2)
\psline(12,0)(12,1.2)
\psline(12,1.8)(12,4.2)
%Riemann Problem Mesh Rectangle
\psline[linestyle=dashed](5,2)(5,3)
\psline[linestyle=dashed](7,2)(7,3)
\psline[linearc=0.25]{->}(7.5,3.2)(8,2.7)(7,2.5)
\psline(6,2)(5.1,3)
\psline(6,2)(5.3,3)
\psline(6,2)(5.5,3)
\psline(6,2)(5.7,3)
\psline(6,2)(6.4,3)
% Delta x and Delta t measurements
\psline(11,0)(11,0.2)
\psline(11,0.8)(11,1)
\rput(11,0.5){$\Delta t_1$}
\psline(11,2)(11,2.2)
\psline(11,2.8)(11,3)
\rput(11,2.5){$\Delta t_j$}
\psline(6,0.5)(6.5,0.5)
\psline(7.5,0.5)(8,0.5)
\rput(7,0.55){$\Delta x$}
%text
\rput(-.5,5){$t$}
\rput(-.5,0){$t_0$}
\rput(-.5,1){$t_1$}
\rput(-.5,2.6){$\vdots$}
\rput(-.5,2){$t_j$}
\rput(-.5,3){$t_{j+1}$}
\rput(14,-0.5){$x$}
\rput(1,-0.5){$x_1=r_{min}$}
\rput(3,-0.5){$\ldots$}
\rput(5,-0.5){$x_{i-1}$}
\rput(6,-1){$x_{i-\frac{1}{2}}$}
\rput(7,-0.5){$x_i$}
\rput(8,-1){$x_{i+\frac{1}{2}}$}
\rput(9,-0.5){$\ldots$}
\rput(11,-0.5){$x_n=r_{max}$}
\rput(7.2,3.4){$R_{ij}$}
%\end{pspicture}
\end{pspicture}\caption{The Riemann cell $R_{ij}$}
\label{fig:mesh_rectangle}
\end{figure}

We define the grid rectangle $R_{ij}$ so the mesh point $(x_{i-},t_j)$ is in the bottom center of it,
\begin{equation}\label{grid_rectangle}
R_{ij}\equiv\{x_{i-1}\leq x<x_{i},\phantom{4} t_j\leq t < t_{j+1}\}, \phantom{4444}1\leq i\leq n+1,\phantom{4}j\geq0,
\end{equation}
which is diagrammed in Figure \ref{fig:mesh_rectangle}.  Each grid rectangle is a Riemann cell, containing a solution to a distinct Riemann problem.  We are limited to solving Riemann problems within Riemann cells having a constant speed of light, as discussed in the last section.  To this end, the metric source $\mathbf{A}=(A,B)$ must be approximated by a constant value, denoted $\mathbf{A}_{ij}$, in each Riemann cell $R_{ij}$ throughout the simulation.  These constant values are established by setting
\begin{equation}\label{metric_approx}
\mathbf{A}_{\Delta x}(t,x) = \mathbf{A}_{ij}\equiv \mathbf{A}(t_j,x_{i-})\text{ for }(t,x)\in R_{ij}.
\end{equation}
This approximation makes $\mathbf{A}_{\Delta x}$ discontinuous along each line $x=x_i,i=1,\ldots,n$, at each time step $t=t_j$.

To implement the Godunov step, we need boundary profiles at the left and right boundaries along with the initial profiles at time $t_0$ because the Godunov step is a three point method, and the points $x_1$ and $x_n$ need left and right partners, respectively, to pose the boundary Riemann problems.  These boundary profiles are used to implement the boundary Riemann problems and are referred to as {\it ghost cells}.  The left and right ghost cells, located at the points $x_0$ and $x_{n+1}$, respectively, must be consistent with our numerical solution to the Einstein equations around these boundaries.  Any inconsistency in these boundary conditions would result in errors propagating into our solution, corrupting the data; therefore, data is needed for the left ghost cell $u_{0,j}$ and $\mathbf{A}_{0,j}$ along with the right ghost cell $u_{n+1,j}$ and $\mathbf{A}_{n+1,j}$ that are solutions to the Einstein equations synchronized with the data close to the boundary.  Figure \ref{fig:space_labels} displays the location of these ghost cells.

The discontinuities of the metric $\mathbf{A}_{\Delta x}$ are staggered relative to the approximate solution $u_{\Delta x}$ as illustrated in Figure \ref{fig:space_labels}.  This staggering puts constant metric values within each Riemann cell and constant conserved quantities states at the bottom of each Godunov cell.  Constant conserved quantities $u_{\Delta x}$ in each Godunov cell and a constant metric $\mathbf{A}_{\Delta x}$ in each Riemann cell enables us to pose Riemann problems in locally inertial coordinate frames where we are capable of solving the relativistic compressible Euler equations. More specifically, there is a Riemann problem at the bottom center of each Riemann cell $R_{ij}$
\begin{equation}\label{mesh_riemann_problem}
\begin{split}
&u_{t}+f(\mathbf{A}_{ij},u)_{x}=0\\
u_0(x) =& \left\{
\begin{array}{ll}
u_L=u_{i-1,j} & x<x_{i-}\\
u_R=u_{i,j} & x>x_{i-}.
\end{array}
\right.
\end{split}
\end{equation}
Let $u^{RP}_{ij}(t,x)$ denote the solution of (\ref{mesh_riemann_problem}) within the Riemann cell $R_{ij}$, and define
\begin{equation}\label{riemann_problem_soln}
u^{RP}_{\Delta x}(t,x) \equiv u^{RP}_{ij}(t,x)\text{ for }(t,x) \in R_{ij}
\end{equation}
as the Riemann problem step of the fractional step scheme.

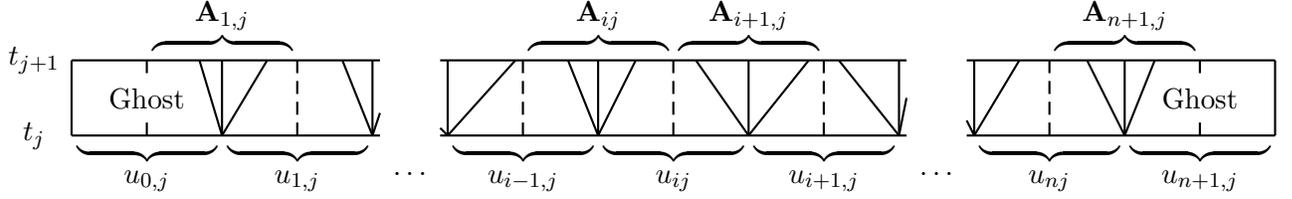
\begin{figure}[t]
\begin{pspicture}(16,2)
%\psgrid
%horizontal lines
\psline(0,0.5)(4.1,0.5)
\psline(4.9,0.5)(11.1,0.5)
\psline(11.9,0.5)(16,0.5)
\psline(0,1.5)(4.1,1.5)
\psline(4.9,1.5)(11.1,1.5)
\psline(11.9,1.5)(16,1.5)
%vertical lines
\psline(0,0.5)(0,1.5)
\psline(2,0.5)(2,1.5)
\psline(4,0.5)(4,1.5)
\psline(5,0.5)(5,1.5)
\psline(7,0.5)(7,1.5)
\psline(9,0.5)(9,1.5)
\psline(11,0.5)(11,1.5)
\psline(12,0.5)(12,1.5)
\psline(14,0.5)(14,1.5)
\psline(16,0.5)(16,1.5)
%vertical dashed lines
\psline[linestyle=dashed](1,0.5)(1,1.5)
\psline[linestyle=dashed](3,0.5)(3,1.5)
\psline[linestyle=dashed](6,0.5)(6,1.5)
\psline[linestyle=dashed](8,0.5)(8,1.5)
\psline[linestyle=dashed](10,0.5)(10,1.5)
\psline[linestyle=dashed](13,0.5)(13,1.5)
\psline[linestyle=dashed](15,0.5)(15,1.5)
%slanted lines to represent Riemann Solns
\psline(2,0.5)(1.7,1.5)
\psline(2,0.5)(2.6,1.5)
\psline(4,0.5)(3.6,1.5)
\psline(4,0.5)(4.1,0.8)
\psline(5,0.5)(4.9,0.6)
\psline(5,0.5)(5.9,1.5)
\psline(7,0.5)(6.6,1.5)
\psline(7,0.5)(7.5,1.5)
\psline(9,0.5)(8.3,1.5)
\psline(9,0.5)(9.8,1.5)
\psline(11,0.5)(10.2,1.5)
\psline(11,0.5)(11.1,1)
\psline(12,0.5)(11.9,0.7)
\psline(12,0.5)(12.6,1.5)
\psline(14,0.5)(13.5,1.5)
\psline(14,0.5)(14.4,1.5)
%labels
\rput(-.5,0.5){$t_j$}
\rput(-.5,1.5){$t_{j+1}$}
\rput(1,0.4){$\underbrace{\phantom{444444444l}}$}
\rput(1,-0.1){$u_{0,j}$}
\rput(3,0.4){$\underbrace{\phantom{444444444l}}$}
\rput(3,-0.1){$u_{1,j}$}
\rput(4.5,0){$\ldots$}
\rput(6,0.4){$\underbrace{\phantom{444444444l}}$}
\rput(6,-0.1){$u_{i-1,j}$}
\rput(8,0.4){$\underbrace{\phantom{444444444l}}$}
\rput(8,-0.1){$u_{ij}$}
\rput(10,0.4){$\underbrace{\phantom{444444444l}}$}
\rput(10,-0.1){$u_{i+1,j}$}
\rput(11.5,0){$\ldots$}
\rput(13,0.4){$\underbrace{\phantom{444444444l}}$}
\rput(13,-0.1){$u_{nj}$}
\rput(15,0.4){$\underbrace{\phantom{444444444l}}$}
\rput(15,-0.1){$u_{n+1,j}$}
\rput(2,1.6){$\overbrace{\phantom{444444444l}}$}
\rput(2,2.1){$\mathbf{A}_{1,j}$}
\rput(7,1.6){$\overbrace{\phantom{444444444l}}$}
\rput(7,2.1){$\mathbf{A}_{ij}$}
\rput(9,1.6){$\overbrace{\phantom{444444444l}}$}
\rput(9,2.1){$\mathbf{A}_{i+1,j}$}
\rput(14,1.6){$\overbrace{\phantom{444444444l}}$}
\rput(14,2.1){$\mathbf{A}_{n+1,j}$}
\rput(1,1){\psframebox*{Ghost}}
\rput(15,1){\psframebox*{Ghost}}
\end{pspicture}\caption{Staggering of the metric $\mathbf{A}$ and the solution $u$}
\label{fig:space_labels}
\end{figure}

Equipped with the solutions to the Riemann problems in each Riemann cell $R_{ij}$, we implement the Godnunov step to obtain the average of fluid variables $u^{RP}_{\Delta x}$ across the intervals $[x_{i-},x_{i+}]$ at the next time step $t_{j+1}$.  Since the metric $\mathbf{A}$ is different on both sides of $x_i$, separate averages must be taken over the left and right half cells and combined to obtain the true average.  In particular, let $\bar{u}^L_{ij}$ and $\bar{u}^R_{ij}$ be the average on the left and right half cells, respectively. Also, let $u^L_*=u^{RP}_{ij}(t^-_{j+1},x_{i-})$ and $u^R_*=u^{RP}_{ij}(t^-_{j+1},x_{i+})$ represent the zero speed states left and right Riemann problem, respectively, as shown in Figure \ref{fig:god_cell}. To perform the Godunov step on the left half cell, we compute
\begin{equation}\label{god_step_left}
\bar{u}^L_{ij}=u_{ij}-\frac{2\Delta t}{\Delta x}\{f(\mathbf{A}_{ij},u_{ij})-f(\mathbf{A}_{ij},u^L_*)\},
\end{equation}
and do the same for the right half cell
\begin{equation}
\bar{u}^R_{ij}=u_{ij}-\frac{2\Delta t}{\Delta x}\{f(\mathbf{A}_{ij},u^R_*)-f(\mathbf{A}_{ij},u_{ij})\},
\end{equation}
where the 2 accounts for the half cell calculations.  Taking the average of these results leads to
\begin{equation}\label{god_step_both}
\bar{u}_{ij}=\frac{1}{2}\{\bar{u}^L_{ij}+\bar{u}^R_{ij}\},
\end{equation}
defining our Godunov step of the method.

We proceed to define the ODE step.  Let $\hat{u}(t,u_0)$ denote the solution to the following initial value problem
\begin{equation}\label{ch3_ode_step_ivp}
\begin{split}
\hat{u}_t=G(\mathbf{A}_{ij},\hat{u},x)=g&(\mathbf{A}_{ij},\hat{u},x)- \mathbf{A}'\cdot\nabla_\mathbf{A}f(\mathbf{A}_{ij},\hat{u},x), \\
&\hat{u}(0)=u_0,
\end{split}
\end{equation}
where $G(\mathbf{A},\hat{u},x)=(G^0,G^1)$ takes the form
\begin{equation}\label{ch3_ode_step_G0_fluid}
G^0=-\frac{1}{2}\sqrt{AB}\left(\frac{c^2+\sigma^2}{c^2-v^2}\right)cv\frac{\rho}{x}\left\{2(\frac{1}{A}+1) -\frac{\kappa}{A}(c^2-\sigma^2)\rho x^2\right\},
\end{equation}
\begin{equation}\label{ch3_ode_step_G1_fluid}
G^1=-\frac{1}{2}\sqrt{AB}\left(\frac{c^2+\sigma^2}{c^2-v^2}\right)\frac{\rho}{x}\left\{4v^2 +(\frac{1}{A}-1)(c^2+v^2)+\frac{\kappa}{A}(\sigma^2-v^2)c^2\rho x^2\right\}.
\end{equation}

We define the approximate solution $u_{\Delta x}(t,x)$ and $\mathbf{A}_{\Delta x}(t,x)$ analytically to derive the piecewise formulas used to update the numerical scheme and to be used in the convergence proof of the next chapter.  The conserved quantities are defined by the formula
\begin{equation}\label{ch3_approx_soln}
u_{\Delta x}(t,x)=u^{RP}_{\Delta x}(t,x) +\int^t_{t_{j}}\{G(\mathbf{A}_{ij},\hat{u}(\xi-t_j,u^{RP}_{\Delta x}(t,x),x)\}d\xi
\end{equation}
Therefore, $u_{\Delta x}(t,x)$ is equal to $u^{RP}_{\Delta x}(t,x)$, the solution to the Riemann problems, plus a correction term from the ODE step of the method.  The metric is derived from the definition of the mass
\begin{equation}\label{compute_mass}
M_{\Delta x}(x,t)=M_{r_{min}}+\frac{\kappa}{2}\int^x_{r_{min}}u^0_{\Delta x}(r,t)r^2dr.
\end{equation}
In terms of these equations, define the metric as
\begin{equation}\label{compute_metric_a}
A_{\Delta x}(x,t)=1-\frac{2M_{\Delta x}(x,t)}{x},
\end{equation}
and
\begin{equation}\label{compute_metric_b}
B_{\Delta x}(x,t)=B_{r_0}\text{ exp}\int^x_{r_{min}}\left\{\frac{\{A_{\Delta x}(r,t)\}^{-1}-1}{r}+\frac{\kappa r}{A_{\Delta x}(r,t)}T^{11}_M(u_{\Delta x}(r,t))\right\}dr.
\end{equation}

Finally, in order to update the metric and conserved quantities, we use the Riemann problem averages $\bar{u}_{ij}$ to replace the Riemann problem solution $u^{RP}_{\Delta x}(t,x)$ and perform numerical integration on the analytical equations (\ref{ch3_approx_soln})-(\ref{compute_metric_b}).  This process leads us to define
\begin{equation}\label{ch3_update_conserved_quantities}
u_{i,j+1}=\bar{u}_{ij} +\left\{G(\frac{1}{2}(\mathbf{A}_{ij}+\mathbf{A}_{i+1,j}),\hat{u}(\xi-t_j,\bar{u}_{ij},x))\right\}\Delta t_j.
\end{equation}
The mass is
\begin{equation}\label{ch3_update_mass}
M_{i,j+1}=M_{r_{min}}+\sum_{k<i}\frac{\kappa}{2}\left(u^0_{\Delta x}(x_{k-},t_{j+1})x^2_{k-}\Delta x\right),
\end{equation}
with
\begin{equation}
u^0_{\Delta x}(x_{k-},t_{j+1})=\frac{1}{2}\{u^0_{k-1,j+1}+u^0_{k,j+1}\},
\end{equation}
and the metric becomes
\begin{equation}\label{ch3_update_metric_a}
A_{i,j+1}=1-\frac{2M_{i,j+1}}{x_{i-}},
\end{equation}
and
\begin{equation}\label{ch3_compute_metric_b}
B_{i,j+1}=B_{r_{min}}e^\tau,
\end{equation}
where
\begin{equation}
\tau= \left\{\sum_{k<i}\frac{\{A_{k,j+1}\}^{-1}-1}{x_{k-}}+\frac{\kappa x_{k-}}{A_{k,j+1}} T^{11}_M(u_{\Delta x}(x_{k-},t_{j+1}))\Delta x\right\},
%\tau=\left\{\sum_{x_{i-1/2}<x}\frac{\frac{1}{A_{\Delta x}(x_{i-1/2},t_{j+1})}-1}{x_{i-1/2}}+\frac{\kappa x_{i-1/2}}{A_{\Delta x}(x_{i-1/2},t_{j+1})}T^{11}_M(u_{\Delta x}(x_{i-1/2},t_{j+1}))\Delta x\right\}.
\end{equation}
with
\begin{equation}\label{cons_vars_in_between}
u_{\Delta x}(x_{k-},t_{j+1})=\frac{1}{2}\{u_{k-1,j+1}+u_{k,j+1}\}.
\end{equation}
Note that since the metric is staggered relative to the conserved quantities, we use the in between values, like $x_{k-}$ and $u_{\Delta x}(x_{k-},t_{j+1})$ in the update step. Let $\mathbf{A}_{i,j+1}=(A_{i,j+1},B_{i,j+1})$ denote the constant value for $\mathbf{A}_{\Delta x}$ on $R_{i,j+1}$.  This concludes the update step and completes the definition of the approximate solution $u_{\Delta x}$ and $\mathbf{A}_{\Delta x}$ by induction.

To summarize the method, after the setup of the gridpoints, Riemann cells, initial profiles, and ghost cells, the locally inertial Godunov method constructs the solution inductively with four major steps: a Riemann problem step, a Godunov step (with time dilation), an ODE step, and an update step.  The Riemann problem step is described in equations (\ref{mesh_riemann_problem}) and (\ref{riemann_problem_soln}).  Formulas (\ref{god_step_left})-(\ref{god_step_both}) denote the Godunov step.  The ODE step is detailed in (\ref{ch3_ode_step_ivp})-(\ref{ch3_approx_soln}).  Finally, equations (\ref{ch3_update_conserved_quantities})-(\ref{cons_vars_in_between}) express the update step.

   \chapter[%
      Short Title of 5th Ch.
   ]{%
      Convergence of the Method
   }%
   \label{ch:converge_of_method}

The focus of this chapter is proving the main theorem of this thesis.  This theorem states that if a solution $(u_{\Delta x},\mathbf{A}_{\Delta x})\rightarrow (u,\mathbf{A})$ using the locally inertial Godunov scheme converges and has a total variation bound at each time step, then $(u,\mathbf{A})$ are weak solutions to
\begin{equation}\label{ch5_conservation_law_source_cmpt}
\begin{split}
    u_{t}+f&(\mathbf{A},u)_{x}=g(\mathbf{A},u,x),\\
       &\mathbf{A}'=h(\mathbf{h},u,x),
\end{split}
\end{equation}
which is weakly equivalent to the Einstein equations (\ref{pde_system1})-(\ref{pde_system4}).  This proof is a modification of the Groah and Temple argument using the locally inertial Glimm scheme \cite{groasmte}, with a few differences.  The main difference is the solution update at each new time step.  In this paper, an average of the fluid variables is taken verses a random sampling, but the steps leading up to this update step are the same.  Another difference is the assumed total variation bound is used to bound the Riemann problem solutions, as opposed to Groah and Temple used wave strengths to bound the Riemann problem solutions.  Also, the time steps are now variable instead of constant.  The last difference is the inclusion of right boundary data along with the left boundary data because of the limited extent in space of a computer simulation.

There are two main points to the proof.  The first point is to show the discontinuities in the metric $\mathbf{A}$ along the boundary of Riemann cells are accounted for by the inclusion of the term
\begin{equation}
\mathbf{A}'\cdot\nabla_\mathbf{A}f(\mathbf{A}_{ij},\hat{u},x)
\end{equation}
in the ODE step (\ref{ch3_ode_step_ivp}).  The second point is to show the jump in the approximate solution $u_{\Delta x}$ along the time steps are of order $\Delta x$.  In their work \cite{groasmte}, Groah and Temple did not need the convergence and total variation assumptions because with the Glimm scheme, these assumptions are proven as long as there exists a total variation bound on the initial data, a truly remarkable feature of the scheme.  In this thesis, these assumptions are shown numerically by the simulation results in the subsequent chapters.  Thus, our theorem is perfectly suited to our numerical simulation: we numerically establish convergence and a total variation bound, and the theorem of this section proves that assuming these conditions, we can conclude convergence to a weak solution of the Einstein equations.

\section{Convergence to a Weak Solution}
The main theorem of this thesis is the following
\begin{thm}\label{thm:weak_soln}
Let $u_{\Delta x}(t,x)$ and $\mathbf{A}_{\Delta x}(t,x)$ be the approximate solution generated by the locally inertial Godunov method starting from the initial data $u_{\Delta x}(t_0,x)$ and $\mathbf{A}_{\Delta x}(t_0,x)$ for $t_0>0$.  Assume these approximate solutions exist up to some time $t_{end}>t_0$ and converge to a solution $(u_{\Delta x},\mathbf{A}_{\Delta x})\rightarrow (u,\mathbf{A})$ as $\Delta x\rightarrow 0$ along with a total variation bound at each time step $t_j$
\begin{equation}\label{total_variation_bound}
T.V._{[r_{min},r_{max}]}\{u_{\Delta x}(t_j,\cdot)\}<V,
\end{equation}
where $T.V._{[r_{min},r_{max}]}\{u_{\Delta x}(t_j,\cdot)\}$ represents the total variation of the function $u_{\Delta x}(t_j,x)$ on the interval $[r_{min},r_{max}]$.  Assume the total variation is independent of the time step $t_j$ and the mesh length $\Delta x$.  Then the solution $(u,\mathbf{A})$ is a weak solution to the Einstein equations (\ref{pde_system1})-(\ref{pde_system4}).
\end{thm}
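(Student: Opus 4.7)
The plan is to verify the weak formulation directly against the approximate solution. Take a compactly supported smooth test function $\phi$ on $(t_0,t_{end})\times(r_{min},r_{max})$, pair it against the alleged weak identity $u_t+f(\mathbf{A},u)_x=g(\mathbf{A},u,x)$, and rewrite the integral as a sum over the grid rectangles $R_{ij}$. Inside each $R_{ij}$ the metric $\mathbf{A}_{\Delta x}$ is constant equal to $\mathbf{A}_{ij}$ and the Riemann piece $u^{RP}_{\Delta x}$ solves the homogeneous conservation law with that constant metric pointwise. Integration by parts on each cell therefore kills the interior contribution and leaves three kinds of residual terms: (a) jumps in the flux across the vertical lines $x=x_i$ where $\mathbf{A}_{\Delta x}$ is discontinuous, (b) jumps in $u_{\Delta x}$ across the horizontal lines $t=t_j$ produced by the Godunov average together with the ODE update, and (c) a bulk contribution from the ODE correction term in (\ref{ch3_approx_soln}). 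The task is to show (a) and (c) cancel up to $O(\Delta x)$ and (b) is itself $O(\Delta x)$, whence the residual vanishes as $\Delta x\to 0$ by the total variation bound (\ref{total_variation_bound}) and dominated convergence using $u_{\Delta x}\to u$ a.e.

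For the vertical jumps at $x=x_i$, the CFL condition $\Delta t_j\leq \Delta x/(2\sqrt{A_{ij}B_{ij}})$ together with the fact that the Riemann problems sit at the midpoints $x_{i-}$, not at $x_i$, guarantees that $u^{RP}_{\Delta x}$ is identically $u_{ij}$ in a strip surrounding $x=x_i$. Thus the flux mismatch there is exactly $f(\mathbf{A}_{i+1,j},u_{ij})-f(\mathbf{A}_{ij},u_{ij})$, which by the mean value theorem equals $\nabla_{\mathbf{A}}f\cdot(\mathbf{A}_{i+1,j}-\mathbf{A}_{ij})$ up to a quadratic term controlled by $(\Delta \mathbf{A})^2$. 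Summed against $\phi$ over $i,j$ this assembles into a Riemann sum converging to $\iint \nabla_{\mathbf{A}}f(\mathbf{A},u)\cdot \mathbf{A}'\,\phi\,dx\,dt$, which is precisely the term subtracted in the ODE step (\ref{ch3_ode_step_ivp}), so it cancels against the bulk contribution (c). The quadratic remainder is bounded by $\sum_{i,j}|\mathbf{A}_{i+1,j}-\mathbf{A}_{ij}|^2\Delta t_j$, which by the TV bound and Lipschitz control on $\mathbf{A}$ is $O(\Delta x)$.

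For the horizontal jumps at $t=t_j$, the exact Godunov identity $\bar u_{ij}-u_{ij}=-(2\Delta t/\Delta x)\{\cdot\}$ encoded in (\ref{god_step_left})--(\ref{god_step_both}) lets us rewrite the vertical flux across $t=t_j$ as a divergence of $f(\mathbf{A}_{ij},u^{RP})$; Abel summation against $\phi$ then reassembles this into the discrete form of the weak $f_x$ integral, whose limit is $\iint f(\mathbf{A},u)\phi_x$. The error between $\phi$ on a cell and $\phi$ evaluated at the cell center is $O(\Delta x\|\nabla\phi\|_\infty)$ per cell; summed with the TV bound this is $O(\Delta x)$. The $\int g\phi$ piece of the residual is recovered from the $g$ portion of the ODE update (\ref{ch3_update_conserved_quantities}) by a Duhamel-type identity, passing to the limit by a.e.\ convergence and the TV bound. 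The weak ODE $\mathbf{A}'=h(\mathbf{A},u,x)$ follows from taking $\Delta x\to 0$ in the integral representations (\ref{compute_mass})--(\ref{compute_metric_b}), which are exact antiderivatives of $h$ against the discrete $u_{\Delta x}$.

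The hard part will be the staggered-grid bookkeeping that connects (a) and (c). Because $\mathbf{A}_{\Delta x}$ and $u_{\Delta x}$ live on meshes offset by $\Delta x/2$, the discrete object $\nabla_{\mathbf{A}}f\cdot(\mathbf{A}_{i+1,j}-\mathbf{A}_{ij})$ coming from the flux jumps must be matched, uniformly in the TV bound, to the ODE correction $\mathbf{A}'\cdot\nabla_{\mathbf{A}}f$ evaluated at the averaged metric $\tfrac{1}{2}(\mathbf{A}_{ij}+\mathbf{A}_{i+1,j})$ prescribed by (\ref{ch3_update_conserved_quantities}), and the $x$-dependence of $\hat u$ in (\ref{ch3_ode_step_ivp}) has to align with the staggered sample points in the update formulas. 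This matching is exactly what forces the specific algebraic form of $G$ in (\ref{ch3_ode_step_G0_fluid})--(\ref{ch3_ode_step_G1_fluid}); making the cancellation rigorous with the residual manifestly $O(\Delta x)$, while respecting the fact that $\mathbf{A}_{\Delta x}$ is only Lipschitz at shocks, is the technical heart of the argument and the reason the uniform bound (\ref{total_variation_bound}) is imposed.
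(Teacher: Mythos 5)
Your proposal follows the same strategy as the paper's proof: decompose the weak-form residual over grid rectangles using the exact Riemann-problem weak form in each cell, identify the vertical flux jumps at $x=x_i$, the horizontal data jumps at $t=t_j$, and the ODE-step bulk term as the surviving residual pieces, cancel the vertical flux jumps against the $\mathbf{A}'\cdot\nabla_{\mathbf{A}}f$ correction built into the ODE step, and bound the horizontal jumps by $O(\Delta x)$ via the Godunov averaging and the total-variation hypothesis. The one place your description is off-mechanism is the horizontal jump: the paper packages that estimate as Lemmas~\ref{lem:avg_vs_rp_estimate} and~\ref{lem:avg_bound}, which use directly that $\bar u_{ij}$ is the cell average of $u^{RP}_{\Delta x}(t_j^-,\cdot)$ so the leading term of $\int(\bar u_{ij}-u^{RP}_{\Delta x})\varphi\,dx$ vanishes and the remainder is controlled by $\|\nabla\varphi\|_\infty\Delta x$ times the local total variation — this is precisely the cell-center Taylor argument you invoke in your last clause, and it is the correct mechanism, whereas the ``Abel summation on the Godunov flux identity'' route you sketch before it relates $\bar u_{ij}$ to the bottom data $u_{ij}$ rather than to $u^{RP}_{\Delta x}(t_j^-,\cdot)$ and does not by itself produce the jump across $t_j$.
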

\begin{proof}
Suppose we have approximate solutions $(u_{\Delta x},\mathbf{A}_{\Delta x})$ obtained by the locally inertial Godunov method that satisfy the hypothesis of the theorem.  Having a total variation bound at each time $t_j$ places a total variation bound on the inputs to all the Riemann problems posed at that time.  In \cite{groasmte}, Groah and Temple show a total variation bound on the inputs implies a total variation bound on the solution to the Riemann problem for any time $t$ such that $t_j\leq t<t_{j+1}$.  By the self similarity of the solution to the Riemann problem, this result also implies a total variation bound for any space coordinate within the Riemann cell.  More specifically, we have the following bounds:
\begin{equation}
T.V._{[x_{i-1},x_i]}\{u_{\Delta x}(t,\cdot)\}<V,
\end{equation}
and
\begin{equation}
T.V._{[t_j,t_{j+1})}\{u_{\Delta x}(\cdot,x)\}<V,
\end{equation}
for any $x$ and $t$ within the Riemann cell $R_{ij}$.

All the functions $f$, $G$, and $g$ derived in \cite{groasmte} are smooth, and it is the metric that is only Lipschitz continuous.  The smoothness of these functions is used throughout this proof.

Let $T=t_{end}-t_0$ be the overall time of the solution, and for each mesh length $\Delta x$ define the minimum time length
\begin{equation}
\Delta t\equiv \min_j\{\Delta t_j\}
\end{equation}
as the minimum over all the time lengths defined by (\ref{time_step_defn}).  By definition, this time length is proportional to the mesh length, $\Delta t \propto \Delta x$, implying $O(\Delta t)=O(\Delta x)$, and there exists a constant $C$ bounding all the time lengths, $\Delta t_j < C\Delta t$ for all $j$.  Throughout this chapter, let $C$ be a generic constant only depending on the bounds for the solution $[t_0,t_{end}]\times [r_{min},r_{max}]$.  This variable is created to unify all the time steps, and more importantly, used to calculate the maximum number of time steps needed to go from $t_0$ to $t_{end}$.

We now follow the development of Groah and Temple in \cite{groasmte}.  Recall, $u^{RP}_{\Delta x}(t,x)$ denotes the collection of the exact solutions in all the Riemann cells $R_{ij}$ for the Riemann problem of the homogenous system
\begin{equation}
u_{t}+f(\mathbf{A}_{ij},u)_{x}=0.
\end{equation}
So $u^{RP}_{\Delta x}(t,x)$ satisfies the weak form of this conservation law in each Riemann cell
\begin{equation}\label{rp_soln_weak_form}
\begin{split}
0=&\int\int_{R_{ij}}\left\{-u^{RP}_{\Delta x}\varphi_t - f(\mathbf{A}_{ij},u^{RP}_{\Delta x})\varphi_x\right\} dxdt \\
&+ \int_{R_i}\left\{u^{RP}_{\Delta x}(t_{j+1},x)\varphi(t_{j+1},x)-u^{RP}_{\Delta x}(t^+_j,x)\varphi(t_j,x)\right\}dx \\
&+\int_{R_j}\left\{f(\mathbf{A}_{ij},u^{RP}_{\Delta x}(t,x_{i})) \varphi(t,x_{i})\right.\\
&\left.-f(\mathbf{A}_{ij},u^{RP}_{\Delta x}(t,x_{i-1}))\varphi(t,x_{i-1})\right\}dt,
\end{split}
\end{equation}
where $\varphi$ is a smooth test function with $Supp(\varphi)\subset [t_0,t_{end})\times[a,b]$ for $a<r_{min}<r_{max}<b$.

Remember, $\hat{u}(t,u_0)$ denotes the solution to the ODE
\begin{equation}\label{ode_soln}
\begin{split}
\hat{u}_t=G(\mathbf{A}_{ij},\hat{u},x)=g&(\mathbf{A}_{ij},\hat{u},x)- \mathbf{A}'\cdot\nabla_\mathbf{A}f(\mathbf{A}_{ij},\hat{u},x), \\
&\hat{u}(0)=u_0.
\end{split}
\end{equation}
Therefore,
\begin{equation}
\hat{u}(t,u_0)=u_0+\int^t_0\left\{g(\mathbf{A}_{ij},\hat{u}(\xi,u_0),x) -\mathbf{A}'\cdot\nabla_\mathbf{A}f(\mathbf{A}_{ij},\hat{u}(\xi,u_0),x)\right\}d\xi.
\end{equation}
Also, recall $u_{\Delta x}$ denotes the approximate solution obtained using the fractional step method.  Since our fractional method takes the Riemann problem solution and feeds it into the ODE step, $u_{\Delta x}$ is defined on every Riemann cell $R_{ij}$ as
\begin{equation}\label{frac_step_approx_soln}
\begin{split}
u_{\Delta x}(t,x)=u^{RP}_{\Delta x}(t,x) +\int^t_{t_j}&\left\{g(\mathbf{A}_{ij},\hat{u}(\xi-t_j,u^{RP}_{\Delta x}(t,x)),x)\right.\\
-&\frac{\partial f}{\partial \mathbf{A}}\left.(\mathbf{A}_{ij},\hat{u}(\xi-t_j,u^{RP}_{\Delta x}(t,x)))\cdot\mathbf{A}'_{\Delta x}\right\}d\xi.
\end{split}
\end{equation}
This expression implies the error between the approximate solution and the Riemann problem solution is on the order of $\Delta x$; a fact that is repeatedly used throughout the proof.

Define the residual $\varepsilon=\varepsilon(u_{\Delta x},\mathbf{A}_{\Delta x},\varphi)$ of $u_{\Delta x}$ and $\mathbf{A}_{\Delta x}$ as the error of the solution in satisfying the weak form of the conservation law (\ref{ch5_conservation_law_source_cmpt}) by
\begin{equation}\label{residual}
\begin{split}
\varepsilon(u_{\Delta x},\mathbf{A}_{\Delta x},\varphi)\equiv&\int^{r_{max}}_{r_{min}}\int^{t_{end}}_{t_0}\left\{-u_{\Delta x}\varphi_t - f(\mathbf{A}_{\Delta x},u_{\Delta x})\varphi_x - g(\mathbf{A}_{\Delta x},u_{\Delta x},x)\varphi\right\}dxdt\\
&- I_1 - I_2\\
&\sum^{i=n+1}_{i=1,j}\int_{R_{ij}}\left\{-u_{\Delta x}\varphi_t - f(\mathbf{A}_{ij},u_{\Delta x})\varphi_x - g(\mathbf{A}_{ij},u_{\Delta x},x)\varphi\right\}dxdt\\
&- I_1 - I_2,
\end{split}
\end{equation}
where
\begin{equation}
I_1\equiv\int^{r_{max}}_{r_{min}}u_{\Delta x}(t^+_0,x)dx=\sum^{n+1}_{i=1}\int_{R_i}u_{\Delta x}(t^+_0,x)dx,
\end{equation}
and
\begin{equation}
\begin{split}
I_2\equiv&\int^{t_{end}}_{t_0}\left\{f(\mathbf{A}_{ij},u_{\Delta x}(t,r^+_{min}))\varphi(t,r^+_{min}) - f(\mathbf{A}_{ij},u_{\Delta x}(t,r^+_{max}))\varphi(t,r^+_{max})\right\}dt\\
=&\sum_j\int_{R_j}\left\{f(\mathbf{A}_{ij},u_{\Delta x}(t,r^+_{min}))\varphi(t,r^+_{min}) - f(\mathbf{A}_{ij},u_{\Delta x}(t,r^+_{max}))\varphi(t,r^+_{max})\right\}dt,
\end{split}
\end{equation}
The expression $\sum^{i=n+1}_{i=1,j}$ denotes a double sum where the index $i$ runs across all the spatial gridpoints, and the index $j$ runs across all the temporal gridpoints.  Remember, $n$ is the number of spatial gridpoints, and there are $n+1$ Riemann cells as depicted in Figure \ref{fig:space_labels}.  Our goal is to show $\varepsilon(u_{\Delta x},\mathbf{A}_{\Delta x},\varphi)=O(\Delta x)$ because if the approximation converges $(u_{\Delta x},\mathbf{A}_{\Delta x})\rightarrow (u,\mathbf{A})$ as $\Delta x\rightarrow 0$, then the limit function satisfies the condition of being a weak solution to the Einstein equations $\varepsilon(u,\mathbf{A},\varphi)=0$.

Substituting (\ref{frac_step_approx_soln}) into (\ref{residual}) gives us
\begin{equation}\label{residual_with_soln}
\begin{split}
\varepsilon=&\sum^{i=n+1}_{i=1,j}\int\int_{R_{ij}}\left\{-u^{RP}_{\Delta x}\varphi_t - f(\mathbf{A}_{ij},u_{\Delta x})\varphi_x - g(\mathbf{A}_{ij},u_{\Delta x},x)\varphi\right.\\
&- \varphi_t\int^t_{t_j}\left[g(\mathbf{A}_{ij},\hat{u}(\xi-t_j,u^{RP}_{\Delta x}(t,x)),x)\right.\\
&-\frac{\partial f}{\partial\mathbf{A}}\left.\left.(\mathbf{A}_{ij},\hat{u}(\xi-t_j,u^{RP}_{\Delta x}(t,x)))\cdot\mathbf{A}_{\Delta x}'\right]d\xi\right\}dxdt - I_1 - I_2.
\end{split}
\end{equation}
Define
\begin{equation}
\begin{split}
I^1_{ij}(t,x)\equiv&\int^t_{t_j}\left[g(\mathbf{A}_{ij},\hat{u}(\xi-t_j,u^{RP}_{\Delta x}(t,x)),x)\right.\\
&-\frac{\partial f}{\partial\mathbf{A}}\left.(\mathbf{A}_{ij},\hat{u}(\xi-t_j,u^{RP}_{\Delta x}(t,x)))\cdot\mathbf{A}_{\Delta x}'\right]d\xi
\end{split}
\end{equation}
Plugging the weak form of the conservation law (\ref{rp_soln_weak_form}) of each grid rectangle into (\ref{residual_with_soln}) gives us
\begin{equation}\label{residual_mid_step}
\begin{split}
\varepsilon=&\sum^{i=n+1}_{i=1,j}\int\int_{R_{ij}}\left\{\varphi_x[f(\mathbf{A}_{ij},u^{RP}_{\Delta x})-f(\mathbf{A}_{ij},u_{\Delta x})] - g(\mathbf{A}_{ij},u_{\Delta x},x)\varphi \right. \\
&\phantom{4444444444444444444444}\left.-\varphi_tI^1_{ij}(t,x)\right\}dxdt\\
&-I_1 - \sum^{i=n+1}_{i=1,j}\int_{R_i}\left\{u^{RP}_{\Delta x}(t^-_{j+1},x)\varphi(t_{j+1},x) - u^{RP}_{\Delta x}(t^+_j,x)\varphi(t_j,x)\right\}dx\\
&-I_2 -\sum^{i=n+1}_{i=1,j}\int_{R_j}\left\{f(\mathbf{A}_{ij},u^{RP}_{\Delta x}(t,x_{i}))\varphi(t,x_{i}) - f(\mathbf{A}_{ij},u^{RP}_{\Delta x}(t,x_{i-1}))\varphi(t,x_{i-1})\right\}dt.
\end{split}
\end{equation}
Note
\begin{equation}
\AutoAbs{f(\mathbf{A}_{ij},u^{RP}_{\Delta x})-f(\mathbf{A}_{ij},u_{\Delta x})}\leq C\Delta t
\end{equation}
which implies
\begin{equation}
\begin{split}
&\AutoAbs{\sum^{i=n+1}_{i=1,j}\int\int_{R_{ij}}\varphi[f(\mathbf{A}_{ij},u^{RP}_{\Delta x}) - f(\mathbf{A}_{ij},u_{\Delta x})]dxdt}\\
&\phantom{4444444444444444444}\leq C\AutoNorm{\varphi}_\infty \Delta t^2\Delta x\left(\frac{T}{\Delta t}\right)\left(n+1\right)= O(\Delta x)
\end{split}
\end{equation}
where the number of time steps is proportional to $T/\Delta t$ and the number of space steps is $O(1/\Delta x)$ by (\ref{mesh_size}).

Since $u^{RP}_{\Delta x}(t^+_j,x)=u_{\Delta x}(t^+_j,x)$, the following sum is rearranged to become
\begin{equation}
\begin{split}
-I_1 - &\sum^{i=n+1}_{i=1,j}\int_{R_i}\left\{u^{RP}_{\Delta x}(t^-_{j+1},x)\varphi(t_{j+1},x) - u^{RP}_{\Delta x}(t^+_j,x)\varphi(t_j,x)\right\}dx\\
&=\sum_{j\neq0}\int^{r_{max}}_{r_{min}}\left\{u_{\Delta x}(t^+_j,x) - u^{RP}_{\Delta x}(t^-_j,x)\right\}\varphi(t_j,x)dx\\
&=\sum_{j\neq 0}\int^{r_{max}}_{r_{min}}\varphi(t_j,x)\left\{u_{\Delta x}(t^+_j,x) - u_{\Delta x}(t^-_j,x)\right\}dx\\
&+ \sum_{j\neq 0}\int^{r_{max}}_{r_{min}}\varphi(t_j,x)\left\{u_{\Delta x}(t^-_j,x) - u^{RP}_{\Delta x}(t^-_j,x)\right\}dx,
\end{split}
\end{equation}
where the term $u_{\Delta x}(t_j,x)$ is added and subtracted to isolate the jump in the solution $u_{\Delta x}$ across the time step $t_j$.  We define this jump $\varepsilon_1=\varepsilon_1(u_{\Delta x},\mathbf{A}_{\Delta x},\varphi)$ as
\begin{equation}
\varepsilon_1(u_{\Delta x},\mathbf{A}_{\Delta x},\varphi)\equiv\sum_{j\neq0}\int^{r_{max}}_{r_{min}}\varphi(t_j,x)\left\{u_{\Delta x}(t^+_j,x) - u_{\Delta x}(t^-_j,x)\right\}dx,
\end{equation}
and this definition allows us to rewrite (\ref{residual_mid_step}) as
\begin{equation}\label{intro_epsilon1}
\begin{split}
\varepsilon= &\phantom{4}O(\Delta x) + \varepsilon_1 + \sum^{i=n+1}_{i=1,j}\int\int_{R_{ij}}\left\{-g(\mathbf{A}_{ij},u_{\Delta x},x)\varphi - \varphi_tI^1_{ij}(t,x)\right\}dxdt\\
&+ \sum_{j\neq0}\int^{r_{max}}_{r_{min}}\varphi(t,x)\left\{u_{\Delta x}(t^-_j,x) - u^{RP}_{\Delta x}(t^-_j,x)\right\}dx\\
&-I_2 - \sum^{i=n+1}_{i=1,j}\int_{R_j}\left\{f(\mathbf{A}_{ij},u^{RP}_{\Delta x}(t,x_{i}))\varphi(t,x_{i}) - f(\mathbf{A}_{ij}, u^{RP}_{\Delta x}(t,x_{i-1}))\varphi(t,x_{i-1})\right\}dt
\end{split}
\end{equation}
But the last sum is rearranged to cancel the boundary conditions as follows:
\begin{equation}\label{eliminate_i2}
\begin{split}
-I_2 - &\sum^{i=n+1}_{i=1,j}\int_{R_j}\left\{f(\mathbf{A}_{ij},u^{RP}_{\Delta x}(t,x_{i}))\varphi(t,x_{i}) - f(\mathbf{A}_{ij}, u^{RP}_{\Delta x}(t,x_{i-1}))\varphi(t,x_{i-1})\right\}dt\\
=&\sum^{i=n}_{i=1,j}\int_{R_j}\left\{f(\mathbf{A}_{i+1,j},u^{RP}_{\Delta x}(t,x_{i})) - f(\mathbf{A}_{ij},u^{RP}_{\Delta x}(t,x_{i}))\right\}\varphi(t,x_{i})dt\\
+&\sum_{j}\int_{R_j}\left\{f(\mathbf{A}_{1,j},u^{RP}_{\Delta x}(t,x_0)) - f(\mathbf{A}_{1,j},u_{\Delta x}(t,x_0))\right\}\varphi(t,x_0)dt\\
+&\sum_{j}\int_{R_j}\left\{f(\mathbf{A}_{n+1,j},u^{RP}_{\Delta x}(t,x_{n+1})) - f(\mathbf{A}_{n+1,j},u_{\Delta x}(t,x_{n+1}))\right\}\varphi(t,x_{n+1})dt,
\end{split}
\end{equation}
where
\begin{equation}
\begin{split}
&\AutoAbs{\sum_j\int_{R_j}\left\{f(\mathbf{A}_{1,j},u^{RP}_{\Delta x}(t,x_0)) - f(\mathbf{A}_{1,j},u_{\Delta x}(t,x_0))\right\}\varphi(t,x_0)dt}\\
&\leq\AutoNorm{\varphi}_\infty C\Delta t^2\left(\frac{T}{\Delta t}\right)=O(\Delta x),
\end{split}
\end{equation}
and similarly
\begin{equation}
\AutoAbs{\sum_j\int_{R_j}\left\{f(\mathbf{A}_{n+1,j},u^{RP}_{\Delta x}(t,x_{n+1})) - f(\mathbf{A}_{n+1,j},u_{\Delta x}(t,x_{n+1}))\right\}\varphi(t,x_{n+1})dt}=O(\Delta x).
\end{equation}
Note that the resulting double sum in (\ref{eliminate_i2}) lost a term, resulting in only $n$ terms.

To simplify the $I^1_{ij}$ term, we add and subtract a term deviating from it by an order of $\Delta x$, use integration by parts on the new term, and with the result add and subtract another term to reduce the expression further.  To this end, let
\begin{equation}
\begin{split}
I_{\Delta S}\equiv\sum^{i=n+1}_{i=1,j}\int\int_{R_{ij}}\varphi_t\int^t_{t_j} &\left\{g(\mathbf{A}_{ij},\hat{u}(\xi-t_j,u^{RP}_{\Delta x}(\xi,x)),x) - g(\mathbf{A}_{ij},\hat{u}(\xi-t_j,u^{RP}_{\Delta x}(t,x)),x)\right.\\
-\frac{\partial f}{\partial \mathbf{A}}&(\mathbf{A}_{ij},\hat{u}(\xi-t,u^{RP}_{\Delta x}(\xi,x)))\cdot\mathbf{A}_{\Delta x}'\\
+&\frac{\partial f}{\partial \mathbf{A}}\left.(\mathbf{A}_{ij},\hat{u}(\xi-t,u^{RP}_{\Delta x}(t,x)))\cdot\mathbf{A}_{\Delta x}'\right\}d\xi dx dt.
\end{split}
\end{equation}
From the total variation bound on the Riemann problems and the smoothness of $f$, this term is bounded by
\begin{equation}
\begin{split}
\AutoAbs{I_{\Delta S}}&\leq \sum^{i=n+1}_{i=1,j}\int\int_{R_{ij}}\AutoNorm{\varphi_t}_\infty\int^t_{t_j}C\phantom{4}T.V._{[x_{i-1},x_i]}\left\{u_{\Delta x}(\cdot,t_j)\right\}d\xi dxdt\\
&\leq \AutoNorm{\varphi_t}_\infty C\Delta t^2\Delta x\sum_jT.V._{[r_{min},r_{max}]}\{u_{\Delta x}(\cdot,t_j)\}\\
&\leq CV\AutoNorm{\varphi_t}_\infty\Delta x\Delta t^2\frac{T}{\Delta t}=O(\Delta x^2),
\end{split}
\end{equation}
and the above procedure reduces the term to
\begin{equation}\label{splitting_i1_ij}
\begin{split}
&-\int\int_{R_{ij}}\varphi_t I^1_{ij}(t,x)dxdt = I_{\Delta S} - \sum^{i=n+1}_{i=1,j}\int\int_{R_{ij}}\varphi_t\int^t_{t_j}\left\{ g(\mathbf{A}_{ij},\hat{u}(\xi-t_j,u^{RP}_{\Delta x}(\xi,x)),x)\right.\\
&\phantom{444444444}-\frac{\partial f}{\partial\mathbf{A}} \left.(\mathbf{A}_{ij},\hat{u}(\xi-t_j,u^{RP}_{\Delta x}(\xi,x)))\cdot\mathbf{A}_{\Delta x}'\right\}dxdt\\
&=O(\Delta x^2) - \sum^{i=n+1}_{i=1,j}\int_{R_i}\left\{\varphi(t_{j+1},x)\int^{t_{j+1}}_{t_j} \left[g(\mathbf{A}_{ij},\hat{u}(\xi-t_j,u^{RP}_{\Delta x}(\xi,x)),x) \right.\right.\\
&\phantom{444444444}- \frac{\partial f}{\partial\mathbf{A}}\left.(\mathbf{A}_{ij},\hat{u}(\xi-t_j,u^{RP}_{\Delta x}(\xi,x)))\cdot\mathbf{A}_{\Delta x}'\right]d\xi\\
&\phantom{444444444}-\left.\int^{t_{j+1}}_{t_j}\varphi[g(\mathbf{A}_{ij},u_{\Delta x},x) - \frac{\partial f}{\partial\mathbf{A}}(\mathbf{A}_{ij},u_{\Delta x})\cdot\mathbf{A}_{\Delta x}']d\xi\right\}dx\\
&=O(\Delta x^2)-\sum^{i=n+1}_{i=1,j}\int_{R_i}\left\{\varphi(t_{j+1},x)\int^{t_{j+1}}_{t_j}\left[ g(\mathbf{A}_{ij},\hat{u}(\xi-t_j,u^{RP}_{\Delta x}(t_{j+1},x)),x)\right.\right.\\
&\phantom{444444444}-\left.\frac{\partial f}{\partial\mathbf{A}}\left.(\mathbf{A}_{ij},\hat{u}(\xi-t_j,u^{RP}_{\Delta x}(t_{j+1},x)))\cdot\mathbf{A}_{\Delta x}'\right]d\xi\right\}dt+I_4+I_5,
\end{split}
\end{equation}
where
\begin{equation}
\begin{split}
I_4\equiv\sum^{i=n+1}_{i=1,j}&\int_{R_i}\left\{\varphi(t_{j+1},x)\int^{t_{j+1}}_{t_j}\left[ g(\mathbf{A}_{ij},\hat{u}(\xi-t_j,u^{RP}_{\Delta x}(t_{j+1},x)),x)\right.\right.\\
&- g(\mathbf{A}_{ij},\hat{u}(\xi-t_j,u^{RP}_{\Delta x}(\xi,x)),x) - \frac{\partial f}{\partial\mathbf{A}}(\mathbf{A}_{ij},\hat{u}(\xi-t_j,u^{RP}_{\Delta x}(t_{j+1},x)))\cdot\mathbf{A}_{\Delta x}'\\
&+ \left.\frac{\partial f}{\partial\mathbf{A}}\left.(\mathbf{A}_{ij},\hat{u}(\xi-t_j,u^{RP}_{\Delta x}(\xi,x)))\cdot\mathbf{A}_{\Delta x}'\right]d\xi\right\}dx,
\end{split}
\end{equation}
and
\begin{equation}
I_5\equiv\sum^{i=n+1}_{i=1,j}\int\int_{R_{ij}}\varphi\left[g(\mathbf{A}_{ij},u_{\Delta x},x)-\frac{\partial f}{\partial\mathbf{A}}(\mathbf{A}_{ij},u_{\Delta x})\cdot\mathbf{A}_{\Delta x}'\right]dxdt.
\end{equation}
Again by smoothness and the total variation bound, we have
\begin{equation}
\begin{split}
\AutoAbs{I_4}&\leq\AutoNorm{\varphi}_\infty\sum^{i=n+1}_{i=1,j}C\phantom{4}T.V._{[x_{i-1},x_i]}\left\{u_{\Delta x}(\cdot,t_j)\right\}\Delta x\Delta t\\
&\leq\AutoNorm{\varphi}_\infty C\Delta x\Delta t\sum_jT.V._{[r_{min},r_{max}]}\left\{u_{\Delta x}(\cdot,t_j)\right\}
=\AutoNorm{\varphi}_\infty CV\Delta x\Delta t\frac{T}{\Delta t}=O(\Delta x).
\end{split}
\end{equation}
Substituting (\ref{eliminate_i2}) and (\ref{splitting_i1_ij}) into (\ref{intro_epsilon1}) along with using (\ref{frac_step_approx_soln}) as an identity leaves us with
\begin{equation}\label{f_jumps_cancel}
\begin{split}
\varepsilon=&\phantom{4}O(\Delta x) + \varepsilon_1 - \sum^{i=n+1}_{i=1,j}\int\int_{R_{ij}}\varphi\frac{\partial f}{\partial\mathbf{A}}(\mathbf{A}_{ij},u_{\Delta x})\cdot\mathbf{A}_{\Delta x}'dxdt\\
&+\sum^{i=n}_{i=1,j}\int_{R_j}\varphi(t,x_{i})\left\{f(\mathbf{A}_{i+1,j},u^{RP}_{\Delta x}(t,x_{i})) - f(\mathbf{A}_{ij},u^{RP}_{\Delta x}(t,x_{i}))\right\}dt
\end{split}
\end{equation}
The second sum represents the jump in the flux function $f$, resulting from the discontinuities in the metric $\mathbf{A}$, and the first sum is the addition to the ODE step (\ref{ode_soln}) specifically designed to cancel these jumps in the flux.

To see how the cancelation works, we perform a Taylor expansion on the test function, and we add and subtract terms deviating by order $\Delta x$.  The first sum in (\ref{f_jumps_cancel}) is expanded as
\begin{equation}\label{df_da_expansion}
\begin{split}
\sum^{i=n+1}_{i=1,j}\int\int_{R_{ij}}&\varphi\frac{\partial f}{\partial\mathbf{A}}(\mathbf{A}_{ij},u_{\Delta x})\cdot\mathbf{A}_{\Delta x}'dxdt\\
&=\sum^{i=n+1}_{i=1,j}\int\int_{R_{ij}}\varphi(x_{i},t)\frac{\partial f}{\partial\mathbf{A}}(\mathbf{A}_{ij},u_{\Delta x})\cdot\mathbf{A}_{\Delta x}'dxdt+O(\Delta x)\\
&=\sum^{i=n+1}_{i=1,j}\int_{R_j}\varphi(x_{i},t)\int_{R_i}\left\{\frac{\partial f}{\partial\mathbf{A}}(\mathbf{A}_{ij},u_{\Delta x})\cdot\mathbf{A}_{\Delta x}'-\frac{\partial f}{\partial\mathbf{A}}(\mathbf{A}_{ij},u^{RP}_{\Delta x})\cdot\mathbf{A}_{\Delta x}'\right\}dxdt\\
&+\sum^{i=n+1}_{i=1,j}\int_{R_j}\varphi(x_{i},t)\int_{R_i}\left\{\frac{\partial f}{\partial\mathbf{A}}(\mathbf{A}_{ij},u^{RP}_{\Delta x})\cdot\mathbf{A}_{\Delta x}'-\frac{\partial f}{\partial\mathbf{A}}(\mathbf{A}_{ij},u^{RP}_{\Delta x}(x_i,t))\cdot\mathbf{A}_{\Delta x}'\right\}dxdt\\
&+\sum^{i=n+1}_{i=1,j}\int_{R_j}\varphi(x_{i},t)\int_{R_i}\left\{\frac{\partial f}{\partial\mathbf{A}}(\mathbf{A}_{ij},u^{RP}_{\Delta x}(x_i,t))\cdot\mathbf{A}_{\Delta x}'\right.\\
&\phantom{4444444444}\left.-\frac{\partial f}{\partial\mathbf{A}}(\mathbf{A}_{\Delta x}(x+\frac{\Delta x}{2},t_j),u^{RP}_{\Delta x}(x_i,t))\cdot\mathbf{A}_{\Delta x}'\right\}dxdt\\
&+\sum^{i=n+1}_{i=1,j}\int_{R_j}\varphi(x_{i},t)\int^{x_i}_{x_{i-1}}\frac{\partial f}{\partial\mathbf{A}}(\mathbf{A}_{\Delta x}(x+\frac{\Delta x}{2},t_j),u^{RP}_{\Delta x}(x_i,t))\cdot\mathbf{A}_{\Delta x}'dxdt+O(\Delta x)
\end{split}
\end{equation}
From the smoothness of $f$, each of the first three sums in equation (\ref{df_da_expansion}) are $O(\Delta x)$ for the following reasons: the first sum is order $\Delta x$ from the ODE step in the definition of the approximate solution $u_{\Delta x}$ (\ref{frac_step_approx_soln}), the second sum is order $\Delta x^2$ by the total variation bound on solutions to the Riemann problems, and the third sum is order $\Delta x$ by the Lipschitz continuity of the metric $\mathbf{A}$.  After these bounds are established, (\ref{df_da_expansion}) reduces to
\begin{equation}\label{df_da_result}
\begin{split}
\sum^{i=n+1}_{i=1,j}&\int\int_{R_{ij}}\varphi\frac{\partial f}{\partial\mathbf{A}}(\mathbf{A}_{ij},u_{\Delta x})\cdot\mathbf{A}_{\Delta x}'dxdt\\
&=\sum^{i=n+1}_{i=1,j}\int_{R_j}\varphi(x_{i},t)\int^{x_i}_{x_{i-1}}\frac{\partial f}{\partial\mathbf{A}}(\mathbf{A}_{\Delta x}(x+\frac{\Delta x}{2},t_j),u^{RP}_{\Delta x}(x_i,t))\cdot\mathbf{A}_{\Delta x}'dxdt+O(\Delta x)\\
&=\sum^{i=n+1}_{i=1,j}\int_{R_j}\varphi(x_{i},t)\int^{x_i}_{x_{i-1}}\frac{\partial f}{\partial x}(\mathbf{A}_{\Delta x}(x+\frac{\Delta x}{2},t_j),u^{RP}_{\Delta x}(x_i,t))dxdt+O(\Delta x)\\
&=\sum^{i=n+1}_{i=1,j}\int_{R_j}\varphi(t,x_{i})\left\{f(\mathbf{A}_{i+1,j},u^{RP}_{\Delta x}(t,x_{i+})) - f(\mathbf{A}_{ij},u^{RP}_{\Delta x}(t,x_{i+}))\right\}dt + O(\Delta x).
\end{split}
\end{equation}
Plugging this result (\ref{df_da_result}) into (\ref{f_jumps_cancel}) gives us
\begin{equation}
\begin{split}
\varepsilon=&\phantom{4}O(\Delta x) + \varepsilon_1\\
&-\sum_{j}\int_{R_j}\varphi(t,x_{n+1})\left\{f(\mathbf{A}_{n+2,j},u^{RP}_{\Delta x}(t,x_{n+1})) - f(\mathbf{A}_{n+1,j},u^{RP}_{\Delta x}(t,x_{n+1}))\right\}dt,
\end{split}
\end{equation}
where one term remains due to the mismatch in the number of terms in the spatial sum.  Clearly, this last term is $O(\Delta x)$.

So the residual boils down to
\begin{equation}
\varepsilon(u_{\Delta x},\mathbf{A}_{\Delta x},\varphi)=\varepsilon_1(u_{\Delta x},\mathbf{A}_{\Delta x},\varphi) + O(\Delta x),
\end{equation}
with all that remains to show is
\begin{equation}
\varepsilon_1=\sum_{j\neq0}\int^{r_{max}}_{r_{min}}\varphi(t_j,x)\left\{u_{\Delta x}(t^+_j,x) - u_{\Delta x}(t^-_j,x)\right\}dx=O(\Delta x).
\end{equation}

To estimate $\varepsilon_1$, we break up the sum by each time step $t_j$ and define
\begin{equation}\label{epsilon_j_1_defn}
\begin{split}
\varepsilon^j_1&\equiv\int^{r_{max}}_{r_{min}}\varphi(t_j,x)\left\{u_{\Delta x}(t^+_j,x) - u_{\Delta x}(t^-_j,x)\right\}dx\\
&=\sum_i\int^{x_{i+}}_{x_i-}\varphi(t_j,x)\left\{u_{\Delta x}(t^+_j,x) - u_{\Delta x}(t^-_j,x)\right\}dx,
\end{split}
\end{equation}
with $x_{i+}\equiv x_{i+\frac{1}{2}}$ and $x_{i-}\equiv x_{i-\frac{1}{2}}$.

Recall, the approximate solution for the new time step $t^+_j$ is computed by the Godunov step, using averages at the top of each Riemann cell $R_{ij}$.  In particular, the solution at each new time step is
\begin{equation}
u_{\Delta x}(t^+_j,x)\equiv\hat{u}(t_j-t_{j-1},\bar{u}(t_j),x))
\end{equation}
where
\begin{equation}
\bar{u}(t_j)\equiv\frac{1}{\Delta x}\int^{x_{i+}}_{x_i-}u^{RP}_{\Delta x}(t_j,x)dx
\end{equation}

To finish the proof, a lemma is needed, which is proven at the end of this chapter.  This lemma states the difference of the ODE step taken on an average verses the solution to the Riemann problem across the top of the Riemann cell is bounded by the total variation of the Riemann problem.
\begin{lem}\label{lem:avg_vs_rp_estimate}
Let $u^{RP}_{\Delta x}$ represent the solution of the Riemann problem in the Riemann cell $R_{i,j-1}$ and $\bar{u}_{\Delta x}(t)$ denote the average of the Riemann problem solution across Riemann cell.  Let $\hat{u}$ be the solution obtained by the ODE step (\ref{ode_soln}) and $\varphi$ be a smooth test function.  Then the following bound holds
\begin{equation}\label{avg_vs_rp_ineq}
\begin{split}
&\AutoAbs{\int^{x_{i+}}_{x_{i-}}\left\{\hat{u}(t_j-t_{j-1},\bar{u}_{\Delta x}(t_j),x)-\hat{u}(t_j-t_{j-1},u^{RP}_{\Delta x}(t_j,x),x)\right\}\varphi(t_j,x)dx}\\
&\phantom{44444444444444}\leq\phantom{4} C\AutoNorm{\varphi}_\infty\Delta x\Delta t \phantom{4}T.V._{[x_i,x_{i+1}]}\{u_{\Delta x}(t_j,\cdot)\}
\end{split}
\end{equation}
for some constant C.
\end{lem}
Using Lemma \ref{lem:avg_vs_rp_estimate}, (\ref{epsilon_j_1_defn}) is rewritten as solutions to the ODE step (\ref{ode_soln}) and bounded by
\begin{equation}
\begin{split}
\varepsilon^j_1&=\sum_i\int^{x_{i+}}_{x_{i-}}\varphi(t_j,x)\left\{ \hat{u}(t_j-t_{j-1},\bar{u}(t_j),x) - \hat{u}(t_j-t_{j-1},u^{RP}_{\Delta x}(t_j,x),x)\right\}dx\\
&\leq C\AutoNorm{\varphi}_\infty\Delta x\Delta t\sum_i T.V._{[x_{i-},x_{i+}]}\{u^{RP}_{\Delta x}(\cdot,t_j)\} = C\AutoNorm{\varphi}_\infty\Delta x\Delta t \phantom{4}T.V._{[r_{min},r_{max}]}\{u^{RP}_{\Delta x}(\cdot,t_j)\}
\end{split}
\end{equation}
By the total variation bound on $u_{\Delta x}(t_j,\cdot)$, the residual is bounded by
\begin{equation}
\varepsilon_1\leq\sum_{j\neq0}C\AutoNorm{\varphi}_\infty\Delta x\Delta t \phantom{4}T.V._{[r_{min},r_{max}]}\{u^{RP}_{\Delta x}(\cdot,t_j)\} \leq C\frac{T}{\Delta t}\Delta x\Delta tV = O(\Delta x).
\end{equation}
Therefore, $\varepsilon=O(\Delta x)$ and the proof is complete.
\end{proof}

To prove Lemma \ref{lem:avg_vs_rp_estimate}, a preliminary result is needed: given a function on a set of points the difference of the function between any point and the average is bounded by the total variation of that function on the set.  This result is provided by the following
\begin{lem}\label{lem:avg_bound}
Let $u(x)$ be a function on the set $[x_{i-},x_{i+}]$ and
\begin{equation}\label{avg_defn}
\bar{u}=\frac{1}{\Delta x}\int^{x_{i+}}_{x_{i-}}u(x)dx
\end{equation}
be the average of $u$ on this set.  Then we have
\begin{equation}\label{avg_bnded_by_tv}
|\bar{u}-u(x)|\leq\sup_{x_1,x_2\in[x_i,x_{i+1}]}|u(x_1)-u(x_2)|\leq T.V._{[x_{i-},x_{i+}]}\{u(\cdot)\}.
\end{equation}

\end{lem}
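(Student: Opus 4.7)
The plan is to exploit the elementary fact that an average of a function lies between its infimum and supremum, and then relate the oscillation to the total variation.

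First I would rewrite the difference $\bar{u} - u(x)$ as a single integral by pulling $u(x)$ inside the average, which is valid since $u(x)$ is constant with respect to the variable of integration:
\begin{equation}
\bar{u} - u(x) = \frac{1}{\Delta x}\int^{x_{i+}}_{x_{i-}}\left\{u(y) - u(x)\right\}dy.
\end{equation}
Applying the triangle inequality under the integral and bounding the integrand pointwise by the supremum of $|u(x_1) - u(x_2)|$ over $x_1, x_2 \in [x_{i-}, x_{i+}]$ gives
\begin{equation}
|\bar{u} - u(x)| \leq \frac{1}{\Delta x}\int^{x_{i+}}_{x_{i-}}|u(y)-u(x)|dy \leq \sup_{x_1,x_2\in[x_{i-},x_{i+}]}|u(x_1)-u(x_2)|,
\end{equation}
which establishes the first inequality in (\ref{avg_bnded_by_tv}).

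For the second inequality, I would invoke the definition of total variation. For any two points $x_1 < x_2$ in the interval $[x_{i-}, x_{i+}]$, the single-jump partition $\{x_{i-}, x_1, x_2, x_{i+}\}$ (or simply $\{x_1, x_2\}$) yields a partition sum at least as large as $|u(x_1) - u(x_2)|$, and by the definition of $T.V.$ as the supremum over all such partition sums, we have $|u(x_1) - u(x_2)| \leq T.V._{[x_{i-},x_{i+}]}\{u(\cdot)\}$. Taking the supremum on the left completes the bound. The proof is essentially a one-liner once the representation of $\bar{u} - u(x)$ as an averaged difference is written down, so I do not anticipate any real obstacle; the main care is simply to ensure the function is at least Riemann integrable so the average is well defined, which is implicit in the hypothesis that $\bar{u}$ is given by the integral formula (\ref{avg_defn}).
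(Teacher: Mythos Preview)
Your proof is correct and considerably more direct than the paper's. Both arguments handle the second inequality identically, by the definition of total variation. For the first inequality, however, the paper argues by contradiction: it assumes there is some $x_*$ with $|\bar{u}-u(x_*)|$ strictly exceeding the oscillation, then applies an isometry of $u$-space sending $u(x_*)$ to the origin and $\bar{u}-u(x_*)$ to a coordinate axis, and finally derives $|\bar{v}|<|\bar{v}|$ from the integral triangle inequality in the new coordinates. Your approach bypasses the contradiction and the isometry entirely by writing $\bar{u}-u(x)=\frac{1}{\Delta x}\int\{u(y)-u(x)\}\,dy$ and invoking $\left|\int f\right|\le\int|f|$ directly.

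The only thing worth noting is that in this paper $u=(u^0,u^1)$ is vector-valued, which is presumably why the author reached for a coordinate-change argument; but your line of reasoning already covers that case, since the inequality $\left|\int f\,dy\right|\le\int|f|\,dy$ holds for vector-valued integrands with $|\cdot|$ the Euclidean norm. So your argument loses nothing and is shorter; the paper's isometry is ultimately superfluous, as its contradiction step relies on exactly the same integral inequality you use outright.
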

\begin{proof}
The second inequality is true by the definition of the total variation
\begin{equation}
\sup_{x_1,x_2\in[x_i,x_{i+1}]}|u(x_1)-u(x_2)|\leq T.V._{[x_{i-},x_{i+}]}\{u(\cdot)\}.
\end{equation}
To prove the first inequality, we assume it is false to obtain a contradiction, so suppose there exists $x_*\in[x_{i-},x_{i+}]$ such that
\begin{equation}\label{avg_inequality}
\sup_{x_1,x_2\in[x_i,x_{i+1}]}|u(x_1)-u(x_2)|<|\bar{u}-u(x_*)|.
\end{equation}
Relabel the u-coordinates by an isometry $\varphi:u\rightarrow v$ that maps the point $u(x_*)$ to the origin in the v-coordinates (i.e. $\varphi(u(x_*)) = \mathbf{0}$), and the vector $\bar{u}-u(x_*)$ in the direction of the 1st coordinate $v^1$, as show in Figure \ref{fig:avg_map}.

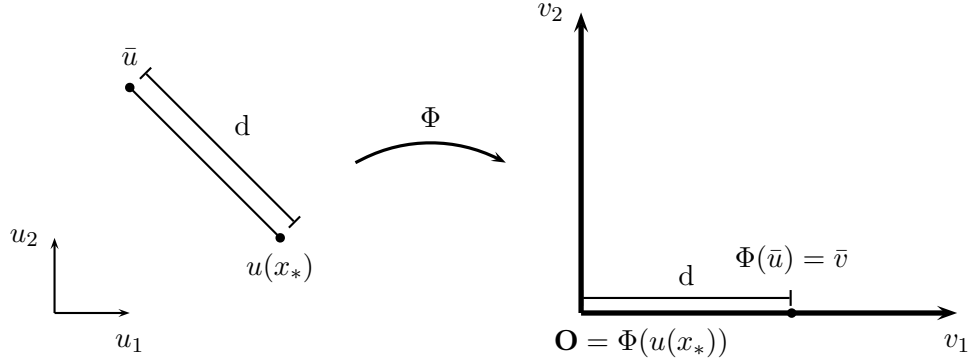
\begin{figure}[!h]
\begin{pspicture}(12,4)(0,-0.5)
%\psgrid
%u1 and u2 axises
\psline{->}(0,0)(1,0)
\psline{->}(0,0)(0,1)
%label the axises
\rput(-0.4,1){$u_2$}
\rput(1,-0.4){$u_1$}
%draw line from avg to arbitrary point and labels
\psline{*-*}(1,3)(3,1)
\psline{|-|}(1.2,3.2)(3.2,1.2)
\rput(1,3.4){$\bar{u}$}
\rput(3,0.6){$u(x_*)$}
\rput(2.5,2.5){d}
\psdots(9.8,0)
\psline{|-|}(7,0.2)(9.8,0.2)
\rput(8.4,0.5){d}
\rput(7.8,-0.4){$\mathbf{O}=\Phi(u(x_*))$}
\rput(9.8,0.7){$\Phi(\bar{u})=\bar{v}$}
%draw the map arrow and label
\psline[linewidth=1.2pt,linearc=2]{->}(4,2)(5,2.5)(6,2)
\rput(5,2.6){$\Phi$}
%v1 and v2 axises
\psline[linewidth=2pt]{->}(7,0)(12,0)
\psline[linewidth=2pt]{->}(7,0)(7,4)
%label the axises
\rput(6.6,4){$v_2$}
\rput(12,-0.4){$v_1$}
\end{pspicture}\caption{The isometry $\Phi:u\rightarrow v$}
\label{fig:avg_map}
\end{figure}

Since the average of a collection of points is independent of the coordinate system in which they are labeled in, we have
\begin{equation}
\bar{v}\equiv\frac{1}{\Delta x}\int^{x_{i+}}_{x_{i-}}v(x)dx=\frac{1}{\Delta x}\int^{x_{i+}}_{x_{i-}}\varphi(u(x))dx=\varphi\left(\frac{1}{\Delta x}\int^{x_{i+}}_{x_{i-}}u(x)dx\right)=\varphi(\bar{u})
\end{equation}
The following inequality holds by transforming equation (\ref{avg_inequality}) over to v-coordinates
\begin{equation}
|v(x)|=|u(x)-u(x_*)|<|\bar{u}-u(x_*)|=|\bar{v}|\phantom{4444}\forall x\in[x_{i-},x_{i+}],
\end{equation}
which implies
\begin{equation}
|\bar{v}|=\AutoAbs{\frac{1}{\Delta x}\int^{x_{i+1}}_{x_i}v(x)dx}\leq \frac{1}{\Delta x}\int^{x_{i+1}}_{x_i}|v(x)|dx< \frac{1}{\Delta x}\int^{x_{i+1}}_{x_i}|\bar{v}|dx =|\bar{v}|.
\end{equation}
This inequality $|\bar{v}|<|\bar{v}|$ is an obvious contradiction, proving the first inequality in (\ref{avg_bnded_by_tv}).
\end{proof}
Now we prove the lemma used in the proof of Theorem \ref{thm:weak_soln}

\vspace{.2cm}
\noindent{\bf Proof of Lemma \ref{lem:avg_vs_rp_estimate}}
Recall the solution to the ODE step has the form:
\begin{equation}
\hat{u}(t_j-t_{j-1},u^{RP}_{\Delta x}(t_j,x),x)=u^{RP}_{\Delta x}(t,x)+\int^{t_j}_{t_{j-1}}G(\mathbf{A}_{ij},u^{RP}_{\Delta x}(t,x),x)dt.
\end{equation}
This solution implies the LHS of (\ref{avg_vs_rp_ineq}) is written out as
\begin{equation}
\begin{split}
&\AutoAbs{\int^{x_{i+}}_{x_{i-}}\left\{\hat{u}(t_j-t_{j-1},\bar{u}_{\Delta x}(t_j),x)-\hat{u}(t_j-t_{j-1},u^{RP}_{\Delta x}(t_j,x),x)\right\}\varphi(t_j,x)dx}\\
=&\left\vert\int^{x_{i+}}_{x_{i-}}\left\{(\bar{u}_{\Delta x}(t_j)-u^{RP}_{\Delta x}(t_j,x))\right.\right. \\
&\left.+\int^{t_j}_{t_{j-1}}\left.(G(\mathbf{A}_{ij},\bar{u}_{\Delta x}(t),x)-G(\mathbf{A}_{ij},u^{RP}_{\Delta x}(t,x),x))dt\right\}\varphi(t_j,x)dx\right\vert \\
=&\left\vert\int^{x_{i+}}_{x_{i-}}\left\{\bar{u}_{\Delta x}(t_j)-u^{RP}_{\Delta x}(t_j,x)\right\}\varphi(t_j,x_i)dx\right.\\ &\left.+\int^{x_{i+}}_{x_{i-}}\int^{t_j}_{t_{j-1}}\left\{G(\mathbf{A}_{ij},\bar{u}_{\Delta x}(t_j),x)- G(\mathbf{A}_{ij},u^{RP}_{\Delta x}(t_j,x),x)\right\}dt\phantom{4}\varphi(t_j,x_i)dx\right\vert \\
& + O(\Delta x^2),
\end{split}
\end{equation}
where the test function in the first term is approximated by a Taylor expansion.  By the definition of the average function $\bar{u}$, the first term is zero.  By the smoothness of $G$, the bound (\ref{avg_vs_rp_ineq}) is proven by
\begin{equation}
\begin{split}
&\AutoAbs{\int^{x_{i+}}_{x_{i-}}\left\{\hat{u}(t_j-t_{j-1},\bar{u}_{\Delta x}(t_j),x)-\hat{u}(t_j-t_{j-1},u^{RP}_{\Delta x}(t_j,x),x)\right\}\varphi(t_j,x)dx}\\
&\phantom{4444}\leq\phantom{4}C\AutoNorm{\varphi}_\infty\Delta x\Delta t\sup_{x_{i-}<x<x_{i+}}\{\AutoAbs{\bar{u}_{\Delta x}(t_j)-u^{RP}_{\Delta x}(t_j,x)}\}\\
&\phantom{4444}\leq\phantom{4}C\AutoNorm{\varphi}_\infty\Delta x\Delta t \phantom{4}T.V._{[x_{i-},x_{i+}]}\{u_{\Delta x}(t_j,\cdot)\},
\end{split}
\end{equation}
where Lemma \ref{lem:avg_bound} is used to bound the difference between the average and the solution to the Riemann problem.
\begin{flushright}
$\Box$
\end{flushright}

   \chapter[%
      Short Title of 6th Ch.
   ]{%
      Continuous Models
   }%
   \label{ch:cont_models}
Before simulating the shock wave models, we briefly explore simulating more simple ones, which we denote as the {\it continuous models}.  The continuous models are the pure FRW-1, FRW-2, and TOV metrics developed in Chapter \ref{ch:family_of_shock_waves}, and these models are not as complex as the discontinuous shock wave models.  Our purpose in simulating these models is to use exact formulas to test the numerical convergence and the accuracy of our locally inertial Godunov scheme.  Each of the three models embodies a different time dilation scenario, each of which occur in the simulation of the shock wave models.  The FRW-1 metric is a model in which the (coordinate) speed of light is uniformly equal to one independent of time and space; therefore, time dilation and synchronization does not occur in this model.  In the TOV model the (coordinate) speed of light increases from one side of the simulated region to the other, so there exists time dilation between frames in this model.  Since the TOV metric is static, the synchronization of the clocks has no effect on the construction of the solution.  In the FRW-2 model, we choose an appropriate integrating factor to force the (coordinate) speed of light to be equal to one initially, but in this model the speed of light increases uniformly across the entire model as time progresses.  This model provides the scenario of a dynamical (coordinate) speed of light, so synchronizing time against the unitary frame must be handled correctly to obtain numerical convergence.  As a side note, it was serendipitous that we came across the FRW-2 model; this model allowed us to perfect our clock synchronization by providing an ideal test case.  Besides running these test cases, the simulation of the continuous models are important for correctly handling the non-interaction region and ghost cells for the shock wave model.  As discussed in more detail in the next chapter, emanating from the initial discontinuity is an interaction region surrounded by non-interaction regions on each side.  These non-interaction regions are the continuous models and their numerical accuracy and consistency is paramount to the success in simulating the shock wave model.

The models we consider in this paper have infinite extent, with the radii going from zero to infinity, and we refer to this infinite region of space as the universe of our simulation.  Even though theoretically space can extend out to infinity, the computer can only simulate a finite region of spacetime in the model, so we need to demarcate the minimum radius $\bar{r}_{min}$ and the maximum radius $\bar{r}_{max}$ of the simulated region, along with the number of gridpoints $n$ to simulate it.  Also, we need to decide when to begin the simulation of our models, so a start time $t_0$ is chosen.  We choose to set these parameters as
\begin{equation}\label{cont_std_params}
\bar{r}_{min}=3,\phantom{4444}\bar{r}_{max}=7,\phantom{4444}\bar{t}_0=15, \phantom{4444}n=2^{14}=16,384,
\end{equation}
in all the simulations within this chapter.  The units for these parameters are given meaning and are discussed in Chapter \ref{ch:units}.  There is an extra parameter for each of the models; it is the integrating factor constant $\Psi_0$ for the FRW-1 and FRW-2 models and the constant $B_0$ for the TOV model.  This parameter is a time scale factor that affects the rate at which the clocks run in the model, and it is assigned a value as we consider each model separately.

This chapter is organized into four sections.  The first three sections, Sections \ref{sec:frw1_results}-\ref{sec:frw2_results}, are dedicated to simulating the FRW-1, TOV, and FRW-2 metrics, respectively.  These sections start out by reiterating the equations developed in Chapter \ref{ch:family_of_shock_waves} that are needed to run the simulation.  More precisely, these equations are used to build the initial profiles and ghost cells. With this data set, we use our locally inertial Godunov scheme to simulate these models and provide a glimpse of them with various pictures.  In Section \ref{sec:converge_results}, we discuss how the errors and convergence rates are computed, and we record the numerical convergence in the simulation of all three continuous models.

\section{Simulating the FRW-1 Model}
\label{sec:frw1_results}

\begin{figure}
\begin{center}
\includegraphics[width=\textwidth]{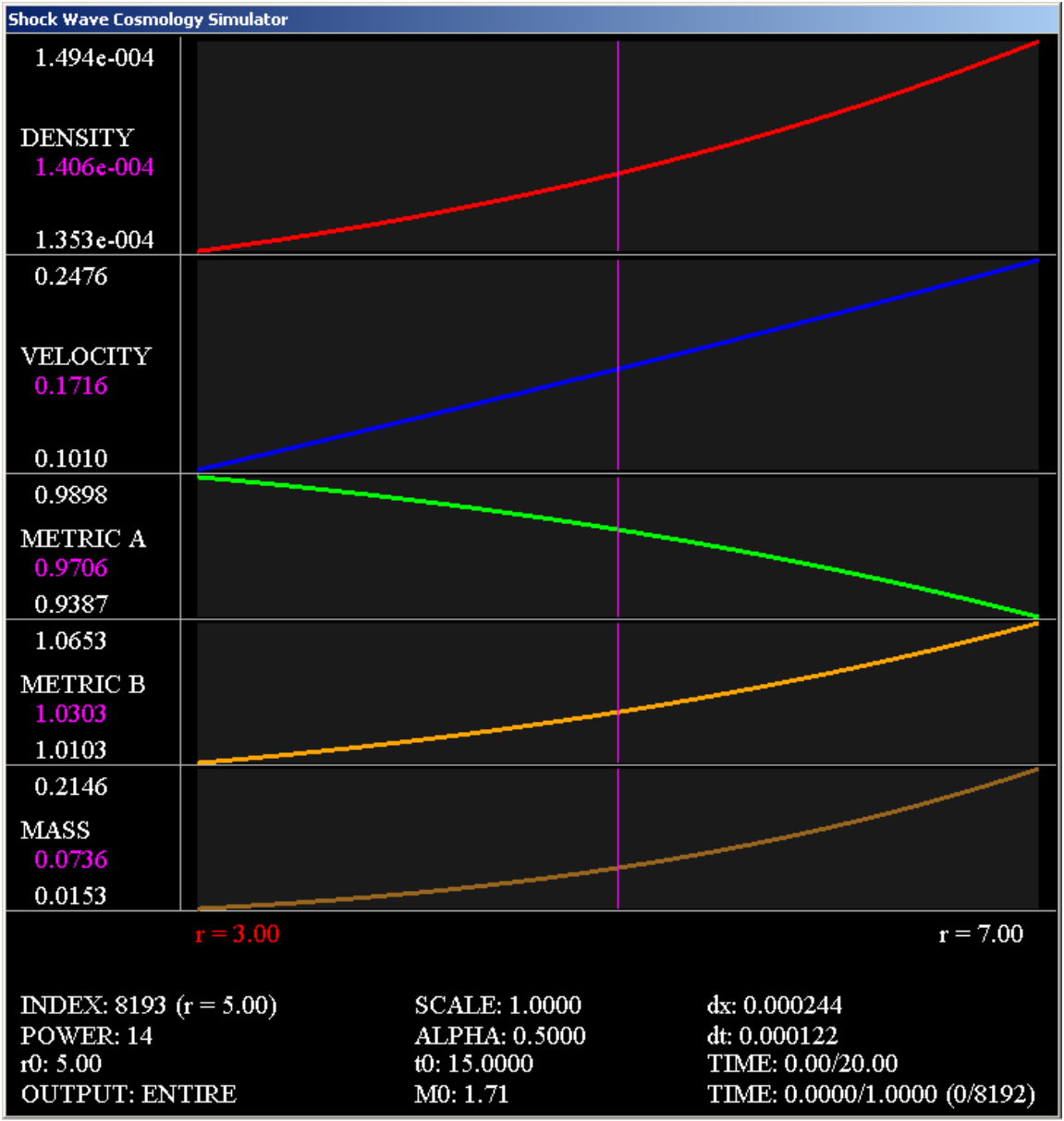}
\end{center}
\caption{Initial profiles for FRW-1 model}
\label{fig:frw1_init}
\end{figure}

Recall, using the coordinate transformation
\begin{equation}\label{ch5_frw1_xform_ssc}
\begin{split}
\bar{r}=&\sqrt{t}r,\\
\bar{t}=&\left\{1+\frac{\bar{r}^2}{4t^2}\right\}t=t+\frac{r^2}{4},
\end{split}
\end{equation}
the FRW metric in standard Schwarzschild coordinates (the FRW-1 form) is
\begin{equation}\label{ch5_frw1_in_ssc}
ds^{2} =-\frac{1}{1-v^2}d\bar{t}^2+\frac{1}{1-v^2}d\bar{r}^2+\bar{r}^{2}d\Omega^2,
\end{equation}
with the metric components
\begin{equation}\label{ch5_frw1_metric_in_ssc}
A(\xi) = 1-v(\xi)^2,\phantom{4444} B(\xi) = \frac{1}{1-v(\xi)^2},
\end{equation}
and the fluid variables
\begin{equation}\label{ch5_frw1_fluid_vars_in_ssc}
\rho(\xi,\bar{r}) =\frac{3v(\xi)^2}{\kappa\bar{r}^2},\phantom{4444} v(\xi)=\frac{1-\sqrt{1-\xi^2}}{\xi},
\end{equation}
where the variable $\xi$ is defined as $\xi=\bar{r}/\bar{t}$ and is a function of $v$ by the following relation
\begin{equation}\label{ch5_xi_v_relation}
\xi=\frac{2v}{1+v^2}.
\end{equation}

The integrating factor constant is suppressed, which is equivalent to setting $\Psi_0=1$.  With this parameter set, the initial profiles and ghost cells for the simulation are constructed by equations (\ref{ch5_frw1_in_ssc})-(\ref{ch5_xi_v_relation}) along with the standard parameters (\ref{cont_std_params}).  The resulting profiles are illustrated in Figure \ref{fig:frw1_init}.  The graphs for the density, velocity, metric component $A$, metric component $B$, and the mass function are sectioned into five panels displayed from top to bottom and color coded by the colors red, blue, green, yellow, and brown, respectively.  Each graph is scaled and shifted accordingly in the panel to ensure the top represents the maximum value and the bottom represents the minimum value of the graph in the region of space under consideration.  The numeric values for the minimum and maximum points on the graph are shown on the left side.  For example, the velocity profile is the blue graph second from the top with a minimum value of 0.1010 and maximum value of 0.2476 between the region of space $[r_{min},r_{max}]$.  For this particular example, the minimum value occurs at the left most point $r_{min}$, while the maximum occurs at $r_{max}$.  The vertical magenta line is a marker the user controls to examine a particular gridpoint in the simulation, with the corresponding numerical values displayed in magenta on the left side.  Another piece of information associated with this marker is the scale factor or (coordinate) speed of light ($\sqrt{AB}$) which is shown under the graphs and labeled "SCALE". Looking at the graphs in Figure \ref{fig:frw1_init}, all of the profiles are increasing functions except the metric component $A$ is decreasing. The graph of metric component $A$ in Figure \ref{fig:frw1_init} is the mirror image of the graph $B$, which confirms they are recipricals of one another (\ref{ch5_frw1_metric_in_ssc}).

\begin{figure}
\begin{center}
\includegraphics[width=\textwidth]{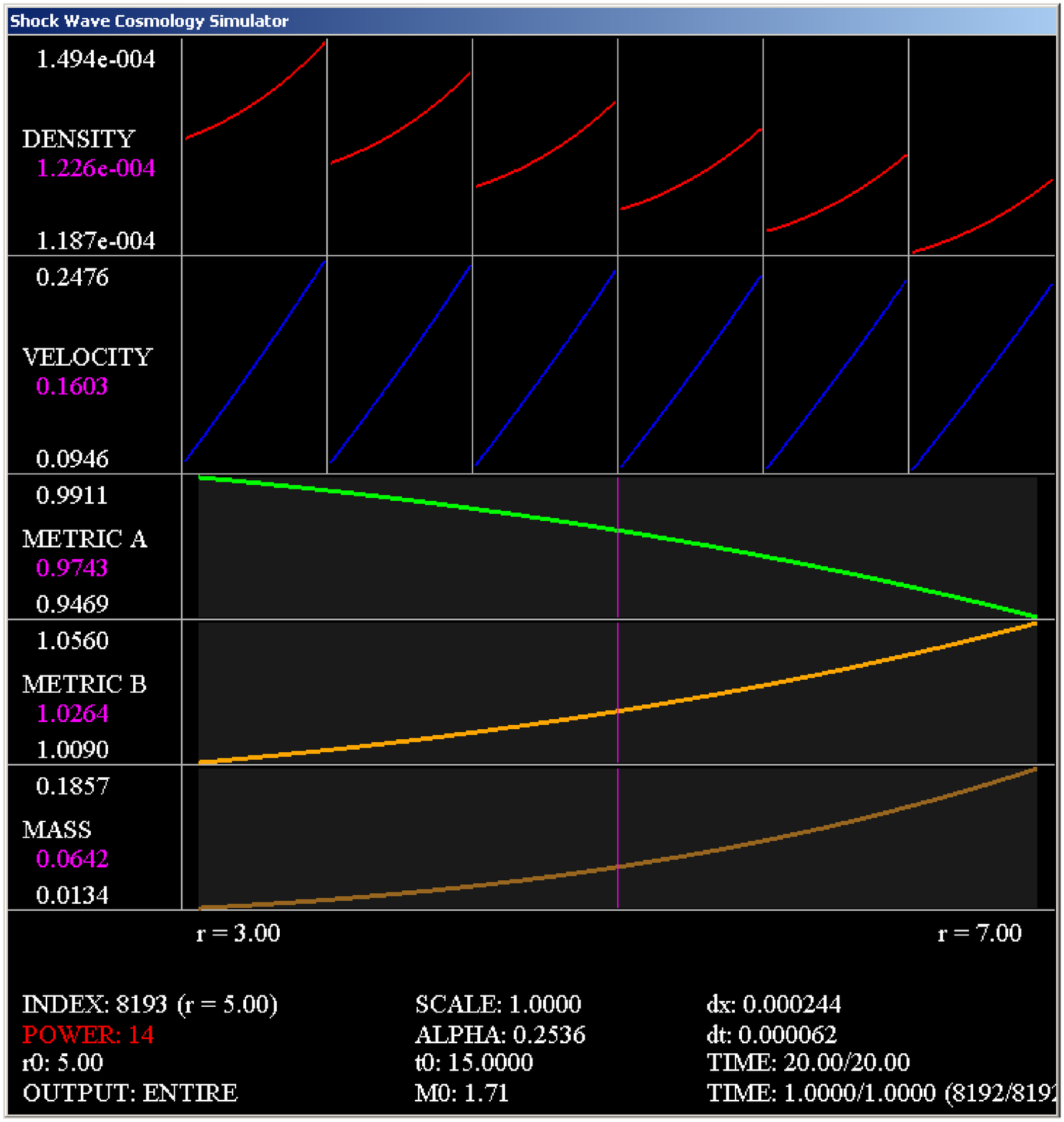}
\end{center}
\caption{Evolution of the FRW-1 model during a unit of time}
\label{fig:frw1_frames}
\end{figure}

The simulation is run for one unit of time, $t_{end} = t_0+1$, with the results depicted in Figure \ref{fig:frw1_frames}.  This figure features the evolution of the fluid variables $(\rho,v)$, showing snapshots of the density and velocity profiles at the times, going from the left to the right frame, $\bar{t}=\bar{t}_0$, $\bar{t}=\bar{t}_0+0.2$, $\bar{t}=\bar{t}_0+0.4$, $\bar{t}=\bar{t}_0+0.6$, $\bar{t}=\bar{t}_0+0.8$, and $\bar{t}=\bar{t}_{end}$.  The fluid variables decrease as time progresses, with the density decreasing faster than the velocity.  Since a positive velocity indicates matter is moving outward, we expect this decrease in density corresponding to a decrease in the mass function.  The metric has little change, and moreover, the solution as a whole has a very similar shape from the initial profiles seen in Figure \ref{fig:frw1_init}.  Notice the (coordinate) speed of light is identically equal to one, which is confirming the fact the metric components $A$ and $B$ are recipricals (\ref{ch5_frw1_metric_in_ssc}) at all times.

\section{Simulating the TOV Model}
\label{sec:tov_results}

\begin{figure}
\begin{center}
\includegraphics[width=\textwidth]{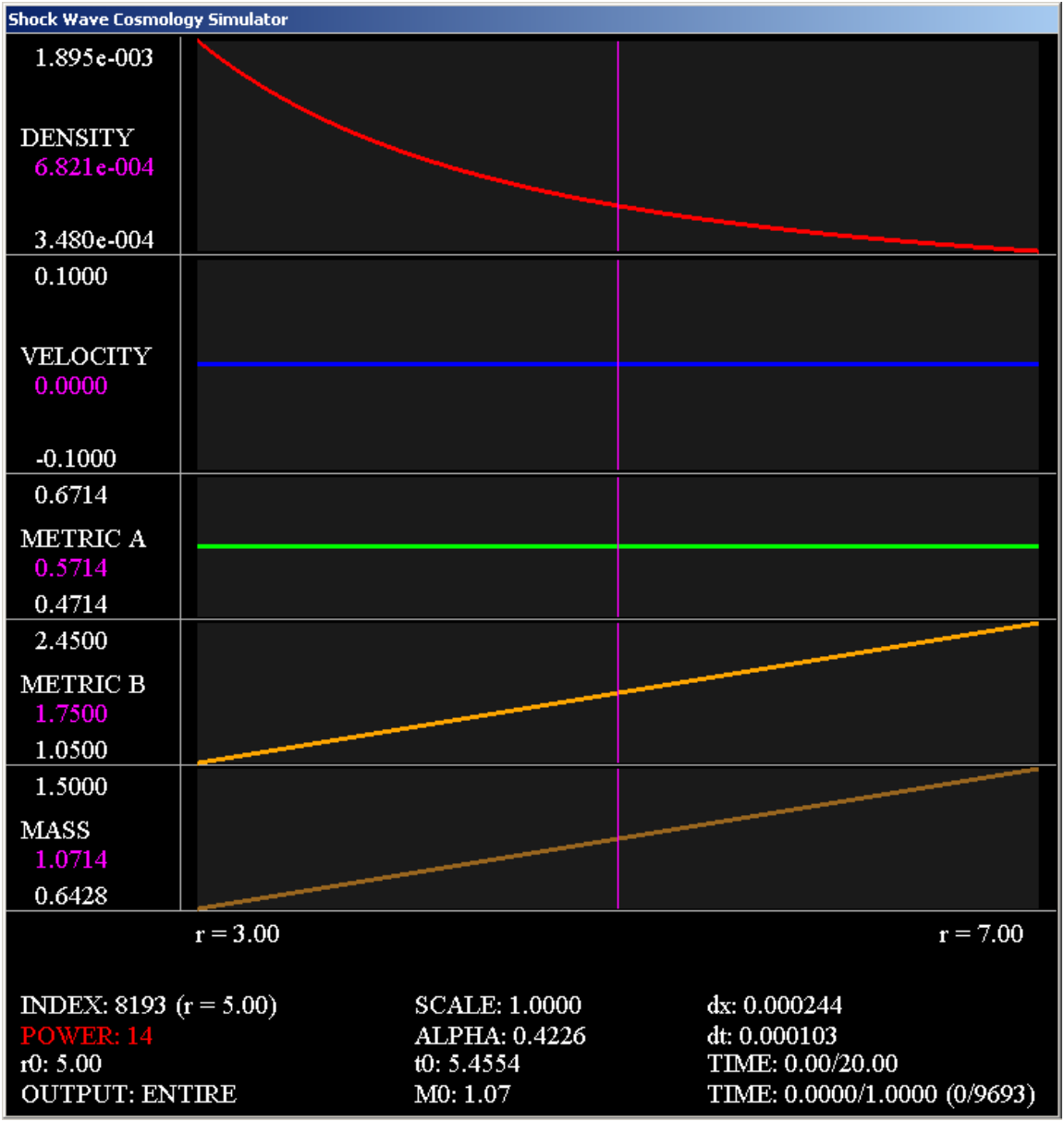}
\end{center}
\caption{Initial profiles for TOV model}
\label{fig:tov_init}
\end{figure}

Remember, the TOV metric takes the form
\begin{equation}\label{ch5_tov_in_ssc}
ds^2=-B(\bar{r})d\bar{t}^2+\left(\frac{1}{1-\frac{2\mathcal{G}M(\bar{r})}{\bar{r}}}\right)d\bar{r}^2 +\bar{r}^2d\Omega^2,
\end{equation}
with the metric components
\begin{equation}\label{ch5_tov_metric_in_ssc}
A(\bar{r}) = 1-8\pi\mathcal{G}\gamma,\phantom{4444} B(\bar{r}) = B_0(\bar{r})^{\frac{4\sigma}{1+\sigma}},
\end{equation}
and the fluid variables
\begin{equation}\label{ch5_tov_fluid_vars_in_ssc}
\rho(\bar{r})=\frac{\gamma}{\bar{r}^2},\phantom{4444} v(\bar{r})=0,
\end{equation}
where the parameter $\gamma$ is a constant dependent on $\sigma$
\begin{equation}\label{ch5_gamma}
\gamma=\frac{1}{2\pi\mathcal{G}}\left(\frac{\sigma}{1+6\sigma+\sigma^2}\right).
\end{equation}

\begin{figure}
\begin{center}
\includegraphics[width=\textwidth]{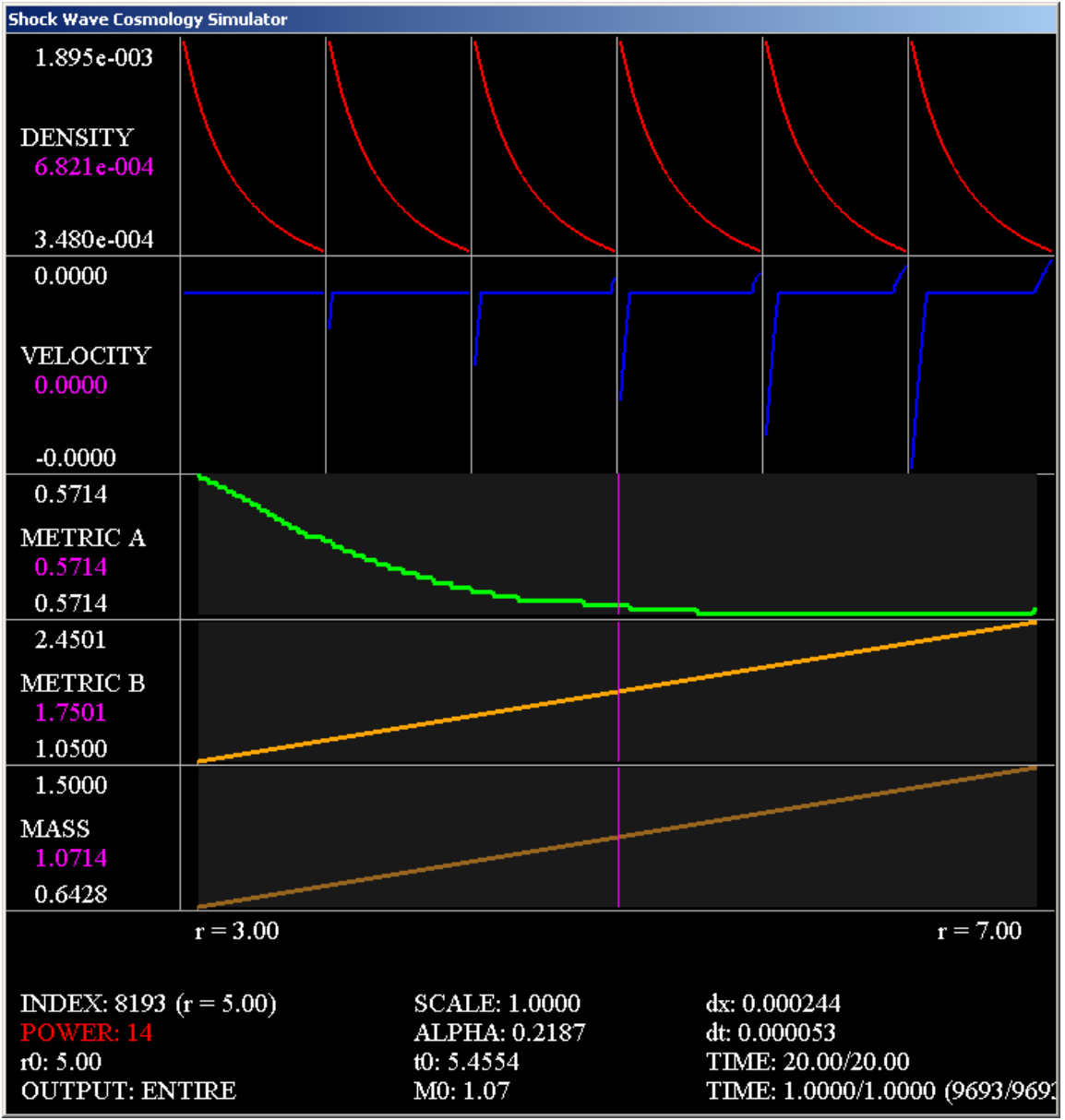}
\end{center}
\caption{Evolution of the TOV model during a unit of time}
\label{fig:tov_frames}
\end{figure}

Setting the extra parameter $B_0=1$ and using the standard parameters (\ref{cont_std_params}), the initial profiles and ghost cells are built from these equations (\ref{ch5_tov_in_ssc})-(\ref{ch5_gamma}) (with $\sigma=1/3$), and the resulting profiles are shown in Figure \ref{fig:tov_init}.  Notice the ghost cells can be set initially and do not need to be changed since the TOV metric is static, independent of time.  This model contrasts with the evolutionary nature of the FRW-1 model in Figure \ref{fig:frw1_init}.  In the TOV model, the density is a decreasing function, while the velocity and metric component $A$ are both constant functions.  The metric component $B$ and the mass function are always increasing functions since they are computed as the integral of positive values as seen by equations (\ref{compute_metric_b}) and (\ref{compute_mass}).  Since the metric $A$ is constant and the metric $B$ is increasing, the (coordinate) speed of light is increasing across the universe, so time dilation between the different frames occur in this model.  In particular,  the speed of light is 0.58 at $r_{min}$ and 0.89 at $r_{max}$.

Running the simulation for one unit of time, the evolution of the TOV metric along this time frame is shown in Figure \ref{fig:tov_frames}.  This metric is static, so we expect it to remain unchanged throughout this simulation.  The density, metric $B$, and the mass profiles are unchanged.  The constant profiles of the velocity and metric $A$ appear to have changed.  This appearance is due to the small numerical errors and the auto-zoom feature of the graphing tool which magnifies these errors.

\section{Simulating the FRW-2 Model}
\label{sec:frw2_results}

\begin{figure}
\begin{center}
\includegraphics[width=\textwidth]{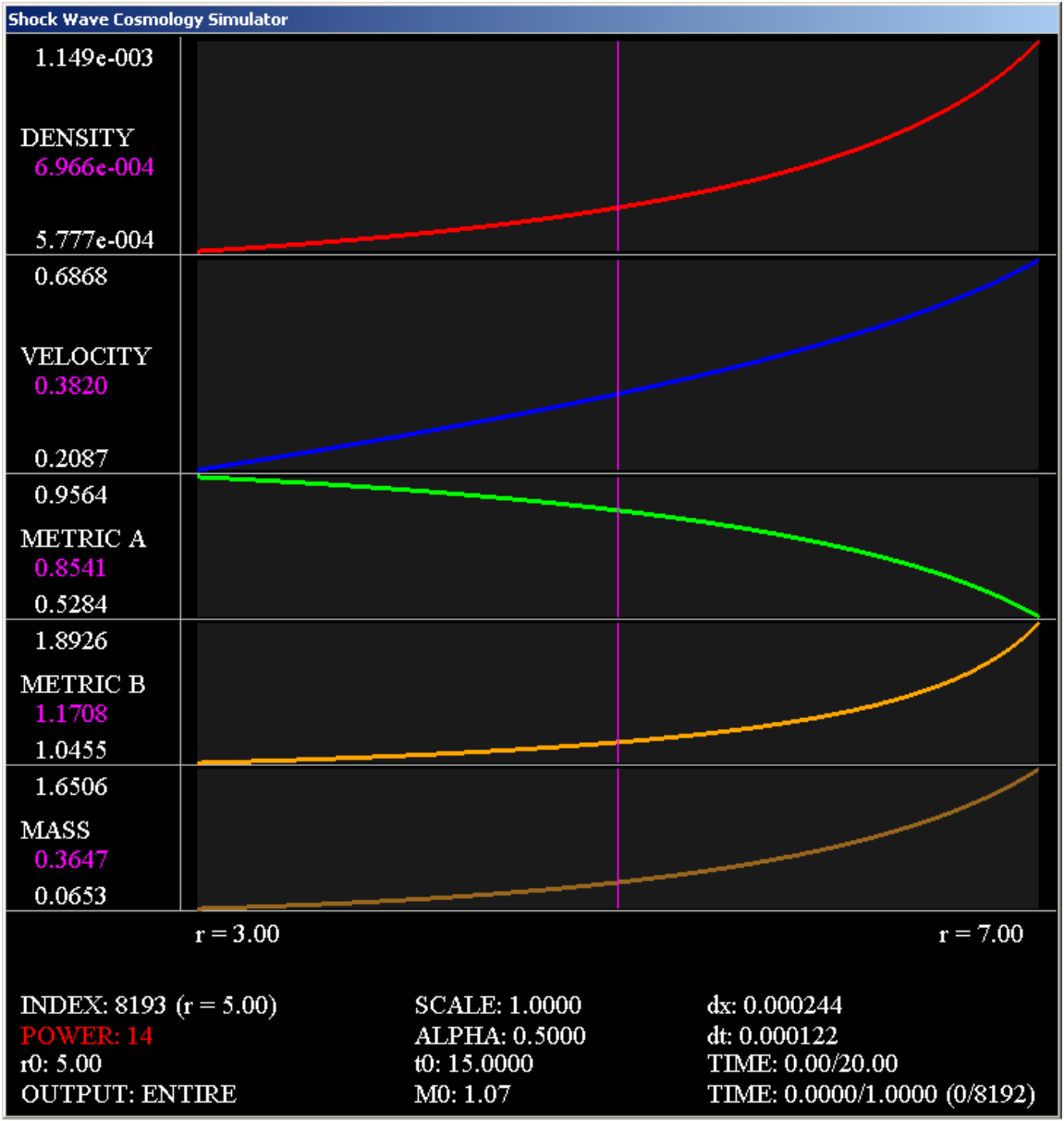}
\end{center}
\caption{Initial profiles for FRW-2 model}
\label{fig:frw2_init}
\end{figure}

Remember, using the coordinate transformation
\begin{equation}\label{ch5_frw2_xform_ssc}
\begin{split}
\bar{r}=&\sqrt{t}r,\\
\bar{t}=&\frac{\Psi_0}{2}\sqrt{\frac{4t^2+\bar{r}^2}{t}},
\end{split}
\end{equation}
the FRW metric in standard Schwarzschild coordinates (the FRW-2 form) is
\begin{equation}\label{ch5_frw2_in_ssc}
ds^{2} =-\frac{1}{\Psi^2(1-v^2)}d\bar{t}^2+\frac{1}{1-v^2}d\bar{r}^2+\bar{r}^{2}d\Omega^2,
\end{equation}
with the metric components
\begin{equation}\label{ch5_frw2_metric_in_ssc}
A(\bar{t},\bar{r}) = 1-v^2,\phantom{4444} B(\bar{t},\bar{r}) = \frac{1}{\Psi^2(1-v^2)},
\end{equation}
the integrating factor
\begin{equation}\label{ch5_frw2_integrating factor}
\Psi(\bar{t},\bar{r})=\Psi_0\sqrt{\frac{t}{4t^2+\bar{r}^2}},
\end{equation}
and the fluid variables
\begin{equation}\label{ch5_frw2_fluid_vars_in_ssc}
\rho(\bar{t},\bar{r}) =\frac{3}{4\kappa t^2},\phantom{4444} v(\bar{t},\bar{r})=\frac{\eta}{2}=\frac{\bar{r}}{2t}.
\end{equation}
Unlike FRW-1, the FRW-2 metric relies on the FRW time coordinate $t$, which is the following function of the new variables
\begin{equation}\label{ch5_frw2_t}
t(\bar{t},\bar{r})=\frac{\bar{t}^2+\sqrt{\bar{t}^4-\bar{r}^2\Psi^4_0}}{2\Psi^2_0}.
\end{equation}

\begin{figure}
\begin{center}
\includegraphics[width=\textwidth]{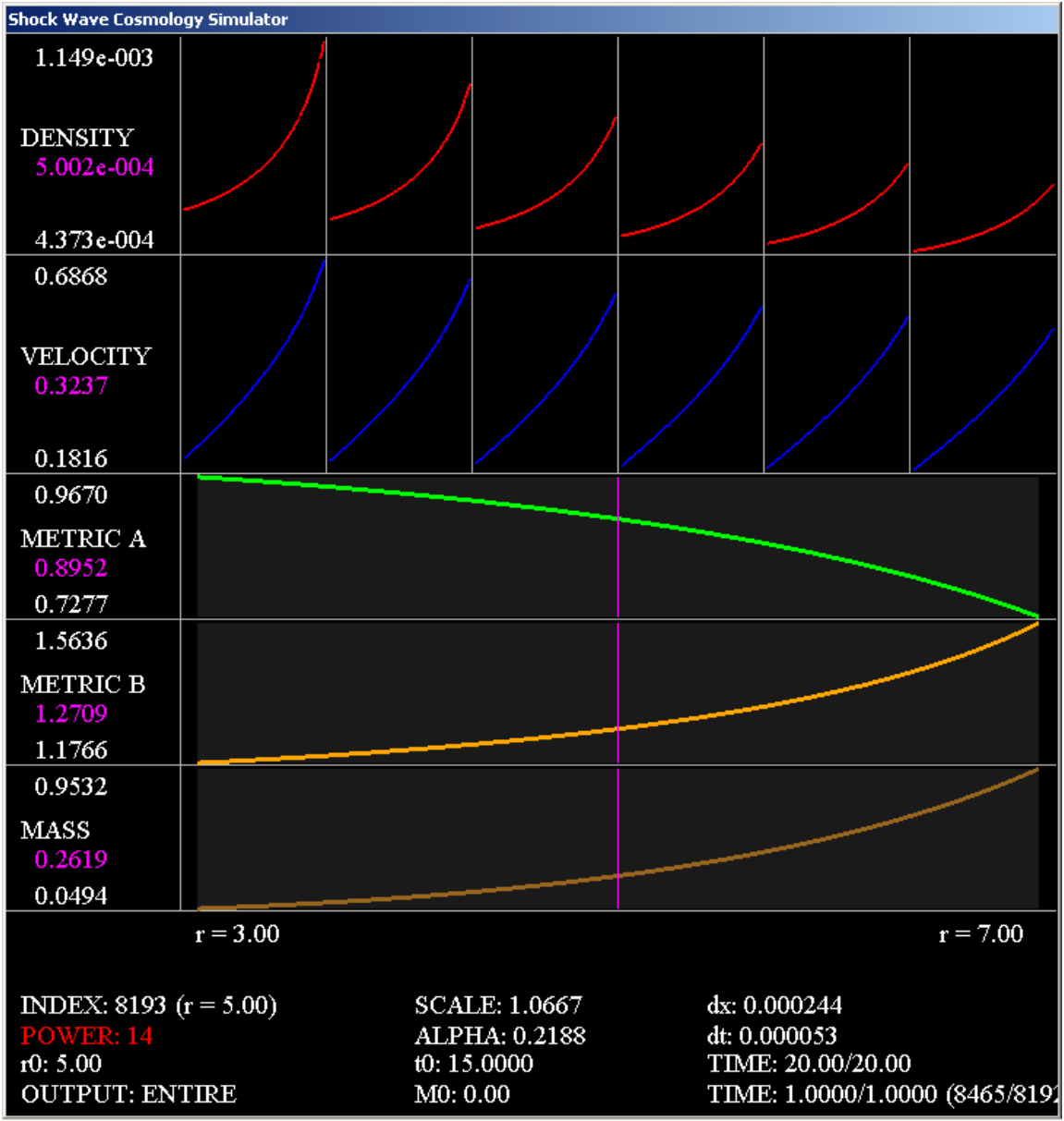}
\end{center}
\caption{Evolution of the FRW-2 metric during a unit of time}
\label{fig:frw2_frames}
\end{figure}

To match the FRW-1 metric more closely, we choose the integrating factor constant $\Psi_0$ so that the (coordinate) speed of light $\sqrt{AB}$ is equal to one.  To this end, we set
\begin{equation}\label{frw2_psi0_init}
\Psi_0=\sqrt{2\bar{t}_0}.
\end{equation}
This relationship is explained in more detail in the next chapter and is a rearrangement of equation (\ref{frw2_start_time}).  The initial profiles and ghost cells for the simulation are created by equations (\ref{ch5_frw2_in_ssc})-(\ref{ch5_frw2_t}), using the standard parameters (\ref{cont_std_params}) and $\Psi_0$ set in (\ref{frw2_psi0_init}).  These profiles are pictured in Figure \ref{fig:frw2_init}.  In this figure there is more curvature in the graphs of FRW-2 than the FRW-1 model (Figure \ref{fig:frw1_init}).  This curvature produces higher values in all the increasing graphs (density, velocity, metric $B$, and mass) and lower values in the decreasing graphs (metric $A$).

Allowing the simulation to run for one unit of time, the evolution of the FRW-2 metric is shown in Figure \ref{fig:frw2_frames}.  Again, the FRW-2 metric follows the same overall pattern as the FRW-1 metric (Figure \ref{fig:frw1_frames}), and the fluid variables are decreasing as time progresses.  We expect these similarities since both metrics, FRW-1 and FRW-2, are derived from the same FRW metric.  In the FRW-2 universe, the (coordinate) speed of light changes uniformly across the simulated region from 1 at $t_0$ to 1.0667 at $t_{end}$.  Thus, the FRW-2 model has a dynamic (coordinate) speed of light.  In contrast to the FRW-1 and TOV models, our simulation of the FRW-2 model tests the locally inertial Godunov method in a setting in which the (coordinate) speed of light is dynamic.

\section{Convergence Results}
\label{sec:converge_results}

This section discusses the numerical convergence results for all three of the continuous models.  These results are recorded for the FRW-1, TOV, and FRW-2 metrics in Table \ref{tab:frw1_converge}, \ref{tab:tov_converge}, and \ref{tab:frw2_converge}, respectively.  The tables are organized to show, from left to right, the number of gridpoints used, the density results, the velocity results, the metric $A$ results, and the metric $B$ results.  The variable results are partitioned into two values: the one-norm error and the convergence rate.  Since we possess exact formulas for the solution of all three models, the 1-norm error is numerically computed between the numerical solution and this exact solution.  The 1-norm is a natural norm to use for conservation laws because it requires integrating the function, and the weak form of the conservation law gives us information about these integrals.  The 1-norm is our chosen method for computing the error and showing numerical convergence in this paper.  The convergence rate is computed by taking the $\log_2$ of the ratio in successive errors, enabling us to measure the decrease in error relative to the increase in the number of gridpoints.  For example, a rate of 1 means that using twice the number of gridpoints reduces the error by half.  A rate less than 1 means the error is reduced by less than a half, while a rate greater than 1 means the error is reduced by more than a half.  Looking at the tables, all the errors are decreasing as the number of gridpoints increase with convergence rates around 1.  The convergence rate of 1 is expected because we are implementing a first order method on continuous solutions.  These results indicate the simulation is producing an accurate numerical representation of all three models, giving us the green light to simulate the discontinuous models in the following chapters.

\newpage
\begin{table}[h!]
\begin{center}
\begin{tabular}{|c|c|c|c|c|c|c|c|c|}
\hline
Number&\multicolumn{2}{|c|}{$\rho$}&\multicolumn{2}{|c|}{$v$}&\multicolumn{2}{|c|}{$A$}&\multicolumn{2}{|c|}{$B$}\\
\cline{2-9}
Gridpoints & Error & Rate & Error & Rate & Error & Rate & Error & Rate\\
\hline
64 & 2.660e-007 & N/A & 8.956e-004 & N/A & 1.526e-003 & N/A & 3.091e-003 & N/A\\
\hline
128 & 1.340e-007 & 0.99 & 4.543e-004 & 0.98 & 7.531e-004 & 1 & 1.529e-003 & 1\\
\hline
256 & 6.730e-008 & 0.99 & 2.289e-004 & 0.99 & 3.742e-004 & 1 & 7.607e-004 & 1\\
\hline
512 & 3.370e-008 & 1 & 1.149e-004 & 0.99 & 1.865e-004 & 1 & 3.794e-004 & 1\\
\hline
1024 & 1.690e-008 & 1 & 5.760e-005 & 1 & 9.310e-005 & 1 & 1.894e-004 & 1\\
\hline
2048 & 8.450e-009 & 1 & 2.880e-005 & 1 & 4.650e-005 & 1 & 9.470e-005 & 1\\
\hline
4096 & 4.230e-009 & 1 & 1.440e-005 & 1 & 2.320e-005 & 1 & 4.730e-005 & 1\\
\hline
8192 & 2.110e-009 & 1 & 7.220e-006 & 1 & 1.160e-005 & 1 & 2.370e-005 & 1\\
\hline
16384 & 1.060e-009 & 1 & 3.610e-006 & 1 & 5.810e-006 & 1 & 1.180e-005 & 1\\
\hline
\end{tabular}
\end{center}
\caption{Convergence results for the FRW-1 model}
\label{tab:frw1_converge}
\end{table}

\begin{table}[h!]
\begin{center}
\begin{tabular}{|c|c|c|c|c|c|c|c|c|}
\hline
Number&\multicolumn{2}{|c|}{$\rho$}&\multicolumn{2}{|c|}{$v$}&\multicolumn{2}{|c|}{$A$}&\multicolumn{2}{|c|}{$B$}\\
\cline{2-9}
Gridpoints & Error & Rate & Error & Rate & Error & Rate & Error & Rate\\
\hline
64 & 2.660e-007 & N/A & 8.956e-004 & N/A & 1.526e-003 & N/A & 3.091e-003 & N/A\\
\hline
128 & 1.340e-007 & 0.99 & 4.543e-004 & 0.98 & 7.531e-004 & 1 & 1.529e-003 & 1\\
\hline
256 & 6.730e-008 & 0.99 & 2.289e-004 & 0.99 & 3.742e-004 & 1 & 7.607e-004 & 1\\
\hline
512 & 3.370e-008 & 1 & 1.149e-004 & 0.99 & 1.865e-004 & 1 & 3.794e-004 & 1\\
\hline
1024 & 1.690e-008 & 1 & 5.760e-005 & 1 & 9.310e-005 & 1 & 1.894e-004 & 1\\
\hline
2048 & 8.450e-009 & 1 & 2.880e-005 & 1 & 4.650e-005 & 1 & 9.470e-005 & 1\\
\hline
4096 & 4.230e-009 & 1 & 1.440e-005 & 1 & 2.320e-005 & 1 & 4.730e-005 & 1\\
\hline
8192 & 2.110e-009 & 1 & 7.220e-006 & 1 & 1.160e-005 & 1 & 2.370e-005 & 1\\
\hline
16384 & 1.060e-009 & 1 & 3.610e-006 & 1 & 5.810e-006 & 1 & 1.180e-005 & 1\\
\hline
\end{tabular}
\end{center}
\caption{Convergence results for the TOV model}
\label{tab:tov_converge}
\end{table}

\begin{table}[h!]
\begin{center}
\begin{tabular}{|c|c|c|c|c|c|c|c|c|}
\hline
Number&\multicolumn{2}{|c|}{$\rho$}&\multicolumn{2}{|c|}{$v$}&\multicolumn{2}{|c|}{$A$}&\multicolumn{2}{|c|}{$B$}\\
\cline{2-9}
Gridpoints & Error & Rate & Error & Rate & Error & Rate & Error & Rate\\
\hline
64 & 8.060e-006 & N/A & 2.549e-003 & N/A & 1.025e-002 & N/A & 1.195e-002 & N/A\\
\hline
128 & 3.990e-006 & 1 & 1.263e-003 & 1 & 5.075e-003 & 1 & 6.048e-003 & 0.98\\
\hline
256 & 1.980e-006 & 1 & 6.288e-004 & 1 & 2.542e-003 & 1 & 3.227e-003 & 0.91\\
\hline
512 & 9.890e-007 & 1 & 3.136e-004 & 1 & 1.289e-003 & 0.98 & 1.899e-003 & 0.76\\
\hline
1024 & 4.940e-007 & 1 & 1.568e-004 & 1 & 6.240e-004 & 1 & 7.170e-004 & 1.4\\
\hline
2048 & 2.470e-007 & 1 & 7.830e-005 & 1 & 3.131e-004 & 0.99 & 3.704e-004 & 0.95\\
\hline
4096 & 1.230e-007 & 1 & 3.920e-005 & 1 & 1.579e-004 & 0.99 & 1.996e-004 & 0.89\\
\hline
8192 & 6.170e-008 & 1 & 1.960e-005 & 1 & 8.030e-005 & 0.98 & 1.181e-004 & 0.76\\
\hline
16384 & 3.080e-008 & 1 & 9.790e-006 & 1 & 3.890e-005 & 1 & 4.470e-005 & 1.4\\
\hline
\end{tabular}
\end{center}
\caption{Convergence results for the FRW-2 model}
\label{tab:frw2_converge}
\end{table}

   \chapter[%
      Short Title of 7th Ch.
   ]{%
      Shock Wave Model
   }%
   \label{ch:shock_wave_models}
In this chapter we perform the numerical simulation of a spherically symmetric general relativistic outgoing shock wave, together with the secondary incoming reflected wave.  This simulation demonstrates the secondary reflected wave is also a shock wave.  To reiterate, we interpret this result as the numerical resolution of the secondary reflected wave associated with the Smoller and Temple shock wave model \cite{smolte, smolte3, smolte4}.  That is, in \cite{smolte}, Smoller and Temple (Sm/Te) constructed an exact shock wave solution of the Einstein equations consisting of an inner FRW spacetime blasting outward into a TOV spacetime, such that the interface between them was a true fluid dynamical shock interface.  The metrics satisfied equations of state $p=\sigma\rho$ (FRW) and $\bar{p}=\bar{\sigma}\bar{\rho}$ (TOV), where $\sigma$ and $\bar{\sigma}$ were constant, consistent with an isothermal scenario.   Since the outer TOV solution was inverse square in the density, the solution could be interpreted as a blast wave propagating outward into a {\it static singular isothermal sphere}, c.f. \cite{smolte}.  To get exact formulas, the Sm/Te model assumed different sound speeds (temperatures) ahead and behind the shock wave.  This determined $\sigma$ as a function of $\bar{\sigma}$, which we interpret here as having the simplifying effect of eliminating the secondary reflected wave in the solution, thereby making the construction of exact formulas possible.

To simulate the secondary reflected wave and develop a picture of these solutions, we assume both TOV and FRW spacetimes satisfy the {\it same} equation of state, which is $p=\frac{c^2}{3}\rho$, for the pure radiation stage of the early universe.  To run the simulation, we use an exact formula for the FRW spacetime in standard Schwarzschild coordinates first constructed in \cite{smoltePNAS09}, detailed in Chapter \ref{ch:family_of_shock_waves}.   This transformation puts the FRW metric into the ``same coordinate system'' as the TOV metric, enabling us to match these metrics together, with the FRW metric on the inside, to form the FRW/TOV matched metric.  Since we have two transformations of the FRW metric into standard Schwarzschild coordinates, denoted as FRW-1 and FRW-2, we have two matched models, FRW-1/TOV and FRW-2/TOV, at our disposal to simulate.  This matching gives us exact expressions for the initial data (a point of continuity between FRW and TOV metric components, such that the density, pressure, and velocity suffer a jump discontinutiy) and boundary data.  This data is the information necessary to run the locally inertial Godunov scheme, developed in Chapter \ref{ch:frac_god_method}, to perform the numerical simulation of the interaction region between the two spacetime metrics.  In particular, since light speed is a speed limit for propagation of signals,  outside this lightcone the solution should remain FRW (on the inside) and TOV (on the outside).  We call this region contained within the light cone that emanates from the initial discontinuity the {\it region of interaction}.  Besides simulating shock wave formation, we also compute the {\it cone of light} and {\it cone of sound}, the light and sound information that propagates outward from the initial discontinuity.  Both the cone of light and the cone of sound emanate and expand away in both directions from the initial discontinuity.  Because light travels faster than sound, the cone of sound must be contained in the cone of light.   Since we expect from \cite{groasmte} that there is no lightlike propagation in spherically symmetric spacetimes, the cone of sound should be the true limit of propagation of information; therefore, outside the cone of sound we expect the solution should remain FRW (on the inside) and TOV (on the outside).    Figure \ref{fig:frw_tov_match} illustrates all these expectations.   This prediction is an important result we demonstrate in our numerical simulation.  In fact, since the numerical simulation involves integration of the metric components at each time step, we know of no mathematical proof that the weak solutions constructed by the Groah and Temple theorem \cite{groasmte} actually have the sound speed as a propagation speed for the information in the solution.

\begin{figure}[!h]
\begin{pspicture}(10,5)(0,-0.5)
%\psgrid
%t and r_bar axises
\psline[linewidth=2pt]{->}(0,0)(10,0)
\psline[linewidth=2pt]{->}(0,0)(0,4)
%draw the tick to represent the initial shock position
\psline(5,-0.1)(5,0.1)
%draw the cone of light and sound
\psline(5,0)(1,4)
\psline(5,0)(9,4)
\psline(5,0)(2.5,4)
\psline(5,0)(7.5,4)
%fill and label the interaction area
\pspolygon[fillstyle=crosshatch](5,0)(2.5625,3.9)(7.4375,3.9)
\rput(5,3){\psframebox*{$\begin{tabular}{c}Region of \\ Interaction\end{tabular}$}}
%label FRW and TOV sides along with the speed of light and sound
\rput(1.8,1.5){$\Large\mathbf{FRW}$}
\rput(8.2,1.5){$\Large\mathbf{TOV}$}
\psline{->}(8,4.4)(7.5,4)
\rput(9.3,4.4){Sound Speed}
\psline{->}(1.5,4.4)(1,4)
\rput(2.7,4.4){Light Speed}
%label the axises
\rput(-0.4,4){$\bar{t}$}
\rput(-0.4,0){$\bar{t}_0$}
\rput(5,-0.4){$\bar{r}_0$}
\rput(10,-0.4){$\bar{r}$}
\end{pspicture}\caption{Matching the FRW and TOV metrics}
\label{fig:frw_tov_match}
\end{figure}
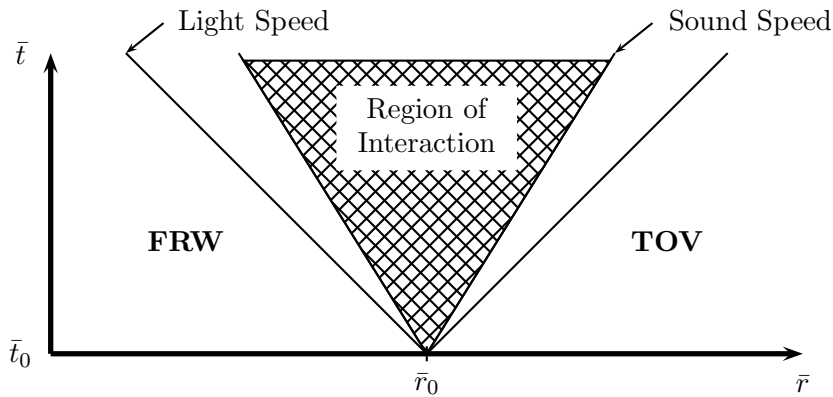

Just like the continuous models, we consider the entire infinite region of the FRW/TOV model as the universe of our simulation, and we need to define initial parameters to run the simulation.  Recall, our chosen parameter to the one parameter family of shock wave solutions is the radius of the initial discontinuity $\bar{r}_0$ separating the two metrics.  We choose to set these parameters as
\begin{equation}\label{std_params}
\bar{r}_{min}=3,\phantom{4444}\bar{r}_{max}=7,\phantom{4444}\bar{r}_0=5, \phantom{4444}n=2^{14}=16,384,
\end{equation}
in all the simulations in this chapter, unless otherwise stated. The units to these are explained in Chapter \ref{ch:units}.  Also, throughout this chapter, when we discuss mesh points, we use the space coordinate $x$ to match the notation of the locally inertial Godunov scheme as denoted in Chapter \ref{ch:frac_god_method}, and when we discuss points within our universe, we use the radial coordinate $r$ to explicitly express these points as a radial coordinate of the metric.  When we discuss the derivative of a function in our solution, we compute these derivatives numerically using a standard three point method \cite{burdfa}.

In Section \ref{sec:prelim} we introduce material needed for testing the accuracy and certainty of computing the correct solution in our numerical simulation. We start by computing the (coordinate) speed of light and sound.  We move to tracking the borders for both the FRW and TOV side of the simulation, and we conclude this section by finding the boundary condition for the TOV side.  We setup the FRW-1/TOV matched model simulation in Section \ref{sec:frw1_tov_setup}, determining the initial profiles and boundary conditions we can discretize and feed into the locally inertial Godunov scheme.  We use this setup in Section \ref{sec:frw1_tov_results} to run the simulation and obtain results.  We record the convergence of the entire simulation, using the successive mesh refinement technique introduced by Colella and Woodward in \cite{woodco}.  We determine the interaction region is not only contained in the cone of sound, but it is exactly synchronized with the cone of sound region.  We also show convergence of the non-interaction regions, the FRW-1 and TOV metrics, by computing the numerical error against their true solutions.  We resolve the question of the nature of the secondary wave in this solution as being another shock wave reflected back in towards the center.  We show the discontinuities in the fluid variables and metric derivatives, confirming these solutions to the Einstein equations are solutions in the weak sense of the theory of distributions.  Changing the parameter $\bar{r}$ results in a quantitatively different solution; thus, we have a one parameter family of quantitatively different shock wave solutions to the Einstein equations.  The FRW-2/TOV matched model simulation gets setup in Section \ref{sec:frw2_tov_setup} with the results shown in Section \ref{sec:frw2_tov_results}.  We numerically show the FRW-2/TOV matched model is the same solution as the FRW-1/TOV model, differing by a non-linear time coordinate transformation.  This second matching provides a pedagogically interesting numerical confirmation of the covariance of the Einstein equations in standard Schwarzschild coordinates.

\section{Preliminaries}
\label{sec:prelim}
In our simulation, we track the outer boundary of the lightlike and sound like information emanating from the initial discontinuity and compare them to the trajectory of the shock waves.  In order to make this comparison, we need to determine how fast light and sound propagates through our universe.  Since space and time are bent differently across the universe, we expect these speeds to be dependent on the point of interest.  We first compute the (coordinate) speed of light.  To simplify the calculation, we suppress the $\bar{r}^2d\Omega^2$ term and take the 2-dimensional metric in standard Schwarzschild coordinates
\begin{equation}
ds^2=-Bd\bar{t}^2+A^{-1}d\bar{r}^2=g_{ij}dx^idx^j.
\end{equation}
Consider a lightlike curve, $\gamma(\xi)=(\bar{t}(\xi),\bar{r}(\xi))$, which has the velocity vector
\begin{equation}
X(\xi)=\frac{d\bar{t}}{d\xi}\frac{\partial}{\partial\bar{t}}+\frac{d\bar{r}}{d\xi}\frac{\partial}{\partial\bar{r}}.
\end{equation}
Since $\gamma(\xi)$ is a lightlike curve, the length of this curve is zero or
\begin{equation}
0=g_{ij}X^iX^j=-B\left(\frac{d\bar{t}}{d\xi}\right)^2+A^{-1}\left(\frac{d\bar{r}}{d\xi}\right)^2.
\end{equation}
Solving this equation for the coordinate speed, we obtain the speed light travels as
\begin{equation}\label{speed_of_light}
l_\pm\equiv\frac{d\bar{r}}{d\bar{t}}=\pm\sqrt{AB},
\end{equation}
taking the plus or minus sign for light traveling in the positive or negative direction, respectively.

To find the speed at which sound travels in $(\bar{t},\bar{r})$-coordinates, we first note the sound speed is tied to the equation of state $p = \sigma\rho$ as being $\sqrt{\sigma}$.  This speed is relative to the background fluid at rest (i.e. $v=0$).  To find the speed of sound relative to fluid moving at a speed $v$, we need to apply the Lorentz transformation law for velocities (\ref{lorentz_velocity_xform_law}).  More specifically, the speed $\sqrt{\sigma}$ transforms into
\begin{equation}
\frac{v\pm\sqrt{\sigma}}{1+\frac{v\sqrt{\sigma}}{c^2}}.
\end{equation}
Again the plus/minus sign determines the direction in space being traveled.  In order to determine the $(\bar{t},\bar{r})$-coordinated speed, we magnify this speed by the factor $\sqrt{AB}$, and the speed of sound becomes
\begin{equation}\label{speed_of_sound}
s_{\pm}\equiv\sqrt{AB}\left(\frac{v\pm\sqrt{\sigma}}{1\pm\frac{v\sqrt{\sigma}}{c^2}}\right).
\end{equation}
Notice how the speed of sound is not only dependent on the metric $(A,B)$, but also on the movement of the fluid $v$, indicating the dependence on the medium the sound travels through.

For our simulation, we have a fixed speed of sound, namely $\sqrt{\sigma}=c^2/\sqrt{3}$, and based on the position in space, we know the values for $(A,B)$ and $v$ to determine both the speed of light (\ref{speed_of_light}) and the speed of sound (\ref{speed_of_sound}).  Hence, the new position of the light/sound information $\bar{r}$ is determined from the old position $\bar{r}_*$ after a time change of $\Delta\bar{t}$ by
\begin{equation}
\bar{r}=\bar{r}_* + s\Delta\bar{t},
\end{equation}
where we set the speed as $s=l_\pm$ or $s=s_\pm$ for the light or sound information, respectively, choosing the plus/minus sign depending on the direction traveled.

For the cone of sound verses the shock wave position comparison and error estimates, we want to know where the FRW metric stops and the interaction region begins; therefore, we need a mechanism to determine the border between the two.  Also, we need to find a similar border between the TOV metric and the interaction region.  Theoretically, the position of the shock wave can be used as the borders, but since the solution is numerically simulated with a first order method, these shock waves are smeared by numerical diffusion.  This diffusion bleeds into the metrics, causing numerical error in testing this region against the known model.  In order to handle these errors, we develop a criteria, based on studying the numerical solution, to determine where the numerical diffusion ends and the metric begins.  The border criteria we choose for the FRW side is where the spatial derivative of the fluid velocity first changes sign.  In terms of notation, the border is the first position $\bar{r}$ such that
\begin{equation}\label{frw_border}
\frac{\partial v}{\partial\bar{r}}(\bar{r})\frac{\partial v}{\partial\bar{r}}(\bar{r}+\Delta\bar{r})<0
\end{equation}
This criteria is sufficient because the velocity is an increasing function of $\bar{r}$ for the FRW metric in standard Schwarzschild coordinates and only decreases when it hits the shock.  To implement this criteria, we start the gridpoint under consideration at the minimum radius $x_i=x_1=\bar{r}_{min}$, which is in the FRW metric because of the boundary condition, and increase the index $i$ until the numerical derivative changes sign, giving us the point where the numerical diffusion first takes place.

Next, we explore the border criteria for the interaction region and the TOV metric. Again, the fluid velocity is used as our indicator.  Since the fluid velocity is theoretically zero, we would ideally find the point where it first becomes non-zero, but in the simulation it is only approximately zero, due to numerical error.  Instead, the border criteria we choose for the TOV side is where the absolute value of the spatial derivative of the fluid velocity first becomes greater than $0.01$.  More specifically, the border is the first position $\bar{r}$ such that
\begin{equation}\label{tov_border}
\AutoAbs{\frac{\partial v}{\partial\bar{r}}(\bar{r})}>0.01.
\end{equation}
This criteria detects the first significant change in the velocity and ignores small changes occurring from numerical error.  To implement this strategy, we start the gridpoint at the maximum radius $x_i=x_n=\bar{r}_{max}$, which is in the TOV metric, and decrease the index $i$ until the numerical derivative satisfies (\ref{tov_border}).

To run our simulation, we need boundary data from both the FRW and TOV metrics to maintain a consistent solution at the edges of the simulated region of the universe.  Much to our surprise, the TOV metric boundary condition changes although the TOV metric itself is time independent.  Even though the theory predicts there is a TOV metric matched continuously to the right side of the interaction region, the TOV metric allows for arbitrary changes of the time coordinate $\bar{t}$, and we do not know which of these is matched in the solution except at the initial matching by solving for the correct $B_0$.  More precisely, consider an arbitrary time coordinate map $\varphi:\tau\rightarrow\bar{t}$ applied to the TOV metric
\begin{equation}
ds^2=-B(\bar{r})d\bar{t}^2+\left(\frac{1}{1-\frac{2\mathcal{G}M(\bar{r})}{\bar{r}}}\right)d\bar{r}^2 +\bar{r}^2d\Omega^2.
\end{equation}
Since this metric is time independent, the effect of $\bar{t}=\varphi(\tau)$ is just a scaling of the $B$ metric component, resulting in the transformed metric of
\begin{equation}
ds^2=-\left(\frac{d\varphi}{d\bar{t}}\right)^2B(\bar{r})d\tau^2+\left(\frac{1}{1-\frac{2\mathcal{G}M(\bar{r})}{\bar{r}}}\right)d\bar{r}^2 +\bar{r}^2d\Omega^2.
\end{equation}
In $(\tau,\bar{r})$-coordinates, the metric is no longer time independent because the time metric component now has a time dependent scale factor, but the other metric component along with the fluid variables are still independent of time.  Since we refer to it throughout the chapter, we define
\begin{equation}\label{time_scale}
B^t(\bar{t})\equiv\left(\frac{d\varphi}{d\bar{t}}\right)^2
\end{equation}
to represent this scale factor caused by the arbitrary time coordinate transformation of the TOV metric.  This allowance for arbitrary time transformations gives us a slew of potential TOV metrics, differing only in the time metric component, to match in our solution.  One can view $B^t$ as a uniform change within the TOV metric of the rates the clocks move, and remarkably the FRW/TOV shock wave solution picks out the correct one of these TOV metrics at each time step based on the integration of $B$ up through the FRW metric and the interaction region.  To adjust to the change in the time scale, we determine where the TOV metric starts, based on the border criteria above, and match $B^t$ at that border to determine which TOV metric is used as the boundary condition.  This rematching also enables us to pick out the correct TOV metric model to test against the simulated solution in order to prove convergence of the TOV side of our universe, which will be discussed in more detail later.

To explain this in different words, the theory in \cite{groasmte} tells us the reason we must include the time scale factor in the $B$ component of the metric is because in the numerical method the metric values on the left hand boundary of the simulated region are imposed, but the metric values on the right hand boundary are determined by the integration of the equations.  Beyond the light cone on the right hand side, we know beforehand the method has to simulate the TOV spacetime, but we do not know a priori in which standard Schwarzschild coordinate system it is simulated.  Thus, since the standard Schwarzschild form allows for an arbitrary time scale factor on the metric $B$ component, the correct time translation must be included to get the correct $B$ component on the right hand boundary.

\section{FRW-1 and TOV Matched Model Setup}
\label{sec:frw1_tov_setup}
In this section, we cover the details of building the initial profiles and the boundary data required to setup the FRW-1/TOV model simulation.  We restate both metrics for ease of reference.  Using the coordinate transformation
\begin{equation}\label{ch6_frw1_xform_ssc}
\begin{split}
\bar{r}=&\sqrt{t}r,\\
\bar{t}=&\left\{1+\frac{\bar{r}^2}{4t^2}\right\}t=t+\frac{r^2}{4},
\end{split}
\end{equation}
the FRW metric in standard Schwarzschild coordinates (the FRW-1 form) is
\begin{equation}\label{ch6_frw1_in_ssc}
ds^{2} =-\frac{1}{1-v^2}d\bar{t}^2+\frac{1}{1-v^2}d\bar{r}^2+\bar{r}^{2}d\Omega^2,
\end{equation}
with the metric components
\begin{equation}\label{ch6_frw1_metric_in_ssc}
A(\xi) = 1-v(\xi)^2,\phantom{4444} B(\xi) = \frac{1}{1-v(\xi)^2},
\end{equation}
and the fluid variables
\begin{equation}\label{ch6_frw1_fluid_vars_in_ssc}
\rho(\xi,\bar{r}) =\frac{3v(\xi)^2}{\kappa\bar{r}^2},\phantom{4444} v(\xi)=\frac{1-\sqrt{1-\xi^2}}{\xi},
\end{equation}
where the variable $\xi$ is defined as $\xi=\bar{r}/\bar{t}$ and is a function of $v$ by the following relation
\begin{equation}\label{ch6_xi_v_relation}
\xi=\frac{2v}{1+v^2}.
\end{equation}

Whereas, the TOV metric is written
\begin{equation}\label{ch6_tov_in_ssc}
ds^2=-B(\bar{r})d\bar{t}^2+\left(\frac{1}{1-\frac{2\mathcal{G}M(\bar{r})}{\bar{r}}}\right)d\bar{r}^2 +\bar{r}^2d\Omega^2,
\end{equation}
with the metric components
\begin{equation}\label{ch6_tov_metric_in_ssc}
A(\bar{r}) = 1-8\pi\mathcal{G}\gamma,\phantom{4444} B(\bar{r}) = B_0(\bar{r})^{\frac{4\sigma}{1+\sigma}},
\end{equation}
and the fluid variables
\begin{equation}\label{ch6_tov_fluid_vars_in_ssc}
\rho(\bar{r})=\frac{\gamma}{\bar{r}^2},\phantom{4444} v(\bar{r})=0,
\end{equation}
where the parameter $\gamma$ is a constant dependent on $\sigma$
\begin{equation}\label{ch6_gamma}
\gamma=\frac{1}{2\pi\mathcal{G}}\left(\frac{\sigma}{1+6\sigma+\sigma^2}\right).
\end{equation}
Recall, for this simulation we let $\sigma=1/3$.

Assuming an initial start time $\bar{t}_0>0$ to be determined later,  we choose an initial radius for the discontinuity of $\bar{r}_0$, and we want to match the metric components $(A,B)$ continuously at the starting point $(\bar{t}_0,\bar{r}_0)$ for the initial discontinuity in the fluid variables.  We start by matching the metric $A$ component at this point
\begin{equation}\label{frw1_matching_a}
A_{FRW}(\bar{t}_0,\bar{r}_0)=1-v\left(\frac{\bar{r}_0}{\bar{t}_0}\right)^2=1-8\pi\mathcal{G}\gamma=A_{TOV}(\bar{t}_0,\bar{r}_0).
\end{equation}

Let $v_0=v(\bar{r}_0/\bar{t}_0)$ represent the fluid velocity at the discontinuity so (\ref{frw1_matching_a}) implies
\begin{equation}
v_0=\sqrt{8\pi\mathcal{G}\gamma}=\sqrt{\frac{4\sigma}{1+6\sigma+\sigma^2}},
\end{equation}
where we substituted (\ref{ch6_gamma}) for $\gamma$.  Take note that $v_0$ is independent of the our free parameter $r_0$, it is quite astounding the velocity of the fluid at the discontinuity remains the same regardless of the placement of the discontinuity.   Equipped with the value of $v_0$, we use (\ref{ch6_xi_v_relation}), rewritten as
\begin{equation}
\frac{\bar{r}_0}{\bar{t}_0}=\frac{2v_0}{1+v^2_0},
\end{equation}
to find the unknown starting time $\bar{t}_0$ as
\begin{equation}\label{frw1_start_time}
\bar{t}_0=\frac{\bar{r}_0(1+v^2_0)}{2v_0}.
\end{equation}
The independence of $v_0$ from $\bar{r}_0$ along with (\ref{frw1_start_time}) implies the initial start time is proportional to the initial radius of the discontinuity.  Finding $\bar{t}_0$ enables us to build the initial profile of the FRW-1 metric for any radial coordinate $\bar{r}<\bar{r}_0$ by computing $\xi=\bar{r}/\bar{t}_0$ and using equations (\ref{ch6_frw1_in_ssc})-(\ref{ch6_frw1_fluid_vars_in_ssc}).

To compute the TOV metric, the $A$ metric component is already determined beforehand by a constant (\ref{ch6_tov_metric_in_ssc}).  To find the other metric component, match it at the discontinuity
\begin{equation}\label{frw1_matching_b}
B_{TOV}(\bar{t}_0,\bar{r}_0)=B_0(\bar{r}_0)^{\frac{4\sigma}{1+\sigma}}=\frac{1}{1-v^2_0}=B_{FRW}(\bar{t}_0,\bar{r}_0),
\end{equation}
forcing the constant $B_0$ to take the form
\begin{equation}\label{frw1_matching_b0}
B_0=\frac{\bar{r}^{-\frac{4\sigma}{1+\sigma}}_0}{1-v^2_0}.
\end{equation}
With the TOV time scale $B_0$, we can build the TOV metric for any radial coordinate $\bar{r}>\bar{r}_0$ by using the equations (\ref{ch6_tov_in_ssc})-(\ref{ch6_tov_fluid_vars_in_ssc}).

Combining all this data together, we build the following functions $v_{init}(\bar{r})$, $\rho_{init}(\bar{r})$, $A_{init}(\bar{r})$, and $B_{init}(\bar{r})$ to use as the initial data at time $\bar{t}_0$ (\ref{frw1_start_time}), depending on the free parameter $\bar{r}_0$.  Because the other functions are based on $v$ in the FRW-1 space, we start by stating the fluid velocity
\begin{equation}\label{velocity_init}
v_{init}(\bar{r}) = \left\{
\begin{array}{ll}
\frac{1-\sqrt{1-\xi^2}}{\xi} & \bar{r}<\bar{r}_0\\
0 & \bar{r}>\bar{r}_0,
\end{array}
\right.
\end{equation}
where $\xi =\bar{r}/\bar{t}_0$.  Based on this function $v_{init}$, the initial density profile is
\begin{equation}\label{density_init}
\rho_{init}(\bar{r}) = \left\{
\begin{array}{ll}
\frac{3v_{init}^2}{\kappa\bar{r}^2} & \bar{r}<\bar{r}_0\\
\frac{\gamma}{\bar{r}^2} & \bar{r}>\bar{r}_0,
\end{array}
\right.
\end{equation}
and the metric $\mathbf{A}_{init}=(A_{init},B_{init})$ is giving by
\begin{equation}\label{metric_A_init}
A_{init}(\bar{r}) = \left\{
\begin{array}{ll}
1-v_{init}^2 & \bar{r}<\bar{r}_0\\
1-8\pi\mathcal{G}\gamma & \bar{r}>\bar{r}_0,
\end{array}
\right.
\end{equation}
and
\begin{equation}\label{metric_B_init}
B_{init}(\bar{r}) = \left\{
\begin{array}{ll}
\frac{1}{1-v_{init}^2} & \bar{r}<\bar{r}_0\\
B_0(\bar{r})^{\frac{4\sigma}{1+\sigma}} & \bar{r}>\bar{r}_0.
\end{array}
\right.
\end{equation}

One last piece of information needed to run the simulation is the boundary conditions or the ghost cells.  Recall, we need both boundary conditions to evolve according to the Einstein equations to maintain consistency at the edges of our simulated universe.  For the FRW-1 side, at the gridpoint $x_0$, the fluid velocity is a function of time $\bar{t}_j$
\begin{equation}\label{frw1_bndary_v}
v_{0,j}=\frac{1-\sqrt{1-\xi^2}}{\xi},
\end{equation}
where the variable $\xi$ is defined as $\xi=x_0/\bar{t}_j$.  With the fluid velocity, the fluid density becomes
\begin{equation}
\rho_{0,j} =\frac{3v^2_{0j}}{\kappa x^2_0}.
\end{equation}
Since the metric components are staggered relative to the fluid variables, we need to compute the half gridpoint
\begin{equation}\label{one_half_gridpoint}
x_{\frac{1}{2}}=x_0+\frac{\Delta x}{2},
\end{equation}
and use it to find the corresponding velocity
\begin{equation}
v_{\frac{1}{2},j}=\frac{1-\sqrt{1-\xi^2}}{\xi},
\end{equation}
for $\xi=x_{\frac{1}{2}}/\bar{t}_j$.  We use this velocity in the following computation for the metric components,
\begin{equation}\label{frw1_bndary_metric}
A_{1,j} = 1-v_{\frac{1}{2},j}^2,\phantom{4444} B_{1j} = \frac{1}{1-v_{\frac{1}{2},j}^2}.
\end{equation}

The boundary condition for the TOV side is easier to implement.  Since the TOV metric is independent of time, we set the data values during the initial setup of the function profiles, and they remain the same.  These static values work for the fluid variables and the metric component $A$, but the function $B$, specifically the time scale $B^t$ (\ref{time_scale}), changes and must be rematched during each time step, as discussed in the last section.  Using the above criteria for the TOV border (\ref{tov_border}), let $x_i$ be the gridpoint position of this border.  We rematch the time scale at time $\bar{t}_j$ by the following formula
\begin{equation}\label{rematch_b0}
B^t=B(\bar{t}_j,x_i)(x_i)^{-\frac{4\sigma}{1+\sigma}},
\end{equation}
where $B(\bar{t}_j,x_i)$ is the simulated solution at the coordinate $(\bar{t}_j,x_i)$.

\section{FRW-1 and TOV Matched Simulation Results}
\label{sec:frw1_tov_results}
\begin{figure}
\begin{center}
\includegraphics[width=\textwidth]{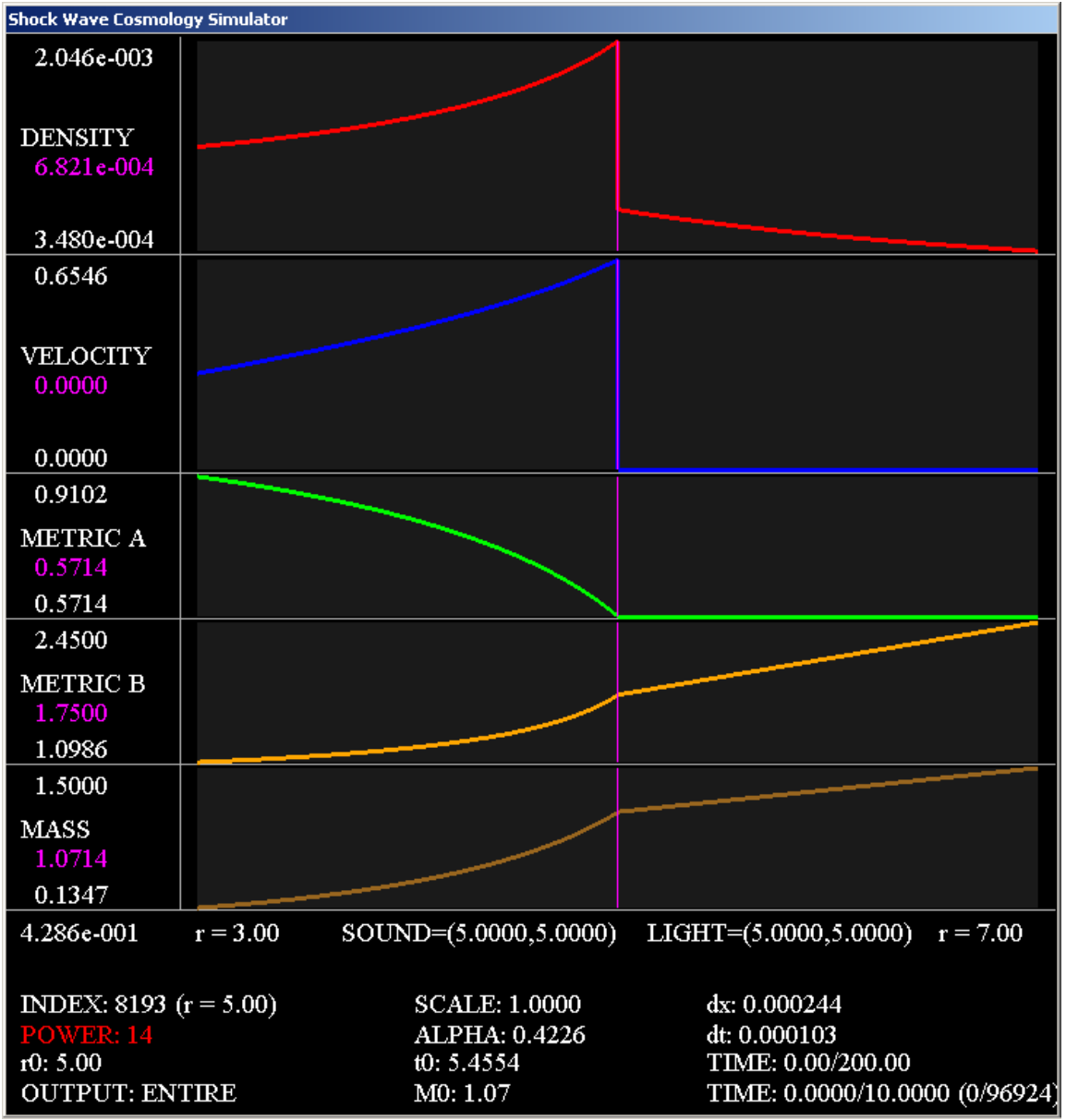}
\end{center}
\caption{Initial profiles}
\label{fig:frw1_tov_init}
\end{figure}

In this section, we look at results of the simulation for the FRW-1/TOV matched model.  We use the initial profiles (\ref{velocity_init})-(\ref{metric_B_init}) along with the boundary conditions (\ref{frw1_bndary_v})-(\ref{rematch_b0}) developed in the last section to run the simulation.  Figure \ref{fig:frw1_tov_init} shows the initial profiles with these parameters for the fluid variables $(\rho_{init},v_{init})$ and the metric $\mathbf{A}_{init}$, along with the mass.  By selecting the initial discontinuity at $\bar{r}_0=5$, equation (\ref{frw1_start_time}) gives the initial start time of $\bar{t}_0=5.4554$.  Notice how the discontinuities in the fluid variables jump down from the FRW-1 side to the TOV side.  Moreover, the FRW density $\rho$ and the TOV density $\bar{\rho}$ at this discontinuity are related by $\rho=3\bar{\rho}$.

\begin{figure}
\begin{center}
\includegraphics[width=\textwidth]{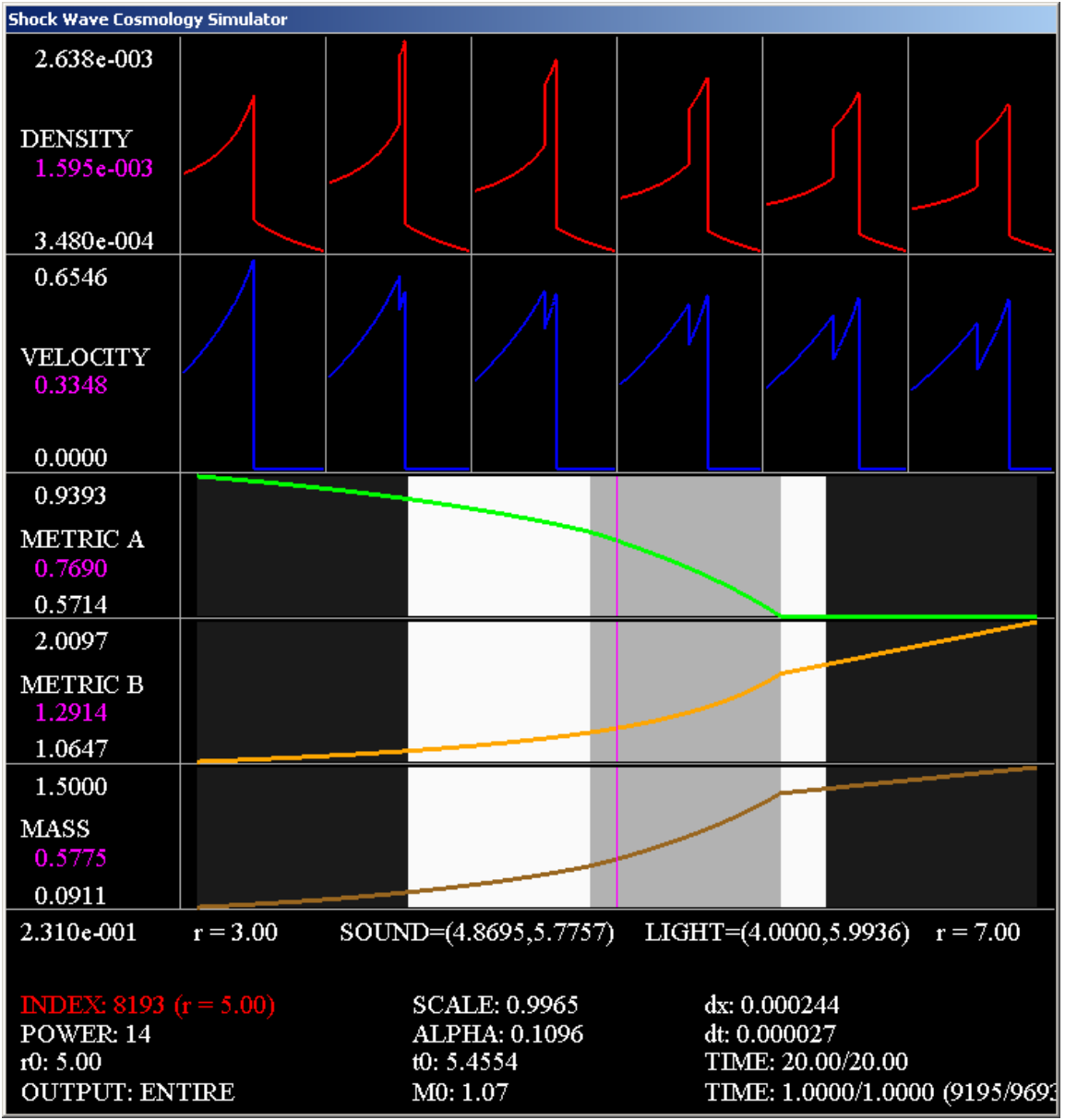}
\end{center}
\caption{Evolution of the fluid variables during a unit of time}
\label{fig:frw1_tov_frames}
\end{figure}

With these initial profiles, we run the simulation for one unit of time (i.e. $\bar{t}_{end}=\bar{t}_0+1$).  Figure \ref{fig:frw1_tov_frames} depicts the evolution of the fluid variables $(\rho,v)$, giving us a frame by frame view for the evolution of the fluid variables across this time frame, evenly distributed from the left frame at $\bar{t}_0$ to the right frame at $\bar{t}_{end}$.  After the initial time $\bar{t}_0$, two shock waves are formed, the stronger shock moving out toward the TOV side and the weaker shock moving in toward the FRW-1 side, creating a pocket of higher density expanding and interacting with both the FRW-1 and TOV metrics; therefore, the secondary wave to the strong shock for this solution of the Einstein equations is another shock wave, reflecting back in.

Next, we focus our attention to the resulting solution at the end time, $\bar{t}_{end}$.  Figure \ref{fig:frw1_tov_end} highlights where the two shock positions are relative to the cone of sound and the cone of light.  The cone of the light is represented by the white region while the cone of sound, embedded in the cone of light, is represented by the grey region.  Notice how the edges of the cone of sound line up with both shock wave positions, showing the interaction region between the two metrics lie completely in the cone of sound.  Since both characteristics and the edges to the cone of sound move at the speed of sound, we understand this result because these edges impinge on the shocks like a characteristic, so if one of the edges were to get slightly ahead or behind the shock position, then that edge would get pushed into the shock like all characteristics close to the shock.  This figure also displays the spatial derivatives in the metric components $A$ and $B$, the green and orange graphs, respectively.  These derivatives $(A',B')$, found using numerical differentiation,  have discontinuities aligned with the ones for the fluid variables at the edges to the cone of sound.  Looking back at Figure \ref{fig:frw1_tov_frames}, it shows the profiles for the metric $(A,B)$ as being continuous, so the metric is Lipschitz continuous, reinforcing the fact that we have a weak solution to the Einstein equations.

\begin{figure}
\begin{center}
\includegraphics[width=\textwidth]{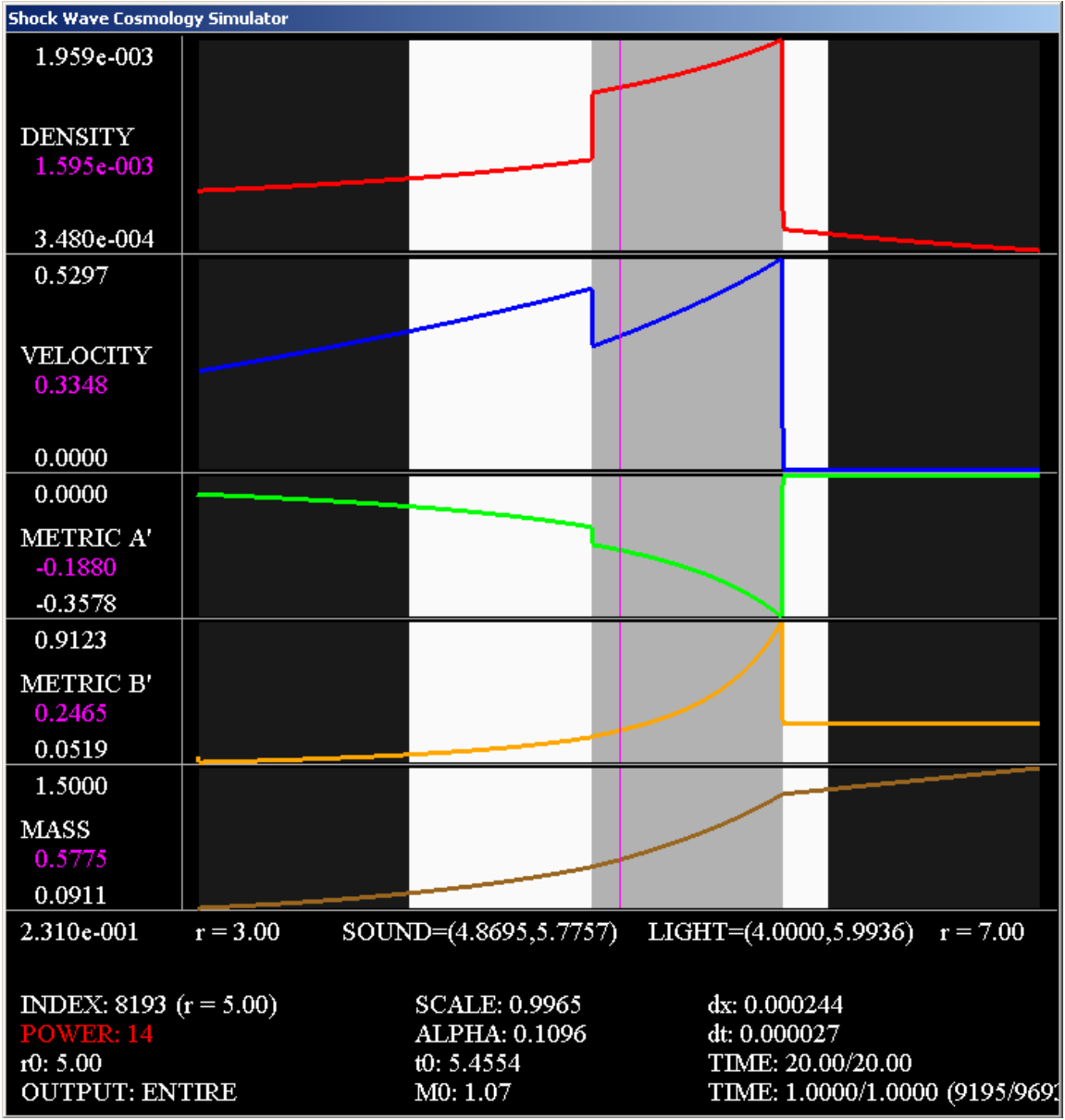}
\end{center}
\caption{Solution after a unit of time, showing the derivatives of the metric}
\label{fig:frw1_tov_end}
\end{figure}

The convergence of this solution, tested by successive mesh refinements, is shown in Table \ref{tab:frw1_tov_converge}.  Since we are implementing a first order method, we expect a convergence rate of 0.5 for the discontinuous fluid variables, while we expect a convergence rate of 1 for the continuous metric. Looking at the results in the table, the convergence rate of the fluid variables start around 0.5 for the big mesh sizes, as expected, and improve to 1 as we mesh refine, which is a little surprising.  In contrast, the convergence rate for the metric components start around 1 and on average continue to stay around 1 under mesh refinement, regardless of the high/low swing in the convergence rate for the metric $B$ component.  We believe this change in the rates is due to numerical error from integrating the metrics up across the universe.

\begin{table}
\begin{center}
\begin{tabular}{|c|c|c|c|c|c|c|c|c|}
\hline
Number&\multicolumn{2}{|c|}{$\rho$}&\multicolumn{2}{|c|}{$v$}&\multicolumn{2}{|c|}{$A$}&\multicolumn{2}{|c|}{$B$}\\
\cline{2-9}
Gridpoints & Error & Rate & Error & Rate & Error & Rate & Error & Rate\\
\hline
64 & 1.143e-004 & N/A & 4.657e-002 & N/A & 7.051e-003 & N/A & 3.146e-002 & N/A\\
\hline
128 & 8.490e-005 & 0.43 & 3.463e-002 & 0.43 & 3.710e-003 & 0.93 & 1.557e-002 & 1\\
\hline
256 & 5.970e-005 & 0.51 & 2.414e-002 & 0.52 & 1.817e-003 & 1 & 7.704e-003 & 1\\
\hline
512 & 4.000e-005 & 0.58 & 1.596e-002 & 0.6 & 9.243e-004 & 0.98 & 2.889e-003 & 1.4\\
\hline
1024 & 2.470e-005 & 0.7 & 9.741e-003 & 0.71 & 4.334e-004 & 1.1 & 1.974e-003 & 0.55\\
\hline
2048 & 1.410e-005 & 0.81 & 5.502e-003 & 0.82 & 2.568e-004 & 0.76 & 5.160e-004 & 1.9\\
\hline
4096 & 7.470e-006 & 0.92 & 2.866e-003 & 0.94 & 1.232e-004 & 1.1 & 4.172e-004 & 0.31\\
\hline
8192 & 3.740e-006 & 1 & 1.420e-003 & 1 & 7.100e-005 & 0.8 & 1.111e-004 & 1.9\\
\hline
16384 & 1.870e-006 & 1 & 7.063e-004 & 1 & 3.300e-005 & 1.1 & 1.024e-004 & 0.12\\
\hline
\end{tabular}
\end{center}
\caption{Successive mesh refinement convergence results}
\label{tab:frw1_tov_converge}
\end{table}

\begin{figure}
\begin{center}
\includegraphics[width=\textwidth]{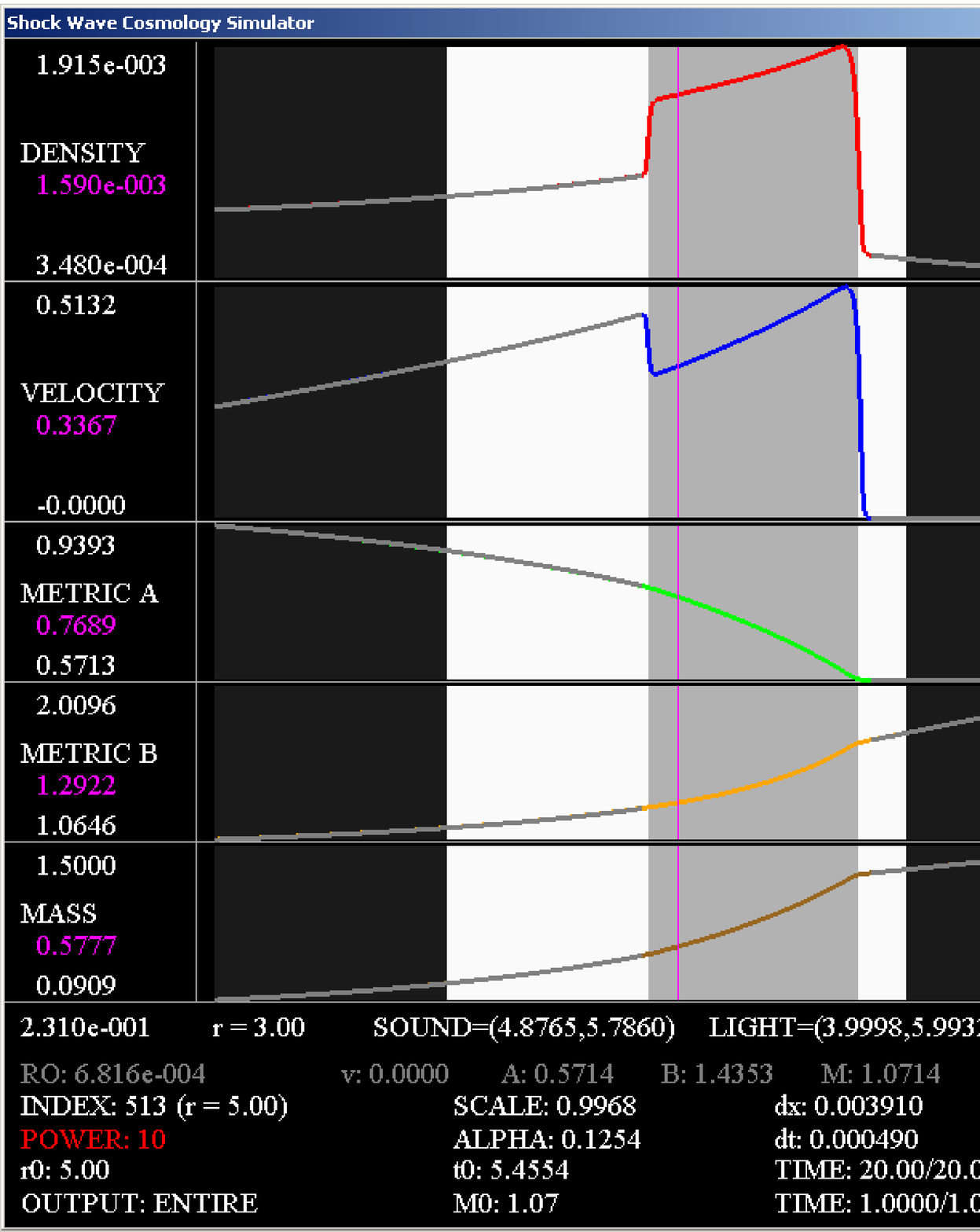}
\end{center}
\caption{Smearing of the shock waves for solution with less gridpoints ($n=2^{10}=1024$)}
\label{fig:frw1_tov_smeared}
\end{figure}

Next we study the preservation of the FRW-1 and TOV metrics outside of the interaction region.  Figure \ref{fig:frw1_tov_smeared} shows the numerical solution after one unit of time for a mesh with less gridpoints ($n=1024$).  In this figure, the numerical diffusion is more pronounced as opposed to a finer mesh, as seen in Figure \ref{fig:frw1_tov_end}.  This smearing of the shock wave is the reason for creating the border criteria discussed earlier.  This figure also highlights why each border criteria works.  The FRW border criteria is sufficient because one can see the fluid velocity on the FRW-1 side is increasing until it hits the smeared shock wave and starts decreasing.  On the other side, the TOV border criteria is sufficient because the fluid velocity is almost constant until the smeared shock wave causes a significant change.  Even though it looks constant on the TOV side, the fluid velocity is only close to being constant because there are numerical errors too small to be displayed on the graph shown.  These borders are displayed in the simulation by where the grey lines stop, as shown in both Figure \ref{fig:frw1_tov_smeared} and Figure \ref{fig:frw1_tov_test}.  Take notice on how both borders stop at the edge of the smearing of the shocks, and as we mesh refine, from Figure \ref{fig:frw1_tov_smeared} to Figure \ref{fig:frw1_tov_test}, they tend towards the edge of the cone of sound, as desired.  The colored lines verses the gray lines represent the simulated solution verses the model solution using equations (\ref{ch6_frw1_in_ssc})-(\ref{ch6_gamma}).  We record the convergence between the FRW-1 border, based on the criteria above, and the left edge of the cone of sound in Table \ref{tab:frw1_tov_shock_frw_side}.  Table \ref{tab:frw1_tov_shock_frw_side} shows, from left to right, the number of gridpoints, the position of the left edge to the cone of sound, the position of the FRW-1 border, the error between the two, and the rate of convergence associated with the error.  The corresponding results are displayed for the TOV side in Table \ref{tab:frw1_tov_shock_tov_side}, where we record the convergence between TOV border and the right edge of the cone of sound.  Notice how the error between the cone of sound and both borders are decreasing, approaching zero.  Moreover, the convergence rate is increasing, approaching 1, a linear rate which we would expect from a first order method.  These results gives us confidence in our method for determining the edges to the cone of sound and the FRW and TOV borders.

\begin{figure}
\begin{center}
\includegraphics[width=\textwidth]{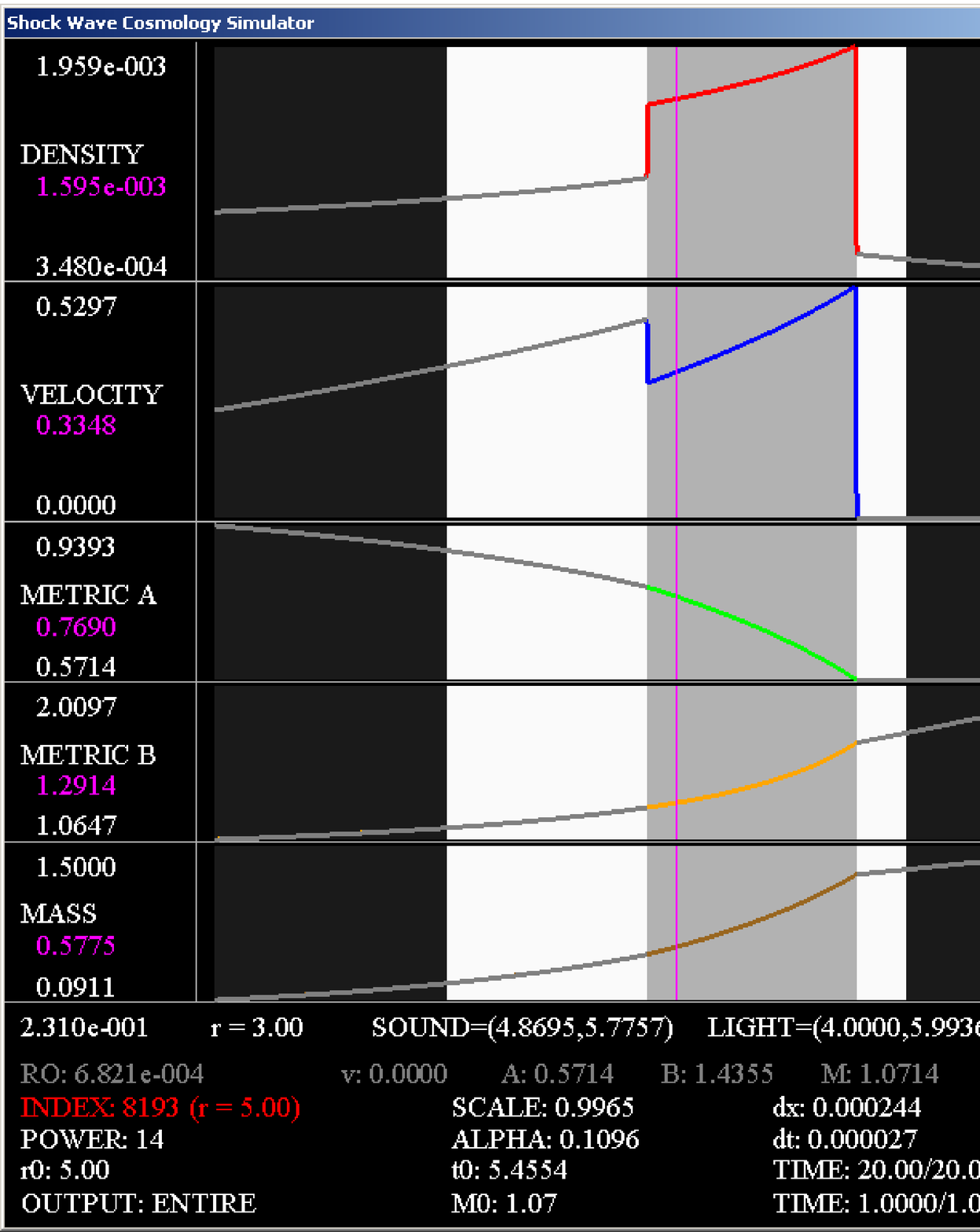}
\end{center}
\caption{Showing the model metrics against the simulated solution}
\label{fig:frw1_tov_test}
\end{figure}

\begin{table}
\begin{center}
\begin{tabular}{|c|c|c|c|c|}
\hline
 Gridpoints & Cone of Sound & Shock Wave & Error & Rate\\
\hline
64 & 4.9111 & 4.7778 & 0.13336 & N/A\\
\hline
128 & 4.9028 & 4.7953 & 0.10754 & 0.31\\
\hline
256 & 4.8896 & 4.8196 & 0.06999 & 0.62\\
\hline
512 & 4.8813 & 4.8395 & 0.0418 & 0.74\\
\hline
1024 & 4.8765 & 4.8495 & 0.02704 & 0.63\\
\hline
2048 & 4.8734 & 4.8583 & 0.01509 & 0.84\\
\hline
4096 & 4.8715 & 4.8628 & 0.00871 & 0.79\\
\hline
8192 & 4.8702 & 4.8655 & 0.00478 & 0.87\\
\hline
16384 & 4.8695 & 4.8671 & 0.00242 & 0.98\\
\hline
\end{tabular}
\end{center}
\caption{Shock wave verses cone of sound results for FRW side}
\label{tab:frw1_tov_shock_frw_side}
\end{table}

\begin{table}
\begin{center}
\begin{tabular}{|c|c|c|c|c|}
\hline
 Gridpoints & Cone of Sound & Shock Wave & Error & Rate\\
\hline
64 & 5.8134 & 6.3651 & 0.55165 & N/A\\
\hline
128 & 5.807 & 6.1181 & 0.31113 & 0.83\\
\hline
256 & 5.8003 & 5.9804 & 0.18006 & 0.79\\
\hline
512 & 5.7931 & 5.8963 & 0.10315 & 0.8\\
\hline
1024 & 5.786 & 5.8387 & 0.05267 & 0.97\\
\hline
2048 & 5.7808 & 5.81 & 0.0292 & 0.85\\
\hline
4096 & 5.7778 & 5.7927 & 0.01482 & 0.98\\
\hline
8192 & 5.7764 & 5.7845 & 0.00812 & 0.87\\
\hline
16384 & 5.7757 & 5.7797 & 0.00402 & 1\\
\hline
\end{tabular}
\end{center}
\caption{Shock wave verses cone of sound results for TOV side}
\label{tab:frw1_tov_shock_tov_side}
\end{table}

Equipped with a method for determining the position of the FRW-1 and TOV borders, these borders are used to show convergence of the metrics outside the interaction region.  These borders are indicators on where to stop computing the error between the simulated and model solutions.  For example, we consider the FRW-1 side error computation for $n=1024$.  Using Table \ref{tab:frw1_tov_shock_frw_side}, the numerical diffusion on the FRW-1 side ends at $r_*=4.8495$, so when computing the error, we only consider mesh points $x_i$ such that $x_1\leq x_i\leq r_*$.  As we mesh refine, the point $r_*$ increases, getting closer to the cone of sound as shown in Table \ref{tab:frw1_tov_shock_frw_side}, and the region where we are performing the error calculation expands, converging to the edge of the interaction region.  Using this procedure, the results for the convergence for the FRW-1 and TOV side are listed in Tables \ref{tab:frw1_tov_converge_frw_side} and \ref{tab:frw1_tov_converge_tov_side}, respectively.  These tables reveal both sides are converging to their respective model solution at the expected rate of 1.  We find it remarkable that the TOV metric is preserved, regardless of all the mix-up in the interaction region.  In particular, the TOV metric components remain intact even though we are integrating through the interaction region to obtain the metric.

\begin{table}[t]
\begin{center}
\begin{tabular}{|c|c|c|c|c|c|c|c|c|}
\hline
Number&\multicolumn{2}{|c|}{$\rho$}&\multicolumn{2}{|c|}{$v$}&\multicolumn{2}{|c|}{$A$}&\multicolumn{2}{|c|}{$B$}\\
\cline{2-9}
Gridpoints & Error & Rate & Error & Rate & Error & Rate & Error & Rate\\
\hline
64 & 7.670e-006 & N/A & 2.192e-003 & N/A & 4.945e-003 & N/A & 1.259e-002 & N/A\\
\hline
128 & 3.840e-006 & 1 & 1.072e-003 & 1 & 2.521e-003 & 0.97 & 6.422e-003 & 0.97\\
\hline
256 & 1.950e-006 & 0.98 & 5.324e-004 & 1 & 1.263e-003 & 1 & 3.230e-003 & 0.99\\
\hline
512 & 9.900e-007 & 0.98 & 2.664e-004 & 1 & 7.195e-004 & 0.81 & 1.797e-003 & 0.85\\
\hline
1024 & 4.940e-007 & 1 & 1.321e-004 & 1 & 3.256e-004 & 1.1 & 8.314e-004 & 1.1\\
\hline
2048 & 2.480e-007 & 0.99 & 6.610e-005 & 1 & 1.828e-004 & 0.83 & 4.566e-004 & 0.86\\
\hline
4096 & 1.240e-007 & 1 & 3.300e-005 & 1 & 8.240e-005 & 1.2 & 2.102e-004 & 1.1\\
\hline
8192 & 6.230e-008 & 1 & 1.650e-005 & 1 & 4.590e-005 & 0.84 & 1.148e-004 & 0.87\\
\hline
16384 & 3.120e-008 & 1 & 8.260e-006 & 1 & 2.060e-005 & 1.2 & 5.270e-005 & 1.1\\
\hline
\end{tabular}
\end{center}
\caption{Convergence results for the FRW side}
\label{tab:frw1_tov_converge_frw_side}
\end{table}

\begin{table}[h]
\begin{center}
\begin{tabular}{|c|c|c|c|c|c|c|c|c|}
\hline
Number&\multicolumn{2}{|c|}{$\rho$}&\multicolumn{2}{|c|}{$v$}&\multicolumn{2}{|c|}{$A$}&\multicolumn{2}{|c|}{$B$}\\
\cline{2-9}
Gridpoints & Error & Rate & Error & Rate & Error & Rate & Error & Rate\\
\hline
64 & 1.670e-007 & N/A & 1.626e-004 & N/A & 1.187e-003 & N/A & 6.794e-003 & N/A\\
\hline
128 & 1.360e-007 & 0.29 & 1.393e-004 & 0.22 & 8.266e-004 & 0.52 & 4.538e-003 & 0.58\\
\hline
256 & 8.260e-008 & 0.72 & 9.140e-005 & 0.61 & 4.773e-004 & 0.79 & 2.604e-003 & 0.8\\
\hline
512 & 4.540e-008 & 0.86 & 5.360e-005 & 0.77 & 2.900e-004 & 0.72 & 1.429e-003 & 0.87\\
\hline
1024 & 2.460e-008 & 0.88 & 2.930e-005 & 0.87 & 1.404e-004 & 1 & 7.464e-004 & 0.94\\
\hline
2048 & 1.270e-008 & 0.96 & 1.540e-005 & 0.93 & 7.970e-005 & 0.82 & 3.881e-004 & 0.94\\
\hline
4096 & 8.140e-009 & 0.64 & 2.530e-007 & 5.9 & 3.650e-005 & 1.1 & 1.946e-004 & 1\\
\hline
8192 & 4.160e-009 & 0.97 & 1.170e-008 & 4.4 & 2.030e-005 & 0.85 & 9.930e-005 & 0.97\\
\hline
16384 & 2.120e-009 & 0.97 & 3.670e-009 & 1.7 & 9.260e-006 & 1.1 & 4.920e-005 & 1\\
\hline
\end{tabular}
\end{center}
\caption{Convergence results for the TOV side}
\label{tab:frw1_tov_converge_tov_side}
\end{table}

\begin{figure}
\begin{center}
\includegraphics[width=\textwidth]{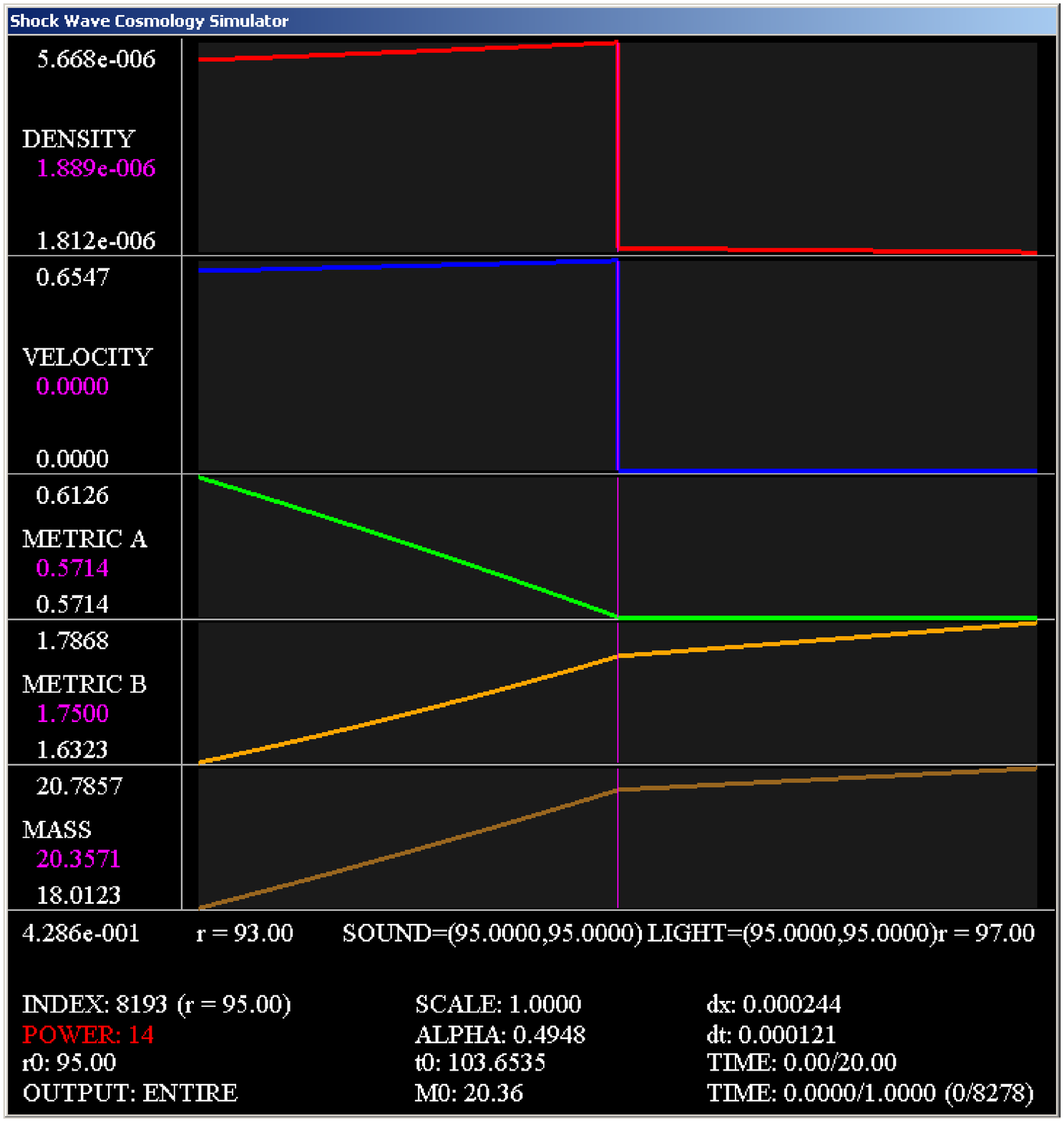}
\end{center}
\caption{The initial profiles for the shock radius at $\bar{r}_0=95$}
\label{fig:frw1_tov_init95}
\end{figure}

Since our free parameter is the position of the initial discontinuity, we have a one parameter family of solutions to the Einstein equations based on the initial radius, and we briefly explore changing this parameter.  Figure \ref{fig:frw1_tov_init95} shows the initial profile for shock position at $\bar{r}_0=95$, with $\bar{r}_{min} = 93$ and $\bar{r}_{max} = 97$.  We make a couple of observations of the initial profiles, comparing to Figure \ref{fig:frw1_tov_init}.  The fluid velocity along with the metric components at the initial discontinuity are the same, as determined in previous analysis.  All these profiles are just stretched out over a longer region of space, causing them to look more like straight lines.  There is less density and greater mass in this region of the universe since we are farther from the center of it.  Figure \ref{fig:frw1_tov_end95} shows the solution after one unit of time has passed, where again we have two shocks waves bounding a high density region.  Notice for this solution the weak shock has a positive speed as opposed to the former case, as seen in Figure \ref{fig:frw1_tov_end}.  The difference in speed from the earlier simulation $\bar{r}_0=5$ indicates we are dealing with a quantitatively different solution from before.  We explore changing this parameter many times and determine two conclusions in each case.  One is the resulting solution always has a region of higher density surrounded by two shock waves, a strong shock on the TOV side and a weak shock on the FRW-1 side, and the other is the shock waves have different speeds, resulting in quantitatively different solutions.  Hence, we truly have a one parameter family of quantitatively different shock wave solutions to the Einstein equations.

\begin{figure}
\begin{center}
\includegraphics[width=\textwidth]{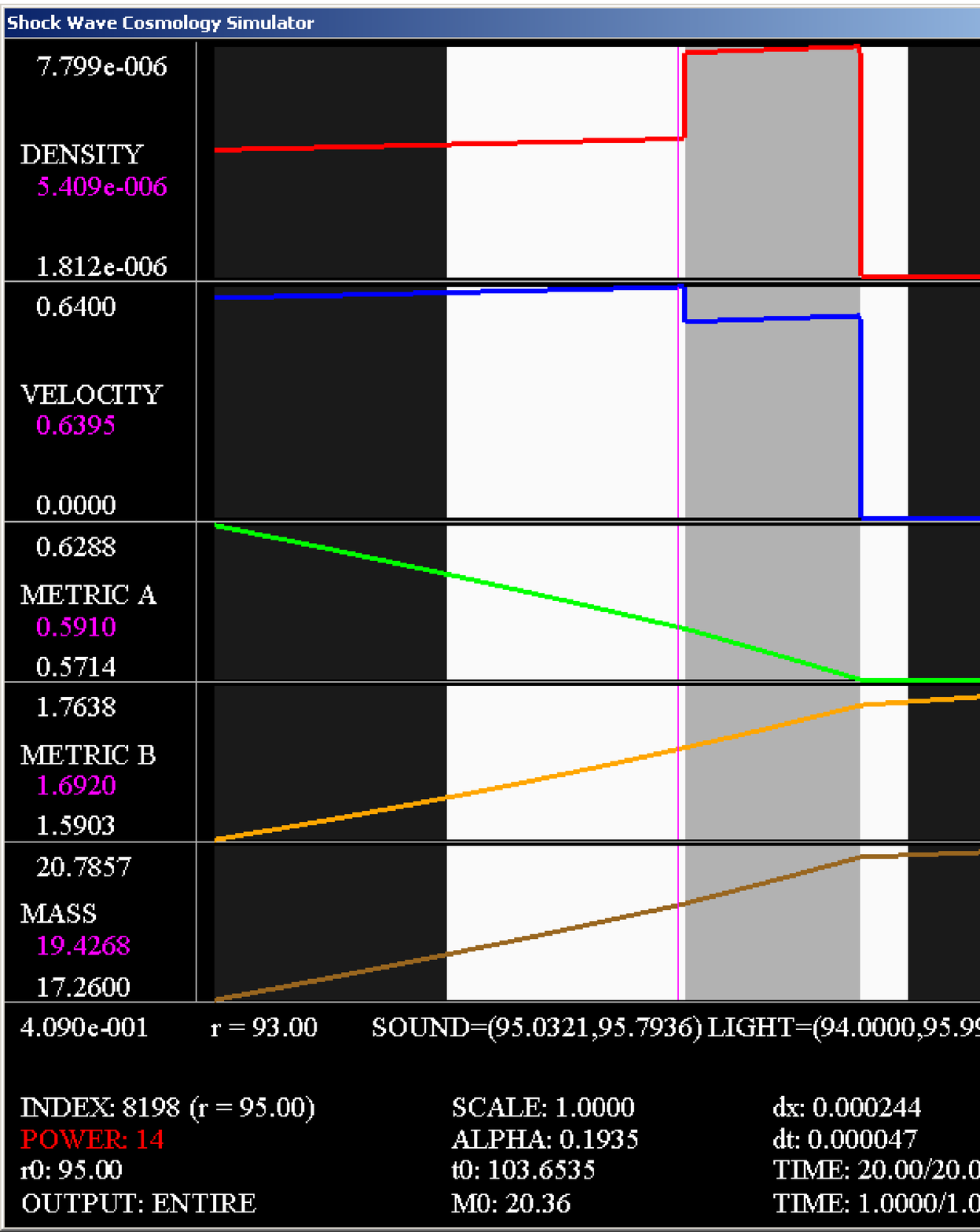}
\end{center}
\caption{Solution for the shock radius at $\bar{r}_0=95$ after a unit of time}
\label{fig:frw1_tov_end95}
\end{figure}

\section{FRW-2 and TOV Matched Model Setup}
\label{sec:frw2_tov_setup}
After exploring the FRW-1/TOV model simulation, we cover the details in the setup of the FRW-2/TOV model.  We recall the FRW-2 metric, so using the coordinate transformation
\begin{equation}\label{ch6_frw2_xform_ssc}
\begin{split}
\bar{r}=&\sqrt{t}r,\\
\bar{t}=&\frac{\Psi_0}{2}\sqrt{\frac{4t^2+\bar{r}^2}{t}},
\end{split}
\end{equation}
the FRW metric in standard Schwarzschild coordinates (the FRW-2 form) is
\begin{equation}\label{ch6_frw2_in_ssc}
ds^{2} =-\frac{1}{\Psi^2(1-v^2)}d\bar{t}^2+\frac{1}{1-v^2}d\bar{r}^2+\bar{r}^{2}d\Omega^2,
\end{equation}
with the metric components
\begin{equation}\label{ch6_frw2_metric_in_ssc}
A(\bar{t},\bar{r}) = 1-v^2,\phantom{4444} B(\bar{t},\bar{r}) = \frac{1}{\Psi^2(1-v^2)},
\end{equation}
the integrating factor
\begin{equation}\label{ch6_frw2_integrating factor}
\Psi(\bar{t},\bar{r})=\Psi_0\sqrt{\frac{t}{4t^2+\bar{r}^2}},
\end{equation}
and the fluid variables
\begin{equation}\label{ch6_frw2_fluid_vars_in_ssc}
\rho(\bar{t},\bar{r}) =\frac{3}{4\kappa t^2},\phantom{4444} v(\bar{t},\bar{r})=\frac{\eta}{2}=\frac{\bar{r}}{2t}.
\end{equation}
Remember, unlike FRW-1, the FRW-2 metric relies on the FRW time coordinate $t$, which is the following function of $(\bar{t},\bar{r})$
\begin{equation}\label{ch6_frw2_t}
t(\bar{t},\bar{r})=\frac{\bar{t}^2+\sqrt{\bar{t}^4-\bar{r}^2\Psi^4_0}}{2\Psi^2_0}.
\end{equation}
We match this metric to the TOV metric (equations (\ref{ch6_tov_in_ssc})-(\ref{ch6_gamma})), and we follow a procedure similar to the FRW-1/TOV matching.  Again, we choose the initial shock position $\bar{r}_0$, assuming an initial start time $\bar{t}_0>0$, and we want to match the FRW-2 and TOV metrics continuously at the point $(\bar{t}_0,\bar{r}_0)$.  Like before, we match the metric $A$ component as in equation (\ref{frw1_matching_a}) to obtain
\begin{equation}
v_0=\sqrt{8\pi\mathcal{G}\gamma}=\sqrt{\frac{4\sigma}{1+6\sigma+\sigma^2}},
\end{equation}
where $v_0$ represents the initial velocity at the interface.  Unlike before, we do not have a relationship between $v_0$ and $\bar{t}_0$ (\ref{ch6_xi_v_relation}) to solve for the initial time $\bar{t}_0$.  Instead, we possess a relationship between $v_0$ and the FRW time coordinate $t_0$ in equation (\ref{ch6_frw2_fluid_vars_in_ssc}) to find the FRW time coordinate at the discontinuous interface
\begin{equation}\label{frw_t_at_shock}
t_0(\bar{t}_0,\bar{r}_0)=\frac{\bar{r}_0}{2v_0}.
\end{equation}
In order to find the initial time $\bar{t}_0$, we solve for the integrating factor constant $\Psi_0$, requiring us to determine the integrating factor.  In the FRW-1 matched model, the integrating factor constant was suppressed, which is equivalent to setting $\Psi_0=1$, causing the (coordinate) speed of light $\sqrt{AB}$ on the FRW-1 side to be one.  We follow the same paradigm with the FRW-2 matched model by choosing $\Psi_0$ such that the speed of light is one on the FRW-2 side, which is equivalent to setting $\Psi=1$.  More specifically, we choose $\Psi_0$ such that at the discontinuity
\begin{equation}
\Psi=\Psi_0\sqrt{\frac{t_0}{4t^2_0+\bar{r}^2_0}}=1,
\end{equation}
implying the integrating factor constant must be
\begin{equation}\label{integrating_factor_constant}
\Psi_0=\sqrt{\frac{4t^2_0+\bar{r}^2_0}{t_0}}.
\end{equation}
Now we use (\ref{ch6_frw2_xform_ssc}) to solve for the initial time with $\bar{r}_0$, $t_0$ (\ref{frw_t_at_shock}), and $\Psi_0$ (\ref{integrating_factor_constant}) as
\begin{equation}\label{frw2_start_time}
\bar{t}_0=\frac{\Psi_0}{2}\sqrt{\frac{4t^2_0+\bar{r}^2_0}{t_0}}=\frac{\Psi^2_0}{2}.
\end{equation}
One might worry $\Psi$ does not equal one across the FRW-2 region for other values of $\bar{r}$ besides $\bar{r}_0$.  This concern is not an issue since we can substitute the time coordinate (\ref{frw2_start_time}) into the integrating factor equation (\ref{ch6_frw2_integrating factor}) to obtain
\begin{equation}
\Psi=\frac{\Psi^2_0}{2\bar{t}_0}=1,
\end{equation}
and $\bar{t}_0$ is independent of the spatial variable $\bar{r}$.  Equipped with $\bar{t}_0$ and $\Psi_0$, we are capable of solving for $t(\bar{t}_0,\bar{r})$ for any $\bar{r}< \bar{r}_0$ enabling us to compute the fluid variables and the metric components for the FRW-2 region.

For the TOV metric, the A metric component is the same constant as the FRW-1 case (\ref{ch6_tov_metric_in_ssc}), so we are left with matching the $B$ metric component.  Matching $B_{FRW}$ and $B_{TOV}$ at the coordinate $(\bar{t}_0,\bar{r}_0)$ provides us with
\begin{equation}\label{frw2_matching_b}
B_{TOV}(\bar{t}_0,\bar{r}_0)=B_0(\bar{r}_0)^{\frac{4\sigma}{1+\sigma}}=\frac{1}{\Psi^2(1-v^2_0)}=B_{FRW}(\bar{t}_0,\bar{r}_0).
\end{equation}
Since $\Psi=1$, the time scale constant $B_0$ becomes
\begin{equation}
B_0=\frac{\bar{r}^{-\frac{4\sigma}{1+\sigma}}_0}{1-v^2_0},
\end{equation}
exactly the same as the FRW-1 case (\ref{frw1_matching_b0}).  With $B_0$, we can compute the TOV metric for any radial coordinate $\bar{r}>\bar{r}_0$ by using the equations (\ref{ch6_tov_in_ssc})-(\ref{ch6_tov_fluid_vars_in_ssc}).

With the matching complete, we build the functions $v_{init}(\bar{r})$, $\rho_{init}(\bar{r})$, $A_{init}(\bar{r})$, and $B_{init}(\bar{r})$ to use as the initial data at time $\bar{t}_0$.  These initial profiles are similar to their FRW-1 matching counterparts (\ref{velocity_init})-(\ref{metric_B_init}), except the velocity is a function of $\eta$ instead of $\xi$, and we rely on the FRW time coordinate $t(\bar{t},\bar{r})$ defined in (\ref{ch6_frw2_t}).  Again, due to the dependence on the function $v$, we state the fluid velocity first
\begin{equation}\label{frw2_velocity_init}
v_{init}(\bar{r}) = \left\{
\begin{array}{ll}
\frac{\eta}{2} & \bar{r}<\bar{r}_0\\
0 & \bar{r}>\bar{r}_0,
\end{array}
\right.
\end{equation}
where $\eta =\bar{r}/t$.  The initial density profile is
\begin{equation}\label{frw2_density_init}
\rho_{init}(\bar{r}) = \left\{
\begin{array}{ll}
\frac{3}{4\kappa t^2} & \bar{r}<\bar{r}_0\\
\frac{\gamma}{\bar{r}^2} & \bar{r}>\bar{r}_0.
\end{array}
\right.
\end{equation}
Based the function $v_{init}$, the metric components are giving by
\begin{equation}\label{frw2_metric_A_init}
A_{init}(\bar{r}) = \left\{
\begin{array}{ll}
1-v_{init}^2 & \bar{r}<\bar{r}_0\\
1-8\pi\mathcal{G}\gamma & \bar{r}>\bar{r}_0,
\end{array}
\right.
\end{equation}
and
\begin{equation}\label{frw2_metric_B_init}
B_{init}(\bar{r}) = \left\{
\begin{array}{ll}
\frac{1}{\Psi^2(1-v_{init}^2)} & \bar{r}<\bar{r}_0\\
B_0(\bar{r})^{\frac{4\sigma}{1+\sigma}} & \bar{r}>\bar{r}_0.
\end{array}
\right.
\end{equation}

Since the boundary conditions are derived from the model equations, they contain the same discrepancies between the FRW-1 and FRW-2 matchings as the initial profiles.  For the FRW-2 side, at the gridpoint $x_0$ at time $\bar{t}_j$, we use (\ref{ch6_frw2_t}) to obtain the FRW time $t_j(\bar{t}_j,x_0)$ for computing the fluid velocity and density.   We find the fluid velocity
\begin{equation}\label{frw2_bndary_v}
v_{0,j}=\frac{x_0}{2t_j},
\end{equation}
and the fluid density
\begin{equation}
\rho_{0,j} =\frac{3}{4\kappa t^2_j}.
\end{equation}
Since the metric components are staggered relative to the fluid variables, we need to compute the half gridpoint as before (\ref{one_half_gridpoint}) and use it to find the corresponding velocity
\begin{equation}
v_{\frac{1}{2},j}=\frac{\eta}{2},
\end{equation}
for $\eta=x_{\frac{1}{2}}/t_j$.  We use this velocity in the following computation for the metric components,
\begin{equation}\label{frw2_bndary_metric}
A_{1j} = 1-v_{\frac{1}{2},j}^2,\phantom{4444} B_{1j} = \frac{1}{\Psi^2(1-v_{\frac{1}{2},j}^2)}.
\end{equation}

Since the TOV metric is independent of the FRW metric matched with it, the TOV boundary condition remains the same, and we use the same matching as in the FRW-1 case (\ref{rematch_b0}).

\section{FRW-2 and TOV Matched Simulation Results}
\label{sec:frw2_tov_results}
Using the initial profiles and boundary conditions developed in the last section, we run the simulation of the FRW-2/TOV matched model, comparing it with the FRW-1/TOV model results.  Figure \ref{fig:frw2_tov_init} shows the initial profiles for all the variables in the FRW-2/TOV matched model.  Comparing with the FRW-1 case (Figure \ref{fig:frw1_tov_init}), the graphs in both figures match exactly.  Actually, the only difference between the two models is the start time, $\bar{t}_0=10.9109$ instead of the previous time of $\bar{t}_0=5.4554$.  This similarity is one indication that we might have the same solution as the FRW-1 case.

\begin{figure}
\begin{center}
\includegraphics[width=\textwidth]{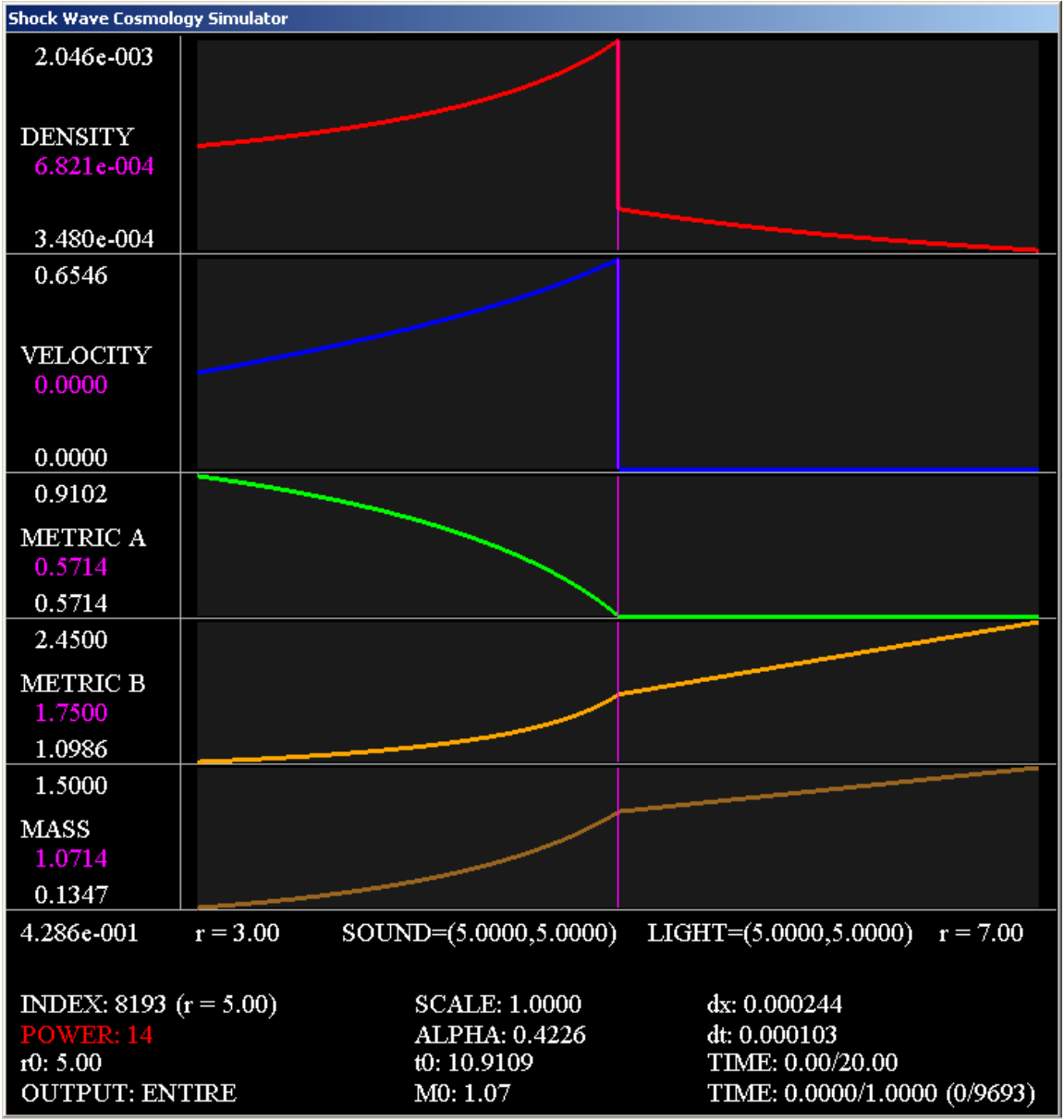}
\end{center}
\caption{Initial profiles}
\label{fig:frw2_tov_init}
\end{figure}

After running the simulation for one unit of time as shown in Figure \ref{fig:frw2_tov_end}, we notice more similarities to the FRW-1/TOV model.  More precisely,  if we compare with the FRW-1 case in Figure \ref{fig:frw1_tov_test}, both solutions have very similar features.  These features include the formation of two shocks with the stronger shock moving outward and all five graphs share the same shapes.  The main difference between the two solutions is the FRW-2 solution appears to have run for a longer period of time.  This difference is highlighted by the light and sound like information traveling slightly farther out from the initial discontinuity.  For example, the ingoing edge to the cone of sound is 4.8695 and 4.8576 in the FRW-1 and FRW-2 solutions, respectively, meaning the weak shock wave in the FRW-2 solution moved slightly farther out than the corresponding wave in the FRW-1 solution.  This outcome suggests we have the same solution but at different stages of their development.  Since both these solutions share the same initial data and the same TOV boundary condition, the only difference is the FRW boundary condition.  Looking deeper and comparing equations (\ref{frw1_bndary_v})-(\ref{frw1_bndary_metric}) to (\ref{frw2_bndary_v})-(\ref{frw2_bndary_metric}), if we assume the same FRW time $t$ at this boundary, the only change in the boundary condition is the metric component $B$ due to the presence of a non-constant integrating factor.  Intuitively, it makes sense to have the same solution at different times because a change in $B$ affects the measuring of time verses space.  This does not affect the interactions just how time and space are bent relative to each other, and since we fix spacial distances by setting constant grid cell sizes, it affects how time changes relative to each grid cell.  Hence, there is more evidence to support the claim both solutions are the same, just the clocks are not synchronized to produce the same result.

\begin{figure}
\begin{center}
\includegraphics[width=\textwidth]{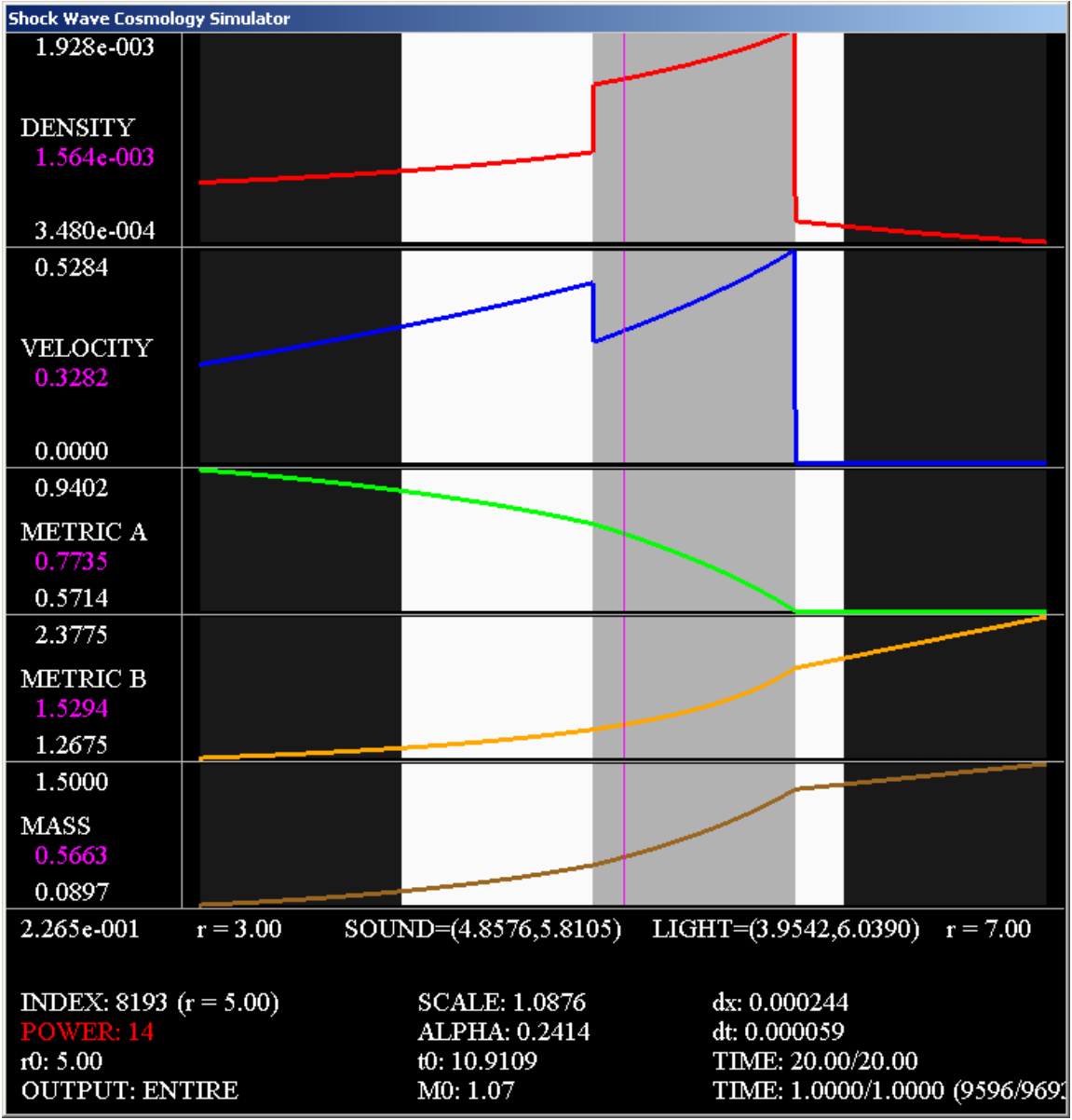}
\end{center}
\caption{Solution after a unit of time}
\label{fig:frw2_tov_end}
\end{figure}

We proceed under this assumption, a same solution at different times, and search for the FRW-2 time $\bar{t}_2$ where the solution at this time matches the solution of the FRW-1/TOV model at $\bar{t}_1=6.4554$, running for one unit of time.  If our assumption is correct, both models are mapping over the same region of spacetime in the original FRW metric.  For this mapping to work, both times $\bar{t}_1$ and $\bar{t}_2$ would have to correspond to the same FRW coordinate time $t$.  Of course, since $t$ is a function of $\bar{r}$ as well, this correspondence must hold for every $\bar{r}$ in the FRW region of the universe even though each $\bar{r}$ corresponds to a different time $t$.  To find $\bar{t}_2$, we pick a radius to match the time $t$ .  We use the left boundary radius $\bar{r}_{min}$ and find the corresponding FRW-2 time to be $\bar{t}_2=11.8688$.  This time is 0.9579 units of time after the initial start time of $\bar{t}_0=10.9109$, as observed in Figure \ref{fig:frw2_tov_init}, and this change in time is barely less than the change of one unit, explaining our earlier observation that the FRW-2/TOV solution is at a slightly later stage of development after one unit of time.

After finding this corresponding time $\bar{t}_2$, we run the FRW-2 model for 0.9579 units of time and show the results in Figure \ref{fig:frw2_tov_end2}.  Comparing against Figure \ref{fig:frw1_tov_test}, the solutions are almost exact, with the same shape and similar numbers, except for the $B$ metric component.  Interestingly enough, the metric $B$ component graph has the same shape but different numerical values, so if both graphs are placed on top of one another, they overlap, meaning the metric component in the FRW-1 case is stretched and shifted relative to the FRW-2 case.

\begin{figure}
\begin{center}
\includegraphics[width=\textwidth]{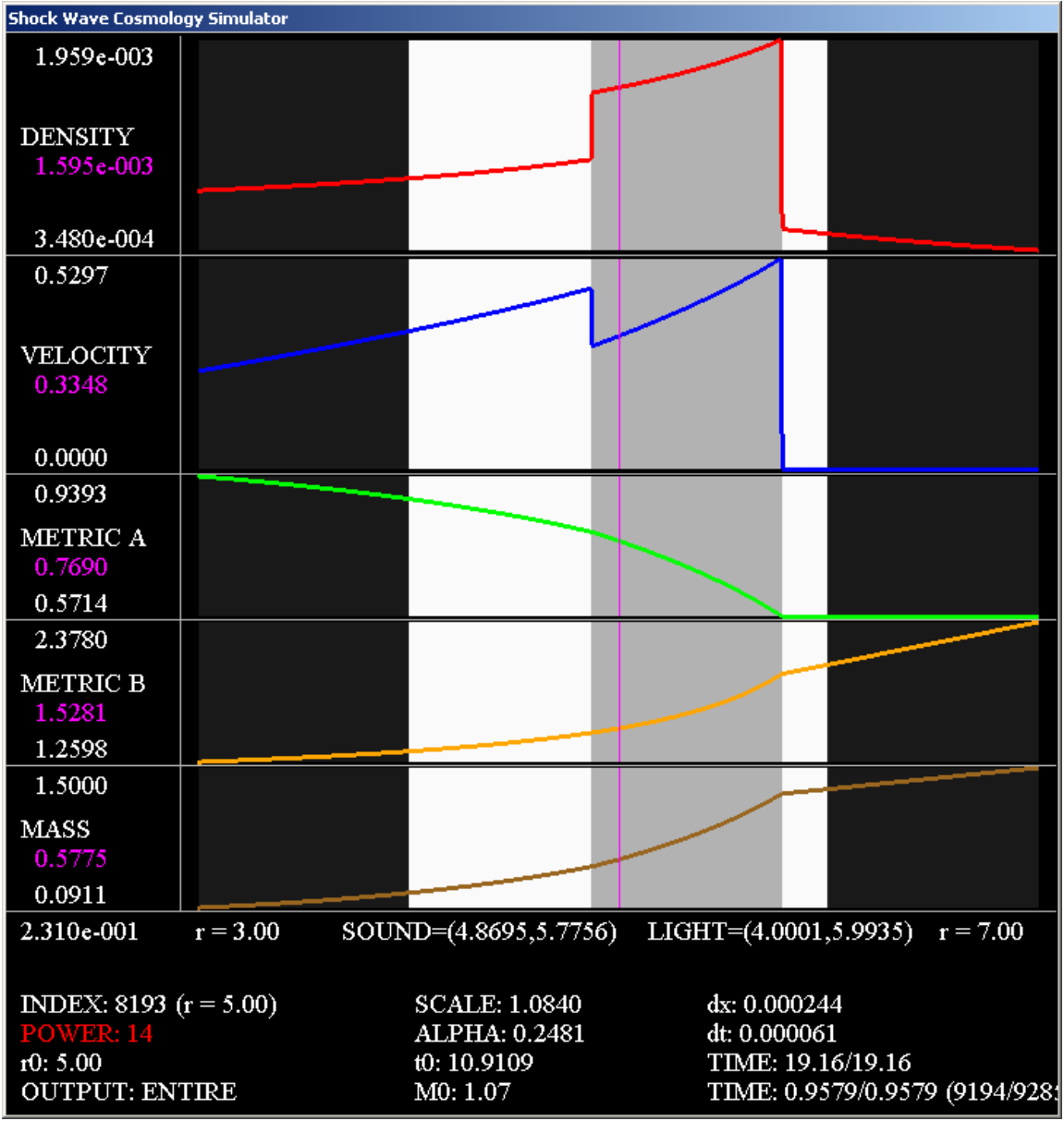}
\end{center}
\caption{Solution after 0.9579 units of time}
\label{fig:frw2_tov_end2}
\end{figure}

We continue on and test our same solution hypothesis numerically.  We accomplish this task by using the FRW-1/TOV solution at a fine mesh refinement, $n=16,384$, as the model to compare against.  We perform convergence calculations assuming the FRW-1 matched solution is the true solution, so we run the FRW-2/TOV matched model for 0.9579 units of time for different mesh sizes and compute the error between this solution and the FRW-1/TOV solution, testing the convergence rates.  One problem with this approach is the $B$ metric components between the two solutions are shifted and stretched from one another, so a direct error calculation for $B$ does not result in convergence.  To overcome this discrepancy, we build a map between the two metric components, which we label as $B_1$ and $B_2$ for the FRW-1/TOV and FRW-2/TOV models, respectively.  This map is built using the following data: let $B^1_{max}$ and $B^1_{min}$ represent the maximum and minimum values, respectively, for the $B$ metric component in the FRW-1/TOV model, and let $B^2_{max}$ and $B^2_{min}$ be the corresponding values for the FRW-2/TOV model.  We define the mapping from $B_1$ to $B_2$ as follows
\begin{equation}
B_2=\left(\frac{B^2_{max}-B^2_{min}}{B^1_{max}-B^1_{min}}\right)(B_1-B^1_{min})+B^2_{min}.
\end{equation}
The value $B^2_{min}$ represents the value the left most point gets mapped to while the scale factor $(B^2_{max}-B^2_{min})/(B^1_{max}-B^1_{min})$ represents the stretching.  For the mapping between the solutions in Figure \ref{fig:frw1_tov_test} and Figure \ref{fig:frw2_tov_end2}, $B^2_{min}$ is 1.2598 while the scale factor is 1.1833.  Using this map, the convergence results are shown in Table \ref{tab:frw2_tov_converge_to_frw1}, and the data shows the FRW-2/TOV model is converging to the FRW-1/TOV model at a linear rate, confirming our same solution at different times hypothesis numerically.

\begin{table}
\begin{center}
\begin{tabular}{|c|c|c|c|c|c|c|c|c|}
\hline
Number&\multicolumn{2}{|c|}{$\rho$}&\multicolumn{2}{|c|}{$v$}&\multicolumn{2}{|c|}{$A$}&\multicolumn{2}{|c|}{$B$}\\
\cline{2-9}
Gridpoints & Error & Rate & Error & Rate & Error & Rate & Error & Rate\\
\hline
64 & 2.463e-004 & N/A & 1.021e-001 & N/A & 6.514e-003 & N/A & 1.256e-002 & N/A\\
\hline
128 & 1.517e-004 & 0.7 & 5.883e-002 & 0.8 & 3.285e-003 & 0.99 & 6.849e-003 & 0.88\\
\hline
256 & 9.440e-005 & 0.68 & 3.788e-002 & 0.64 & 1.755e-003 & 0.9 & 3.527e-003 & 0.96\\
\hline
512 & 5.470e-005 & 0.79 & 2.149e-002 & 0.82 & 9.644e-004 & 0.86 & 1.763e-003 & 1\\
\hline
1024 & 2.800e-005 & 0.97 & 1.077e-002 & 1 & 4.872e-004 & 0.99 & 8.090e-004 & 1.1\\
\hline
2048 & 1.340e-005 & 1.1 & 5.132e-003 & 1.1 & 2.466e-004 & 0.98 & 3.589e-004 & 1.2\\
\hline
4096 & 5.920e-006 & 1.2 & 2.234e-003 & 1.2 & 1.111e-004 & 1.1 & 1.696e-004 & 1.1\\
\hline
8192 & 2.010e-006 & 1.6 & 7.468e-004 & 1.6 & 5.520e-005 & 1 & 1.150e-004 & 0.56\\
\hline
16384 & 1.980e-007 & 3.3 & 5.440e-005 & 3.8 & 1.790e-005 & 1.6 & 4.700e-005 & 1.3\\
\hline
\end{tabular}
\end{center}
\caption{FRW-2/TOV model verses FRW-1/TOV model convergence results}
\label{tab:frw2_tov_converge_to_frw1}
\end{table}

We find the discovery that the two FRW transformations produce the same result too uncanny to be just a random coincidence, and we search for a theoretical connection to justify this discovery.  Since we are dealing with a solution to the Einstein equations of a spherically symmetric metric in standard Schwarzschild coordinates, we take another look at the general form for this metric
\begin{equation}\label{ch6_ssc}
ds^{2} =-B(\bar{t},\bar{r})d\bar{t}^2+\frac{1}{A(\bar{t},\bar{r})}d\bar{r}^2+\bar{r}^{2}d\Omega^2.
\end{equation}
Let us consider the coordinate freedoms that preserves this metric; it turns out the only freedom of this metric is an arbitrary nonlinear transformation of the time coordinate $\bar{t}$.  We cannot change the radial coordinate $\bar{r}$ because the spheres of symmetry will no longer match, and we cannot change the time coordinate $\bar{t}$ in terms of $\bar{r}$ because this type of transformation would yield mixed terms violating the metric being in standard Schwarzschild coordinates.  Therefore, if the spacetime is in standard Schwarzschild coordinates, the only change of coordinates of this metric, using the same center, is an arbitrary nonlinear change in the time coordinate.  Since all the mappings from FRW into standard Schwarzschild coordinates differ by how the time coordinate $\bar{t}(t,\bar{r})$ gets mapped, based on different integrating factors, all these mapped FRW metrics must be the same solution which differ by a nonlinear change in the time coordinate.  Hence, both metrics FRW-1 and FRW-2 developed earlier are actually the same solution with a nonlinear change of time between the two, implying the FRW-1/TOV and FRW-2/TOV matched models are the same too.

With this fact in mind, we look for the transformation between the two time coordinates $\bar{t}_1$ and $\bar{t}_2$ for the FRW-1/TOV and FRW-2/TOV models, respectively.  Considering an arbitrary coordinate $(t,\bar{r})$, the FRW-1 time $\bar{t}_1$ is
\begin{equation}
\bar{t}_1=\left\{1+\frac{\bar{r}^2}{4t^2}\right\}t=\frac{4t^2+\bar{r}^2}{4t},
\end{equation}
and the FRW-2 time $\bar{t}_2$ is
\begin{equation}
\bar{t}_2=\frac{\Psi_0}{2}\sqrt{\frac{4t^2+\bar{r}^2}{t}},
\end{equation}
where we set the integrating factor constant
\begin{equation}
\Psi_0=\sqrt{\frac{4t^2+\bar{r}^2}{t}},
\end{equation}
as before (\ref{integrating_factor_constant}) to make $\Psi=1$ and initially match the $B$ metric component for the FRW-1/TOV and FRW-2/TOV models.  Studying these equations, we notice a couple of things.  The first one is the integrating factor constant chosen can be written as an explicit function of $\bar{t}_1$
\begin{equation}\label{psi_fn_t_bar_1}
\Psi_0=\sqrt{\frac{4t^2+\bar{r}^2}{t}}=2\sqrt{\frac{4t^2+\bar{r}^2}{4t}}=2\sqrt{\bar{t}_1}.
\end{equation}
Not only that, but the FRW-2 time can also be written as an explicit function of $\bar{t}_1$
\begin{equation}\label{t_bar_xform}
\bar{t}_2=\Psi_0\sqrt{\frac{4t^2+\bar{r}^2}{4t}}=\Psi_0\sqrt{\bar{t}_1}.
\end{equation}
Looking at the specific coordinate $(t_0,\bar{r}_0)$ for the initial discontinuity, we combine the equations (\ref{psi_fn_t_bar_1}) and (\ref{t_bar_xform}) to determine
\begin{equation}
\bar{t}_2=2\bar{t}_1,
\end{equation}
explaining the relationship between the FRW-1/TOV start time $\bar{t}_0=5.4554$ and the FRW-2/TOV start time $\bar{t}_0=10.9109$.  We also build the relationship between the two end times using (\ref{integrating_factor_constant}) as the integrating factor constant in the time mapping equation (\ref{t_bar_xform}).  Substituting the FRW-1/TOV end time $\bar{t}_1=6.4554$ into this equation (\ref{t_bar_xform}) gives us the corresponding FRW-2/TOV end time as $\bar{t}_2=11.8688$, matching the end time above to produce Figure \ref{fig:frw2_tov_end2} and to obtain the convergence results in Table \ref{tab:frw2_tov_converge_to_frw1}.

The other mystery on our hands is the shifting and stretching of the metric $B$ component.  The shifting is a result of the difference in the boundary data of $B$ between the FRW-1 and FRW-2 cases, caused by the different integrating factors.  This metric component is computed at the end of each time step by integration of the solution up from the boundary, and a shift in the boundary data of $B$ will result in a shift of the entire function $B$.  The stretching is caused by the effect on the metric of the coordinate transformation from the $\bar{t}_1$ to the $\bar{t}_2$ coordinate system.  More specifically, consider a time coordinate map $\varphi:\bar{t}_2\rightarrow\bar{t}_1$ from the FRW-2/TOV to the FRW-1/TOV time coordinate.  This map produces the following relationship between the differential forms
\begin{equation}
d\bar{t}^2_1=\left(\frac{d\varphi}{d\bar{t}_2}\right)^2d\bar{t}^2_2,
\end{equation}
causing a scaling or stretching between the $B$ metric components.  For our particular case, the time coordinate map is
\begin{equation}
\bar{t}_1=\varphi(\bar{t}_2)=\left(\frac{\bar{t}^2_2}{\Psi_0}\right),
\end{equation}
giving us a scale factor of
\begin{equation}
\left(\frac{d\varphi}{d\bar{t}_2}\right)^2=4\left(\frac{\bar{t}_2}{\Psi^2_0}\right)^2.
\end{equation}
Plugging in the values for $\Psi_0$ and $\bar{t}_2$, the scale factor equals 1.1833, matching the observed value derived numerically above.  This analysis gives us a procedure of mapping the FRW-1/TOV model over to the FRW-2/TOV model for any FRW-1 time $\bar{t}_1$, confirming our hypothesis that both models represent the same solution at different time coordinates.  Consequently, running the FRW-2/TOV simulation is just an exercise in the time coordinate invariance of this solution of the Einstein equations in standard Schwarzschild coordinates.

\section{Summary}
\label{sec:shock_wave_summary}
We provide the specific details in building the initial profiles (shown in Figure \ref{fig:frw1_tov_init}) and the boundary data required to run the simulation using the locally inertial Godunov method.  We have the first glimpse of a shock wave in general relativity by providing snapshots of the solution in Figure \ref{fig:frw1_tov_frames}.  This solution results in two shock waves, a strong outgoing wave and a weak incoming wave, enveloping a region of higher density.  Changing our one parameter $\bar{r}$ produces quantitatively different solutions, shown in Figure \ref{fig:frw1_tov_end95}.  This result enables us to resolve the secondary wave in the solution as a reflected shock wave.  This shock wave solution is a weak solution to the Einstein equations because of the discontinuities in the derivative of the metric shown in Figure \ref{fig:frw1_tov_end}, where these discontinuities are aligned with the corresponding discontinuities in the fluid variables.  We show convergence of the region of interaction to the cone of sound by recording the convergence between the position of the shock waves and the edges of the cone of sound in Tables \ref{tab:frw1_tov_shock_frw_side} and \ref{tab:frw1_tov_shock_tov_side}.  This result affirms that the region of interaction is precisely the region of the cone of sound, and again at this stage we have no formal mathematical proof of this.  The numerical determination of these borders help us demonstrate the numerical convergence of the FRW and TOV metrics in the non-interaction regions, using the exact solutions, in Tables \ref{tab:frw1_tov_converge_frw_side} and \ref{tab:frw1_tov_converge_tov_side} because these borders gives us the boundary of integration in the calculation of the one-norm.

In simulating the FRW-2/TOV model, we force the initial profiles for this model to closely resemble the FRW-1/TOV model, and after running the FRW-2/TOV model, we notice many similarities to its counterpart, providing evidence that these are the same solution at different times.  We test and confirm this hypothesis numerically by proving convergence between the two solutions in Table \ref{tab:frw2_tov_converge_to_frw1}.  We proceed to find the theoretical justification of these numerical results and determine the two solutions differ by a non-linear time coordinate transformation.  This result can be interpreted as a numerical confirmation of the covariance of solutions to the Einstein equations in standard Schwarzschild coordinates.  All these results give us confidence in the correctness of our locally inertial Godunov method implementation and in the accuracy of our solution.

   \chapter[%
      Short Title of 8th Ch.
   ]{%
      Time Reversal Model
   }%
   \label{ch:time_rev_model}
Now that we have explored the shock wave model by simulating the FRW/TOV matched model forward in time, we consider reversing time and running the matched spacetime backward in time.  Not only are we interested in the structure of the resulting solution, but we conjecture reversing time will send the strong wave into the center of the universe in hopes of forming a black hole.  By sending matter into the center, we expect the density and consequently the mass function to increase as time unfolds.  This increase at a fixed radius $\bar{r}$ will cause the black hole criteria
\begin{equation}\label{black_hole_criteria}
\frac{2\mathcal{G}M(\bar{t},\bar{r})}{\bar{r}}=1,
\end{equation}
to eventually be satisfied for some time-space coordinate $(\bar{t},\bar{r})$.  Since it is referred to throughout the chapter, we define the {\it black hole number} $\mu(\bar{t},\bar{r})$ as
\begin{equation}\label{black_hole_number}
\mu(\bar{t},\bar{r})\equiv\frac{2\mathcal{G}M(\bar{t},\bar{r})}{\bar{r}}.
\end{equation}

Section \ref{sec:rev_time} discusses the validity of a reversed time solution along with the setup to implement it.  We show the difference of this setup from the forward time solution is just the sign of the times, $\bar{t}$ and $t$, from positive to negative, so the initial profiles and boundary conditions developed in Section \ref{sec:frw1_tov_setup} are used again here with this slight change.  In Section \ref{sec:rev_frw1_tov_results}, we run this reversed time simulation for one unit of time to obtain results of the solution.  In particular, we demonstrate numerical convergence of the entire solution, the interaction and non-interaction regions, and within the non-interacting regions, numerical convergence to the pure FRW and TOV metrics is also verified.  As in the forward time case, we demonstrate the region of interaction is synchronized with the cone of sound region.  This solution contains two expansion waves surrounding a region of under density, as opposed to the two shock waves in the forward time model.  Both of these waves are an expansion wave in curved spacetime which we interpret as a generalized rarefaction wave, and we refer to it throughout as a rarefaction wave.  Since we are dealing with rarefaction waves, the metric is continuous, producing a strong solution to the Einstein equations.  After one unit of time, the strong rarefaction wave is heading toward the center of the universe, bringing mass along for the ride, and getting closer to the black hole criteria (\ref{black_hole_criteria}).  In Section \ref{sec:black_hole_form}, we continue the flow of time to see the beginning of what we believe is the formation of a black hole, where we obtain a black hole number of 0.9218.  Although we wish to get even closer to satisfying the black hole criteria (\ref{black_hole_criteria}), we are content with passing the Buchdahl stability limit of 0.9, which loosely states, after reaching this limit, a star is unstable and is subject to it own gravitational collapse.  That is, the theorem of Buchdahl is whenever a mass gets within 9/8ths of its Schwzarzschild radius, or in our language its black hole number gets within 0.9, there doesn't exist a static solution of the TOV equations with a finite pressure at the center capable of holding the star up \cite{smolte2}.  This fact gives us confidence that not only does black hole formation occur in the reverse FRW/TOV model, but our simulation gives us a first glimpse at it.

\section{Reversing Time}
\label{sec:rev_time}

We argue the time reversibility of the FRW metric.  Recall, the FRW metric is given by
\begin{equation}\label{ch7_frw_metric_with_k}
ds^2=-dt^2+R^2(t)\left\{\frac{1}{1-kr^2}dr^2+r^2d\Omega^2\right\}.
\end{equation}
Plugging (\ref{ch7_frw_metric_with_k}) into the Einstien field equations (\ref{einstein_eqns}), along with the perfect and comoving fluid assumption, gives the following pair of constraint equations on the functions $R(t)$, $\rho(t)$, and $p(t)$:
\begin{equation}\label{frw_general_rho_ode}
p=-\rho-\frac{R\dot{\rho}}{3\dot{R}},
\end{equation}
\begin{equation}\label{frw_general_r_ode}
\dot{R}^2+k=\frac{8\pi\mathcal{G}}{3}\rho R^2,
\end{equation}
developed by Smoller and Temple in \cite{smolte}.  For the case of interest, when the metric is spatially flat ($k=0$) with an isothermal equation of state ($p=\sigma\rho$), these constraint equations (\ref{frw_general_rho_ode}) and (\ref{frw_general_r_ode}) simplify to
\begin{equation}\label{frw_rho_ode}
\dot{\rho}=\frac{(\sigma +1)\rho}{3R}\dot{R},
%p=-\rho-\frac{R\dot{\rho}}{3\dot{R}},
\end{equation}
\begin{equation}\label{frw_r_ode}
\dot{R}^2=\frac{8\pi\mathcal{G}}{3}\rho R^2.
\end{equation}
There exists explicit solutions, $\rho(t)$ and $R(t)$, to these pair of ODEs for $t>0$ \cite{smolte}.  Knowing these solutions exist, we show $\rho(-t)$ and $R(-t)$ for $t<0$, the reversed time solutions, are also solutions to these ODEs.  Define $\tau \equiv -t$, and since $\tau > 0$ there exists solutions $\rho(\tau)$ and $R(\tau)$ to (\ref{frw_rho_ode}) and (\ref{frw_r_ode}).  Using these solutions and the fact $\frac{dt}{d\tau}=-1$, we show
\begin{equation}
-\dot{\rho}=\frac{dt}{d\tau}\frac{d\rho}{dt}=\rho_\tau=\frac{(\sigma +1)\rho}{3R}R_\tau =\frac{(\sigma +1)\rho}{3R}\frac{dR}{dt}\frac{dt}{d\tau} =-\frac{(\sigma +1)\rho}{3R}\dot{R},
\end{equation}
implying
\begin{equation}
\dot{\rho}=\frac{(\sigma +1)\rho(-t)}{3R(-t)}\dot{R}.
\end{equation}
The other ODE is easily satisfied by
\begin{equation}
\dot{R}^2=\left(\frac{dt}{d\tau}\dot{R}\right)^2=R^2_\tau=\frac{8\pi\mathcal{G}}{3}\rho R^2 =\frac{8\pi\mathcal{G}}{3}\rho(-t) R^2(-t).
\end{equation}
Hence, the time reversed solutions $\rho(-t)$ and $R(-t)$ for $t<0$ are also solutions for the FRW metric.  Notice the FRW metric is invariant under this time reversal $\tau=-t$:
\begin{equation}
ds^2=-d\tau^2+R^2(\tau)\left\{\frac{1}{1-kr^2}dr^2+r^2d\Omega^2\right\}= -dt^2+R^2(-t)\left\{\frac{1}{1-kr^2}dr^2+r^2d\Omega^2\right\}.
\end{equation}

We turn our attention to how this time reversal affects the FRW-1 metric.  If we repeat the process in Chapter \ref{ch:family_of_shock_waves} to obtain the coordinate transformation, the time reversed transformation becomes
\begin{equation}
\begin{split}
\bar{r}=&\sqrt{-t}r,\\
\bar{t}=&\left\{1+\frac{\bar{r}^2}{4t^2}\right\}t,
\end{split}
\end{equation}
and the resulting equations for the fluid variables $(\rho,v)$ and the metric components $(A,B)$ remain the same.  The only difference is that $t<0$ implies $\bar{t}<0$ and consequently, $\xi=\bar{r}/\bar{t}$ changes sign from positive to negative.  This change in sign only affects the fluid velocity since all the other equations in the metric posses $v^2$ terms in them.  With this in mind, we use the results in the FRW-1/TOV matched model process in Chapter \ref{ch:shock_wave_models} with a negative time $\bar{t}<0$.  In particular, we use the same equations (\ref{velocity_init})-(\ref{rematch_b0}) for the initial profiles and the boundary conditions with a negative time, increasing as the simulation runs.  The only difference is the fluid velocity at the discontinuity $v_0$ is chosen to be negative instead of positive.

\section{Reverse FRW and TOV Matched Simulation Results}
\label{sec:rev_frw1_tov_results}

\begin{figure}
\begin{center}
\includegraphics[width=\textwidth]{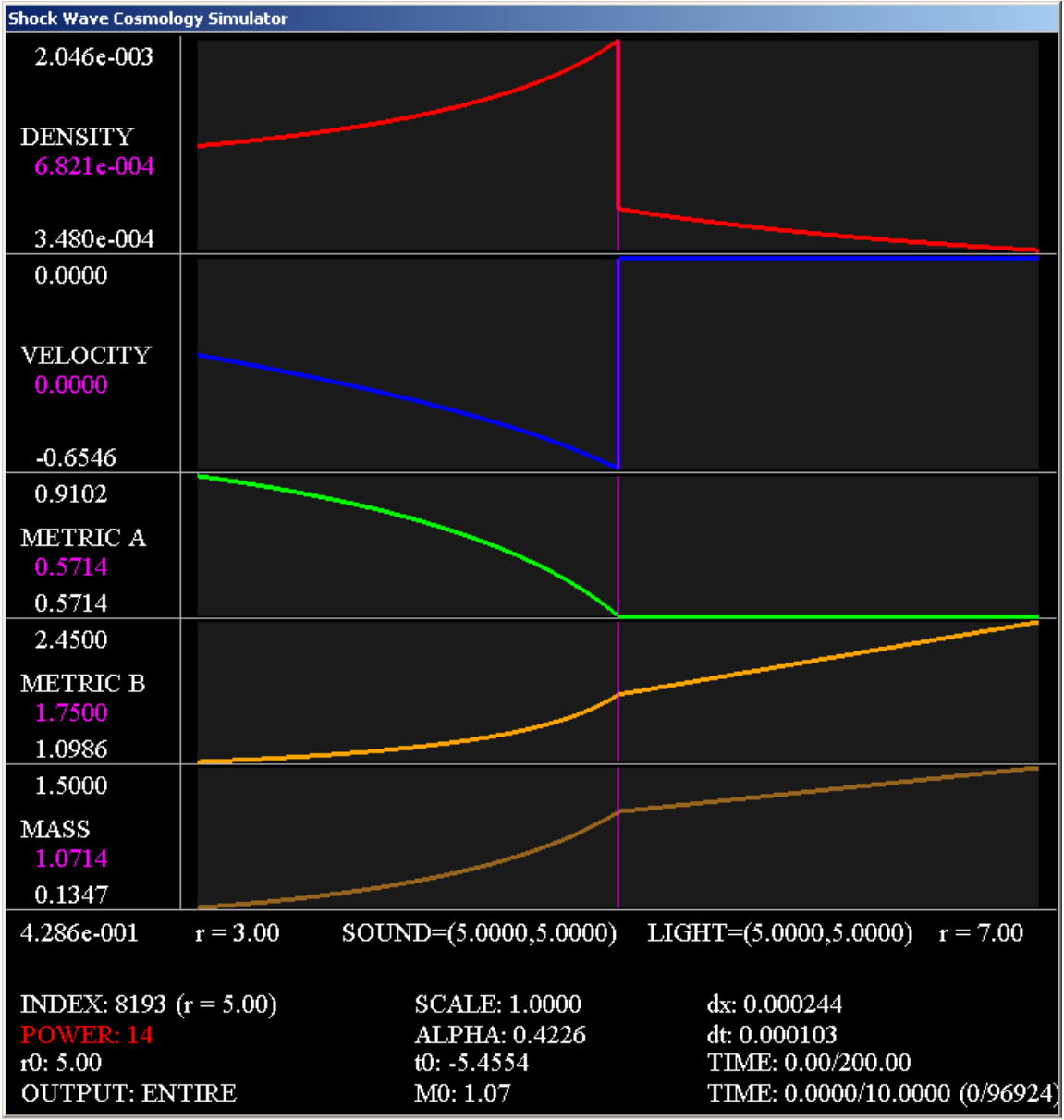}
\end{center}
\caption{Initial profiles}
\label{fig:frw1_rev_tov_init}
\end{figure}

To better compare and contrast the forward and reverse solutions, we use the same initial parameters as before (\ref{std_params}).  Figure \ref{fig:frw1_rev_tov_init} shows the initial profiles for the fluid variables, the metric components, and the mass function.  Comparing these profiles to the corresponding ones for the forward time in Figure \ref{fig:frw1_tov_init}, we see they match except the fluid velocity along with the start time $\bar{t}_0=-5.4554$ has changed signs, as expected from the above analysis.

\begin{figure}
\begin{center}
\includegraphics[width=\textwidth]{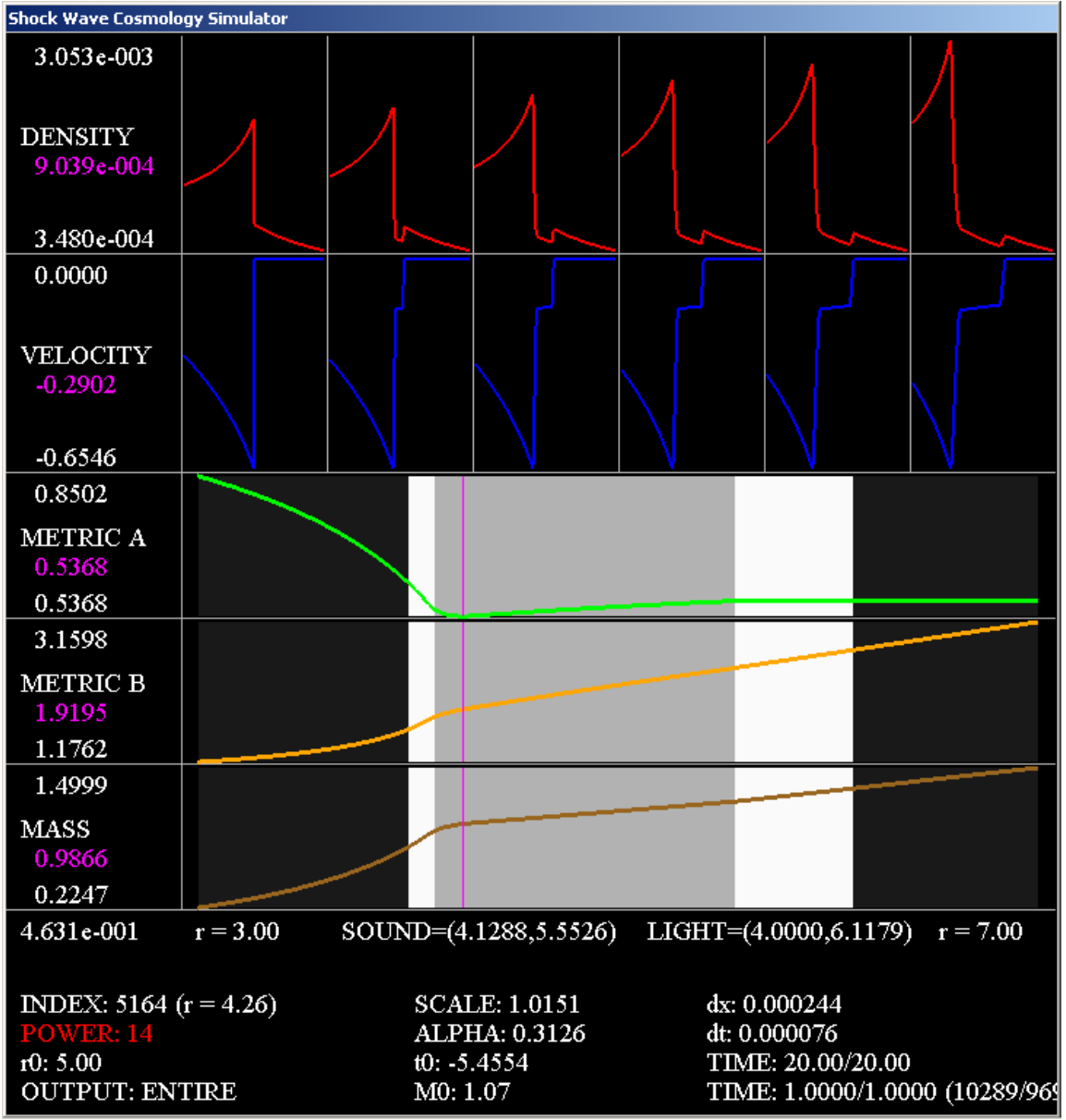}
\end{center}
\caption{Evolution of the fluid variables during a unit of time}
\label{fig:frw1_rev_tov_frames}
\end{figure}

After setting up the initial profiles, we run the simulation for one unit of time, but since time is running in reverse, $\bar{t}_0$ is getting closer to zero, evolving toward the big bang instead of away from it (i.e. $\bar{t}_{end}=\bar{t}_0+1=-4.4554$).  Figure \ref{fig:frw1_rev_tov_frames} gives a frame by frame view for the evolution of the fluid variables across this time frame, evenly distributed from the left frame at $\bar{t}_0$ to the right frame at $\bar{t}_{end}$.  From the initial discontinuity, two rarefaction waves are formed, a stronger wave moving in towards the FRW region and a weaker one moving out towards the TOV region, and between the two waves a pocket of lower density is formed.  These results differ from the forward time case where there are two shock waves surrounding a region of higher density.  As time progresses, the rarefaction waves and the pocket are both expanding.  To the left of this pocket, there is a density spike growing and moving in towards the FRW region, meaning more matter is falling into the center of the universe as time progresses.  One can notice that indeed the FRW region is accumulating more mass from the start time (Figure \ref{fig:frw1_rev_tov_init}) to the end time (Figure \ref{fig:frw1_rev_tov_frames}).  If more mass keeps coming into a finite radius, then $\mu$ will continue to grow and eventually the black hole criteria will be satisfied (\ref{black_hole_criteria}), leading us to believe, given enough time, a black hole will evolve out of this solution. We explore this idea in more detail in the next section.

\begin{figure}
\begin{center}
\includegraphics[width=\textwidth]{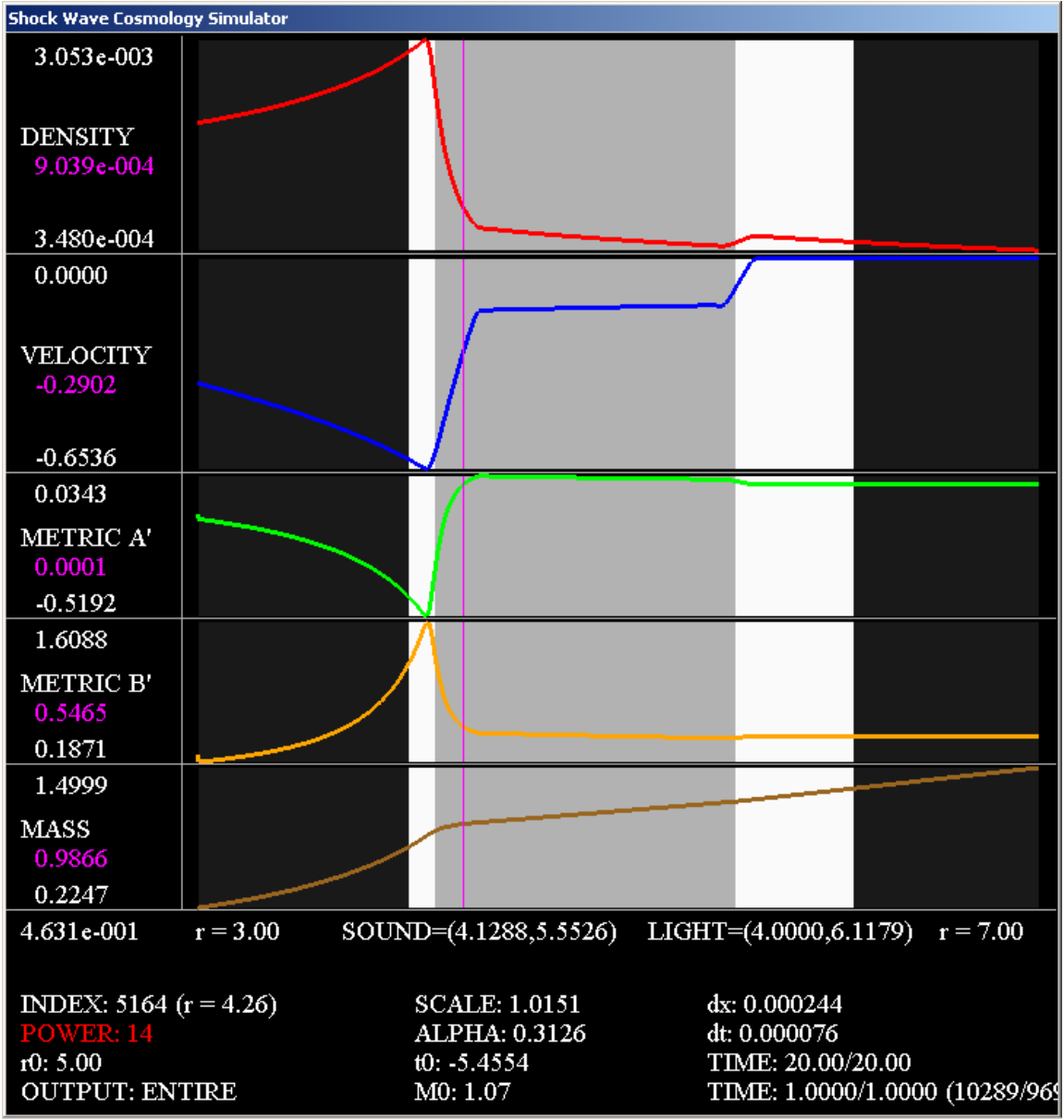}
\end{center}
\caption{Solution after a unit of time, showing the derivatives of the metric}
\label{fig:frw1_rev_tov_end}
\end{figure}

We turn our attention to the end result at time $\bar{t}_{end}$, where Figure \ref{fig:frw1_rev_tov_end} gives us a more detailed look at the fluid variables.  This figure allows us to see the rarefaction waves in their entirety, where one might mistake them as shock waves in Figure \ref{fig:frw1_rev_tov_frames}.  Comparing to the forward time model (Figure \ref{fig:frw1_tov_end}), the cone of sound is wider and shifted more to the left, while the cone of light remains the same.  This shift toward the FRW side is caused by the negative velocity forcing the fluid and consequently all sound like information inward instead of outward. Observe the derivatives of the metric components are continuous, producing no jump discontinuities as seen in the forward time case; without these discontinuities, we no longer have to settle for a weak solution, so the reversed time model is a strong solution to the Einstein equations.

\begin{table}[!t]
\begin{center}
\begin{tabular}{|c|c|c|c|c|c|c|c|c|}
\hline
Number&\multicolumn{2}{|c|}{$\rho$}&\multicolumn{2}{|c|}{$v$}&\multicolumn{2}{|c|}{$A$}&\multicolumn{2}{|c|}{$B$}\\
\cline{2-9}
Gridpoints & Error & Rate & Error & Rate & Error & Rate & Error & Rate\\
\hline
64 & 1.472e-004 & N/A & 5.334e-002 & N/A & 2.707e-002 & N/A & 9.879e-002 & N/A\\
\hline
128 & 1.126e-004 & 0.39 & 3.815e-002 & 0.48 & 1.346e-002 & 1 & 5.532e-002 & 0.84\\
\hline
256 & 8.210e-005 & 0.46 & 2.694e-002 & 0.5 & 6.751e-003 & 1 & 3.299e-002 & 0.75\\
\hline
512 & 5.850e-005 & 0.49 & 1.889e-002 & 0.51 & 4.063e-003 & 0.73 & 2.632e-002 & 0.33\\
\hline
1024 & 4.090e-005 & 0.52 & 1.301e-002 & 0.54 & 2.042e-003 & 0.99 & 1.592e-002 & 0.73\\
\hline
2048 & 2.790e-005 & 0.55 & 8.770e-003 & 0.57 & 8.794e-004 & 1.2 & 8.348e-003 & 0.93\\
\hline
4096 & 1.860e-005 & 0.58 & 5.764e-003 & 0.61 & 4.801e-004 & 0.87 & 5.295e-003 & 0.66\\
\hline
8192 & 1.220e-005 & 0.62 & 3.685e-003 & 0.65 & 2.705e-004 & 0.83 & 3.312e-003 & 0.68\\
\hline
16384 & 7.750e-006 & 0.65 & 2.294e-003 & 0.68 & 1.622e-004 & 0.74 & 2.210e-003 & 0.58\\
\hline
\end{tabular}
\end{center}
\caption{Successive mesh refinement convergence results}
\label{tab:frw1_rev_tov_converge}
\end{table}

\begin{figure}
\begin{center}
\includegraphics[width=\textwidth]{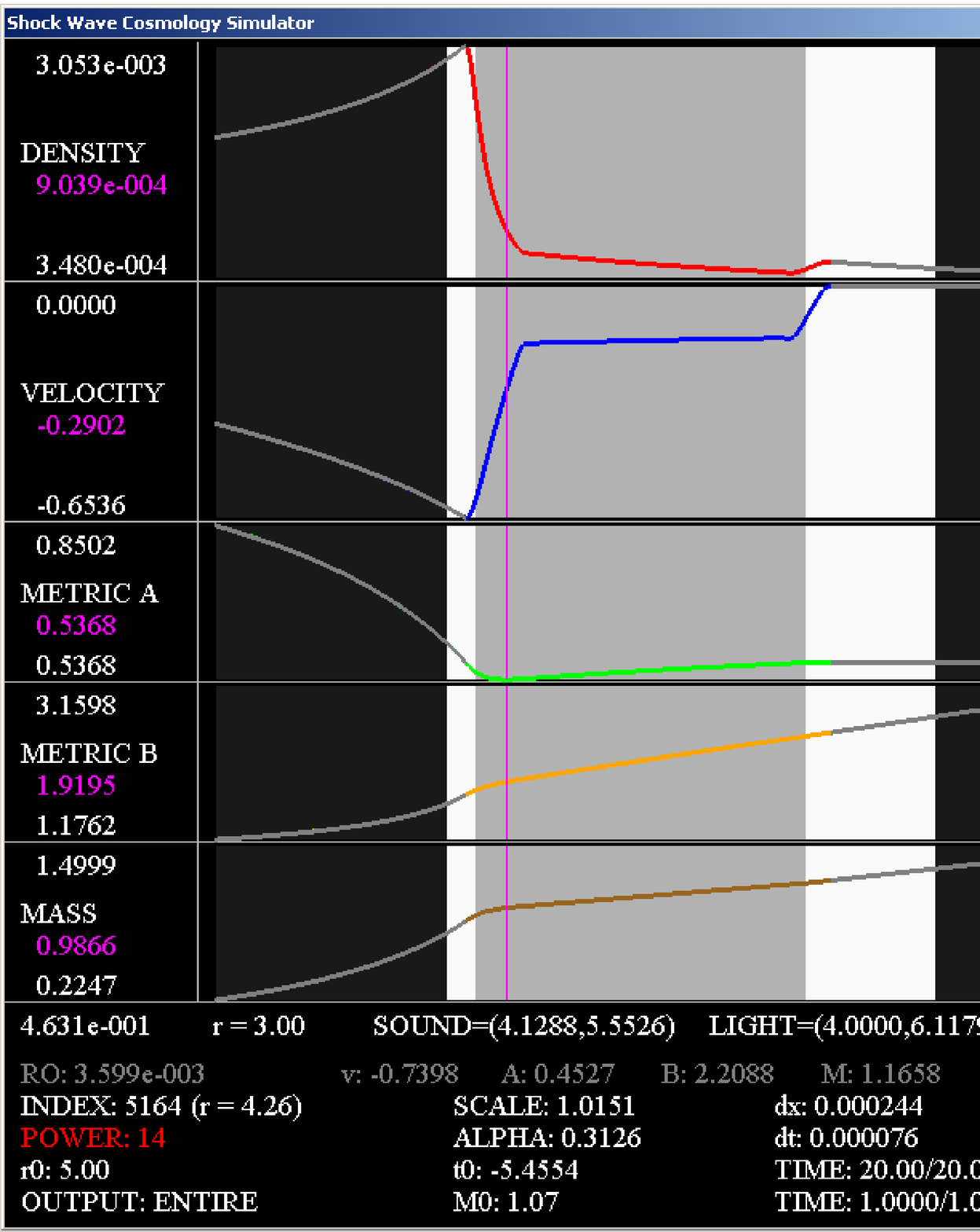}
\end{center}
\caption{Showing the model metrics against the simulated solution}
\label{fig:frw1_rev_tov_test}
\end{figure}

Again, we use the successive mesh refinement technique to test for the convergence of this solution, recorded in Table \ref{tab:frw1_rev_tov_converge}.  Since we are dealing with a continuous solution using a first order method, we expect convergence rates around 1 across all the variables.  Looking at Table \ref{tab:frw1_rev_tov_converge}, the only variable close to this rate is the metric component $A$, which has an average convergence rate of about 0.9.  The fluid variables have a rate close to 0.5, while the other metric component has a rate close to 0.7.  It is unclear why these rates are lower than the forward model (Table \ref{tab:frw1_tov_converge}), where the discontinuities of the shock waves should make these rates lower, but numerical convergence is obtained at an adequate rate.

\begin{table}
\begin{center}
\begin{tabular}{|c|c|c|c|c|}
\hline
 Gridpoints & Cone of Sound & Shock Wave & Error & Rate\\
\hline
64 & 4.1349 & 3.9524 & 0.18247 & N/A\\
\hline
128 & 4.1346 & 3.9764 & 0.15823 & 0.21\\
\hline
256 & 4.1341 & 4.0039 & 0.13016 & 0.28\\
\hline
512 & 4.1334 & 4.0254 & 0.10796 & 0.27\\
\hline
1024 & 4.1324 & 4.044 & 0.0884 & 0.29\\
\hline
2048 & 4.1314 & 4.0591 & 0.07229 & 0.29\\
\hline
4096 & 4.1304 & 4.0716 & 0.05888 & 0.3\\
\hline
8192 & 4.1295 & 4.0797 & 0.04982 & 0.24\\
\hline
16384 & 4.1288 & 4.0865 & 0.04226 & 0.24\\
\hline
\end{tabular}
\end{center}
\caption{Shock wave verses cone of sound results for FRW side}
\label{tab:frw1_rev_tov_shock_frw_side}
\end{table}

\begin{table}
\begin{center}
\begin{tabular}{|c|c|c|c|c|}
\hline
 Gridpoints & Cone of Sound & Shock Wave & Error & Rate\\
\hline
64 & 5.4398 & 6.1111 & 0.67131 & N/A\\
\hline
128 & 5.4816 & 5.9921 & 0.51056 & 0.39\\
\hline
256 & 5.5086 & 5.8863 & 0.37767 & 0.43\\
\hline
512 & 5.5258 & 5.818 & 0.29218 & 0.37\\
\hline
1024 & 5.5368 & 5.7644 & 0.22764 & 0.36\\
\hline
2048 & 5.5438 & 5.7259 & 0.18218 & 0.32\\
\hline
4096 & 5.5482 & 5.6989 & 0.15072 & 0.27\\
\hline
8192 & 5.5509 & 5.6795 & 0.12864 & 0.23\\
\hline
16384 & 5.5526 & 5.6659 & 0.1133 & 0.18\\
\hline
\end{tabular}
\end{center}
\caption{Shock wave verses cone of sound results for TOV side}
\label{tab:frw1_rev_tov_shock_tov_side}
\end{table}

We examine the preservation of the FRW and TOV metrics outside the region of interaction.  Looking back at Figure \ref{fig:frw1_rev_tov_end}, the ends of the rarefaction waves are outside the cone of sound.  Before, there was numerical diffusion of shock waves bleeding outside the cone of sound, but we do not expect to see a similar phenomenon with a rarefaction wave.  Since the speed at which matter travels is bound by the speed of sound, this fact is a cause for concern.  Like the forward model, we need to determine where the wave ends, and the border criteria developed in the Chapter \ref{ch:shock_wave_models} is used here.  On the FRW side, the border criteria is where the derivative of the fluid velocity changes sign, and instead of increasing then decreasing in the forward time model, the fluid velocity is decreasing until it hits the rarefaction wave and increases, satisfying this criteria.  On the TOV side, the border criteria is a significant change in the velocity from zero which is still satisfied since the fluid velocity has a sudden decrease at the rarefaction wave, as opposed to the sharp increase at the shock wave in the forward time model.  Looking at Figure \ref{fig:frw1_rev_tov_test}, the borders are indicated by where the gray curves stop and shown to be at the end of both rarefaction waves, as desired.  Next, we numerically compare the edges of both rarefaction waves against the cone of sound.  The results for the FRW side are recorded in Table \ref{tab:frw1_rev_tov_shock_frw_side}, while the TOV side is recorded in Table \ref{tab:frw1_rev_tov_shock_tov_side}.  The convergence rates are significantly lower from before, but numerical convergence is still obtained, albeit a little slower.  It is uncertain why both convergence rates are much slower, but these results give us confidence, as we mesh refine, that the outer edges to the rarefaction waves will converge to the edges of the cone of sound.

\begin{table}
\begin{center}
\begin{tabular}{|c|c|c|c|c|c|c|c|c|}
\hline
Number&\multicolumn{2}{|c|}{$\rho$}&\multicolumn{2}{|c|}{$v$}&\multicolumn{2}{|c|}{$A$}&\multicolumn{2}{|c|}{$B$}\\
\cline{2-9}
Gridpoints & Error & Rate & Error & Rate & Error & Rate & Error & Rate\\
\hline
64 & 3.460e-005 & N/A & 5.098e-003 & N/A & 3.780e-003 & N/A & 1.571e-002 & N/A\\
\hline
128 & 2.140e-005 & 0.69 & 2.993e-003 & 0.77 & 2.255e-003 & 0.75 & 8.818e-003 & 0.83\\
\hline
256 & 1.320e-005 & 0.7 & 1.749e-003 & 0.78 & 1.277e-003 & 0.82 & 4.659e-003 & 0.92\\
\hline
512 & 7.580e-006 & 0.8 & 9.778e-004 & 0.84 & 7.336e-004 & 0.8 & 2.733e-003 & 0.77\\
\hline
1024 & 4.270e-006 & 0.83 & 5.381e-004 & 0.86 & 4.261e-004 & 0.78 & 1.627e-003 & 0.75\\
\hline
2048 & 2.360e-006 & 0.85 & 2.921e-004 & 0.88 & 2.268e-004 & 0.91 & 8.586e-004 & 0.92\\
\hline
4096 & 1.320e-006 & 0.84 & 1.591e-004 & 0.88 & 1.171e-004 & 0.95 & 4.378e-004 & 0.97\\
\hline
8192 & 6.830e-007 & 0.95 & 8.220e-005 & 0.95 & 5.800e-005 & 1 & 2.086e-004 & 1.1\\
\hline
16384 & 3.620e-007 & 0.92 & 4.300e-005 & 0.93 & 2.960e-005 & 0.97 & 1.051e-004 & 0.99\\
\hline
\end{tabular}
\end{center}
\caption{Convergence results for the FRW side}
\label{tab:frw1_rev_tov_converge_frw_side}
\end{table}

\begin{table}
\begin{center}
\begin{tabular}{|c|c|c|c|c|c|c|c|c|}
\hline
Number&\multicolumn{2}{|c|}{$\rho$}&\multicolumn{2}{|c|}{$v$}&\multicolumn{2}{|c|}{$A$}&\multicolumn{2}{|c|}{$B$}\\
\cline{2-9}
Gridpoints & Error & Rate & Error & Rate & Error & Rate & Error & Rate\\
\hline
64 & 4.030e-007 & N/A & 5.032e-004 & N/A & 2.933e-003 & N/A & 1.128e-002 & N/A\\
\hline
128 & 2.400e-007 & 0.74 & 3.083e-004 & 0.71 & 1.732e-003 & 0.76 & 5.863e-003 & 0.94\\
\hline
256 & 1.390e-007 & 0.8 & 1.776e-004 & 0.8 & 1.060e-003 & 0.71 & 2.978e-003 & 0.98\\
\hline
512 & 7.640e-008 & 0.86 & 9.740e-005 & 0.87 & 5.264e-004 & 1 & 1.567e-003 & 0.93\\
\hline
1024 & 4.120e-008 & 0.89 & 5.230e-005 & 0.9 & 2.379e-004 & 1.1 & 8.415e-004 & 0.9\\
\hline
2048 & 2.160e-008 & 0.93 & 2.650e-005 & 0.98 & 1.250e-004 & 0.93 & 4.260e-004 & 0.98\\
\hline
4096 & 9.410e-009 & 1.2 & 9.020e-006 & 1.6 & 6.750e-005 & 0.89 & 2.113e-004 & 1\\
\hline
8192 & 4.680e-009 & 1 & 2.760e-006 & 1.7 & 3.930e-005 & 0.78 & 1.007e-004 & 1.1\\
\hline
16384 & 2.480e-009 & 0.92 & 9.790e-007 & 1.5 & 2.050e-005 & 0.94 & 4.980e-005 & 1\\
\hline
\end{tabular}
\end{center}
\caption{Convergence results for the TOV side}
\label{tab:frw1_rev_tov_converge_tov_side}
\end{table}

Using these border criteria, we test the numerical convergence of the FRW and TOV metrics outside the region of interaction.  Recall, these borders act as a marker where we stop computing the error between the simulated and model metrics, which is discussed more thoroughly in Chapter \ref{ch:shock_wave_models}.  Using this procedure, the convergence results for the FRW and TOV region are recorded in Tables \ref{tab:frw1_rev_tov_converge_frw_side} and \ref{tab:frw1_rev_tov_converge_tov_side}, respectively.  Looking at both tables, both regions are converging to their respective models at the appropriate rate of 1 for a first order method.  Again, we note how remarkable it is the metric components in the TOV region remain preserved after integration through the FRW metric and the region of interaction.

\section{Black Hole Formation}
\label{sec:black_hole_form}

\begin{figure}
\begin{center}
\includegraphics[width=\textwidth]{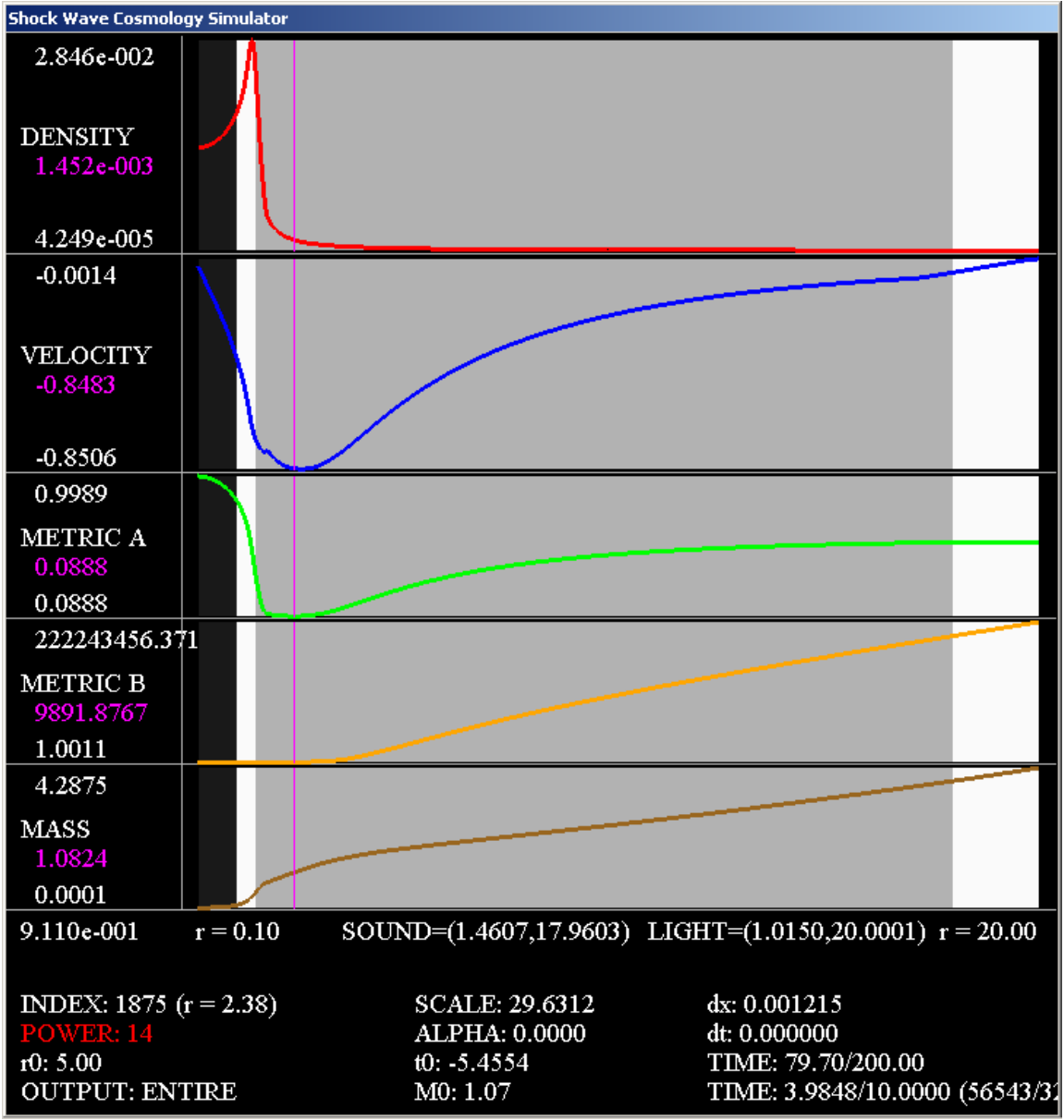}
\end{center}
\caption{Approaching the black hole with the interaction region hitting the TOV boundary}
\label{fig:black_hole1}
\end{figure}

\begin{figure}
\begin{center}
\includegraphics[width=\textwidth]{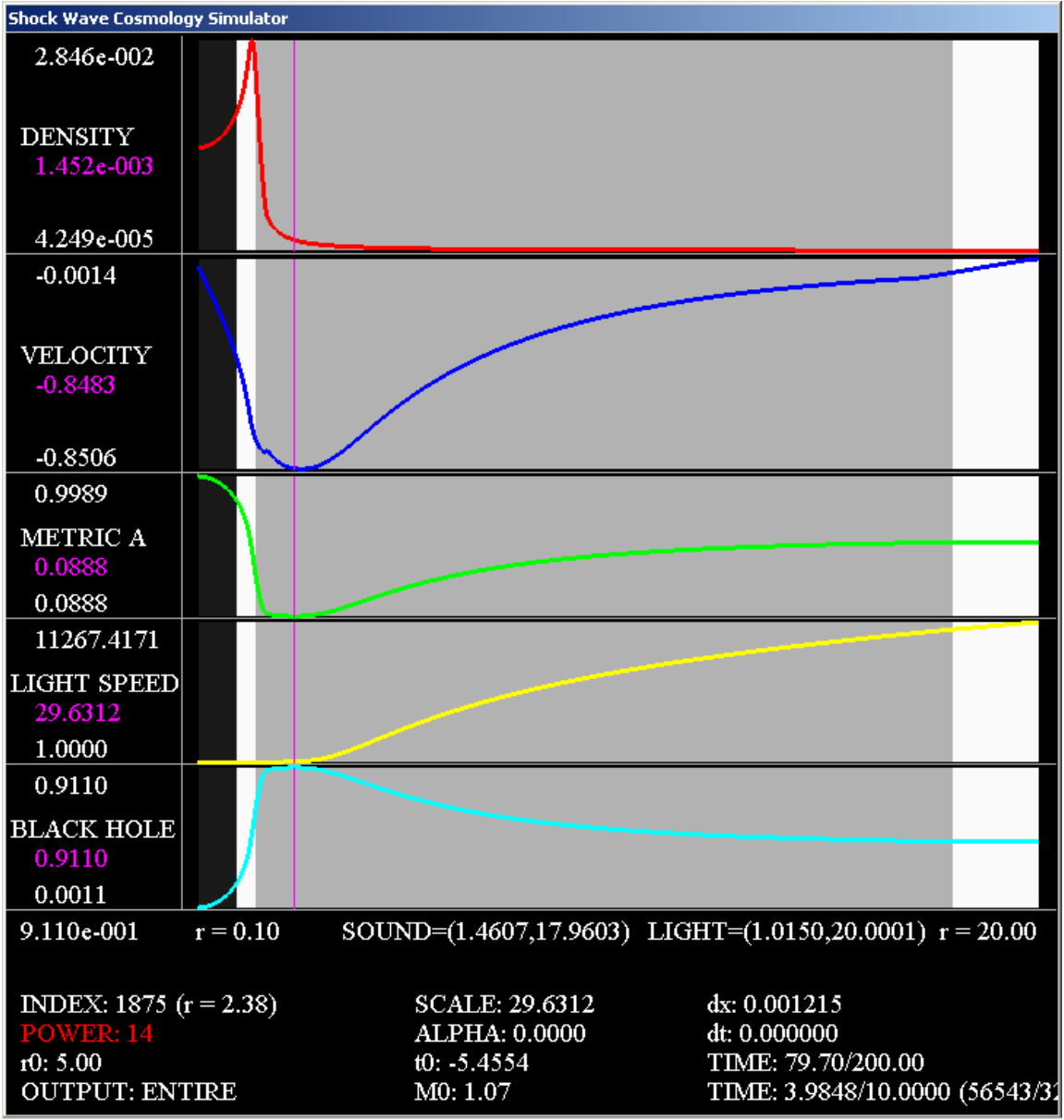}
\end{center}
\caption{Looking at the black hole criteria and the speed of light}
\label{fig:black_hole2}
\end{figure}

In the previous section, as time progresses in the reversed time model more mass falls in towards the center of the universe.  If this trend continues, then it is inevitable, given enough time, a black hole will form.  In order to run the simulation long enough for a black hole to form, our initial parameters must be changed.  Using the previous parameters (\ref{std_params}), the region of interaction hits the boundaries before the black hole forms, and once this happens, the solution is no longer valid because our boundary data is inaccurate.  In order to solve this problem, we expand the region of space under consideration by setting the minimum radius $\bar{r}_{min}=0.1$ and the maximum radius $\bar{r}_{max}=20$.  We leave the other parameters the same, namely, the discontinuity position $\bar{r}_0=5$ and the number of gridpoints $n=16,384$.

\begin{figure}
\begin{center}
\includegraphics[width=\textwidth]{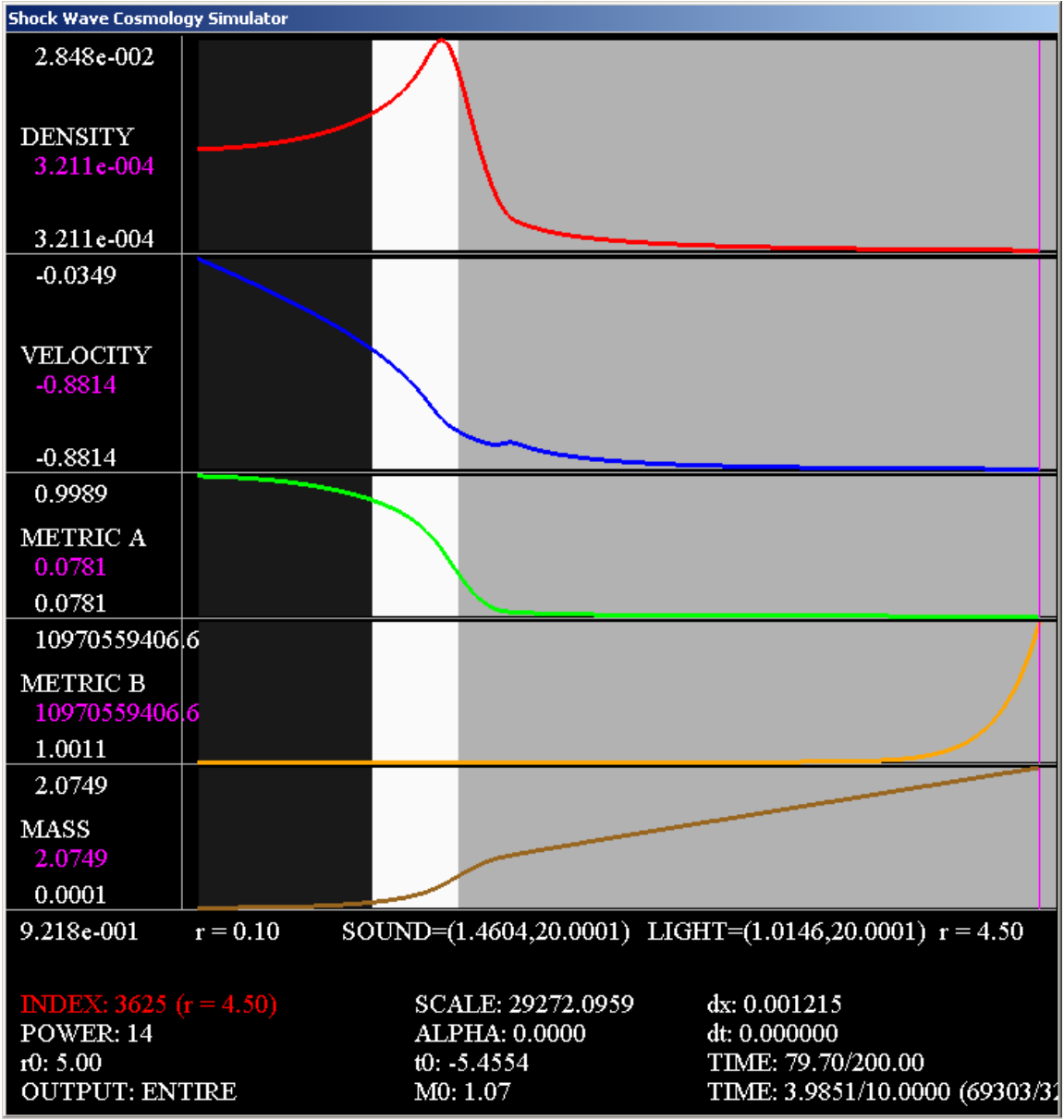}
\end{center}
\caption{Continuing black hole formation until it hits the boundary data}
\label{fig:black_hole3}
\end{figure}

Running this simulation, the interaction region hits the TOV boundary before the black hole forms, with the end result shown in Figure \ref{fig:black_hole1}.  We cannot proceed further because the TOV boundary data is no longer valid.  At this stage of the solution, the highest value of the black hole number is $\mu=0.9110$, located where the vertical magenta line is placed at a radius of $\bar{r}=2.38$.  This radius corresponds to the dip in the velocity and metric component $A$.  It also occurs right after the density spike which causes the mass to increase dramatically in this region of the universe.  Comparing to an earlier stage of the solution in Figure \ref{fig:frw1_rev_tov_test}, at the edge of the incoming rarefaction wave, the density has continued to increase while velocity has become more negative, so matter has been coming in towards the center at a faster rate.  Another thing to notice is the metric $B$ component has grown extremely large at the TOV edge of our simulation.  This large metric $B$ component causes a huge time dilation between both edges of our simulation, which can be seen more clearly in Figure \ref{fig:black_hole2} where we graph the (coordinate) speed of light $\sqrt{AB}$ in yellow and the black hole number $\mu$ in cyan.  The black hole criteria is the most satisfied in a region of space around the point $\bar{r}=2.38$ where $\mu=0.9110$.  Now more extreme relativistic effects come into play because the significant time dilation, measured by the speed of light, between the two edges is around 11,267 to 1, which means light travels 11,267 times faster at one edge of the simulated region to the other.  This is a problem because it causes our future time steps to be smaller, so the evolution of the system with greatest $\mu$ is slower than the development close to the TOV border.  As a final note, there is a blip in the velocity profile as seen in Figure \ref{fig:black_hole1}.  Since there is much data to support the accuracy of our numerical simulation, we have no reason to believe this blip is due to numerical error.  We interpret this blip in the velocity as an unknown phenomenon, and it is a future research project to determine the nature of it.

\begin{figure}
\begin{center}
\includegraphics[width=\textwidth]{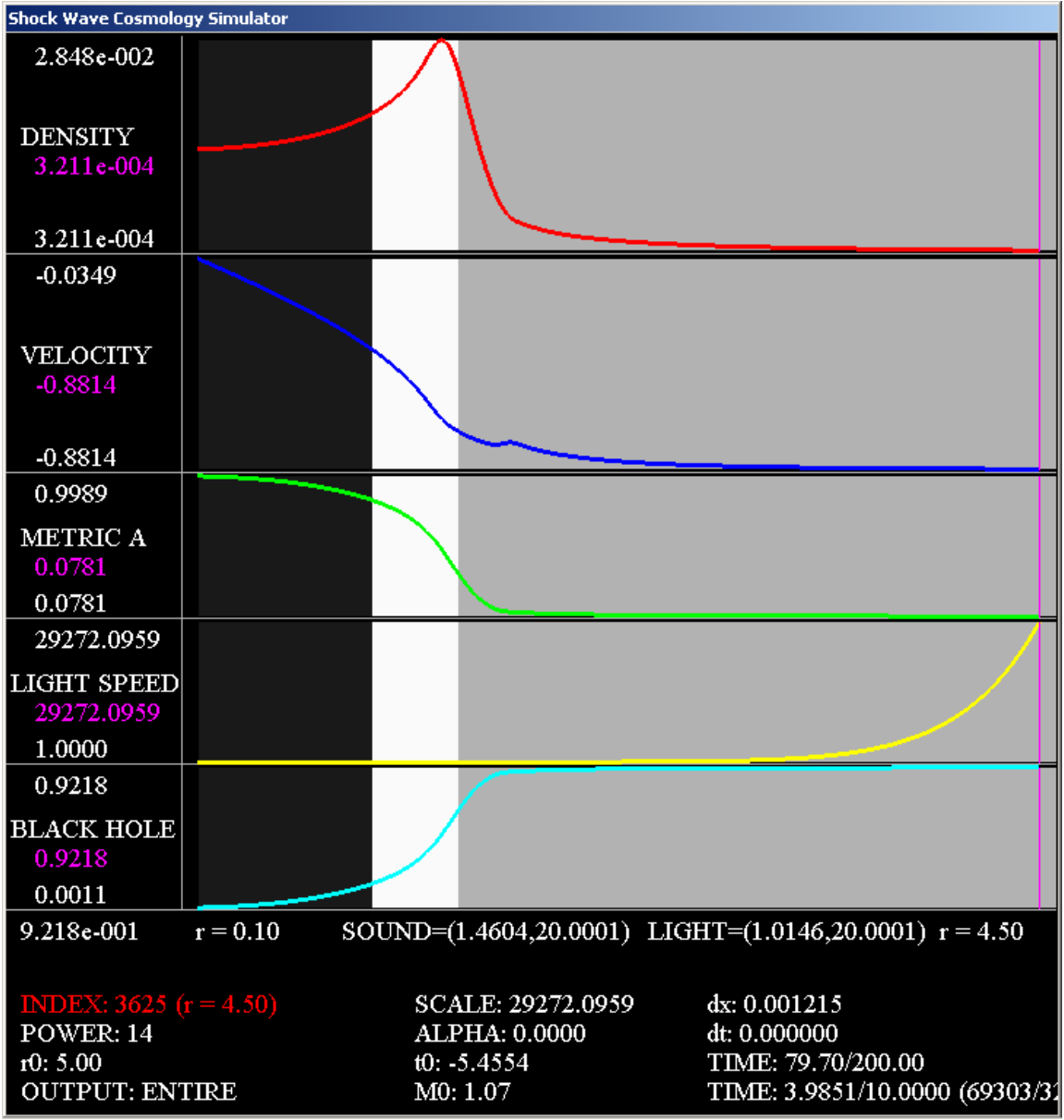}
\end{center}
\caption{Black hole criteria and the speed of light after the boundary data hits the black hole formation}
\label{fig:black_hole4}
\end{figure}

In order to get closer to the formation of the black hole, we could try to expand the region of simulated space again, but the huge time dilation causes the region of interaction to grow extremely fast, quickly reaching the new border while only a small amount of time passes near the region where the black hole appears to be forming (i.e. $\mu=0.9110$ at $\bar{r}=2.38$).  Regardless of how far we put the maximum radius, we never get closer than $\mu=0.9110$ in the black hole criteria (\ref{black_hole_criteria}).  Our alternative strategy is to realize we are only interested in the region of space where the black hole is forming, where $\mu$ is greatest.  With this in mind, to continue the evolution of the black hole, we start chopping off the right side of our region of simulated space; even though the right most grid cell is no longer valid, the one next to it (on the left side) is still a valid solution.  This process is like zooming into the region of the black hole formation.  More precisely, the grid point $x_{n+1}$ is used as the boundary data throughout the entire simulation.  After the TOV edge has been reached, we use the gridpoint $x_n$ as the new boundary data for the next time step, discarding the data associated with the gridpoint $x_{n+1}$.  During the subsequent time steps, we discard the current boundary data at the gridpoint $x_i$ and replace it with its predecessor $x_{i-1}$.  Using this procedure and stopping when the radius of the right boundary matches the radius with the greatest value of the black hole number, we produce Figure \ref{fig:black_hole3} and Figure \ref{fig:black_hole4}.  The black hole number is $\mu=0.9218$ at a radius of $\bar{r}=4.5$, so we are a little closer to black hole formation.  More interestingly, this radius for the greatest $\mu$ is moving farther out, from $\bar{r}=2.38$ to $\bar{r}=4.5$, meaning the region of black hole formation is expanding which is shown in Figure \ref{fig:black_hole4}.  In Figure \ref{fig:black_hole3}, one notices the metric component $B$, along with the speed of light in Figure \ref{fig:black_hole4}, is growing exponentially at this radius, causing the time dilation to be more extreme.  Here, the time dilation from the center to this radius is 29,517 to 1, two and a half times greater than before.

\section{Summary}
\label{sec:time_rev_summary}
We began the chapter by arguing for the validity of the existence of a reversed time solution, and then setup the initial profiles and boundary conditions needed to start the simulation.  It turns out the only difference from the forward model is that positive time is replaced by negative time, and this difference only affects a change in sign of the velocity, as shown in Figure \ref{fig:frw1_rev_tov_init}.  Snapshots of the solution are recorded in Figure \ref{fig:frw1_rev_tov_frames}, showing the reverse time solution containing rarefaction waves surrounding a region of under density.  Unlike the forward time model, the derivative of the metric is continuous in Figure \ref{fig:frw1_rev_tov_end}, so this model is a strong solution to the Einstein equations.    We mimic the convergence results obtained in the forward time model.  In particular, in Table \ref{tab:frw1_rev_tov_converge}, we record numerical convergence of the entire solution taken as a whole, including the interaction and non-interaction regions, using successive mesh refinement.  In Tables \ref{tab:frw1_rev_tov_shock_frw_side} and \ref{tab:frw1_rev_tov_shock_tov_side}, we also record numerical convergence of the interaction region to the cone of sound, using border tracking of the edges to the simulated rarefaction waves.  In Tables \ref{tab:frw1_rev_tov_converge_frw_side} and \ref{tab:frw1_rev_tov_converge_tov_side}, we record the numerical convergence of the non-interaction regions, using the true solutions.  Like the forward time model, these results affirm the region of interaction is precisely the region of the cone of sound (no formal mathematical proof of this is known).

As time progresses in this solution, more and more mass is falling in towards the center of the universe.  This fact leads us to believe that if the solution were to progress in time beyond the simulation time limit, the black hole number (\ref{black_hole_number}) would continue to increase and eventually the black hole criteria (\ref{black_hole_criteria}) would be achieved exactly, resulting in black hole formation.  Our method is to expand the region of simulation by running the simulation until the TOV region is completely gone, giving us a value of $\mu=0.9110$.  We then continue to run the simulation by discarding the incorrect boundary data and zooming into the region of black hole formation, and by this refined method we obtain convergence all the way up to the black hole criteria value $\mu=0.9218$.  These results taken together give us confidence in our solution as a whole and provide strong support for our claim that this solution has black hole formation contained within it.  As a final note, with the successful results obtained here and Chapter \ref{ch:shock_wave_models}, we have simulated general relativistic analogs of both the 1 and 2 family of shock and rarefaction waves, the elementary waves of conservation laws.  This fact gives us confidence in the ability and accuracy of our locally inertial Godunov method to simulate any solution to the Einstein equations for a perfect fluid in standard Schwarzschild coordinates.

   \chapter[%
      Short Title of 9th Ch.
   ]{%
      Putting Units into the Simulation
   }%
   \label{ch:units}
We end our discussion of shock wave formation by putting back the units to give more physical meaning to our simulations.  To place units on our numerical values requires us to understand the consequence of setting Newton's gravitational constant $\mathcal{G}$ and Einstein's speed of light constant $c$ both to one.  The effect of this is the unification of the units of time, length, and mass into one set of units, which we choose as the units of mass in our discussion here.  Understanding this unification allows us to recover the other units, time and length.  To end this chapter, we consider our simulation on the solar and galactic scale by transforming our numerical values to familiar units, giving us a firm grasp of the scale and magnitude involved within our simulations.

To determine the effect of setting the speed of light constant $c$ to one, we start by considering the TOV metric
\begin{equation}\label{ch9_tov_in_ssc}
ds^2=-B(\bar{r})d\bar{t}^2+\left(\frac{1}{1-\frac{2\mathcal{G}M(\bar{r})}{\bar{r}}}\right)d\bar{r}^2,
\end{equation}
where we dropped the $\bar{r}^2d\Omega^2$ to simplify the notation.  Since it represents a measurement,  the metric must have units of length or time attached to it; these units can be either associated with the coordinates $(\bar{t},\bar{r})$ or the metric components themselves.  Here, we choose the units to be associated to the coordinates $(\bar{t},\bar{r})$, and we use units of length (i.e. kilometers) to be the measurement of the metric.  This implies the time coordinate $\bar{t}$ is measured in units of length, and the speed of light constant converts units of time (i.e. seconds) into units of length.  More precisely, let $\bar{t}_*$ be the time measured in units of time, so the time coordinate becomes
\begin{equation}
\bar{t}=c\bar{t}_*,
\end{equation}
where the units are represented by units of length scale.  One can view this as the time coordinate absorbs the speed of light constant and transforms it into units of length. This absorbtion is equivalent to setting $c=1$, as done in our simulations.  In particular, the following holds
\begin{equation}
\frac{d\bar{r}}{d\bar{t}_*}=c \Longleftrightarrow \frac{d\bar{r}}{d(c\bar{t}_*)}=\frac{d\bar{r}}{d\bar{t}}=1,
\end{equation}
so the coordinate speed of light is scaled down to one in $(\bar{t},\bar{r})$ coordinates.  To recover the units of time from the time coordinate $\bar{t}$ requires a simple calculation
\begin{equation}
\bar{t}_*=\frac{\bar{t}}{c}.
\end{equation}
Thus, the speed of light constant enables us to make equivalent the units of time and length, a truly remarkable insight of Einstein's theory.

To determine the effect of setting Newton's gravitational constant $\mathcal{G}$ to one, we perform dimensional analysis on (\ref{ch9_tov_in_ssc}) to determine the dimensions of $\mathcal{G}$.  To this end, let $L$ represent units of length, $T$ represent units of time, and $M$ represent units of mass.  As stated before, the metric components of (\ref{ch9_tov_in_ssc}) are dimensionless; therefore, the dimensional analysis of the space metric component gives us
\begin{equation}
1=\left[\frac{2\mathcal{G}M(\bar{r})}{\bar{r}}\right]=\frac{M\left[\mathcal{G}\right]}{L},
\end{equation}
where $[\cdot]$ is the units of the variable.  Rearranging this equation, we have
\begin{equation}\label{g_units}
\left[\mathcal{G}\right]=\frac{L}{M}.
\end{equation}
In the same way that $c=1$ makes the units of time equivalent to the units of length, setting $\mathcal{G}=1$ equates units of length to units of mass.  More specifically, let $\bar{r}_*$ be the radius measured in units of length, so the space coordinate in units of mass becomes
\begin{equation}
\bar{r}=\frac{\bar{r}_*}{\mathcal{G}},
\end{equation}
where the space coordinate absorbs Newton's gravitational constant.  This absorbtion is equivalent to setting $\mathcal{G}=1$ in (\ref{ch9_tov_in_ssc}), which is done in our simulations.  For $\bar{r}$ in units of mass, we recover $\bar{r}_*$ in units of length by simply computing
\begin{equation}\label{mass_to_length}
\bar{r}_*=\mathcal{G}\bar{r}.
\end{equation}
To take this one step further, we also make $\bar{t}$ in units of mass related to $\bar{t}_*$ in the units of length  by
\begin{equation}\label{mass_to_time}
\bar{t}_*=\frac{\mathcal{G}}{c}\bar{t}.
\end{equation}
Therefore, by setting $c=1$ and $\mathcal{G}=1$, we unify all the units (time, length, and mass) into one set of units.  For our discussion, we choose the mass to be this unified set of units, where the units of length and time can be recovered with equations (\ref{mass_to_length}) and (\ref{mass_to_time}), respectively.

We also consider the units for the fluid variables $(\rho, v)$.  The velocity $v$ is measured as a percentage to the speed of light and is thus unitless.  With units of length equivalent to units of mass, the units for the density $\rho$ in our simulation are
\begin{equation}
\left[\rho\right]=\frac{M}{L^3}=\frac{1}{M^2}.
\end{equation}
To recover the density in units of $M/L^3$, we compute
\begin{equation}
\rho_*=\frac{1}{\mathcal{G}^3}\rho.
\end{equation}

To determine the value of Newton's gravitational constant, we perform dimensional analysis to recover the factors of $c$ that have been suppressed by setting $c=1$.  Looking up the units, the Newton's true gravitational constant, denoted as $\hat{\mathcal{G}}$, has units of
\begin{equation}\label{g_hat_units}
\hat{\mathcal{G}}=\frac{L^3}{MT^2}.
\end{equation}
Comparing (\ref{g_units}) to (\ref{g_hat_units}), the relationship between them must be
\begin{equation}
\mathcal{G}=\frac{\hat{\mathcal{G}}}{c^2}.
\end{equation}

To obtain numerical values for $\mathcal{G}$ and $c$, the units for the variables $\bar{t}_*$, $\bar{r}_*$, and $M$ must be chosen.  We choose to measure time in seconds (sec), the length scale in kilometers (km), and mass in solar masses $M_\odot$.  This choice makes Newton's gravitational constant
\begin{equation}\label{g_constant}
\mathcal{G}=1.47664 \text{ km}/M_\odot,
\end{equation}
and the speed of light constant
\begin{equation}\label{c_constant}
c\approx 3\times10^5 \text{ km/sec}.
\end{equation}

With the ability to recover the proper units, we interpret some of the numerical values of our simulated FRW/TOV model in Figure \ref{fig:frw1_tov_test} on a solar scale.  To interpret the results on a solar scale, we choose the unit of mass to be represented by a solar mass ($1M = 1M_\odot$), where $M_\odot=1.989\times10^{30}\text{ g}$ is the mass of our sun.  Using this scale, the range of values for the variables are
\begin{eqnarray}
0.09 \text{ }M_\odot &<&M < 1.5\text{ }M_\odot,\label{mass_value}\\
3 \text{ }M_\odot &<&\bar{r} < 7\text{ }M_\odot,\\
5.46 \text{ }M_\odot &<&\bar{t} < 6.46\text{ }M_\odot,\\
3.48\times10^{-4} \text{ }1/M^2_\odot &<&\rho < 1.92\times10^{-3}\text{ }1/M^2_\odot,\\
0 &<&v < 0.53.\label{velocity_value}
\end{eqnarray}
Using the constants (\ref{g_constant}) and (\ref{c_constant}) and equations (\ref{mass_to_length}) and (\ref{mass_to_time}), we convert the mass to the familiar units of
\begin{eqnarray}
0.09 \text{ }M_\odot &<&M < 1.5\text{ }M_\odot,\label{solar_mass_value}\\
4.43 \text{ km} &<&\bar{r}_* < 10.37\text{ km},\\
2.69\times10^{-5} \text{ sec} &<&\bar{t}_* < 3.18\times10^{-5}\text{ sec},\\
1.08\times10^{-4} M_\odot/\text{km}^3 &<&\rho_* < 5.95\times10^{-4}M_\odot/\text{km}^3,\\
0 \text{ km/sec} &<&v < 1.59\times10^5 \text{ km/sec}.\label{solar_velocity_value}
\end{eqnarray}
Hence, our simulation is dealing with mass equivalent to 1.4 times our sun across a distance of 5.94 kilometers for a time frame of 4.9 microseconds.  To give more meaning to the density, the average density of the sun is $7.04\times10^{-19} M_\odot/\text{km}^3$, so our simulation is dealing with densities of at least $1.53\times10^{14}$ times greater than the average density of the sun.  In this coordinate system, matter is moving at speeds up to $1.59\times10^5 \text{ km/sec}$.

To consider galactic scales instead, one unit of mass is set equal to the mass of the Milky Way, $1.8\times10^{11}\text{ } M_\odot$.  The effect of this equivalency is a direct scale by the mass of the Milky Way of the solar scale results (\ref{solar_mass_value})-(\ref{solar_velocity_value}), except for the velocity.  In this setting, the simulated values (\ref{mass_value})-(\ref{velocity_value}) become
\begin{eqnarray}
1.62\times10^{10} \text{ }M_\odot &<&M < 2.7\times10^{11}\text{ }M_\odot,\\
0.084 \text{ light-years} &<&\bar{r}_* < 0.2\text{ light-years},\\
56 \text{ days} &<&\bar{t}_* < 66\text{ days},\\
1.94\times10^{7} M_\odot/\text{km}^3 &<&\rho_* < 1.07\times10^{8}M_\odot/\text{km}^3,\\
0 \text{ km/sec} &<&v < 1.59\times10^5 \text{ km/sec}.
\end{eqnarray}
Therefore, on the galactic scale, our simulation is dealing with mass equivalent to 1.4 times our Milky Way across a distance of 0.12 light-years in a time frame of 10 days.
   \appendix

   \chapter[%
      Short Title of Appendix A
   ]{%
      Programming Code for the Simulation
   }%
   \label{ch:sim_code}
This appendix provides an overview of all the source code used to perform the simulations throughout this thesis.  A copy of this source code is provided on the attached CD, which is approximately 8,000 lines contained in 22 files.  This code is written in C++, with a Microsoft C++ 5.0 compiler, using the OpenGL graphics library on the Windows operating system.  Our goal here is to give the reader a sense of the organization of the code to enable the reader to find specific areas of interest within the code.

To start, the main file (\verb"main.cpp") initializes the displayed window and the simulation itself, and contains the code specific to the Windows OS.  This file also handles the interface between the operating system and the simulation.  The rest of the code is segregated into different classes in accordance to an object-oriented programming paradigm.  Each of these classes has two files associated with them, a header \verb"*.h" file and an implementation file \verb"*.cpp".  The header file contains the definition of the class and its associated variables and functions, and the implementation file contains the code for these functions.  To reduce the amount of redundant code, we created a class hierarchy for the simulators, displayed in Figure \ref{fig:code_graph}.  Classes on top of the hierarchy are base classes and the connected nodes represent children classes that inherit there functionality from the base class.  We provide a brief explanation of the functionality of the classes in hierarchy:
\begin{itemize}
\item The CSimulator class is a base class to provide a template in which all simulation classes are derived.  It unifies the basic structure and interface of all of our simulator classes.

\item The CRsolver class is a base class to provide a template for a Riemann solver to the compressible Euler equations, both the relativistic and classical.  The algorithms for finding the middle state and solving the Riemann problem, as discussed in Subsections \ref{subsec:middle_state} and \ref{subsec:solving_rp}, respectively, are contained in this class.  Also, the code to display the Riemann problem simulator in Figure \ref{fig:rp_solver} is in this class.

\item The CCosmoSim class contains the code to simulate and display the cosmology models of Chapters \ref{ch:cont_models}-\ref{ch:time_rev_model}.  This class contains the algorithm for the locally inertial Godunov method with dynamical time dilation, defined in Chapter \ref{ch:frac_god_method}.  Since the code became quite large, we split the implementation file into two, the numerics in \verb"ccosmosim.cpp" and the graphics in \verb"ccosmosim_graphics.cpp".

\item The CRelEulerSolver class is the Riemann solver for the relative compressible Euler equations.  This class is derived from the CRsolver class, and it contains the specific equations tied to the relative compressible Euler equations, scribed in Section \ref{sec:ivp_special_relativity}.

\item The CClasEulerSolver class is the Riemann solver for the classical compressible Euler equations \cite{smol}.  This class was used to test our Riemann solver on a more simple set of equations to find errors in the algorithms contained in the CRsovler class.  This class has no affect on the simulations in this paper and is left over code from the testing stage of our simulations.
\end{itemize}

\begin{figure}[t]
\begin{pspicture}(14,6)(2,0)
%\psgrid
\rput(9,5){\ovalnode{A}{CSimulator}}
\rput(4.5,3){\ovalnode{B}{CRsolver}}
\rput(13.5,3){\ovalnode{C}{CCosmoSim}}
\rput(2,1){\ovalnode{D}{CClasEulerSolver}}
\rput(7,1){\ovalnode{E}{CRelEulerSolver}}
\ncline{A}{B}
\ncline{A}{C}
\ncline{B}{D}
\ncline{B}{E}

\end{pspicture}\caption{The object oriented structure of the classes in the simulation code}
\end{figure}
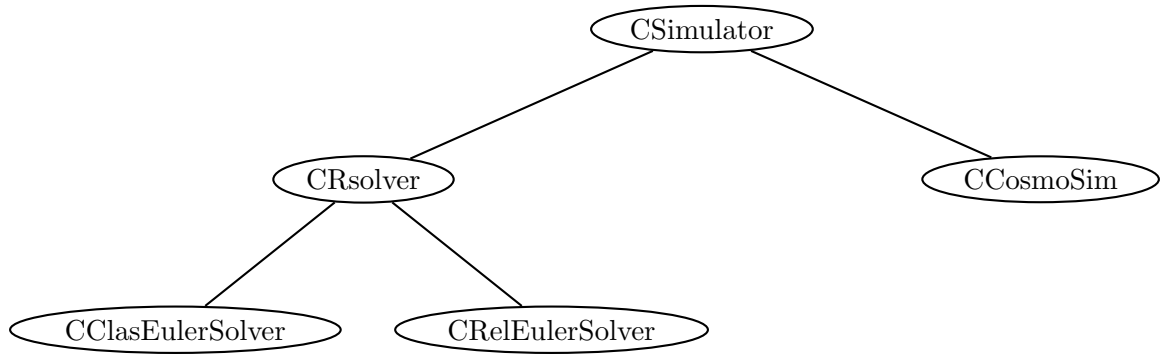\label{fig:code_graph}

To handle the rendering of our simulations, we use a number of stand alone classes to serve specific graphical needs.  We present a list of these classes along with a brief explanation of their utility:
\begin{itemize}
\item The Color class is used to consolidate the 3 color components into one vector.  This class allows for easy manipulation and use of colors in OpenGL.

\item The Font class gives us a method to display all the text used in our simulations.

\item The Interval class is a concise method of keeping track of intervals on the real line.

\item The ColorArray class enables us to keep track of an array of colors.

\item The ColorMap class uses the Interval and ColorArray classes to build a mapping of an interval on the real line to a set of discrete colors.  In particular, this class is used to display the density in an array of different colors in Figure \ref{fig:rp_solver}.
\end{itemize}

   \backmatter

   \bibliographystyle{amsalpha-fi-arxlast}
   \bibliography{DissertationBibliography}
\end{document}